\keywords{Realizability, combinatory algebra, closed multicategory, closed category, monoidal closed category, monoidal bi-closed category, exponential modality, exchange modality}
\theoremstyle{plain}
\def\eg{{\em e.g.}}
\def\cf{{\em cf.}}
\newcommand{\bfd}{{\bf d}}
\newcommand{\bfe}{{\bf e}}
\newcommand{\bfk}{{\bf k}}
\newcommand{\bfw}{{\bf w}}
\newcommand{\bfc}{{\bf c}}
\newcommand{\bfi}{{\bf i}}
\newcommand{\comb}{{\bf B}}
\newcommand{\comc}{{\bf C}}
\newcommand{\comi}{{\bf I}}
\newcommand{\comsa}{{\bf S}}
\newcommand{\comk}{{\bf K}}
\newcommand{\combl}{\vec{\bf B}}
\newcommand{\combr}{\reflectbox{$\vec{\reflectbox{$\bf B$}}$}}
\newcommand{\comcl}{\vec{\bf C}}
\newcommand{\comcr}{\reflectbox{$\vec{\reflectbox{$\bf C$}}$}}
\newcommand{\comil}{\vec{\bf I}}
\newcommand{\comir}{\reflectbox{$\vec{\reflectbox{\bf I}}$}}
\newcommand{\comdl}{\vec{\bf D}}
\newcommand{\comdr}{\reflectbox{$\vec{\reflectbox{$\bf D$}}$}}
\newcommand{\coml}{{\bf L}}
\newcommand{\comp}{{\bf P}}
\newcommand{\comt}{{\bf T}}
\newcommand{\batui}{{\bf I}^{\times}}
\newcommand{\lapp}{\scalebox{0.95}{\textcolor[rgb]{0,0.447,0.698}{$\vec{@}$}}}
\newcommand{\rapp}{\scalebox{0.95}{\reflectbox{\textcolor[rgb]{0.835,0.368,0}{$\vec{@}$}}}}
\newcommand{\llaml}[2]{\textcolor[rgb]{0,0.447,0.698}{(} {#1} \hspace{1pt} \textcolor[rgb]{0,0.447,0.698}{\mapsto} \hspace{1pt} {#2} \textcolor[rgb]{0,0.447,0.698}{)}}
\newcommand{\rlaml}[2]{\textcolor[rgb]{0.835,0.368,0}{(} {#2} \hspace{1pt} \reflectbox{\textcolor[rgb]{0.835,0.368,0}{$\mapsto$}} \hspace{1pt} {#1} \textcolor[rgb]{0.835,0.368,0}{)}}
\newcommand{\llam}[2]{\textcolor[rgb]{0,0.447,0.698}{(} {#1} \hspace{1pt} \textcolor[rgb]{0,0.447,0.698}{\overset{\ast}{\mapsto}} \hspace{1pt} {#2} \textcolor[rgb]{0,0.447,0.698}{)}}
\newcommand{\rlam}[2]{\textcolor[rgb]{0.835,0.368,0}{(} {#2} \hspace{1pt} \reflectbox{\textcolor[rgb]{0.835,0.368,0}{$\overset{\ast}{\mapsto}$}} \hspace{1pt} {#1} \textcolor[rgb]{0.835,0.368,0}{)}}
\newcommand{\llet}[3]{\mbox{{\sf let} ${#1}$ {\sf be} ${#2}$ {\sf in} ${#3}$}}
\newcommand{\dagl}[1]{{#1}^{\textcolor[rgb]{0,0.447,0.698}{\triangleright}}}
\newcommand{\dagr}[1]{{#1}^{\textcolor[rgb]{0.835,0.368,0}{\triangleleft}}}
\newcommand{\lappk}{{\vec{@}}}
\newcommand{\rappk}{\reflectbox{{$\vec{@}$}}}
\newcommand{\llamk}[2]{( {#1} \hspace{1pt} {\mapsto} \hspace{1pt} {#2} )}
\newcommand{\rlamk}[2]{( {#2} \hspace{1pt} \reflectbox{{$\mapsto$}} \hspace{1pt} {#1} )}
\newcommand{\kuro}[1]{{#1}^{\bullet}} 
\newcommand{\sk}{{\bf SK}} 
\newcommand{\bci}{{\bf BCI}} 
\newcommand{\bikuro}{{\bf BI} \kuro{(\mathchar`-)}} 
\newcommand{\biikuro}{{\bf BI} \batui \kuro{(\mathchar`-)}} 
\newcommand{\bdi}{{\bf BDI}}
\newcommand{\biilp}{{\bf BI} \batui {\bf LP} \kuro{(\mathchar`-)}} 
\newcommand{\siro}[1]{{#1}^{\circ}}
\newcommand{\bisiro}{\comb \comi {\comi}^{\bullet} (\haih)^{\circ}}
\newcommand{\hka}{{\mathcal A}}
\newcommand{\uhka}{|{\mathcal A}|}
\newcommand{\hkb}{{\mathcal B}}
\newcommand{\uhkb}{|{\mathcal B}|}
\newcommand{\plalam}{{\mathcal L}_{P}}
\newcommand{\plalamc}{{\mathcal L}_{Pc}}
\newcommand{\plalamcd}{{\mathcal L}'_{Pc}}
\newcommand{\plalamb}{{\mathcal L}_{B}}
\newcommand{\fbiilp}{{\mathcal F}}
\newcommand{\plalamt}{{\mathcal L}_{\otimes}}
\newcommand{\comlam}{{\mathcal C}}
\newcommand{\hkt}{{\mathcal T}}
\newcommand{\uhkt}{|{\mathcal T}|}
\newcommand{\catc}{{\mathbb{C}}}
\newcommand{\ucatc}{\underline{{\mathbb{C}}}}
\newcommand{\catd}{{\mathbb{D}}}
\newcommand{\obcc}{Ob({\mathbb{C}})}
\newcommand{\cats}{{\sf Sets}}
\newcommand{\asmca}{\sf{Asm} ({\mathcal A})}
\newcommand{\asmcad}{\sf{Asm} ({\mathcal A}')}
\newcommand{\asmcb}{\sf{Asm} ({\mathcal B})}
\newcommand{\asmc}[1]{\sf{Asm} ({#1})}
\newcommand{\moda}{\sf{Mod} ({\mathcal A})}
\newcommand{\modb}{\sf{Mod} ({\mathcal B})}
\newcommand{\rlz}[1]{\| {#1} \|}
\newcommand{\erlz}{\| \mathchar`- \|}
\newcommand{\homrm}{\mathop{\mathrm{Hom}}\nolimits}
\newcommand{\migi}{\rightarrow} 
\newcommand{\xmigi}[1]{\xrightarrow{#1}}
\newcommand{\ootimes}{\widehat{\otimes}}
\newcommand{\limp}{\multimap}
\newcommand{\rimp}{\: \reflectbox{$\multimap$} \:}
\newcommand{\rimps}{\scalebox{0.75}{\reflectbox{$\multimap$}}}
\newcommand{\rimpss}{\scalebox{0.6}{\reflectbox{$\multimap$}}}
\newcommand{\eqb}{=_{\beta}}
\newcommand{\eqbe}{=_{\beta \eta}}
\newcommand{\neqb}{\neq_{\beta}}
\newcommand{\lamst}{{\lambda}^{\ast}}
\newcommand{\gamst}{{\gamma}_{\ast}}
\newcommand{\delst}{{\delta}_{\ast}}
\newcommand{\xist}{{\xi}_{\ast}}
\newcommand{\lkakko}{\langle \! [}
\newcommand{\rkakko}{] \! \rangle}
\newcommand{\cpst}[1]{[ \hspace{-1.5pt} [ {#1} ] \hspace{-1.5pt} ]}
\newcommand{\haih}{\mathchar`-}
\newcommand{\batuih}{I^{\times}}
\DeclareMathSymbol{\exm}{\mathalpha}{operators}{"21}
\DeclareMathAlphabet{\mathbbold}{U}{bbold}{m}{n} 
\newcommand{\bfban}{\mathbbold{\exm}} 
\newcommand{\banst}{{\mathbbold{\exm}}_{\ast}}
\newcommand{\asmcaban}{\asmca_{\banst}}
\newcommand{\asmcaxi}{\asmca_{{\xi}_{\ast}}}
\begin{document}

\title[Categorical Realizability for Non-symmetric Closed Structures]{Categorical Realizability for \\Non-symmetric Closed Structures}

\author[H.~Tomita]{Haruka Tomita}

\begin{abstract}
In categorical realizability, it is common to construct categories of assemblies and categories of modest sets from applicative structures.
These categories have structures corresponding to the structures of applicative structures. In the literature, classes of applicative structures inducing categorical structures such as Cartesian closed categories and symmetric monoidal closed categories have been widely studied.

In this paper, we expand these correspondences between categories with structure and applicative structures by identifying the classes of applicative structures giving rise to closed multicategories, closed categories, monoidal bi-closed categories as well as (non-symmetric) monoidal closed categories. These applicative structures are planar in that they correspond to appropriate planar lambda calculi by combinatory completeness.

These new correspondences are tight: we show that, when a category of assemblies has one of the structures listed above, the based applicative structure is in the corresponding class.

In addition, we introduce planar linear combinatory algebras by adopting linear combinatory algebras of Abramsky, Hagjverdi and Scott to our planar setting, that give rise to categorical models of the linear exponential modality and the exchange modality on the non-symmetric multiplicative intuitionistic linear logic.
\end{abstract}

\maketitle

\section{Introduction}

{\it Realizability} started with \cite{kleene} to give interpretations for Heyting arithmetic, and subsequently has been developed in many directions.
The {\it categorical realizability} we call here is one such development, giving categorical models of various programming languages and logics.
Given a very simple algebraic structure $\hka$ called {\it applicative structure} (or often called {\it combinatory algebra}), we construct  categories $\asmca$ and $\moda$ used as categorical models.
For an applicative structure $\hka$, the category of assemblies $\asmca$ is the category of ``$\hka$-computable universe" and its categorical structure depends on the computational structure of $\hka$.
Therefore, giving $\hka$ certain conditions, we obtain $\asmca$ with corresponding categorical structures.

The best known is that the condition of $\hka$ being a {\it partial combinatory algebra (PCA)} leads that $\asmca$ (and $\moda$) is a Cartesian closed category (CCC) \cite{longley}.
A PCA is an applicative structure containing two special elements $\comsa$ and $\comk$ which expresses substitution and discarding.
(We often call $\comsa$-combinator or $\comk$-combinator as such elements.)
PCAs also can be characterized by the {\it combinatory completeness}, that is, the property that any computable functions (i.e., functions expressed as untyped lambda terms) on a PCA can be represented by elements of the PCA itself.

Categorical realizability for linear structures is also well investigated.
Assuming $\hka$ is a $\bci$-{\it algebra}, that have combinators $\comb$, $\comc$ and $\comi$, $\asmca$ and $\moda$ become symmetric monoidal closed categories (SMCCs) \cite{lenisa}.
$\comb$, $\comc$ and $\comi$ are combinators expressing composition, exchanging and identity operations respectively, and $\bci$-algebras correspond to the linear lambda calculus by the combinatory completeness.
These results for PCAs and $\bci$-algebras are used as an useful method to giving various models based on CCCs and SMCCs.

On the other hand, categorical realizability based on non-symmetric structures has been less investigated.
In our previous studies \cite{tomita1,tomita2}, we proposed ``planar realizability" giving rise to non-symmetric categorical structures, such as closed multicategories, closed categories, skew closed categories and monoidal bi-closed categories.
The aim of this paper is to summarize and develop these results.

First in section \ref{secback}, we start with recalling basic notions of categorical realizability.
Results of PCAs and $\bci$-algebras are shown in the section.
Also notions of {\it applicative morphisms} and {\it linear combinatory algebras (LCAs)} are recalled from \cite{longley,ahs,hoshino}, that are used to obtain models of linear exponential modalities on linear calculus.
Basic knowledge of category theory and the lambda calculus is assumed and not referred here.

Next in section \ref{secnonsym}, we introduce several classes of applicative structures inducing non-symmetric categorical structures.
Realizing non-symmetric closed structures is a more subtle problem than the symmetric cases like CCCs and SMCCs.
Since the $\comc$-combinator in $\bci$-algebras induces the symmetry of the monoidal structure on the category of assemblies, one may think we can obtain non-symmetric categorical structures by excluding the $\comc$-combinator.
However, simply excluding the $\comc$-combinator leads no interesting categorical structures like internal hom functors, since realizing closed structures needs some exchanging of realizers even if the closed structures are not symmetric.
We have to give applicative structures with appropriately weakened exchanging that realizes internal hom structures but does not realize symmetries.
To resolve this problem, in \cite{tomita1}, we introduced a unary operation $\kuro{(\haih)}$ on an applicative structure, which allows restricted exchanging.
In section \ref{bikuro} and \ref{secbiikuro}, we recall these results, that $\bikuro$-{\it algebras} induce (non-symmetric) closed multicategories and $\biikuro$-{\it algebras} induce closed categories.
By the combinatory completeness, these classes of applicative structures correspond to the {\it planar lambda calculus}.

By the unary operation $\kuro{(\haih)}$, we obtain non-symmetric closed structures, however, this operation is not sufficient to obtain non-symmetric {\it monoidal} structures.
Assume that $\asmca$ on a $\biikuro$-algebra $\hka$ has tensor products.
When we take realizers of tensor products of $\asmca$ in the same way that we take realizers of Cartesian/tensor products of assemblies on PCAs/$\bci$-algebras, the realizer of unitors of $\asmca$ leads a realizer of the symmetry.
That is, this attempt to get non-symmetric tensor products from $\biikuro$-algebras ends in failure that the tensor products are symmetric.
Here what matters is that the way realizing products of assemblies on PCAs/$\bci$-algebras corresponds to the representation of tensor products
\[ X \otimes Y \cong \forall \alpha. (X \limp Y \limp \alpha) \limp \alpha \]
in the second-order linear logic (\cf~\cite{polymorphic}), which is valid only if the tensor is symmetric.
Thus, categorical realizability for non-symmetric monoidal structures needs some modification on the way realizing tensor products.

In this paper, we give two answers for this problem. 
One is the way preparing a new combinator $\comp$ which directly realizes pairings.
The class of applicative structures, $\biilp$-{\it algebras}, is newly introduced in this paper and give rise to non-symmetric monoidal closed categories.
We show results about $\biilp$-algebras in section \ref{secbiilp}.
The other way is taking realizers of tensor products matching the representation
\begin{center}
$X \otimes Y \cong \forall \alpha. (\alpha \rimp Y \rimp X) \limp \alpha$
\end{center}
in the second-order linear logic, which is valid even in the non-symmetric case.
To give such realizers, the class of applicative structures, {\it bi}-$\bdi$-{\it algebras}, was introduced in \cite{tomita2}.
Bi-$\bdi$-algebras feature two kinds of applications corresponding to two kinds of implications $\rimp$ and $\limp$, and have the combinatory completeness for the lambda calculus with two kinds of applications (which we call the {\it bi-planar lambda calculus} in this paper).
In section \ref{secbdi}, we recall these results about bi-$\bdi$-algebras.

Classes of applicative structures appearing in this paper are summarized in Table \ref{tab:table1}.
Also combinators and operations are summarized in Table \ref{tab:combinator}.

\begin{table}[h]
\caption{Summary of the classes of applicative structures}
\centering
\begin{tabular}{|c|c|c|c|}
	\hline
	\ Applicative structure $\hka$ \ & Definition & \ Structure of $\asmca$ and $\moda$ \ & Proposition \\ \hline \hline
	PCA/$\comsa \comk$-algebra & \ref{defpca} & Cartesian closed category & \ref{propccc} \\ \hline
	$\bci$-algebra & \ref{defbci} & symmetric monoidal closed category & \ref{propsmcc} \\ \hline
	bi-$\bdi$-algebra & \ref{defbdi} & monoidal bi-closed category & \ref{propbiclo} \\ \hline
	$\biilp$-algebra	& \ref{defbiilp} & monoidal closed category & \ref{propmonoclo} \\ \hline
	$\biikuro$-algebra & \ref{defbii} & closed category & \ref{propclo} \\ \hline
	$\bikuro$-algebra & \ref{defbikuro} & closed multicategory & \ref{propmul} \\ \hline \hline
	
	\multicolumn{4}{|c|}{Inclusions} \\ \hline
	\multicolumn{4}{|c|}{$\comsa \comk$-algebras $\subsetneq$ $\bci$-algebras $\subsetneq$ bi-$\bdi$-algebras} \\ 
	\multicolumn{4}{|c|}{$\subsetneq$ $\biilp$-algebras $\subsetneq$ $\biikuro$-algebras $\subsetneq$ $\bikuro$-algebras} \\ \hline
\end{tabular}
\label{tab:table1}
\end{table}

\begin{table}[h]
\caption{Summary of combinators and operations}
\centering
\begin{tabular}{|c|c|}
	\hline
	Combinators and operations & Axiom \\ \hline 	\hline
	combinator $\comsa$ & $\comsa xyz \simeq xz(yz)$ \\ 	\hline
	combinator $\comk$ & $\comk xy = x$ \\ \hline
	combinator $\comb$ & $\comb xyz = x(yz)$ \\ \hline
	combinator $\comc$ & $\comc xyz = xzy$ \\ \hline
	combinator $\comi$ & $\comi x = x$ \\ \hline
	combinator $\batui$ & $\batui x \comi = x$ \\ \hline
	combinator $\coml$ and $\comp$ & $\coml x (\comp yz) = xyz$ \\ \hline
	combinator $\comdr$ & $x \lapp ((\comdr \rapp y) \rapp z) = (x \lapp y) \rapp z$ \\ \hline
	combinator $\comdl$ & $(z \lapp (y \lapp \comdl)) \rapp x = z \lapp (y \rapp x)$ \\ \hline
	unary operation $\kuro{(-)}$ & $(\kuro{x}) y = y x$ \\ \hline
	unary operation $\dagr{(-)}$ & $(\dagr{x}) \rapp y = y \lapp x$ \\ \hline
	unary operation $\dagl{(-)}$ & $y \lapp (\dagl{x}) = x \rapp y$ \\ \hline
\end{tabular}
\label{tab:combinator}
\end{table}

The classes of applicative structures in this paper form a hierarchy as summarized in Table \ref{tab:table1}.
In section \ref{secsepa}, we show that these classes are different from each other.
To show the strictness of the inclusion, it is sufficient to give examples belonging to one side and not to the other side, and we give such examples in section \ref{secsepa1}.
While these proofs in section \ref{secsepa1} are mostly straightforward and not conceptually new, sometimes it is not easy to show that some applicative structure does not belong to some class of applicative structures.
As such an example, in section \ref{secsepa2}, we show that the untyped planar lambda calculus (with no constants) is not a bi-$\bdi$-algebra.
In the next section \ref{seccom}, we give the {\it computational lambda calculus} \cite{comlambda} as a rather unexpected example of a $\biikuro$-algebra and show the computational lambda calculus is not a bi-$\bdi$-algebra.

To better clarify the relationship between applicative structures and categorical structures of categories of assemblies, in section \ref{secnec}, we show certain ``inverses" of propositions shown in section \ref{secnonsym}.
That is, assuming $\asmca$ has certain categorical structure (such as being an SMCC), we show $\hka$ belongs to the corresponding class (such as $\bci$-algebras) under several conditions.
While the propositions for the cases of $\bikuro$-algebras and $\biikuro$-algebras were already presented in \cite{tomita1}, those for the cases of $\biilp$-algebras and bi-$\bdi$-algebras are newly shown in this paper.
By integrating results of section \ref{secnonsym}, \ref{secsepa} and \ref{secnec}, we can say that, for instance, the category of assemblies on the planar lambda calculus indeed has non-symmetric closed structure.

In section \ref{secplca}, we reformulate notions of LCAs for our $\biilp$-algebras.
Although linear exponential comonads are usually defined as comonads on symmetric monoidal categories, we can also define linear exponential comonads on non-symmetric monoidal categories \cite{hasegawa1}.
In \cite{tomita2}, we defined {\it exponential relational planar linear combinatory algebras (exp-rPLCAs)} as pairs of a bi-$\bdi$-algebra and an applicative endomorphism on it, that give rise to linear exponential comonads on (non-symmetric) monoidal bi-closed categories.
The definition of exp-rPLCAs in \cite{tomita2} are the reformulation of the definition of (relational) LCAs to bi-$\bdi$-algebras.
In this paper, we generalize exp-rPLCAs a bit by changing ``bi-$\bdi$-algebras" to ``$\biilp$-algebras," and then similarly call the generalized ones as exp-rPLCAs.
New exp-rPLCAs give rise to linear exponential comonads on (non-symmetric) monoidal closed categories, and correspond to adjoint pairs of applicative morphisms between $\biilp$-algebras and PCAs.

There are also modalities on (non-symmetric) linear calculus other than the linear exponential modality.
The {\it exchange modality}, investigated in \cite{paiva}, is a modality connecting a commutative logic and a non-commutative logic (the {\it Lambek calculus}).
Categorical models of the exchange modality are given as monoidal adjunctions between monoidal bi-closed categories and SMCCs, which are called {\it Lambek adjoint models}.
In \cite{tomita2}, we defined {\it exchange relational planar linear combinatory algebras (exch-rPLCAs)} that give rise to Lambek adjoint models.
In this paper, like exp-rPLCAs, we reformulate exch-rPLCAs for $\biilp$-algebras.
New exch-rPLCAs correspond to adjoint pairs between $\biilp$-algebras and $\bci$-algebras, and give rise to monoidal adjunctions between (non-symmetric) monoidal closed categories and SMCCs, that are models of the exchange modality based on the non-symmetric multiplicative intuitionistic linear logic (that is, a fragment of the Lambek calculus without bi-closedness).

Finally in section \ref{secrel} and \ref{seccon}, we discuss related work, summarize conclusion and describe future work.

\section{Background} \label{secback}

\subsection{Applicative structures and categories of assemblies}

First we recall basic notions of the categorical realizability.
Notations and definitions in this subsection are from \cite{longley}.

\begin{defi}
A {\it partial applicative structure} $\hka$ is a pair of a set $\uhka$ and a partial binary operation $(x ,y) \mapsto x \cdot y$ on $\uhka$.
When the binary operation is total, we say $\hka$ is a {\it total applicative structure}.
\end{defi}

We often omit $\cdot$ and write $x \cdot y$ as $x y$ simply.
We also omit unnecessary parentheses assuming that application joins from the left.
For instance, $x y (z w)$ denotes $(x \cdot y) \cdot (z \cdot w)$.

In the sequel, we use two  notations ``$\downarrow$'' and ``$\simeq$.''
We write $x y \downarrow$ for that $x \cdot y$ is defined.
``$\simeq$'' denotes the Kleene equality, which means that if the one side of the equation is defined then the other side is also defined and both sides are equal.

\begin{defi}
Let $\hka$ be a partial applicative structure.
\begin{enumerate}
\item An {\it assembly} on $\hka$ is a pair $X = (|X|,\erlz_X)$, where $|X|$ is a set and $\erlz_X$ is a function sending $x \in |X|$ to a non-empty subset $\rlz{x}_X$ of $\uhka$.
We call elements of $\rlz{x}_X$ {\it realizers of} $x$.
\item For assemblies $X$ and $Y$ on $\hka$, a {\it map of assemblies} $f:X \migi Y$ is a function $f:|X| \migi |Y|$ such that there exists an element $r \in \uhka$ realizing $f$.
Here we say ``$r$ {\it realizes} $f$'' or ``$r$ {\it is a realizer of} $f$'' if $r$ satisfies that
\begin{center}
$\forall x \in |X|$, $\forall a \in \rlz{x}_X$, $r a \downarrow$ and $r a \in \rlz{f(x)}_Y$.
\end{center}
\end{enumerate}
\end{defi}

If we assume two additional conditions on a partial applicative structure, we can construct two kinds of categories.

\begin{defi} \label{defasm}
Let $\hka$ be a partial applicative structure satisfying that:
\begin{enumerate}[(i)]
\item $\uhka$ has an element $\comi$ such that $\forall x \in \uhka$, $\comi x \downarrow$ and $\comi x = x$; \label{cond1}
\item for any $r_1, r_2 \in \uhka$, there exists $r \in \uhka$ such that $\forall x \in \uhka$, $r x \simeq r_1 (r_2 x)$. \label{cond2}
\end{enumerate}
Then we construct categories as follows.
\begin{enumerate}
\item The category $\asmca$, called the {\it category of assemblies} on $\hka$, consists of assemblies on $\hka$ as its objects and maps of assemblies as its maps.
Identity maps and composition maps are the same as those of $\cats$ (the category of sets and functions).
\item We call an assembly $X$ a {\it modest set} on $\hka$ if $X$ satisfies
\begin{center} $\forall x,x' \in |X|$, $x \neq x' \Rightarrow \rlz{x}_X \cap \rlz{x'}_X = \emptyset$. \end{center}
The category $\moda$, called the {\it category of modest sets} on $\hka$, is the full subcategory of $\asmca$ whose objects are modest sets on $\hka$.
\end{enumerate}
\end{defi}

We need above two conditions \ref{cond1} and \ref{cond2} to give realizers of the identities and composition maps.
Identities are realized by $\comi$.
For maps $f_1:Y \migi Z$ realized by $r_1$ and $f_2:X \migi Y$ realized by $r_2$, we obtain $r$ given by the condition \ref{cond2}, which realizes $f_1 \circ f_2$.
Since all the classes of applicative structures introduced later satisfy these conditions, the conditions are not be much problems in this paper.

Intuitively, the category $\asmca$ (and $\moda$) can be understood as the category of ``$\hka$-computable universe.''
For an assembly $X=(|X|,\erlz_X)$ on $\hka$, elements of $\rlz{x}_X$ can be seen as ``machine-level interpretations'' of $x \in |X|$.
For a map $f:X \migi Y$ of $\asmca$, the realizer $r$ of $f$ can be seen as ``machine implementation'' of $f$, since $r$ takes interpretations of $x$ (that is, elements of $\rlz{x}_X$) as input and computes interpretations of $f(x)$ (that is, elements of $\rlz{f(x)}_Y$).

\subsection{PCAs and Cartesian closed categories}

Since $\asmca$ is the category of $\hka$-computable universe, the structure of $\asmca$ depends on the computational structure of $\hka$.
When applicative structures belong to a specific class, specific categorical structures may be found on the categories of assemblies.
The best known such class is the class of PCAs, which induce Cartesian closed categories of assemblies.
Results in this subsection are from \cite{longley}.

\begin{defi} \label{defpca}
A {\it partial combinatory algebra (PCA)} is a partial applicative structure $\hka$ which contains two special elements $\comsa$ and $\comk$ such that:
\begin{itemize}
\item $\forall x,y \in \uhka$, $\comk x \downarrow$, $\comk x y \downarrow$ and $\comk x y = y$;
\item $\forall x,y,z \in \uhka$, $\comsa x \downarrow$, $\comsa x y \downarrow$ and $\comsa x y z \simeq x z (y z)$.
\end{itemize}
When a PCA $\hka$ is a total applicative structure, we say $\hka$ is an $\sk$-{\it algebra}.
\end{defi}

The most fundamental example of PCAs is the untyped lambda calculus.

\begin{exa} \label{examlam}
Suppose infinite supply of variables $x,y,z,\dots$. {\it Untyped lambda terms} are terms constructed from the following six rules:
\begin{center}
	\AxiomC{}
	\RightLabel{\scriptsize (identity)}
	\UnaryInfC{$x \vdash x$}
	\DisplayProof \; \;
	\AxiomC{$\Gamma \vdash M $}
	\AxiomC{$\Delta \vdash N $}
	\RightLabel{\scriptsize (application)}
	\BinaryInfC{$\Gamma , \Delta \vdash MN $}
	\DisplayProof \; \;
	\AxiomC{$\Gamma , x \vdash M $}
	\RightLabel{\scriptsize (abstraction)}
	\UnaryInfC{$\Gamma \vdash \lambda x.M $}
	\DisplayProof
\end{center}
\begin{center}
	\AxiomC{$\Gamma , x , y, \Delta \vdash M $}
	\RightLabel{\scriptsize (exchange)}
	\UnaryInfC{$\Gamma , y, x, \Delta \vdash M $}
	\DisplayProof \; \;
	\AxiomC{$\Gamma , x , y \vdash M $}
	\RightLabel{\scriptsize (contraction)}
	\UnaryInfC{$\Gamma , x \vdash M[x/y] $}
	\DisplayProof \; \;
	\AxiomC{$\Gamma \vdash M $}
	\RightLabel{\scriptsize (weakening)}
	\UnaryInfC{$\Gamma, x \vdash M $}
	\DisplayProof	
\end{center}
Here, in the application rule, $\Gamma$ and $\Delta$ are sequences of distinct variables and contain no common variables.
In the contraction rule, $M[x/y]$ denotes the term obtained by substituting $x$ for all free $y$ in $M$.
In the weakening rule, $x$ is a variable not contained in $\Gamma$.

Note that abstraction rules are only applied to the rightmost variables. In order to
apply the abstraction rule to a variable in a different position, we need to use the exchange rule several times and move the variable to the rightmost place.

We define $\beta$-{\it equivalence relation} $\eqb$ on lambda terms as the congruence of the relation $(\lambda x.M)N \sim M[N/x]$.
Untyped lambda terms modulo $=_{\beta}$ form a PCA (actually an $\sk$-algebra). The underlying set of the PCA consists of $\beta$-equivalence classes of untyped closed lambda terms (i.e., lambda terms with no free variables) and the application is defined as that of lambda terms. In this example, $\lambda xyz.xz(yz)$ is the representative of $\comsa$ and $\lambda xy.x$ is the representative of $\comk$. 
\end{exa}

The correspondence between PCAs and the lambda calculus is more than just an example. PCAs have an important property called the {\it combinatory completeness}, which gives interpretations of ``computable functions'' on $\hka$ by elements of $\hka$ itself.
First, we give the definition of polynomials over an applicative structure (not restricted to PCAs).

\begin{defi} \label{defpoly}
Let $\hka$ be a partial applicative structure.
A {\it polynomial} over $\hka$ is a syntactic expression generated by variables, elements of $\uhka$ and the application of $\hka$.
For two polynomials $M$ and $N$ over $\hka$, $M \simeq N$ means that \[ M[a_1/x_1,\dots ,a_n/x_n] \simeq N[a_1/x_1,\dots ,a_n/x_n] \] holds in $\hka$ for any $a_1,\dots ,a_n \in \uhka$, where $\{ x_1,\dots ,x_n \}$ contains all the variables of $M$ and $N$.
\end{defi}

\begin{prop}[combinatory completeness for PCAs] \label{propcompca}
Let $\hka$ be a PCA and $M$ be a polynomial over $\hka$.
For any variable $x$, there exists a polynomial $M'$ such that the free variables of $M'$ are the free variables of $M$ excluding $x$ and $M' a \simeq M[a/x]$ holds for all $a \in \uhka$.
We write such $M'$ as $\lamst x.M$.
\end{prop}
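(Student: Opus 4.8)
The plan is to prove combinatory completeness for PCAs by the standard bracket-abstraction argument, proceeding by structural induction on the polynomial $M$. The base cases are as follows: if $M$ is the variable $x$ itself, we set $\lamst x.M := \comsa \comk \comk$, so that $(\comsa \comk \comk) a \simeq \comk a (\comk a) = a$; if $M$ is a variable $y \neq x$ or an element of $\uhka$, then $M$ does not depend on $x$, and we set $\lamst x.M := \comk M$, which satisfies $(\comk M) a = M$ by the axiom for $\comk$ (note $\comk M \downarrow$ always, by the totality conditions on $\comk$ in Definition \ref{defpca}). The free variables of $\lamst x.M$ in each case are those of $M$ excluding $x$, as required.

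For the inductive step, suppose $M = PQ$ where $P$ and $Q$ are polynomials for which $\lamst x.P$ and $\lamst x.Q$ have already been constructed. I would set $\lamst x.M := \comsa (\lamst x.P)(\lamst x.Q)$. Then for any $a \in \uhka$, using the axioms of the PCA, $(\comsa (\lamst x.P)(\lamst x.Q)) a \simeq (\lamst x.P) a \, ((\lamst x.Q) a) \simeq P[a/x] \, (Q[a/x]) = (PQ)[a/x] = M[a/x]$. The first $\simeq$ uses that $\comsa u v \downarrow$ for all $u,v$ and the defining Kleene equality $\comsa uvz \simeq uz(vz)$; the remaining steps are the induction hypotheses and the definition of substitution on polynomials (Definition \ref{defpoly}). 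The free-variable bookkeeping is immediate: the free variables of $\comsa (\lamst x.P)(\lamst x.Q)$ are the union of those of $\lamst x.P$ and $\lamst x.Q$, which is $(\mathrm{FV}(P) \cup \mathrm{FV}(Q)) \setminus \{x\} = \mathrm{FV}(M) \setminus \{x\}$.

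The one subtlety worth flagging — and the only place where care is genuinely needed rather than routine — is the handling of Kleene equality and definedness. Because the application is partial, one must check at each step that the outer applications of $\comsa$ and $\comk$ to the relevant arguments are actually defined before invoking the reduction axioms; this is exactly what the ``$\comsa x \downarrow$, $\comsa xy \downarrow$, $\comk x \downarrow$'' clauses in Definition \ref{defpca} are there to guarantee, so the argument goes through, but the $\simeq$ (rather than $=$) must be threaded correctly: in the application case, $M[a/x]$ may itself be undefined, and the chain of $\simeq$'s ensures $(\lamst x.M) a$ is defined if and only if $M[a/x]$ is, with equal values. I would remark that the construction depends on the choice of representatives $\comsa$ and $\comk$ but any such choice works, and that the resulting $\lamst x.M$ is a genuine polynomial (built from $\comsa$, $\comk$, the sub-polynomials, and application), so the induction is well-founded and the statement follows.
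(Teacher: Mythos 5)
Your proposal is correct and follows essentially the same route as the paper: the standard bracket abstraction with $\lamst x.x := \comsa\comk\comk$, $\lamst x.y := \comk y$ (likewise for constants), and $\lamst x.MN := \comsa(\lamst x.M)(\lamst x.N)$. The paper states only the construction and leaves the verification implicit; your added bookkeeping on definedness and Kleene equality (using $\comk xy = x$, as in Table \ref{tab:combinator} and Example \ref{examlam}) is exactly the routine check the paper omits.
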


\begin{proof}
We define $\lamst x.M$ by induction on the structure of $M$.
\begin{itemize}
\item $\lamst x.x := \comsa \comk \comk$
\item $\lamst x.y := \comk y$ \; \; (when $x \neq y$)
\item $\lamst x.MN := \comsa (\lamst x.M)(\lamst x.N)$ \qedhere
\end{itemize}
\end{proof}

For the special case of the above proposition, any closed lambda term is $\beta$-equivalent to some term constructed from $\lambda xy.x$ and $\lambda xyz.xz(yz)$ using applications.

Using the combinatory completeness, we can give $\asmca$ (and $\moda$) on a PCA $\hka$ the structure of Cartesian closed category (CCC).

\begin{prop} \label{propccc}
When $\hka$ is a PCA, $\asmca$ and $\moda$ are CCCs.
\end{prop}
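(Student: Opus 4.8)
The plan is to verify, one structure at a time, that the standard constructions of finite products and exponentials on categories of assemblies go through using only the combinatory completeness of a PCA (Proposition~\ref{propcompca}). Since $\moda$ is a full subcategory of $\asmca$, I would check that each construction restricts to modest sets, so it suffices to work in $\asmca$ and remark on modesty as we go. Throughout, whenever I need a realizer for a map built from other realizers, I will silently invoke $\lamst x.(-)$ to produce the required element of $\uhka$; this is the one tool that does all the real work.

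First I would construct the terminal object: take $1 = (\{\ast\}, \erlz_1)$ with $\rlz{\ast}_1 = \uhka$ (nonempty since, e.g., $\comi \in \uhka$). The unique map $X \migi 1$ is realized by any element, say $\comi$. Next, binary products: for assemblies $X, Y$ set $|X \times Y| = |X| \times |Y|$ and $\rlz{(x,y)}_{X\times Y} = \{\, p \in \uhka \mid p\,\comk \in \rlz{x}_X \text{ and } p\,(\comk\comi) \in \rlz{y}_Y \,\}$, i.e. the usual encoding of pairs where a pair is a function awaiting a projection. One checks this set is nonempty using a pairing combinator $\lamst xyz.zxy$ applied to realizers of $x$ and $y$; the projections are realized by $\lamst p. p\comk$ and $\lamst p. p(\comk\comi)$; and given $f : Z \migi X$, $g : Z \migi Y$ realized by $r_f, r_g$, the induced map $\langle f,g\rangle$ is realized by $\lamst z. (\lamst w. w(r_f z)(r_g z))$. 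The universal property holds because on underlying sets this is just the product in $\cats$. Modesty of $X \times Y$ when $X, Y$ are modest follows since a realizer of $(x,y)$ determines realizers of $x$ and of $y$ via the projection combinators.

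For the exponential $Y^X$, I would take $|Y^X|$ to be the set of maps of assemblies $X \migi Y$ and $\rlz{f}_{Y^X} = \{\, r \in \uhka \mid r \text{ realizes } f \,\}$, which is nonempty precisely because $f$ is a map of assemblies. Evaluation $\mathrm{ev} : Y^X \times X \migi Y$ sends $(f,x) \mapsto f(x)$ and is realized by $\lamst p. (p\comk)(p(\comk\comi))$. Given $g : Z \times X \migi Y$ realized by $r$, its transpose $\tilde g : Z \migi Y^X$ sends $z$ to the map $x \mapsto g(z,x)$; this is well-defined as a map of assemblies (each $\tilde g(z)$ is realized by $\lamst a. r(\lamst w. wba)$ with $b$ ranging over $\rlz{z}_Z$, which after fixing one $b$ gives a concrete realizer, and $\tilde g$ itself is realized by $\lamst b. (\lamst a. r(\lamst w. wba))$). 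Uniqueness of the transpose and the triangle identity are again inherited from $\cats$. For modest sets, $Y^X$ with $Y$ modest need not be modest in general — but actually it is, since two distinct maps $f \ne f'$ into a modest $Y$ differ at some $x$, and any common realizer $r$ would have $r a \in \rlz{f(x)}_Y \cap \rlz{f'(x)}_Y = \emptyset$ for $a \in \rlz{x}_X$, a contradiction; so $\moda$ is closed under exponentials, and it is closed under finite products by the remark above, hence $\moda$ is also a CCC.

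I do not expect a serious obstacle here; the proof is a routine but somewhat lengthy bookkeeping exercise. The one place demanding a little care is producing the combinators $\comk\comi$ (for the second projection), the pairing combinator, and verifying that the various $\lamst$-abstractions genuinely have the free variables claimed and realize the stated functions uniformly in the realizers of their arguments — in particular that all applications involved are defined, which for a total applicative structure is automatic but for a general PCA requires noting that $\comk$- and $\comsa$-redexes are defined by hypothesis. Because every categorical equation reduces on underlying sets to the corresponding fact in $\cats$, no coherence diagram needs independent checking; the content is entirely in exhibiting realizers, and that is exactly what Proposition~\ref{propcompca} delivers.
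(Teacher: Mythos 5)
Your proposal is correct and follows essentially the same route as the paper: the terminal object and exponential are defined identically, the product uses the same Church-pair encoding (your realizer set "anything that projects correctly" is a mild extensional variant of the paper's "elements of the form $\lamst t.tpq$", and your projections via $\comk$ and $\comk\comi$ are the paper's $\lamst pq.p$, $\lamst pq.q$ in disguise), and the modesty arguments for products and exponentials coincide with the paper's. All realizers you exhibit are obtainable from combinatory completeness and the definedness bookkeeping is handled adequately, so no gap remains.
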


While this result is standard, we shall outline its proof for comparison with the parallel results on various classes of combinatory algebras to be developed in this paper.

\begin{proof}
First we prove the proposition for $\asmca$.
Let $\catc:=\asmca$.
\begin{itemize}
\item By the combinatory completeness, $\uhka$ has elements $\lamst x.x$ and $\lamst xyz.x(yz)$, which make $\hka$ satisfying the conditions \ref{cond1} and \ref{cond2} of Definition \ref{defasm}.
Thus $\catc$ is a category.

\item For objects $X$ and $Y$, the underlying set of the binary product $X \times Y$ is $|X| \times |Y|$.
Realizers are defined as \[ \rlz{(x,y)}_{X \times Y} := \{ \lamst t.tp q \mid p \in \rlz{x}_X, q \in \rlz{y}_Y \}. \] 

\item For maps $f:X \migi X'$ realized by $r_f$ and $g:Y \migi Y'$ realized by $r_g$, $f \times g$ is the function sending $(x, y)$ to $(f(x),g(y))$.
A realizer for $f \times g$ does exists as $\lamst u.u (\lamst pqt.t(r_f p)(r_g q))$.

\item The underlying set of the terminal object $1$ is the singleton $\{ \ast \}$.
Realizers are $\rlz{\ast}_1 := \uhka$. 
It is easy to see that this $1$ satisfy the conditions of the terminal object.

\item The projection $\pi:X \times Y \migi X$ is the function sending $(x,y)$ to $x$ and has a realizer $\lamst u.u(\lamst pq.p)$.
The projection $\pi':X \times Y \migi Y$ is the function sending $(x,y)$ to $y$ and has a realizer $\lamst u.u(\lamst pq.q)$.
It is easy to see that these $\pi$ and $\pi'$ satisfy the conditions of the projections of the Cartesian category.

\item For objects $X$ and $Y$, the underlying set of the exponential $Y^X$ is $\homrm_{\catc}(X,Y)$.
Realizers are \[ \rlz{f}_{Y^X} := \{ r \in \uhka \mid \mbox{$r$ realizes $f$} \}. \]

\item For maps $f:X' \migi X$ realized by $r_f$ and $g:Y \migi Y'$ realized by $r_g$, the map $g^f$ is the function sending a map  $h \in \homrm_{\catc}(X,Y)$ realized by $r_h$ to $g \circ h \circ f \in \homrm_{\catc}(X',Y')$ realized by $\lamst v.r_g (r_h (r_f v))$.
A realizer of $g^f$ is $\lamst uv.r_g (u (r_f v))$.

\item The adjunction $\Phi: \catc(X \times Y,Z) \migi \catc(X,Z^Y)$ is the function sending $f:X \times Y \migi Z$ realized by $r_f$ to the map $\Phi(f):x \mapsto (y \mapsto f(x,y))$.
$\Phi(f)$ is realized by $\lamst pq.r_f (\lamst t.tpq)$.
For a map $g:X\migi Z^Y$ realized by $r_g$, $\Phi^{-1}(g):X \times Y \migi Z$ is the map sending $(x,y)$ to $g(x)(y)$.
$\Phi^{-1}(g)$ is realized by $\lamst u.u r_g$.
It is easy to see that this $\Phi$ satisfies the condition of the adjunction of the CCC.
\end{itemize}
Therefore, $\asmca$ is a CCC.

Next we show that $\moda$ is a CCC.

Given modest sets $X$ and $Y$ on $\hka$, we define the binary product $X \times Y$ in the  same way as $\asmca$.
Here we can show that $X \times Y$ also is a modest set.
Suppose there is some $a \in \uhka$ realizing different $(x,y)$ and $(x',y')$ of $|X| \times |Y|$.
When we assume $x \neq x'$, though $\pi (x,y) \neq \pi (x',y')$, both sides have the same realizer $(\lamst u.u(\lamst pq.p))a$.
It contradicts that $X$ is a modest set.
The same contradiction is lead when $y \neq y'$.
Therefore, different $(x,y)$ and $(x',y')$ do not have common realizers and $X \times Y$ is a modest set.

For modest sets $X$ and $Y$ on $\hka$, we also define $Y^X$ in the  same way as $\asmca$.
We can show that $Y^X$ also is a modest set.
Suppose there is some $r$ realizing different $f:X \migi Y$ and $g:X \migi Y$.
Take $x \in |X|$ and $a \in \rlz{x}_X$ such that $f(x) \neq g(x)$.
Then $r a$ is an element of both $\rlz{f(x)}_Y$ and $\rlz{g(x)}_Y$.
However, it contradicts that $Y$ is a modest set.
Therefore, $Y^X$ is a modest set.

Hence, we can show that $\moda$ is a CCC by the same proof for $\asmca$.
\end{proof}

In this proof, we use the combinatory completeness for the PCA a lot to give realizers for each assembly and map.

\subsection{$\bci$-algebras and symmetric monoidal closed categories}

Given an applicative structure $\hka$ which has the different computational structure from PCAs, we obtain $\asmca$ with a different categorical structure from CCCs.
In this subsection, we recall another well-known class of applicative structures called $\bci$-algebras, which correspond to linear structures.
Results given in this subsection are from \cite{lenisa,hoshino}.

\begin{defi} \label{defbci}
A $\bci$-{\it algebra} is a total applicative structure $\hka$ which contains three elements $\comb$, $\comc$ and $\comi$ such that $\forall x, y, z \in \uhka$, $\comb x y z = x (y z)$, $\comc x y z = x z y$ and $\comi x = x$.
\end{defi}

\begin{exa}
{\it Untyped linear lambda terms} are untyped lambda terms constructed without using weakening and contraction rules (See Example \ref{examlam}).
That is, an untyped linear lambda term is an untyped lambda term whose each variable appears just once in the term.
Untyped closed linear lambda terms modulo $\eqb$ form a $\bci$-algebra.
Here $\lambda xyz.x(yz)$, $\lambda xyz.xzy$ and $\lambda x.x$ are the representatives of $\comb$, $\comc$ and $\comi$ respectively.
\end{exa}

\begin{prop}[combinatory completeness for $\bci$-algebras]
Let $\hka$ be a $\bci$-algebra and $M$ be a polynomial over $\hka$.
For any variable $x$ appearing exactly once in $M$, there exists a polynomial $\lamst x.M$ such that the free variables of $\lamst x.M$ are the free variables of $M$ excluding $x$ and $(\lamst x.M) a = M[a/x]$ for all $a \in \uhka$.
\end{prop}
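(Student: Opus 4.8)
The plan is to mimic the proof of combinatory completeness for PCAs (Proposition~\ref{propcompca}), but with the linearity constraint tracked throughout: since $x$ occurs exactly once in $M$, we never invoke a duplicating or discarding combinator, and $\comb$, $\comc$, $\comi$ suffice. First I would fix a variable $x$ and argue by induction on the structure of the polynomial $M$, under the standing hypothesis that $x$ occurs exactly once in $M$. This hypothesis immediately excludes the cases where $M$ is an element of $\uhka$ or a variable distinct from $x$, so the only base case is $M \equiv x$; there I set $\lamst x.x := \comi$, and then $(\lamst x.x)\,a = \comi\,a = a = x[a/x]$ for all $a \in \uhka$, with no free variables to account for.

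For the inductive step $M \equiv PQ$: because $x$ occurs exactly once in $M$, it occurs exactly once in precisely one of $P$, $Q$ and not at all in the other, so in particular substitution acts trivially on the $x$-free factor. If $x$ occurs in $P$, the induction hypothesis supplies $\lamst x.P$ with $(\lamst x.P)\,a = P[a/x]$, and I put $\lamst x.(PQ) := \comc\,(\lamst x.P)\,Q$; using the defining equation $\comc\,u\,v\,w = u\,w\,v$ one computes $(\lamst x.(PQ))\,a = \comc\,(\lamst x.P)\,Q\,a = (\lamst x.P)\,a\,Q = P[a/x]\,Q = (PQ)[a/x]$. Symmetrically, if $x$ occurs in $Q$, I put $\lamst x.(PQ) := \comb\,P\,(\lamst x.Q)$, and from $\comb\,u\,v\,w = u\,(v\,w)$ one gets $(\lamst x.(PQ))\,a = \comb\,P\,(\lamst x.Q)\,a = P\,((\lamst x.Q)\,a) = P\,Q[a/x] = (PQ)[a/x]$. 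In either subcase, forming $\comc\,M_1\,M_2$ or $\comb\,M_1\,M_2$ introduces no new variables, and since $x$ is already absent from the untouched factor, the free variables of $\lamst x.(PQ)$ are exactly those of $PQ$ with $x$ removed, as required.

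The construction is entirely syntax-directed and uses only the defining equations of $\comb$, $\comc$, $\comi$, so I do not expect a genuine obstacle. The one point needing care is the bookkeeping in the application case: one must check that the single occurrence of $x$ really does sit in exactly one of the two arguments, which is precisely where the "exactly once" hypothesis is used and which guarantees we are never forced to abstract $x$ out of a sub-polynomial that does not contain it. I would state this observation explicitly at the start of the inductive step so that the two symmetric subcases are visibly exhaustive.
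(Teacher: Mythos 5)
Your construction ($\lamst x.x := \comi$; $\comc(\lamst x.P)Q$ when $x$ occurs in $P$; $\comb\,P\,(\lamst x.Q)$ when $x$ occurs in $Q$) is exactly the induction the paper uses, and your verification of the defining equations and the free-variable bookkeeping is correct. No gaps; this is essentially the same proof.
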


\begin{proof}
We define $\lamst x.M$ by induction on the structure of $M$.
\begin{itemize}
\item $\lamst x.x := \comi$
\item $\lamst x.MN := \begin{cases}
		\comc (\lamst x.M) N & (x \in FV(M)) \\
		\comb M (\lamst x.N) & (x \in FV(N))
		\end{cases}$ \qedhere
\end{itemize}
\end{proof}

The combinatory completeness for a $\bci$-algebra allows interpreting only linear lambda terms, not the whole of lambda terms.
Thus some realizers used in the proof of Proposition \ref{propccc} (such as $\lamst u.u(\lamst pq.p)$) may not exist in a $\bci$-algebra.
For $\bci$-algebras, the categories of assemblies have other categorical structure than CCCs.

\begin{prop} \label{propsmcc}
When $\hka$ is a $\bci$-algebra, $\asmca$ is a symmetric monoidal closed category (SMCC).
\end{prop}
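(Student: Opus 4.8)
The plan is to follow the proof of Proposition~\ref{propccc} almost verbatim, replacing the Cartesian product by a tensor product and the exponential adjunction by a monoidal-closed one, and to produce every structure morphism as a realizer obtained from combinatory completeness for $\bci$-algebras --- so every realizing term must be \emph{linear}, i.e.\ each variable occurs exactly once. First I would observe that $\comi$ and $\comb$ witness conditions~\ref{cond1} and~\ref{cond2} of Definition~\ref{defasm}, so $\asmca$ is a category. Then I would set $|X \otimes Y| := |X| \times |Y|$ with
\[ \rlz{(x,y)}_{X \otimes Y} := \{ \lamst t.tpq \mid p \in \rlz{x}_X, q \in \rlz{y}_Y \}, \]
and, for $f:X\migi X'$ and $g:Y\migi Y'$ realized by $r_f,r_g$, take $f \otimes g : (x,y) \mapsto (f(x),g(y))$, realized by $\lamst u.u(\lamst pq.\lamst t.t(r_f p)(r_g q))$; bifunctoriality is then an equation of underlying functions. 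For the unit I would take $|I| := \{\ast\}$ with $\rlz{\ast}_I := \{\comi\}$; here the singleton is essential, since with $\rlz{\ast}_I := \uhka$ the left unitor would have to discard a realizer, which no $\bci$-algebra can do.

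Next I would give the coherence isomorphisms and the symmetry, each by a linear term: the associator $(X\otimes Y)\otimes Z \migi X\otimes(Y\otimes Z)$ and its inverse; the unitors (using that $(\lamst t.t\comi a)\comi = a$) and their inverses; and $\sigma_{X,Y}:(x,y)\mapsto(y,x)$ realized by $\lamst u.u(\lamst pq.\lamst t.tqp)$. The exchange combinator $\comc$ is used precisely here (and in the currying term below). All of the coherence equations --- the pentagon, triangle and hexagon, and $\sigma\circ\sigma = \mathrm{id}$ --- and naturality of all these morphisms are equalities of the underlying set functions, hence automatic, since a map of assemblies is determined by its underlying function.

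For the closed structure I would put $|X\limp Y| := \homrm_{\asmca}(X,Y)$ with $\rlz{f}_{X\limp Y} := \{r\in\uhka \mid r \text{ realizes } f\}$ (non-empty by definition of a map of assemblies), and for each $Y$ define the bijection $\Phi:\asmca(X\otimes Y,Z)\migi\asmca(X,Y\limp Z)$ by $\Phi(f) = x\mapsto(y\mapsto f(x,y))$, realized by $\lamst pq.r_f(\lamst t.tpq)$ when $r_f$ realizes $f$, with inverse $\Phi^{-1}(g) = (x,y)\mapsto g(x)(y)$ realized by $\lamst u.u r_g$; naturality of $\Phi$ in $X$ and $Z$ is again an equality of underlying functions. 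This exhibits $(-)\otimes Y \dashv Y\limp(-)$ for every $Y$, so $\asmca$ is a symmetric monoidal closed category.

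I expect the only real work to be the bookkeeping forced by linearity: unlike in the CCC proof, no realizer may duplicate or discard its argument, so each displayed term must be checked to be linear, the unit must be given a singleton realizer set, and one obtains a genuinely monoidal --- not Cartesian --- structure (in particular there are no projections $X\otimes Y\migi X$). Once each structure morphism is seen to be linearly realizable, the remaining verifications (coherence, naturality, and the triangle identities for the adjunction) collapse to the corresponding identities in $\cats$, because a morphism of assemblies is just a function equipped with a realizer, so no further computation in $\hka$ is required.
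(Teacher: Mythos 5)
Your proposal is correct and follows essentially the same route as the paper's proof: the same tensor product with pairing realizers, the unit realized by $\{\comi\}$, the same linear realizers for the unitors, associator, symmetry, internal hom and the currying adjunction, with coherence and naturality inherited from the underlying functions in $\cats$. (Only your aside that ``no $\bci$-algebra can discard'' is an overstatement --- e.g.\ every $\sk$-algebra is a $\bci$-algebra --- but this is harmless since you take the singleton realizer set for $I$ anyway.)
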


\begin{proof}[Proof sketch] \hfill
\begin{itemize}
\item For objects $X$ and $Y$, the underlying set of $X \otimes Y$ is $|X| \times |Y|$.
Realizers are defined as $\rlz{x \otimes y}_{X \otimes Y} := \{ \lamst t.tp q \mid p \in \rlz{x}_X, q \in \rlz{y}_Y \}$.

\item For maps $f:X \migi X'$ realized by $r_f$ and $g:Y \migi Y'$ realized by $r_g$, the map $f \otimes g$ is the function sending $x \otimes y$ to $f(x) \otimes g(y)$,
which is realized by $\lamst u.u (\lamst pqt.t(r_f p)(r_g q))$.

\item The underlying set of the unit object $I$ is the singleton $\{ \ast \}$.
The realizer is $\rlz{\ast}_I := \{ \comi \}$. 

\item The right unitor $\rho_X : X \migi X \otimes I$ is the function sending $x$ to $x \otimes \ast$, which is realized by $\lamst p.(\lamst t.tp\comi)$.
The inverse $\rho^{-1}$ is realized by $\lamst u.u(\lamst pq.qp)$.

\item Also we can take the left unitor $\lambda_X : I \otimes X \migi X$ as the function $(\ast \otimes x)\mapsto x$ and the associator
$\alpha_{XYZ} : X \otimes (Y \otimes Z) \migi (X \otimes Y) \otimes Z$ as $x \otimes (y \otimes z) \mapsto (x \otimes y) \otimes z$.

\item The symmetry $\sigma_{XY} : X \otimes Y \migi Y \otimes X$ is the function sending $x \otimes y$ to $y \otimes x$, which is realized by $\lamst u.u(\lamst pqt.tqp)$.

\item For objects $X$ and $Y$, the underlying set of the exponential\footnote{For an SMCC $\catc$, the exponential is often denoted using the symbol $\limp$ satisfying \[ \catc(X\otimes Y,Z)\cong \catc(X,Y \limp Z). \] However, here we use the reversed symbol $\rimp$ satisfying $\catc(X\otimes Y,Z)\cong \catc(X,Z \rimp Y)$ to be consistent with the notation of monoidal bi-closed categories in Section \ref{secbdi}.}
$Y \rimp X$ is $\homrm_{\asmca}(X,Y)$.
Realizers are $\rlz{f}_{Y \rimps X} := \{ r \in \uhka \mid \mbox{$r$ realizes $f$} \}$.

\item For maps $f:X' \migi X$ realized by $r_f$ and $g:Y \migi Y'$ realized by $r_g$, the map $g \rimp f$ is the function sending $h:X \migi Y$ to $g \circ h \circ f :X' \migi Y'$.
A realizer of $g \rimp f$ is $\lamst uv.r_g (u (r_f v))$.

\item The adjunction $\Phi$ sends a map $f:X \otimes Y \migi Z$ to the map $\Phi(f):x \mapsto (y \mapsto f(x \otimes y))$.
\end{itemize}
It is easy to see that the above components satisfy the axioms of the SMCC.
\end{proof}

The above proof is almost the same as the proof of $\asmca$ on a PCA $\hka$ being a CCC.
However, when we prove that $\moda$ on a $\bci$-algebra $\hka$ is an SMCC, we cannot use the same proof as for PCAs.
That is because for modest sets $X$ and $Y$ on a $\bci$-algebra $\hka$, $X \otimes Y$ given by the same way as $\asmca$ is not generally a modest set. 
The following proposition is proven with a modification to resolve the problem.

\begin{prop} \label{propsmcc2}
When $\hka$ is a $\bci$-algebra, $\moda$ is an SMCC.
\end{prop}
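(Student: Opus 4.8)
The plan is to realize $\moda$ as a reflective full subcategory of $\asmca$ which is an \emph{exponential ideal}, and then to transport the SMCC structure of $\asmca$ (Proposition \ref{propsmcc}) along the reflection. The key point — the ``modification'' alluded to above — is that one must \emph{not} define $X \otimes Y$ on modest sets by the formula used in $\asmca$, which need not be modest, but rather as the \emph{modest reflection} of the assembly $X \otimes Y$.

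First I would construct the reflector. For an assembly $X$, let $\approx_X$ be the least equivalence relation on $|X|$ with $x \approx_X x'$ whenever $\rlz{x}_X \cap \rlz{x'}_X \neq \emptyset$, and set $\nabla X := \bigl(|X|/{\approx_X},\ [x] \mapsto \bigcup_{x' \approx_X x} \rlz{x'}_X\bigr)$. Then $\nabla X$ is a modest set: a realizer shared by $[x]$ and $[x']$ lies in $\rlz{y}_X \cap \rlz{y'}_X$ for some $y \approx_X x$ and $y' \approx_X x'$, whence $y \approx_X y'$ and so $[x]=[x']$. The quotient function $q_X : X \migi \nabla X$ is a map of assemblies realized by $\comi$, and it is universal: if $M$ is modest and $f : X \migi M$ is realized by $r$, then along any chain witnessing $x \approx_X x'$ consecutive elements share some realizer $a$, so $ra$ realizes both of their $f$-images and modesty of $M$ forces $f(x)=f(x')$; hence $f$ factors uniquely through $q_X$, with the same realizer $r$. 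Thus $\nabla$ is left adjoint to the inclusion $\moda \hookrightarrow \asmca$.

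Next I would check that $\moda$ is an exponential ideal, i.e. that for any assembly $P$ and modest set $Q$ the internal hom $Q \rimp P = \homrm_{\asmca}(P,Q)$ of $\asmca$ is again modest. This is precisely the argument already used for exponentials in the proof of Proposition \ref{propccc}: if some $r$ realized two maps $g,h : P \migi Q$ with $g(p) \neq h(p)$, then for $a \in \rlz{p}_P$ the element $ra$ would realize both $g(p)$ and $h(p)$, contradicting modesty of $Q$.

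Finally I would conclude by the standard reflection argument: a reflective exponential ideal of an SMCC is itself an SMCC, with internal hom inherited from $\asmca$, unit the reflection of $I$ — here $I_{\asmca}$ has a one-element carrier, hence is already modest and serves as the unit of $\moda$ — and tensor $X \otimes_{\moda} Y := \nabla(X \otimes_{\asmca} Y)$. Concretely, for modest $X,Y,Z$ one has
\[
\moda\bigl(\nabla(X \otimes_{\asmca} Y),\, Z\bigr) \;\cong\; \asmca(X \otimes_{\asmca} Y,\, Z) \;\cong\; \asmca(X,\, Z \rimp Y) \;=\; \moda(X,\, Z \rimp Y),
\]
using the reflection ($Z$ being modest), the SMCC structure of $\asmca$, and fullness of $\moda$ together with $Z \rimp Y \in \moda$; this supplies the closed structure. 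The associativity, unit and symmetry isomorphisms for $\otimes_{\moda}$ and the coherence axioms are obtained from those of $\asmca$ via the reflection, the one lemma required being $\nabla(\nabla W \otimes_{\asmca} Z) \cong \nabla(W \otimes_{\asmca} Z)$, which follows from the exponential ideal property since $\asmca(\nabla W \otimes_{\asmca} Z, M) \cong \asmca(\nabla W, M \rimp Z) \cong \asmca(W, M \rimp Z) \cong \asmca(W \otimes_{\asmca} Z, M)$ for every modest $M$. The genuine obstacle is the conceptual one of recognising that the tensor of modest sets must be taken via this reflection rather than verbatim from $\asmca$; once that is in place, the remaining verifications are routine, the essential input being the (easy) exponential ideal property inherited from the proof of Proposition \ref{propccc}.
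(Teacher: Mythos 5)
Your proposal is correct and follows essentially the same route as the paper: both take the left adjoint $F$ of the inclusion $\moda \hookrightarrow \asmca$ (your $\nabla$ is exactly the paper's $F$, the quotient by the transitive closure of realizer-overlap, with realizers the union over the class) and define the tensor of modest sets as the reflection $X \boxtimes Y := F(GX \otimes GY)$ of the assembly tensor. The only difference is presentational: the paper re-runs the realizer computations of Proposition \ref{propsmcc} with $\boxtimes$ in place of $\otimes$, whereas you verify the SMCC structure abstractly via the exponential-ideal/reflection argument, which is the same general machinery the paper itself points to in its reference to Day's construction of monoidal structures on reflective full subcategories.
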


\begin{proof}
Let $G: \moda \hookrightarrow \asmca$ be the inclusion functor and $F:\asmca \migi \moda$ be the left adjoint of $G$.
$F$ is the functor sending an assembly $X = (|X|,\erlz_X)$ to a modest set $Z = (|X|/\approx,\erlz_Z)$.
Here the relation ``$\approx$'' is the transitive closure of the relation ``$\sim$'' defined as $x \sim x'$ $:\Leftrightarrow$ $\rlz{x}_X \cap \rlz{x'}_X \neq \emptyset$.
The realizers of $z \in |Z|$ are defined as $\rlz{z}_Z := \bigcup_{x \in z} \rlz{x}_X$.
$F$ sends a map $f$ of $\asmca$ to the canonical map of $\moda$, which is realized by realizers of $f$.

We define the tensor product $\boxtimes$ in $\moda$ as $X \boxtimes Y := F(GX \otimes GY)$.
We can prove Proposition \ref{propsmcc2} by the same proof of Proposition \ref{propsmcc} by replacing $\otimes$ to $\boxtimes$.
\end{proof}

More general about constructing monoidal structures on reflexive full subcategories, see \cite{day72}.

\begin{rem} \label{rembci}
While we define $\bci$-algebras as a class of total applicative structures, we also can define ``partial $\bci$-algebras'' naturally.
For a partial $\bci$-algebra $\hka$, we can see that:
\begin{itemize}
\item $\asmca$ is not generally an SMCC;
\item adding an extra element $\bot$ (which means ``undefined''), $\hka$ naturally extends to a total $\bci$-algebra $\hka_{\bot}$;
\item $\asmca$ is the full subcategory of $\asmc{\hka_{\bot}}$.
\end{itemize}
The same discussion is given in \cite{hoshino}.
\end{rem}

\subsection{Applicative morphisms} \label{secappmor}

In this subsection, we recall the notion of applicative morphisms from \cite{longley}.

\begin{defi} \label{defmor}
Let $\hka$ be a partial applicative structure satisfying:
\begin{enumerate}[(i)]
\item $\uhka$ has an element $\comi$ such that $\forall x \in \uhka$, $\comi x \downarrow$ and $\comi x = x$; \label{condmor1}
\item for any $r_1, r_2 \in \uhka$, there exists $r \in \uhka$ such that $\forall x \in \uhka$, $r x \simeq r_1 (r_2 x)$. \label{condmor2}
\end{enumerate}
Let $\hkb$ be another partial applicative structure satisfying the same conditions.
An {\it applicative morphism} $\gamma :\hka \migi \hkb$ is a total relation from $\uhka$ to $\uhkb$ such that there exists a {\it realizer} $r_{\gamma} \in \uhkb$ of $\gamma$ satisfying that 
\begin{center}
$\forall a, a' \in \uhka$, $\forall b \in \gamma a$, $\forall b' \in \gamma a'$, $r_{\gamma} b b' \in \gamma (a a')$ whenever $a a' \downarrow$.
\end{center}
We say $\gamma$ is {\it functional} when $\gamma a$ is a singleton for each $a \in \uhka$, and simply write $\gamma a = b$ for $\gamma a = \{ b \}$.
\end{defi}

Our definition is slightly more general than the definition in \cite{longley} that makes sense only on PCAs.
We define applicative morphisms between applicative structures satisfying the conditions of Definition \ref{defasm}.
We assume these conditions to realize identity and composition morphisms.
By the condition \ref{condmor1}, the identity applicative morphism $id:\hka \migi \hka$ can be realized by $\comi$.
For applicative morphisms $\gamma:\hka \migi \hkb$ and $\delta: \hkb \migi {\mathcal C}$ realized by $r_{\gamma}$ and $r_{\delta}$, taking $p \in \delta r_{\gamma}$, the composition $\delta \circ \gamma$ can be realized by $r \in |\hkb|$ such that $\forall b \in |\hkb|$, $r b \simeq r_{\delta} (r_{\delta} p b)$.
The condition \ref{condmor2} gives such a realizer $r$.

In the sequel, for an applicative morphism $\gamma$, when we write an indexed element $r_{\gamma}$, it denotes a realizer of $\gamma$.
Also, for $a \in \uhka$ and $S,S' \subseteq \uhka$, 
when we write $a S$, it denotes the set $\{ as \mid s \in S \}$ and we consider $as\downarrow$ for all $s \in S$,
and when we write $S S'$, it denotes the set 
$\{ ss'\mid s \in S,s' \in S' \}$ and we consider $ss' \downarrow$ for all $s \in S$ and $s' \in S'$.
For instance, the condition that $\gamma$ is an applicative morphism is denoted as 
\begin{center}
$\exists r_{\gamma} \in \uhkb, \forall a,a' \in \uhka$, $aa' \downarrow \Rightarrow r_{\gamma} (\gamma a)(\gamma a') \subseteq \gamma (aa')$. 
\end{center}

From applicative morphisms, we can obtain functors between the categories of assemblies.

\begin{defi} \label{deffun}
For an applicative morphism $\gamma:\hka \migi \hkb$, $\gamst : \asmca \migi \asmcb$ is the functor sending an object $(|X|, \erlz_X)$ to $(|X|,\gamma \erlz_X)$ and sending a map to the same function.
\end{defi}

For a map $f$ in $\asmca$ realized by $r_f$, $\gamst f$ is realized by elements of $r_{\gamma} (\gamma r_f)$.
It is obvious that $\gamst$ satisfies $\gamst(id)=id$ and $\gamst(g \circ f) = \gamst(g) \circ \gamst(f)$.

Next we recall the preorder relation $\preceq$ between applicative morphisms.

\begin{defi}
For two applicative morphisms $\gamma, \delta: \hka \migi \hkb$, $\gamma \preceq \delta$ iff there is $r \in \uhkb$ such that $\forall a \in \uhka$, $r (\gamma a) \subseteq \delta a$.
\end{defi}

Using the conditions \ref{condmor1} and \ref{condmor2} of Definition \ref{defmor}, we can easily show that $\preceq$ is a preorder.

By the preorder $\preceq$, we can define adjunctions and comonads on applicative structures.
 
\begin{defi}
For two applicative morphisms $\gamma: \hka \migi \hkb$ and $\delta: \hkb \migi \hka$, $\gamma$ is a {\it right adjoint} of $\delta$ iff $\delta \circ \gamma \preceq id_{\hka}$ and $id_{\hkb} \preceq \gamma \circ \delta$.
We write $(\delta \dashv \gamma):\hka \migi \hkb$ for these settings.
\end{defi}

\begin{defi}
An applicative morphism $\gamma:\hka \migi \hka$ is called {\it comonadic} when $\uhka$ has two elements $\bfe$ and $\bfd$ such that $\forall a \in \uhka$, $\bfe (\gamma a) \subseteq \{ a \}$ and $\bfd (\gamma a) \subseteq \gamma (\gamma a)$.
\end{defi}

For adjunctions of applicative morphisms, the following properties hold.

\begin{prop} \label{propmoradj} \hfill
\begin{enumerate}
\item An adjoint pair of applicative morphisms $(\delta \dashv \gamma):\hka \migi \hkb$ gives rise to an adjoint pair $(\delst \dashv \gamst) : \asmca \migi \asmcb$.
\item For an adjoint pair of applicative morphisms $(\delta \dashv \gamma):\hka \migi \hkb$, $\delta \circ \gamma : \hka \migi \hka$ is a comonadic applicative morphism. \label{propmoradj2}
\item For a comonadic applicative morphism $\gamma:\hka \migi \hka$, $\gamst$ is a comonad on $\asmca$.
\end{enumerate}
\end{prop}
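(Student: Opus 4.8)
The common thread I would exploit across all three parts is that the functors in play---$\gamst$ and $\delst$ of Definition~\ref{deffun}---act as the identity on underlying sets, and that a map of assemblies is literally determined by its underlying function. Hence every natural transformation I must exhibit (units, counits, comultiplications) will have identity functions of sets as its components, so its naturality squares, the triangle identities and the comonad coherence laws all hold automatically once the relevant identity functions are shown to be maps of assemblies. The only real task is thus to extract the needed realizers from the witnesses of the $\preceq$-inequalities.

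For part~(1), fix a realizer $r_{\varepsilon}\in\uhka$ of $\delta\circ\gamma\preceq id_{\hka}$ and a realizer $r_{\eta}\in\uhkb$ of $id_{\hkb}\preceq\gamma\circ\delta$, and let the counit $\varepsilon_{X}\colon\delst\gamst X\to X$ and the unit $\eta_{B}\colon B\to\gamst\delst B$ be the identities on underlying sets. Since $\rlz{x}_{\delst\gamst X}=\bigcup_{a\in\rlz{x}_{X}}(\delta\circ\gamma)(a)$ and $r_{\varepsilon}$ carries each $(\delta\circ\gamma)(a)$ into $\{a\}$, the single element $r_{\varepsilon}$ realizes every $\varepsilon_{X}$; dually $r_{\eta}$ realizes every $\eta_{B}$. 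Naturality and the two triangle identities then follow from the opening remark, so $\delst\dashv\gamst$. Part~(3) is the same pattern: given comonadic witnesses $\bfe,\bfd$ for $\gamma$, take the counit $\gamst\Rightarrow id$ and the comultiplication $\gamst\Rightarrow\gamst\gamst$ to be identities on underlying sets; since $\rlz{x}_{\gamst X}=\gamma(\rlz{x}_{X})$ and $\rlz{x}_{\gamst\gamst X}=\gamma(\gamma(\rlz{x}_{X}))$, the axioms $\bfe(\gamma a)\subseteq\{a\}$ and $\bfd(\gamma a)\subseteq\gamma(\gamma a)$ are exactly the statements that $\bfe$ and $\bfd$ realize these transformations, and the comonad laws come for free.

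Part~(2) is where the genuine content lies. The counit element is immediate: $\delta\circ\gamma\preceq id_{\hka}$ already supplies $\bfe$ with $\bfe((\delta\circ\gamma)a)\subseteq\{a\}$. I expect the main obstacle to be producing the comultiplication, an element $\bfd\in\uhka$ with $\bfd\bigl((\delta\circ\gamma)a\bigr)\subseteq(\delta\circ\gamma)\bigl((\delta\circ\gamma)a\bigr)$---note that $\delta\circ\gamma$ by itself carries no obvious duplicator. My plan is to transport the unit of the adjunction along $\delta$: choose a realizer $s\in\uhkb$ of $id_{\hkb}\preceq\gamma\circ\delta$ (so $sc\in(\gamma\circ\delta)(c)$ for all $c\in\uhkb$), pick $t\in\delta s$ (nonempty, as $\delta$ is a total relation), let $r_{\delta}$ realize $\delta$, and set $\bfd:=r_{\delta}t$. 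Then for $y\in(\delta\circ\gamma)a$, say $y\in\delta c$ with $c\in\gamma a$, the realizer clause for $\delta$ applied to $s$ and $c$ gives $\bfd y=r_{\delta}ty\in\delta(sc)$; and since $\delta c\subseteq\delta(\gamma a)=(\delta\circ\gamma)a$ we have $sc\in(\gamma\circ\delta)(c)\subseteq\gamma\bigl((\delta\circ\gamma)a\bigr)$, whence $\bfd y\in\delta\bigl(\gamma((\delta\circ\gamma)a)\bigr)=(\delta\circ\gamma)\bigl((\delta\circ\gamma)a\bigr)$, as required. The definedness side-conditions ($sc{\downarrow}$, $r_{\delta}ty{\downarrow}$) are precisely what the set-application conventions of Section~\ref{secappmor} guarantee. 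As a sanity check, composing relations pointwise one sees $\delst\gamst=(\delta\circ\gamma)_{\ast}$, so the comonad obtained from part~(1) agrees with the one obtained from part~(3) applied to the comonadic morphism built in part~(2).
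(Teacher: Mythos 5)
Your proposal is correct; note that the paper itself states Proposition \ref{propmoradj} without proof (it is recalled from the cited work of Longley and of Abramsky--Haghverdi--Scott/Hoshino), and your argument reconstructs exactly the standard one: identity-on-underlying-sets unit/counit realized by the $\preceq$-witnesses for parts (1) and (3), and for part (2) the comultiplication $\bfd := r_{\delta}t$ with $t \in \delta s$ obtained by pushing the unit realizer $s$ of $id_{\hkb} \preceq \gamma \circ \delta$ through $\delta$. The realizer computations, including the definedness side-conditions handled by the set-application conventions of Section \ref{secappmor}, check out.
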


\begin{rem} \label{remrest}
In Definition \ref{deffun}, an applicative morphism $\gamma:\hka \migi \hkb$ gives rise to the functor $\gamst:\asmca \migi \asmcb$.
However, here we cannot generally obtain a functor $\gamst: \moda \migi \modb$ since $\rlz{x}_X \cap \rlz{x'}_X = \emptyset$ does not imply $\gamma(\rlz{x}_X) \cap \gamma(\rlz{x'}_X) = \emptyset$ and $\gamst X$ may not be in $\modb$.
However, for a comonadic applicative morphism $\gamma:\hka \migi \hka$, $\gamst$ can be restricted to the endofunctor on $\moda$.
Indeed, for a modest set $X$ on $\hka$, if $a \in \gamma(\rlz{x}_X) \cap \gamma(\rlz{x'}_X)$ then 
$\bfe a$ is an element of $\rlz{x}_X \cap \rlz{x'}_X$ and thus $x = x'$ concludes.
Furthermore, this $\gamst$ is a comonad on $\moda$.
\end{rem}

\subsection{Linear combinatory algebras} \label{seclca}

In the previous subsection, we saw comonadic applicative morphisms give rise to comonads, and adjoint pairs of applicative morphisms give rise to adjoint pairs between categories of assemblies.
Using this construction, we can obtain linear exponential comonads and linear-non-linear models for the linear logic.
In this subsection, we recall notions of linear combinatory algebras (LCAs) from \cite{ahs} and relational linear combinatory algebras (rLCAs) from \cite{hoshino}.

\begin{defi}
A {\it linear combinatory algebra (LCA)} consists of:
\begin{itemize}
\item a $\bci$-algebra $\hka$;
\item a functional comonadic applicative morphism $(\bfban, \bfe, \bfd)$ on $\hka$;
\item an element $\bfk \in \uhka$ such that $\forall x, y \in \uhka$, $\bfk x (\bfban y) = x$;
\item an element $\bfw \in \uhka$ such that $\forall x, y \in \uhka$, $\bfw x (\bfban y) = x (\bfban y)(\bfban y)$.
\end{itemize}
\end{defi}

As we get comonads from comonadic applicative morphism, from LCAs, we get linear exponential comonads, which are categorical models of the linear exponential modality of the linear logic.

\begin{defi}
Let $\catc$ be a symmetric monoidal category.
A {\it linear exponential comonad} consists of the following data.

\begin{itemize}
\item A symmetric monoidal comonad $(!, \delta, \epsilon, m, m_I)$. \\
Here $!$ is an endofunctor on $\catc$,
$\delta_X :!X \migi !!X$ and $\epsilon_X :!X \migi X$ are monoidal natural transformations for the comultiplication and the counit.
The natural transformation $m_{X,Y}:!X \otimes !Y \migi !(X \otimes Y)$ and the map $m_I:I \migi !I$ make $!$ be a monoidal functor.
\item Monoidal natural transformations $e_X:!X \migi I$ and $d_X:!X \migi !X \otimes !X$.
\end{itemize}
Here these components need satisfy the following conditions for each $X$.
\begin{enumerate}[(i)]
\item $(!X,d_X,e_X)$ is a commutative comonoid in $\catc$.
\item $e_X$ and $d_X$ are coalgebra morphisms.
\item $\delta_X$ is a comonoid morphism.
\end{enumerate}
\end{defi}

\begin{prop}
For an LCA $(\hka,\bfban)$, $\banst$ is a linear exponential comonad on the SMCC $\asmca$ (or $\moda$). 
\end{prop}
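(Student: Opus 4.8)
The plan is to build the linear exponential comonad $\banst$ on $\asmca$ directly from the data of the LCA $(\hka, \bfban)$, checking each piece of structure against the realizers supplied by the combinatory algebra. The main tools are already in the excerpt: by Proposition~\ref{propmoradj}(3), the comonadic applicative morphism $\bfban$ induces a comonad $\banst$ on $\asmca$ (and by Remark~\ref{remrest} on $\moda$ as well), with comultiplication realized (roughly) by $\bfd$ and counit by $\bfe$. So the comonad itself, together with its counit $\epsilon$ and comultiplication $\delta$, is essentially for free; what remains is to exhibit the symmetric monoidal structure $(m_{X,Y}, m_I)$ on the functor, the monoidal natural transformations $e_X : \banst X \migi I$ and $d_X : \banst X \migi \banst X \otimes \banst X$, and to verify the comonoid/coalgebra/comonoid-morphism axioms.

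\emph{Key steps in order.} First I would recall that $\asmca$ is an SMCC by Proposition~\ref{propsmcc}, fixing the tensor, unit, and the realizer conventions $\rlz{x\otimes y}_{X\otimes Y} = \{\lamst t. tpq \mid p\in\rlz{x}_X, q\in\rlz{y}_Y\}$ and $\rlz{\ast}_I = \{\comi\}$. Second, I would spell out $\banst X = (|X|, \bfban\erlz_X)$ and note $\epsilon_X = id_{|X|}$ realized by $\bfe$, $\delta_X = id_{|X|}$ realized by $\bfd$ (these are maps of assemblies precisely because $\bfban$ is comonadic). Third, I would give the monoidal structure maps: $m_{X,Y}:\banst X \otimes \banst Y \migi \banst(X\otimes Y)$ is the identity on underlying sets, realized by a term built from $r_{\bfban}$ (a realizer of $\bfban$) applied to $\bfban$ of the pairing combinator — concretely one unpacks a realizer $\lamst t.t p q$ with $p\in\bfban\rlz{x}_X$, $q\in\bfban\rlz{y}_Y$ and produces an element of $\bfban\{\lamst t.tp'q'\}$ using that $r_{\bfban}$ turns $\bfban$-realizers of applications into $\bfban$-realizers; and $m_I:I\migi\banst I$ is realized by some element of $\bfban\{\comi\}$ reachable from $\comi$. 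Fourth, $e_X$ is realized using $\bfk$: given $p\in\bfban\rlz{x}_X$, $\bfk\comi p = \comi$, so $\lamst p.\bfk\comi p$ realizes the unique map to $I$; and $d_X$ is realized using $\bfw$: $\bfw(\lamst t.t) p$ with $p\in\bfban\rlz{x}_X$ duplicates $p$ to land in $\rlz{x\otimes x}_{\banst X\otimes\banst X}$, after an appropriate rearrangement so that $\bfw x(\bfban y) = x(\bfban y)(\bfban y)$ is applied in the form needed. Fifth, I would verify the axioms — symmetric-monoidal comonad coherences, that $(\banst X, d_X, e_X)$ is a commutative comonoid, that $e_X, d_X$ are $\banst$-coalgebra morphisms, and that $\delta_X$ is a comonoid morphism — each of which reduces to an equality between two realizing terms, provable because both sides denote the same set-function and both are visibly realized (so by the structure of $\asmca$ no further check is needed beyond well-definedness).

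\emph{The main obstacle} I expect is not any single axiom but the bookkeeping of realizers for the monoidal structure maps $m_{X,Y}$ and the duplication map $d_X$: one must chase the interaction between the SMCC pairing encoding $\lamst t.tpq$ and the way $r_{\bfban}$ propagates through applications, and in particular check that $\bfw$'s axiom $\bfw x(\bfban y) = x(\bfban y)(\bfban y)$ can genuinely be massaged — using $\comb$, $\comc$, $\comi$ — into a realizer for $d_X$ that respects the pairing convention and is natural in $X$. Naturality of $m$, $e$, $d$ in $X$ is routine (the underlying functions are identities or the diagonal), so the real content is confirming that the needed combinators exist: $\bfb,\bfc,\bfi$ from the underlying $\bci$-algebra handle all plumbing, and $\bfk,\bfw,\bfe,\bfd,r_{\bfban}$ handle the exponential-specific maps. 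Finally I would remark that the identical construction works verbatim on $\moda$, using Remark~\ref{remrest} to see that $\banst$ restricts to $\moda$ and the tensor $\boxtimes$ of Proposition~\ref{propsmcc2} in place of $\otimes$, so that $\banst$ is a linear exponential comonad there as well.
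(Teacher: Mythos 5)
Your proposal is correct and takes essentially the same route as the paper: the paper itself recalls this background result from \cite{ahs,hoshino} without proof, but its proof of the analogous planar statement (Proposition \ref{propexp-rPLCA}) proceeds exactly as you do — lax monoidality of $\banst$ supplies $m_{X,Y}$ and $m_I$, $e_X$ is realized by $\bfk\comi$, and $d_X$ by $\bfw$ applied to a pairing combinator, with the coherence axioms reducing to equalities of underlying set-functions. One small fix: $\bfw(\lamst t.t)$ does not realize $d_X$ (it sends $\bfban q$ to $(\bfban q)(\bfban q)$); the "rearrangement" you anticipate is precisely $\bfw(\lamst pqt.tpq)$, i.e.\ $\bfw$ applied to the curried pairing, matching the paper's realizer $\bfw(\lamst pq.\comp pq)$ in the planar case.
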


LCAs can be generalized from functional applicative morphisms to not functional ones, called rLCAs.

\begin{defi} \label{defrlca}
A {\it relational linear combinatory algebra (rLCA)} consists of:
\begin{itemize}
\item a $\bci$-algebra $\hka$;
\item a comonadic applicative morphism $(\bfban, \bfe, \bfd)$ on $\hka$ such that $\bfban \preceq [\bfban ,\bfban]$ and $\bfban \preceq k_i$.
Here $[\bfban,\bfban]$ and $k_i$ are applicative morphisms defined as $[\bfban ,\bfban] (x) := \{ \lamst t.ta a' \mid a,a' \in \bfban x \}$ and $k_i (x) := \{ \comi \}$.
\end{itemize}
\end{defi}

Next proposition shows the correspondence between LCAs, rLCAs and adjoint pairs between $\bci$-algebras and PCAs.

\begin{prop} \hfill
\begin{enumerate}
\item Let $\hka$ be a $\bci$-algebra and $\hkb$ be a PCA.
For an adjoint pair
$(\delta \dashv \gamma):\hka \migi \hkb$, $(\hka, \delta \circ \gamma)$ is an rLCA.

\item Let $(\hka,\bfban)$ be an LCA.
The applicative structure $\hka_{\bfban} = (\uhka, @)$ defined by $x @ y := x (\bfban y)$ is a PCA.
Furthermore, $\gamma: \hka \migi \hka_{\bfban}$ defined as the identity function and $\delta :\hka_{\bfban} \migi \hka$ sending $a \in \uhka$ to $\bfban a$ form an adjoint pair $(\delta \dashv \gamma):\hka \migi \hka_{\bfban}$. 
\end{enumerate}
\end{prop}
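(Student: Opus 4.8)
The plan is to prove the two parts separately, in each case unwinding the definitions and exhibiting the required realizers using the combinatory completeness that is available.

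For part (1), I would start from an adjoint pair $(\delta \dashv \gamma):\hka \migi \hkb$ with $\hka$ a $\bci$-algebra and $\hkb$ a PCA, and set $\bfban := \delta \circ \gamma$. First I would note that by Proposition~\ref{propmoradj}\eqref{propmoradj2}, $\bfban$ is a comonadic applicative morphism on $\hka$, so it comes equipped with elements $\bfe, \bfd$ satisfying $\bfe(\bfban a) \subseteq \{a\}$ and $\bfd(\bfban a) \subseteq \bfban(\bfban a)$; this already discharges the first requirement of Definition~\ref{defrlca}. It then remains to check $\bfban \preceq [\bfban,\bfban]$ and $\bfban \preceq k_i$. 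For the second, I would use that $\hkb$ is a PCA (so has a $\comk$-combinator, hence can discard) together with the adjunction inequality $\delta \circ \gamma \preceq id$ and the fact that $\gamma, \delta$ have realizers: composing a realizer of $\delta\circ\gamma \preceq id$ with $\comi$ in $\hka$ (which exists since $\hka$ is a $\bci$-algebra) witnesses $\bfban a \ni$ something that maps to $\comi$, i.e.\ $\bfban \preceq k_i$. For the first, $\bfban \preceq [\bfban,\bfban]$, the key is that $\hkb$ being a PCA allows duplication: I would trace an element $b \in \gamma a$ through $\delta$, use the PCA structure of $\hkb$ to build, inside $\gamma(\gamma a)$-land, a pairing of two copies, and then push back along $\delta$; more precisely I expect the realizer of $id_{\hkb} \preceq \gamma\circ\delta$ composed with $\comsa$-style duplication in $\hkb$ and then $\delta$ applied, to land in $[\bfban,\bfban](a) = \{\lamst t.taa' \mid a,a' \in \bfban x\}$.

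For part (2), given an LCA $(\hka,\bfban)$, I would first verify that $\hka_{\bfban} = (\uhka, @)$ with $x @ y := x(\bfban y)$ is a PCA, by exhibiting the two PCA combinators. The $\comk$-combinator for $@$ should be built from the LCA element $\bfk$ (whose axiom $\bfk x (\bfban y) = x$ is literally $\bfk \mathbin{@_{\text{partial}}} \ldots$ — I would unwind $(\bfk @ x) @ y = \bfk(\bfban x)(\bfban y)$ and massage using $\bfe$, $\bfd$, and the $\bci$-combinators so that it reduces to $x$). The $\comsa$-combinator for $@$ is the more involved one: it must satisfy $\comsa^{@} @ x @ y @ z \simeq (x@z)@(y@z)$, i.e.\ after unwinding, something reducing to $x(\bfban z)(\bfban(y(\bfban z)))$; constructing it requires $\bfw$ (for the duplication of $z$, since $z$ is used in both $x@z$ and $y@z$), $\bfd$ and the monoidal comonad structure of $\bfban$ (to turn $\bfban z$ into $\bfban(\bfban z)$ and to form $\bfban(y(\bfban z))$ from $\bfban y$ and $\bfban(\bfban z)$), plus $\comb, \comc, \comi$ to route everything into place. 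This is exactly the standard computation showing the co-Kleisli category of a linear exponential comonad is Cartesian closed, transported to the combinatory-algebra level. Then I would check the adjunction: $\gamma:\hka\migi\hka_{\bfban}$ the identity is an applicative morphism realized (as a map of $\hka_\bfban$) by a term built from the $\bci$-combinators and $\bfd$; $\delta:\hka_{\bfban}\migi\hka$, $a \mapsto \bfban a$, is an applicative morphism because $\bfban$ is a monoidal comonadic morphism, so $\bfd$ together with the implicit monoidal multiplication of $\bfban$ realizes it. Finally $\delta\circ\gamma \preceq id_{\hka}$ is witnessed by $\bfe$, and $id_{\hka_\bfban} \preceq \gamma\circ\delta$ is witnessed by $\bfd$ (up to $@$-bookkeeping).

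The main obstacle I anticipate is the explicit construction of the $\comsa$-combinator for $\hka_{\bfban}$ in part (2): it is the one place where all of $\bfw$, $\bfk$, $\bfd$, $\bfe$ and the monoidal structure of $\bfban$ must be orchestrated simultaneously, and getting the bracketing of $\bfban$'s right — producing $\bfban(\bfban z)$ and $\bfban(y(\bfban z))$ from the available data — is fiddly. Everything else (the $\preceq$ verifications, the $\comk$-combinator, part (1)) is comparatively routine composition of realizers using the $\bci$-combinatory completeness. I would present the $\comsa^{@}$ construction in full and sketch the rest.
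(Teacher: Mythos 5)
Your overall strategy is the standard one and coincides with how the paper proves the analogous planar statements in Section~\ref{secplca} (the paper states the present proposition without proof, recalling it from Abramsky--Haghverdi--Scott and Hoshino, so those Section~\ref{secplca} proofs are the natural comparison point): part (1) is comonadicity via Proposition~\ref{propmoradj} plus the two inequalities of Definition~\ref{defrlca}, obtained by discarding/duplicating inside the PCA $\hkb$ and transporting along $\delta$; part (2) is the construction of the $@$-combinators from $\bfk$, $\bfw$, $\bfe$, $\bfd$ and the realizer $r_{\bfban}$, exactly as in Proposition~\ref{propfunplca}.

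However, several of the concrete witnesses you describe would fail as written. For $\bfban \preceq k_i$, composing a realizer of $\delta \circ \gamma \preceq id_{\hka}$ (i.e.\ $\bfe$) with $\comi$ recovers $a$ from $\bfban a$, not $\comi$; the correct composite discards \emph{inside} $\hkb$ first: pick $M \in \lamst z.(\gamma \comi)$ using the PCA completeness of $\hkb$ and take $\lamst y.\bfe (r_{\delta} (\delta M) y)$. For $\bfban \preceq [\bfban,\bfban]$, the unit realizer of $id_{\hkb} \preceq \gamma \circ \delta$ lives on the wrong side: the witness must be an element of $\uhka$ acting on $y \in \bfban a$, so the route is $\lamst y.\bfe (r_{\delta} (\delta N) (\bfd y))$ with $N \in \lamst z. r_{\gamma} (r_{\gamma} (\gamma P) z) z$ and $P := \lamst a a' t. t a a'$ (a linear term, so available in the $\bci$-algebra $\hka$); the duplication of $z$ inside $N$ is where the PCA structure of $\hkb$ enters, and $\bfd$ is what gets you into $\bfban \bfban a$ so that $\delta N$ can act at all. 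In part (2), the realizer of $\gamma$ needs $\bfe$, not $\bfd$ (e.g.\ $\lamst u v.\bfe u (\bfe v)$, since the condition unwinds to $r_{\gamma}(\bfban a)(\bfban a') = a a'$), and the witness of $id_{\hka_{\bfban}} \preceq \gamma \circ \delta$ is simply $\comi$, because the requirement unwinds to $u(\bfban a) = \bfban a$; your proposed $\bfd$ gives $\bfban \bfban a$ instead. Your account of $\comsa^{@}$ (orchestrating $\bfw$, $\bfd$, $r_{\bfban}$, $\bfe$ and the $\bci$-routing) is the right plan and is indeed the only genuinely fiddly computation. All of these slips are repairable within your approach; the worked realizers in the proofs of the Section~\ref{secplca} propositions show the exact routing to imitate.
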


From rLCAs, we also get linear exponential comonads.
Moreover, we get linear-non-linear models \cite{benton} on categories of assemblies or categories of modest sets.

\begin{defi}
A {\it linear-non-linear model} is a symmetric monoidal adjunction \\ $(F \dashv G):\catc \migi \catd$ for an SMCC $\catc$ and a CCC $\catd$.
\end{defi}

\begin{prop}
For an rLCA $(\hka,\bfban)$, $\banst$ is a linear exponential comonad on the SMCC $\asmca$ (or $\moda$). 
Furthermore, the co-Kleisli adjunction between $\asmca$ and $\asmca_{\banst}$ (or $\moda$ and $\moda_{\bfban}$) is symmetric monoidal.
Thus the adjunction forms a linear-non-linear model.
\end{prop}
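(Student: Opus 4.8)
The plan is to assemble this from pieces already available in the paper. Recall that an rLCA $(\hka,\bfban)$ consists of a $\bci$-algebra $\hka$ together with a comonadic applicative morphism $\bfban$ satisfying $\bfban \preceq [\bfban,\bfban]$ and $\bfban \preceq k_i$. The first task is to verify that $\banst$ is a linear exponential comonad on the SMCC $\asmca$ (respectively $\moda$). By Proposition \ref{propmoradj}(3) and Remark \ref{remrest}, a comonadic applicative morphism already gives a comonad $\banst$ on $\asmca$ and on $\moda$; what remains is to produce the monoidal structure $m_{X,Y}:\banst X \otimes \banst Y \migi \banst(X \otimes Y)$, $m_I$, and the comonoid structure maps $e_X:\banst X \migi I$ and $d_X:\banst X \migi \banst X \otimes \banst X$, and to check the linear exponential axioms. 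Here the two extra conditions $\bfban \preceq [\bfban,\bfban]$ and $\bfban \preceq k_i$ are exactly what is needed: a realizer of $\bfban \preceq k_i$ realizes $e_X$ (it sends any realizer of $\bfban x$ to $\comi$, the realizer of $\ast \in I$), and a realizer of $\bfban \preceq [\bfban,\bfban]$ realizes $d_X$ (it sends a realizer of $\bfban x$ to a realizer of the form $\lamst t.taa'$ with $a,a' \in \bfban x$, i.e.\ a realizer of $\bfban x \otimes \bfban x$). The monoidal coherence map $m_{X,Y}$ is built from the applicative-morphism realizer $r_{\bfban}$ together with the $\bci$-combinators in the standard way. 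The comonoid, coalgebra, and comonoid-morphism axioms are then verified by exhibiting common realizers for both composites; because all maps in sight are identity-on-underlying-sets (or the canonical maps in the modest-set case), each axiom reduces to a bookkeeping check that two elements of $\uhka$ realize the same function, which always succeeds since $\hka$ is a $\bci$-algebra with the relevant combinators and $\bfban$ is comonadic with $\bfe,\bfd$.

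Second, I would record that $\hka_{\banst} = (\uhka, @)$ with $x @ y := x(\bfban y)$ — or, for the relational morphism, the natural relational analogue $x @ y$ ranging over $x(\bfban y)$, made total by adjoining $\bot$ as in Remark \ref{rembci} — is a PCA, and that $\asmca_{\banst}$ is its category of assemblies; for modest sets one uses $\moda_{\bfban}$ in the analogous way. Since $\hka_{\banst}$ is a PCA, $\asmca_{\banst}$ is a CCC by Proposition \ref{propccc}. Thus the target of the co-Kleisli adjunction is a CCC, as required by the definition of a linear-non-linear model.

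Third, and this is the crux, I would show the co-Kleisli adjunction $(F \dashv G):\asmca \migi \asmca_{\banst}$ induced by the comonad $\banst$ is \emph{symmetric monoidal}. The standard categorical fact is that for a linear exponential comonad $!$ on a symmetric monoidal category $\catc$, the co-Kleisli category $\catc_!$ is Cartesian (with products computed from $!$ via the comonoid structure) and the co-Kleisli adjunction is symmetric monoidal. So I would first confirm that the abstract Cartesian structure on $\asmca_{\banst}$ predicted by this fact coincides, up to coherent iso, with the CCC structure coming from $\hka_{\banst}$ being a PCA — equivalently, that the finite products of $\asmca_{\banst}$ as a CCC are the images under $F$ of the tensor products of $\asmca$. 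This amounts to checking that the binary product of $\asmca_{\banst}$ (whose realizers use $\lamst t.tpq$ over realizers in the $@$-sense) matches $F(X \otimes Y)$; the key computation is that a realizer in $\hka_{\banst}$ of a pairing is interchangeable, via $\bfd$ and $r_{\bfban}$ and the $[\bfban,\bfban]$-realizer, with a $\banst$-realizer of a tensor. Once this identification is in place, the monoidal structure on $F$ is $m^F: FX \otimes FY \cong F(X \otimes Y)$ and on $G$ it is the Seely-type map built from $m_{X,Y}$, and the two triangle/hexagon coherence diagrams are verified by the usual realizer bookkeeping.

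I expect the main obstacle to be exactly this last identification — reconciling the ``product-from-$!$'' description of the co-Kleisli category with the ``CCC-from-PCA'' description — together with the sheer combinatorial verification that the comparison maps $m_{X,Y}$, $e_X$, $d_X$ are coherent. Everything else (the existence of the comonad, the CCC-ness of $\asmca_{\banst}$, the adjunction itself) is immediate from earlier results; and the modest-set case follows verbatim using $\moda$, $\moda_{\bfban}$, Remark \ref{remrest}, and Proposition \ref{propsmcc2} in place of their assembly counterparts. Finally, invoking the definition of linear-non-linear model, the symmetric monoidal adjunction $(F \dashv G)$ between the SMCC $\asmca$ and the CCC $\asmca_{\banst}$ (resp.\ $\moda$ and $\moda_{\bfban}$) is the desired model.
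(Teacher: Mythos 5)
The first part of your plan (getting the linear exponential comonad: $e_X$ from $\bfban \preceq k_i$, $d_X$ from $\bfban \preceq [\bfban,\bfban]$, and $m_{X,Y},m_I$ from lax monoidality of $\banst$) is sound and is exactly how the paper argues in the analogous planar case (Proposition \ref{propexp-rPLCA}; the paper itself only recalls the present statement from \cite{ahs,hoshino} without proof). The trouble starts with your second step. The construction $\hka_{\bfban}=(\uhka,@)$ with $x@y:=x(\bfban y)$ is a PCA only when $\bfban$ is \emph{functional}, i.e.\ for an LCA; for a general rLCA, $\bfban y$ is a set, so $x(\bfban y)$ is a set of values and $@$ is multi-valued, not partial. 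Adjoining $\bot$ as in Remark \ref{rembci} repairs partiality, not multi-valuedness, so there is no PCA $\hka_{\bfban}$ to feed into Proposition \ref{propccc}. Moreover, $\asmca_{\banst}$ in the statement is the co-Kleisli category of the comonad $\banst$, and its identification with a category of assemblies over some applicative structure is an additional claim you would have to prove even in the functional case; it is nowhere available in the paper.

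The second gap is in your ``standard categorical fact.'' For a linear exponential comonad $!$ on a monoidal closed category, the co-Kleisli category is a CCC and the co-Kleisli adjunction is monoidal \emph{when the base category has finite products} -- this is precisely the hypothesis in the result of \cite{hasegawa1} quoted in Section \ref{secexpplca}, and it does not come for free from the comonad alone. So the genuinely non-trivial step, which your proposal omits (or buries inside the problematic PCA comparison), is to show that $\asmca$ (and $\moda$) over a $\bci$-algebra with an rLCA structure has finite products: a bare $\bci$-algebra cannot realize pairing, discarding and duplication, and the product realizers must be built using $\bfban$, exactly as in the proof of Proposition \ref{propexp-rPLCA2} (realizers of the shape $\comp(\comp uv)a$ there, with $\lamst t.tuv$-style pairing in the $\bci$ case, and projections using $\bfe$ together with the $\bfk$- and $\bfw$-type data of the rLCA). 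Once finite products in $\asmca$ are established, the CCC structure on $\asmca_{\banst}$ and the symmetric monoidality of the co-Kleisli adjunction follow from the standard result with no detour through a PCA, and the modest-set case goes through as you indicate via Remark \ref{remrest} and the $\boxtimes$-construction of Proposition \ref{propsmcc2}.
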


\section{Constructing non-symmetric categorical structures} \label{secnonsym}

In section \ref{secback}, we saw two known results that PCAs/$\bci$-algebras induce CCCs/SMCCs as the categories of assemblies and the categories of modest sets.
It is natural to try to extend these results to other classes of applicative structures, and we introduce such new classes inducing certain ``non-symmetric'' categorical structures.
In this section we recall $\bikuro$-algebras, $\biikuro$-algebras and bi-$\bdi$-algebras from \cite{tomita1, tomita2}, and introduce a new class $\biilp$-algebras.

\subsection{$\bikuro$-algebras and closed multicategories} \label{bikuro}

When we try to obtain some non-symmetric categorical structures on categories of assemblies, we will find a subtle problem.
In a $\bci$-algebra $\hka$, the $\comc$-combinator expresses exchanging the order of arguments, and is the source of the symmetric structures of $\asmca$.
So one might guess that simply omitting $\comc$ would be sufficient for getting a non-symmetric categorical structure on $\asmca$.
However, this does not work well; $\comb$ and $\comi$ alone are too weak to give an interesting structure on $\asmca$.

For instance, if we want the internal hom functor $(- \rimp -)$ on $\asmca$ on a total applicative structure $\hka$, we need certain exchanging operation in $\hka$ even if the closed structure is not symmetric.
Take an object $A$ of $\asmca$ as $|A| := \uhka$ and $\rlz{a}_A := \{ a \}$.
For maps $f,g:A \migi A$, to realize $g \rimp f$, we need a realizer $r$ which satisfies $\forall a, a' \in \uhka$, $r a a' = r_g (a (r_f a'))$.
This $r$ acts as the exchanging to move the information of $r_f$ from the left of $a$ to the right of $a$.
(In a $\bci$-algebra, such $r$ exists as $\comc(\comb \comb (\comb r_g))r_f$.)

Therefore, when we want some non-symmetric categorical structures such as non-symmetric closed structures, we need to prepare some ``more restricted exchanging'' than the $\comc$-combinator.
One way to resolve the problem is to supply not a combinator but the unary operation $\kuro{(\haih)}$ for exchanging.
In this subsection, we introduce $\bikuro$-algebras from \cite{tomita1}, which induce non-symmetric closed multicategories.


\begin{defi} \label{defbikuro}
A total applicative structure $\hka$ is a $\bikuro$-{\it algebra} iff it contains $\comb$, $\comi$ and $\kuro{a}$ for each $a \in \uhka$, where $\kuro{a}$ is an element of $\uhka$ such that $\forall x \in \uhka$, $\kuro{a} x = x a$.
\end{defi}

This $\kuro{(\haih)}$ enable restricted exchanges than the $\comc$-combinator.
Since in a $\bci$-algebra, $\comc \comi a$ satisfies the axiom of $\kuro{a}$, all $\bci$-algebras are also $\bikuro$-algebras.

The definition of $\bikuro$-algebras may seem strange compared to the definitions of PCAs or $\bci$-algebras.
However, the definition of $\bikuro$-algebras is natural in the aspect of having a good correspondence with the ``planar" lambda calculus.

\begin{exa} \label{exampla}
{\it Untyped planar lambda terms} are untyped lambda terms constructed without using weakening, contraction nor exchange rules (See Example \ref{examlam}).
That is, untyped planar lambda terms are untyped linear lambda terms such that for each subterm $\lambda x.M$, $x$ is the rightmost free variable of $M$. 
Untyped closed planar lambda terms modulo $\eqb$ form a $\bikuro$-algebra, which we call $\plalam$ in this paper.
Here $\lambda xyz.x(yz)$ and $\lambda x.x$ are the representatives of $\comb$ and $\comi$ respectively. Given a representative $M$ of $a \in |\plalam|$, $\lambda x.xM$ is also a closed planar term and is the representative of $\kuro{a}$.
\end{exa}

\begin{rem}
The definition of construction rules of planar lambda terms has two different styles. In our definition, the abstraction rule is only allowed for the rightmost variable. Such a style is seen in \cite{temperley}. On the other hand, there is also the definition that the abstraction rule is only allowed for the leftmost variable, as in \cite{noam2}. Here we employ the former style for preservation the planarity of terms under the $\beta \eta$-conversions.
\end{rem}

\begin{prop}[combinatory completeness for $\bikuro$-algebras]
Let $\hka$ be a $\bikuro$-algebra and $M$ be a polynomial over $\uhka$.
For the rightmost variable $x$ of $M$, if $x$ appears exactly once in $M$, there exists a polynomial $\lamst x.M$ such that the free variables of $\lamst x.M$ are the free variables of $M$ excluding $x$ and $(\lamst x.M) a = M[a/x]$ for all $a \in \uhka$.
\end{prop}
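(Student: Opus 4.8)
The plan is to mimic the inductive bracket-abstraction argument used for $\bci$-algebras, but with the extra bookkeeping forced by planarity: the abstracted variable $x$ must be the rightmost free variable of $M$, and this invariant has to be maintained as we descend into subterms. I would define $\lamst x.M$ by induction on the structure of $M$, just as in the $\bci$ case, but with the case split in the application clause now \emph{forced} rather than chosen, because planarity tells us exactly where $x$ can occur.

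First, the base cases. If $M$ is the variable $x$ itself, set $\lamst x.x := \comi$; then $(\lamst x.x)a = \comi a = a = M[a/x]$. The case $M = y$ for $y \neq x$ cannot occur, since $x$ is required to be the rightmost variable of $M$ and to appear in $M$; likewise $M$ cannot be a bare element of $\uhka$. For the inductive step, suppose $M = NP$ where $x$ is the rightmost variable of $M$ and occurs exactly once. Since $x$ is rightmost and application reads left to right, $x$ must occur in $P$, not in $N$; moreover $x$ is then the rightmost variable of $P$, and it still occurs exactly once there, so the induction hypothesis applies to $P$ and gives $\lamst x.P$. The variable $x$ does not occur in $N$. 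We then set
\[ \lamst x.NP := \comb N (\lamst x.P), \]
and check using the axiom $\comb xyz = x(yz)$ that $(\comb N (\lamst x.P))a = N((\lamst x.P)a) = N(P[a/x]) = (NP)[a/x]$, as required; the free-variable condition is immediate from the induction hypothesis and from $x \notin FV(N)$.

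It remains to handle the case where $M$ is itself an abstraction, $M = \lambda y.N$ — but here one must be careful about what ``polynomial over $\hka$'' means: polynomials in the sense of Definition \ref{defpoly} are built only from variables, elements of $\uhka$, and application, so there is in fact \emph{no} abstraction case, and the induction is complete after the two cases above. (If one instead wished to abstract over polynomial expressions that themselves contain the informal binder, one would use the element $\kuro{a}$ exactly when the bound variable sits to the left of $x$, via $\kuro{(\lamst x.N)}$-style manipulations; but this is not needed for the statement as phrased.) The one genuine subtlety — and the step I expect to require the most care — is the verification, at the application case, that the planarity hypothesis on $M$ really does force $x \in FV(P)$ and $x \notin FV(N)$ with $x$ remaining rightmost in $P$: this is where the asymmetry of the calculus is doing all the work, and it is precisely the reason only the $\comb$-branch of the $\bci$ proof survives while the $\comc$-branch (which would need exchanging two arguments) is unavailable. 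Everything else is a routine application of the defining axioms of $\comb$ and $\comi$.
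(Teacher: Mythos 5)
There is a genuine gap in your application case, and it sits exactly where the $\kuro{(\haih)}$ operation of a $\bikuro$-algebra is supposed to do its work. You claim that if $M = NP$ and $x$ is the rightmost variable of $M$, then planarity forces $x \in FV(P)$. That is only true when $P$ contains at least one variable. A polynomial's right operand may be variable-free (built entirely from elements of $\uhka$), in which case the rightmost variable of $NP$ lies in the left operand $N$. The simplest counterexample is $M = x\,b$ for a fixed $b \in \uhka$: here $x$ is the rightmost (indeed only) variable, occurs exactly once, but sits on the left, and your induction has no clause covering it. Note also that this situation genuinely arises from planar lambda terms, e.g. $\lambda x.\,x(\lambda y.y)$, so it cannot be argued away.

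The paper's proof handles precisely this case: when $x \in FV(N)$ (left operand), planarity guarantees that $P$ has no free variables at all, so $P$ denotes an element of $\uhka$ and $\kuro{P}$ exists; one then sets $\lamst x.NP := \comb\, \kuro{P}\, (\lamst x.N)$, and checks $(\comb\, \kuro{P}\, (\lamst x.N))\,a = \kuro{P}\,(N[a/x]) = N[a/x]\,P = (NP)[a/x]$. Your clause $\lamst x.NP := \comb\, N\, (\lamst x.P)$ for the case $x \in FV(P)$ agrees with the paper, but by omitting the other branch your argument never uses $\kuro{(\haih)}$ at all. That is a structural warning sign: if $\comb$ and $\comi$ alone sufficed, the proposition would already hold for $\comb\comi$-algebras, contradicting the paper's point (Section \ref{bikuro}) that $\comb$ and $\comi$ are too weak and that the restricted exchange $\kuro{(\haih)}$ is exactly what is needed for planar combinatory completeness. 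To repair the proof, split the application case on whether $x$ lies in the left or the right operand, and in the left-operand branch argue that the right operand is closed so that $\kuro{(\haih)}$ applies.
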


\begin{proof}
We define $\lamst x.M$ by induction on the structure of $M$.
\begin{itemize}
\item $\lamst x.x := \comi$
\item $\lamst x.MN := \begin{cases}
		\comb \kuro{N} (\lamst x.M) & (x \in FV(M)) \\
		\comb M (\lamst x.N) & (x \in FV(N))
		\end{cases}$
\end{itemize}
Note that for $\lamst x.MN$, $x$ is the rightmost free variable in $MN$, and thus, if $x$ is in $FV(M)$, $N$ has no free variables and $\kuro{N}$ can be defined.
\end{proof}


Then we show $\bikuro$-algebras induce certain categorical structures on the categories of assemblies.
First we recall the definition of closed multicategories from \cite{manzyuk}.

\begin{defi}
A {\it multicategory} $\catc$ consists of the following data:
\begin{enumerate}
\item a collection $Ob(\catc)$;
\item for each $n \geq 0$ and $X_1 , X_2 , \dots  , X_n , Y \in Ob(\catc)$, a set $\catc 
(X_1 ,  \dots  , X_n ; Y)$.
We often write $f \in \catc (X_1 ,  \dots  , X_n ; Y)$ as $f:X_1 ,  \dots  , X_n \migi Y$;
\item for each $X \in Ob(\catc)$, an element $id_X \in \catc (X ; X)$, called the {\it 
identity map};
\item for each $n, m_1 , m_2 , \dots  , m_n \in \mathbb{N}$ and $X^k_j , Y_k , Z$ $(1 \leq k \leq n , 1 \leq j	\leq m_k)$, a function
\begin{center}
$\circ : \catc (Y_1 , \dots  , Y_n ; Z) \times \prod_{k}^{n} \catc (X^k_{1} , \dots  , X^k_{m_k} ;Y_k) \migi \catc (X^1_1 , \dots  ,X^1_{m_1}  ,X^2_1 , \dots  , X^n_{m_n} ; Z)$ 
\end{center}
called the {\it composition}. $g \circ (f_1 , \dots  , f_n)$ denotes the composition of $g \in 
\catc (Y_1 , \dots  , Y_n ; Z)$ and $f_k \in \catc (X^k_1 , \dots  , X^k_{m_k} ; Y_k)$ $(1 \leq k \leq n)$.
The compositions satisfy associativity and identity axioms.
\end{enumerate}
\end{defi}

\begin{defi} \label{defclomul}
A {\it closed multicategory} consists of the following data:
\begin{enumerate}
\item a multicategory $\catc$;
\item for each $X_1 , X_2 , \dots  , X_n , Y \in Ob(\catc)$, an object $\underline{\catc} 
(X_1 , X_2 , \dots  , X_n ; Y)$, called the {\it internal hom object};
\item for each $X_1 , \dots  , X_n , Y \in Ob(\catc)$, a map
\begin{center}
$ev_{X_1 , \dots  , X_n ; Y} : \underline{\catc} (X_1 , \dots  , X_n ; Y), X_1 
, \dots  , X_n \rightarrow Y$,
\end{center}
called the {\it evaluation map} such that $\forall Z_1 , Z_2 , \dots  , Z_m \in Ob(\catc)$, the function
\[ {\phi}_{Z_1 , \dots  , Z_m ; X_1 , \dots  , X_n ; Y} : \catc ( Z_1 , \dots  , Z_m ; 	
\underline{\catc} (X_1 , \dots  , X_n ; Y) ) \rightarrow \catc(Z_1 , \dots  , Z_m, X_1 , \dots  , X_n ; Y) \]
sending $f$ to $ev_{X_1 , \dots  , X_n ; Y} \circ (f, id_{X_1} , \dots  , id_{X_n} )$ is invertible. 
We write the inverse function $\Lambda_{Z_1 , \dots  , Z_m ; X_1 , \dots  , X_n ; Y}$.
\end{enumerate}
\end{defi}

\begin{rem}
Here our definition of closed multicategories is different from the original definition in \cite{manzyuk} in that the order of objects of domain of maps are reversed.
This is for ease to read by matching the orders of objects and realizers.
\end{rem}

\begin{prop} \label{propmul}
When $\hka$ is a $\bikuro$-algebra, $\asmca$ and $\moda$ are closed multicategories.
\end{prop}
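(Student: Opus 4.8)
The plan is to mimic the proof of Proposition~\ref{propsmcc} (the $\bci$-algebra case), but now working with $n$-ary maps and without the $\comc$-combinator, using the unary operation $\kuro{(\haih)}$ wherever an exchange of realizers is needed. First I would define the multicategory $\catc := \asmca$ (resp.\ $\moda$): its objects are assemblies (resp.\ modest sets), and $\catc(X_1,\dots,X_n;Y)$ is the set of functions $f : |X_1| \times \cdots \times |X_n| \migi |Y|$ for which there exists $r \in \uhka$ such that for all $x_i \in |X_i|$ and all $a_i \in \rlz{x_i}_{X_i}$ we have $r a_1 \cdots a_n \downarrow$ and $r a_1 \cdots a_n \in \rlz{f(x_1,\dots,x_n)}_Y$. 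Identities are realized by $\comi$, and the composition $g \circ (f_1,\dots,f_n)$ is realized by a polynomial built from the realizers $r_g, r_{f_1}, \dots, r_{f_n}$; the combinatory completeness for $\bikuro$-algebras guarantees this polynomial is represented by an element of $\uhka$, provided we are careful that the $\lamst$-abstraction is only ever applied to the rightmost free variable appearing exactly once. Checking the associativity and identity axioms is then routine, since on underlying functions these are just the laws of $\cats$.

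Next I would construct the closed structure. The internal hom object $\ucatc(X_1,\dots,X_n;Y)$ has underlying set $\catc(X_1,\dots,X_n;Y)$ with $\rlz{f}_{\ucatc(X_1,\dots,X_n;Y)} := \{ r \in \uhka \mid r \text{ realizes } f \}$, exactly as in the $\bci$ case. The evaluation map $ev_{X_1,\dots,X_n;Y} : \ucatc(X_1,\dots,X_n;Y), X_1,\dots,X_n \migi Y$ sends $(f,x_1,\dots,x_n)$ to $f(x_1,\dots,x_n)$ and is realized by $\comi$ (it just applies its first argument to the rest). For the currying bijection $\phi_{Z_1,\dots,Z_m;X_1,\dots,X_n;Y}$, given $f : Z_1,\dots,Z_m,X_1,\dots,X_n \migi Y$ realized by $r_f$, its transpose $\Lambda(f)$ sends $(z_1,\dots,z_m)$ to the map $(x_1,\dots,x_n) \mapsto f(z_1,\dots,z_m,x_1,\dots,x_n)$; a realizer for $\Lambda(f)$ is obtained by partial application, namely a polynomial $\lamst c_1 \cdots c_m. r_f c_1 \cdots c_m$, which again lies in $\uhka$ by combinatory completeness (no exchange is needed here, just iterated abstraction of the rightmost fresh variables). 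Conversely, given $g : Z_1,\dots,Z_m \migi \ucatc(X_1,\dots,X_n;Y)$ realized by $r_g$, the map $\phi(g) = ev \circ (g,id,\dots,id)$ is realized by $\lamst c_1 \cdots c_m d_1 \cdots d_n. r_g c_1 \cdots c_m d_1 \cdots d_n$. One checks $\phi$ and $\Lambda$ are mutually inverse on underlying functions, which is immediate, and that $\phi$ is natural in the $Z_i$'s.

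The main obstacle I anticipate is the composition law when some of the inner maps $f_k$ are $0$-ary or when realizers must be threaded past one another: in the $\bci$ case one freely uses $\comc$ to reorder arguments, but here every such reordering has to be rewritten using only $\comb$, $\comi$, and $\kuro{(\haih)}$, which constrains the order in which realizers may be consumed. Concretely, in the composition $g \circ (f_1,\dots,f_n)$ the single combined input sequence $X^1_1,\dots,X^n_{m_n}$ must be fed so that each block reaches $r_{f_k}$ and the outputs reach $r_g$ in the correct left-to-right order; the relevant polynomial is planar precisely because the closed-multicategory composition already respects this order, so combinatory completeness for $\bikuro$-algebras applies, but verifying planarity of each auxiliary polynomial (especially that any $\kuro{N}$ introduced has $N$ closed) is the point requiring care. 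For the $\moda$ case I would then argue exactly as in Proposition~\ref{propccc}: the internal hom objects of modest sets are again modest (a common realizer $r$ of $f \neq g$ would, via $ev$, produce a common realizer of $f(x_1,\dots,x_n) \neq g(x_1,\dots,x_n)$ in the modest set $Y$, a contradiction), so no quotient is needed and the same construction restricts to $\moda$.
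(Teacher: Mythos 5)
Your proposal is correct and takes essentially the same route as the paper's proof: multimaps are realized by iterated application, composition realizers come from planar combinatory completeness (with exactly the care you flag about $\kuro{N}$ needing $N$ closed), the evaluation map is realized by $\comi$, transposes are realized by (eta-expansions of) the original realizers, and modesty of the hom objects is checked just as you describe. The only cosmetic difference is that you define the $n$-ary internal hom directly by multi-ary realizability, whereas the paper builds it as the iterated binary hom $(\dots((Y \rimp X_n) \rimp X_{n-1})\dots) \rimp X_1$ via the bifunctor $(- \rimp -)$; the realizer sets coincide, so the two presentations agree.
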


\begin{proof}
Let $\catc := \asmca$.
Since $\hka$ have the $\comb$-combinator and the $\comi$-combinator, $\catc$ is a category.

First we give a bi-functor $(- \rimp -):\catc^{op} \times \catc \migi \catc$ as follows:
\begin{itemize}
\item For $X,Y \in \obcc$, $Y \rimp X$ is an assembly whose underlying set is $\homrm_{\catc}(X,Y)$ and $\rlz{f}_{Y \rimps X} := \{ r \mid \mbox{$r$ realizes $f$} \}$.
\item For two maps $f:X' \migi X$ and $g:Y \migi Y'$ in $\catc$, $(g \rimp f):(Y \rimp X) \migi (Y' \rimp X')$ is the function sending $h \in \homrm_{\catc}(X,Y)$ to $g \circ h \circ f$.
\end{itemize}
Given realizers $r_f$ of $f$ and $r_g$ of $g$, $(g \rimp f)$ is realized by 
$\lamst uv.r_g (u (r_f v))$.
Thus, for any maps $f$ and $g$ in $\catc$, $(g \rimp f)$ certainly is a map of $\catc$.
It is easy to see that $(- \rimp -)$ preserves identities and compositions.

Next we give $\catc$ the structure of closed multicategory.
\begin{itemize}
\item For an object $X \in \obcc$, $\catc(;X) := |X|$ and $\ucatc(;X) := X$.

\item For objects $X_1, X_2, \dots , X_n, Y \in \obcc$ ($n \geq 1$), we define the internal hom object
\[ \ucatc(X_1,\dots ,X_n ;Y) := (\dots ((Y \rimp X_n) \rimp X_{n-1})\dots ) \rimp X_1 \]
and $\catc(X_1,\dots ,X_n ;Y)$ is the underlying set of $\ucatc(X_1,\dots ,X_n ;Y)$.
We write $f(x_1)(x_2)\dots (x_n)$ as $f(x_1,\dots ,x_n)$ for $f \in \catc(X_1,\dots ,X_n ;Y)$ and $x_i \in |X_i|$.
\item Identity maps $id_X \in \catc(X;X)$ ($X \in \obcc$) are the same as identity maps of $\catc$.

\item Suppose maps $g\in \catc(Y_1,\dots ,Y_n;Z)$ and $f_k \in \catc(X^k_1,\dots ,X^k_{m_k};Y_k)$ ($1 \leq k \leq n$). 
We define $g \circ (f_1,\dots ,f_n)$ as the function that receives 
$x^1_1,\dots , x^1_{m_1} ,\dots , x^n_1 ,\dots , x^n_{m_n}$
and returns $g(f_1(x^1_1,\dots ,x^1_{m_1}) ,\dots , f_n(x^n_1,\dots ,x^n_{m_n}))$.
Here when $m_i = 0$ for some $1 \leq i \leq n$, we define
$g \circ (f_1,\dots ,f_n)$ by giving $y_i \in |Y_i|$ pointed by $f_i \in \catc(;Y_i)$ as the $i$-th argument of $g$.
Given realizers $q \in \rlz{g}_{\ucatc(Y_1,\dots ,Y_n;Z)}$ and $p_k \in \rlz{f_k}_{\ucatc(X^k_1,\dots ,X^k_{m_k};Y_k)}$, by the combinatory completeness for $\bikuro$-algebras, there is $r \in \uhka$ such that
\[ r a^1_1 \dots  a^1_{m_1} \dots  a^n_1 \dots  a^n_{m_n} = q (p_1 a^1_1 \dots  a^1_{m_1})\dots (p_n a^n_{1} \dots  a^n_{m_n}) \]
holds for any $a^1_1,\dots ,a^n_{m_n} \in \uhka$.
This $r$ realizes $g \circ (f_1,\dots ,f_n)$ and thus $g \circ (f_1,\dots ,f_n)$ is in $\catc(X^1_1,\dots ,X^n_{m_n};Z)$.

\item The evaluation map $ev_{X_1 ,\dots , X_n ; Y} : \ucatc (X_1 ,\dots , X_n ; Y), X_1 ,\dots , X_n \migi Y$ is given as the function that receives $f,x_1,\dots ,x_n$ and returns $f(x_1,\dots ,x_n)$, which is realized by $\comi$.
Then, $\phi_{Z_1,\dots ,Z_m;X_1,\dots ,X_n;Y}$ is invertible as a function and for $g \in \catc(Z_1,\dots ,Z_m,X_1,\dots ,X_n;Y)$, $\Lambda (g)$ is indeed in $\catc(Z_1,\dots ,Z_m;\ucatc(X_1,\dots ,X_n;Y))$ since it is realized by realizers of $g$.
\end{itemize}
Therefore, $\asmca$ is a closed multicateogry.
For $\moda$, we can use the same proof as for $\asmca$.
\end{proof}

While we define $\bikuro$-algebras as a class of total applicative structures, we also can define ``partial $\bikuro$-algebra'' naturally.
For a partial $\bikuro$-algebra $\hka$, $\kuro{(\haih)}$ is a total unary operation on $\uhka$ such that $\forall a, x \in \uhka, \kuro{a} x \simeq x a$.
Unlike the case of partial $\bci$-algebras as in Remark \ref{rembci}, the proof of Proposition \ref{propmul} is applicable to the case of partial $\bikuro$-algebras.

\subsection{$\biikuro$-algebras and closed categories} \label{secbiikuro}

In this subsection, we recall a class of applicative structures from \cite{tomita1}, which induce closed categories of assemblies and modest sets.
First we recall the definition of closed categories in \cite{manzyuk}.

\begin{defi}
A {\it closed category} consists of the following data:
\begin{enumerate}
\item a locally small category $\catc$;
\item a functor $(- \rimp -): {\catc}^{op} \times \catc \migi \catc$, called the {\it 
internal hom functor}\footnote{While the internal hom object in the closed category is often written as $\ucatc(X,Y)$, $[X,Y]$ or $X \limp Y$, here we denote $Y \rimp X$ to be consistent with other categorical structures in this paper.};
\item an object $I$, called the {\it unit object};
\item a natural isomorphism $i_X : (X \rimp I) \migi X$;
\item an extranatural transformation $j_X : I \migi (X \rimp X)$;
\item a transformation $L_{Y,Z}^X : (Z \rimp Y) \migi ((Z \rimp X) \rimp (Y \rimp X))$ natural in $Y$ and $Z$ and extranatural in $X$,
\end{enumerate}
such that the following axioms hold:
\begin{enumerate}[(i)]
\item $\forall X,Y \in \catc$, $L_{Y,Y}^X \circ j_Y = j_{(Y \rimps X)}$;
\item $\forall X,Y \in \catc$, $i_{(Y \rimps X)} \circ (id_{(Y \rimps X)} \rimp j_X) \circ L_{X,Y}^X = id_{(Y \rimps X)}$;
\item $\forall X,Y,Z,W \in \catc$, the following diagram commutes:
\begin{center}
\xymatrix@C=-25pt@R=30pt{
	& (W \rimp Z) \ar[dl]_{L_{Z,W}^X} \ar[dr]^{L_{Z,W}^Y} & \\
	(W \rimp X) \rimp (Z \rimp X) \ar[d]^-{L_{(Z \rimpss X),(W \rimpss X)}^{(Y \rimpss X)}} & & 
	((W \rimp Y) \rimp (Z \rimp Y)) \ar[dd]^-{L_{Y,W}^X \rimps id} \\
	((W \rimp X) \rimp (Y \rimp X)) \rimp ((Z \rimp X) \rimp (Y \rimp X)) \ar[drr]_-{id \rimps L_{Y,Z}^X} & & \\
	& & ((W \rimp X) \rimp (Y \rimp X)) \rimp (Z \rimp Y)
	}
\end{center}
\item $\forall X,Y \in \catc$, $L_{X,Y}^I \circ (i_Y \rimp id_X) = id_{(Y \rimps I)} \rimp i_X$;
\item $\forall X,Y \in \catc$, the function $\gamma : \catc (X,Y) \migi \catc (I , (Y \rimp X))$ sending $f:X \migi Y$ to 
$(f \rimp id_X) \circ j_X$ is invertible.
\end{enumerate}
\end{defi}

Closed categories are something like monoidal closed categories without tensor products. That is, categories with internal hom functors which are defined directly, not via tensor products and adjunctions.
The structures of closed categories are very similar to the structures of closed multicategories.
As shown in \cite{manzyuk}, the category of closed categories are cat-equivalent to the category of closed multicategories with unit objects.

However, when we want to construct (non-symmetric) closed categories as categories of assemblies, it is not sufficient that the applicative structures are $\bikuro$-algebras, since realizers for $i^{-1}_X : X \migi (X \rimp I)$ may not exist.
Thus, we add another condition to a $\bikuro$-algebra to realize $i^{-1}_X$ and obtain the following definition.

\begin{defi} \label{defbii}
A $\biikuro$-{\it algebra} $\hka$ is a $\bikuro$-algebra which contains an element $\batui$ such that $\forall a \in \uhka$, $\batui a \comi = a$.
\end{defi}

\begin{rem}
In $\biikuro$-algebras, the role we expect to $\batui$ is to eliminate the ``harmless" second argument, which does not necessarily eliminate $\comi$.
Even without specifying $\batui$, we can define the same class as $\biikuro$-algebras.
For instance, for a $\bikuro$-algebra $\hka$, suppose there is $\comb^{\times} \in \uhka$ such that $\forall a \in \uhka$, $\comb^{\times} a \comb = a$.
Then this $\hka$ is a $\biikuro$-algebra since $\lamst xy.\comb^{\times} x (y \comb)$ satisfies the axiom of $\batui$.
Conversely, for a $\biikuro$-algebra, we can take $\comb^{\times} := \lamst xy.\batui x (y \comi \, \comi \, \comi)$ and thus $\biikuro$-algebras and $\comb \comi \comb^{\times} \kuro{(\haih)}$-algebras are the same classes.
\end{rem}

\begin{exa} \label{exampla2}
$\plalam$ in Example \ref{exampla} is a $\biikuro$-algebra.
Since the planar lambda calculus has the strongly normalizing property, for any closed planar term $M$, there are some $u$ and $N$ such that $M \eqb \lambda u.N$. Then
\begin{eqnarray*}
(\lambda xyz.x(yz)) M (\lambda v.v) &\eqb& \lambda z.M((\lambda v.v)z) \\
&\eqb& \lambda z.(\lambda u.N)z\\
&\eqb& \lambda z.N[z/u] \\
&=_{\alpha}& M
\end{eqnarray*}
and thus $\lambda xyz.x(yz)$ represents $\batui$.
\end{exa}

Since $\plalam$ (which nicely corresponds to $\bikuro$-algebras) is also a $\biikuro$-algebra, one might suspect that $\bikuro$-algebras and $\biikuro$-algebras are the same class.
However, these two classes are different ones.
Later in Section \ref{secsepa}, we will discuss an example that separates classes of $\bikuro$-algebras and $\biikuro$-algebras (Proposition \ref{propsepa3}).

The next example based on an ordered group is from \cite{tomita1}.
(However, here we reverse the direction of the implication symbol $\limp$ of the original example in \cite{tomita1}.)

\begin{exa} \label{examtree}
Take an ordered group $(G,\cdot,e,\leq)$.
Let $T$ be a set of elements constructed grammatically as follows:
\[ t ::= g \; | \;  t \rimp t' \; \; \; (g \in G). \]
That is, $T$ is a set of binary trees whose leaves are labeled by elements of $G$.
We further define a function $| \haih | :T \migi G$ by induction: $|g| := g$ and $|t_2 \rimp t_1| := |t_2| \cdot |t_1|^{-1}$.

Let $\uhkt$ be the powerset of $\{ t \in T \mid e \leq |t| \}$.
Then we can get a $\biikuro$-algebra $\hkt$ by $\uhkt$:
\begin{itemize}
\item For $M,N \in \uhkt$, $MN := \{ t_2 \mid \exists t_1 \in N, (t_2 \rimp t_1) \in M \}$.
\item $\comb := \{ (t_3 \rimp t_1) \rimp (t_2 \rimp t_1) \rimp (t_3 \rimp t_2) \mid t_1,t_2,t_3 \in T \}$.
Here $\rimp$ joins from the left.
\item $\comi := \{ t_1 \rimp t_1 \mid t_1 \in T \}$
\item $\batui := \{ t_1 \rimp (t_2 \rimp t_2) \rimp t_1 \mid t_1,t_2 \in T \}$.
\item For $M \in \uhkt$, $\kuro{M} := \{ t_2 \rimp (t_2 \rimp t_1) \mid t_1 \in M, t_2 \in T \}$.
\end{itemize}
\end{exa}

This example is based on $Comod(\overline{G})$ introduced in \cite{hasegawa4}, which is a category of sets and relations equipped with $G$ valued functions.
For any (not necessarily ordered) group $G$, $Comod(\overline{G})$ is a pivotal category.
$\uhkt$ is a set of maps from the unit object to a reflexive object in (ordered) $Comod(\overline{G})$.
The structure of $\hkt$ depends on $G$.
For instance,
\[ \{ (t_3 \rimp t_2 \rimp t_1) \rimp (t_3 \rimp t_1 \rimp t_2) \mid t_1, t_2, t_3 \in T \} \]
acts as the $\comc$-combinator whenever $G$ is Abelian.

The above $\hkt$ later appears several times as examples of applicative structures of other classes (Example \ref{examtree3}, \ref{examtree5}, \ref{examtree6}).

\begin{prop} \label{propclo}
When $\hka$ is a $\biikuro$-algebra, $\asmca$ and $\moda$ are closed categories.
\end{prop}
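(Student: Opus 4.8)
The plan is to build the closed category structure of $\asmca$ on top of the closed multicategory structure already obtained in Proposition~\ref{propmul}, the one genuinely new point being that $\batui$ is exactly what is needed to invert the unit isomorphism $i_X$.

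Set $\catc:=\asmca$; this is a category since a $\biikuro$-algebra in particular contains $\comb$ and $\comi$. For the internal hom functor I would reuse verbatim the bifunctor $(-\rimp-):\catc^{op}\times\catc\migi\catc$ from the proof of Proposition~\ref{propmul}: $Y\rimp X$ has underlying set $\homrm_{\catc}(X,Y)$ with $\rlz{f}_{Y\rimps X}=\{r\mid r\text{ realizes }f\}$, and $(g\rimp f)$ sends $h$ to $g\circ h\circ f$, realized by $\lamst uv.r_g(u(r_f v))$. As unit object take $I=(\{\ast\},\ast\mapsto\{\comi\})$; the key observation is that a realizer of a map $I\migi X$ picking $x$ is precisely an element $\rho$ with $\rho\comi\in\rlz{x}_X$, so that $g\mapsto g(\ast)$ is a bijection $\homrm_{\catc}(I,X)\cong|X|$.

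I would then define the structure maps together with their realizers as follows.
\begin{itemize}
\item $i_X:(X\rimp I)\migi X$ is $g\mapsto g(\ast)$, realized by $\lamst u.u\comi\ (=\comb\kuro{\comi}\comi)$, with inverse $x\mapsto(\ast\mapsto x)$ realized by $\batui$, since $(\batui a)\comi=a\in\rlz{x}_X$ for $a\in\rlz{x}_X$. This is the step that genuinely uses $\batui$ and forces the passage from $\bikuro$- to $\biikuro$-algebras.
\item $j_X:I\migi(X\rimp X)$ picks $id_X$ and is realized by $\batui\comi$, because $\batui\comi\comi=\comi$ realizes $id_X$.
\item $L^X_{Y,Z}:(Z\rimp Y)\migi((Z\rimp X)\rimp(Y\rimp X))$ sends $h$ to $(h\rimp id_X)$, i.e.\ to post-composition with $h$, and is realized by $\comb$ (note $(h\rimp id_X)$ is realized by $\comb r_h$).
\end{itemize}

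Finally I would verify the (extra)naturality of $i$, $j$, $L$ and the five closed-category axioms. The underlying functions of $i_X$, $j_X$, $L^X_{Y,Z}$ are exactly the canonical hom-set operations ``evaluate at the point of $I$'', ``select the identity'' and ``post-compose'', so each naturality square and each of the axioms (i)--(v) reduces to an identity of functions valid in any closed category; since equality of maps of assemblies is equality of underlying functions and every map involved has been equipped with a realizer, these identities transfer to $\asmca$. (Equivalently, this step amounts to exhibiting $I$ as a unit object for the closed multicategory of Proposition~\ref{propmul} and then invoking the cat-equivalence of \cite{manzyuk} between closed categories and closed multicategories with unit.) For $\moda$ the same data work: $Y\rimp X$ is a modest set whenever $Y$ is — a common realizer of distinct $f,g:X\migi Y$ would produce a common realizer of some $f(x)\neq g(x)$ in $Y$ — and $I$ is trivially modest, so the whole closed structure restricts. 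The only nonroutine point is realizing $i_X^{-1}$: no bare $\bikuro$-algebra can pad a realizer of $x$ into a realizer of the map $I\migi X$ picking $x$, and the axiom $\batui a\comi=a$ is precisely what does it; everything else is combinatory-completeness bookkeeping and checking coherences on underlying functions.
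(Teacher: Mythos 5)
Your proposal is correct and follows essentially the same route as the paper's proof: the internal hom bifunctor is reused from Proposition~\ref{propmul}, $I=(\{\ast\},\{\comi\})$, $i_X$, $j_X$, $L^X_{Y,Z}$ are the same maps with realizers differing only trivially (e.g.\ $\lamst u.u\comi$ vs.\ $\kuro{\comi}$, $\batui\comi$ vs.\ $\comi$), $\batui$ is used exactly where the paper uses it (for $i_X^{-1}$ and point maps $I\migi X$), and the axioms are checked on underlying functions, with the same modesty argument handling $\moda$.
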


\begin{proof}
Let $\catc := \asmca$.
We give the same bi-functor $(- \rimp -):\catc^{op} \times \catc \migi \catc$ as in the proof of Proposition \ref{propmul}.
\begin{itemize}
\item We define the unit object $I$ as $(\{ \ast \}, \erlz_I)$, where $\rlz{\ast}_I := \{ \comi \}$.
\item $j_X$ is the function sending $\ast$ to $id_X$, which is realized by $\comi$.
\item $i_X$ is the function sending $(f:\ast \mapsto x)$ to $x$, which is realized by $\kuro{\comi}$. The inverse $i_X^{-1}$ is realized by $\batui$.
\item $L_{Y,Z}^X$ is the function sending $g$ to the function $(f \mapsto g \circ f)$, which is realized by $\comb$.
\item $\gamma$ is invertible. Indeed, $\gamma^{-1}$ is the function sending $g:I \migi (Y \rimp X)$ to the map $g(\ast):X \migi Y$.
\end{itemize}
It is easy to verify that $j$, $i$ and $L$ have naturality and satisfy the axioms of the closed category.

For $\moda$, we can use the same proof for $\asmca$. \qedhere

\end{proof}

While we define $\biikuro$-algebras as a class of total applicative structures, we also can define ``partial $\biikuro$-algebra'' naturally.
For a partial $\biikuro$-algebra $\hka$, $\batui$ satisfies that $\forall a \in \uhka, \batui a \comi \downarrow$ and $\batui a \comi = a$.
Proposition \ref{propclo} also holds in the case of partial $\biikuro$-algebras.

\subsection{$\biilp$-algebras and monoidal closed categories} \label{secbiilp}

In the previous two subsections, we obtain closed multicategories and closed categories as categories of assemblies.
Next we further attempt to obtain a richer categorical structure, the (non-symmetric) tensor products, by categorical realizability.

First, let us consider whether we can realize products by a $\biikuro$-algebra in the same way as PCAs and $\bci$-algebras.
Even when we use a $\biikuro$-algebra, we can take the object $X \otimes Y$ in the same way as PCAs and $\bci$-algebras (See the proofs of Proposition \ref{propccc} and \ref{propsmcc}).
That is, for a $\biikuro$-algebra $\hka$, we take an assembly $X \otimes Y$ that the underlying set  is $|X| \times |Y|$ and realizers are 
\[ \rlz{x \otimes y}_{X \otimes Y} := \{ \lamst t.tp q \mid p \in \rlz{x}_X, q \in \rlz{y}_Y \}. \]
We also take the unit object in $\asmca$ in the same way as $\bci$-algebras: $|I|:= \{ \ast \}$ and $\rlz{\ast}_I := \{ \comi \}$.
Then is this $\asmca$ a monoidal category?
Now let us assume it is.
Take an assembly 
$A:= (\uhka,\erlz_A)$, where $\rlz{a}_A := \{ a \}$.
Then since $\asmca$ is a monoidal category, the unitor 
$A \migi I \otimes A$ has a realizer $r$, which satisfies that $r a = \lamst t.t \comi a$.
Taking an elements $C := \lamst xyz.r x (\comb (\comb ( \lamst w.r y (\comb w))))z$, this $C$ satisfies the axiom of the $\comc$-combinator and make $\hka$ a $\bci$-algebra.

In summary, when we attempt to make $\asmca$ a non-symmetric monoidal category using a $\biikuro$-algebra $\hka$, it follows that $\hka$ is actually a $\bci$-algebra and $\asmca$ becomes an SMCC.
Therefore, we need some major modification on the definition of realizers of tensor products in $\asmca$ to make $\asmca$ a non-symmetric monoidal category.

One way to solve this problem is supposing a combinator $\comp$ expressing the ``pairing" operation.
And we define realizers for tensor products as
\[ \rlz{x \otimes y}_{X \otimes Y} := \{ \comp pq \mid p \in \rlz{x}_X, q \in \rlz{y}_Y \}. \]
Since $\comp pq$ itself cannot separate the data of $p$ and $q$ from $\comp pq$, we need another combinator $\coml$ to decompose $\comp pq$.

\begin{defi} \label{defbiilp}
A $\biilp$-{\it algebra} $\hka$ is a $\biikuro$-algebra which contains $\coml$ and $\comp$ such that $\forall x, y, z \in \uhka$, $\coml x (\comp y z) = x y z$.
\end{defi}

A fundamental example of $\biilp$-algebras is given as the untyped planar lambda calculus with tensor products.

\begin{exa} \label{examplaten}
Add the following term construction rules to the planar lambda calculus (Example \ref{exampla}).
\begin{center}
	\AxiomC{$\Gamma \vdash M $}
	\AxiomC{$\Delta \vdash N $}
	\RightLabel{\scriptsize (pair construction)}
	\BinaryInfC{$\Gamma , \Delta \vdash M \otimes N $}
	\DisplayProof \; \;
	\AxiomC{$\Gamma \vdash M $}
	\AxiomC{$\Delta, x, y \vdash N $}
	\RightLabel{\scriptsize (pair deconstruction)}
	\BinaryInfC{$\Delta, \Gamma \vdash \llet{x \otimes y}{M}{N}$}
	\DisplayProof
\end{center}
We define a relation $\sim$ on planar terms as the congruence of the following relations.
\begin{itemize}
\item $(\lambda x.M)N \sim M[N/x]$
\item $M \sim \lambda x.Mx$
\item $(\llet{x_1 \otimes x_2}{M_1 \otimes M_2}{N}) \sim N[M_1 /x_1][M_2 /x_2]$
\item $M \sim (\llet{x \otimes y}{M}{x \otimes y})$
\end{itemize}
Let the equational relation $\eqbe$ be the reflexive, symmetric and transitive closure of $\sim$.
Closed terms modulo $\eqbe$ form a $\biilp$-algebra, which we call $\plalamt$ in this paper.
Here $\lambda xyz.x(yz)$, $\lambda tu.(\llet{x \otimes y}{u}{txy})$ and $\lambda xy. (x \otimes y)$ are the representatives of $\batui$, $\coml$ and $\comp$ respectively.

Unlike the planar lambda calculus of Example \ref{exampla2} (that does not have tensor products) does not need the $\eta$-equality to be a $\biikuro$-algebra, the planar lambda calculus with tensor products of Example \ref{examplaten} needs the $\beta \eta$-equality to use $\lambda xyz.x(yz)$ as $\batui$.
Indeed, $(\lambda xyz.x(yz)) ((\lambda u.u) \otimes (\lambda v.v)) (\lambda w.w)$ is $\beta \eta$-equal to $(\lambda u.u) \otimes (\lambda v.v)$ but not $\beta$-equal to it.
\end{exa}

\begin{rem}
When constructing linear lambda terms with tensor products, we often suppose a constant $\star$ for the unit (\cf~\cite{linlam}). For the above example, we can add the following rules to the term construction rules.
\begin{center}
	\AxiomC{}
	\RightLabel{\scriptsize (star introduction)}
	\UnaryInfC{$\vdash \star$}
	\DisplayProof \; \;
	\AxiomC{$\vdash M $}
	\AxiomC{$\Gamma \vdash N $}
	\RightLabel{\scriptsize (star elimination)}
	\BinaryInfC{$\Gamma \vdash \llet{\star}{M}{N}$}
	\DisplayProof
\end{center}
However, for our aim that constructing monoidal categories by categorical realizability, this $\star$ is not needed since we can use $\comi$ as the realizer of the unit instead of $\star$.
\end{rem}

$\biilp$-algebras correspond to the lambda calculus with tensor products, which has components other than applications, unlike the ordinary/linear/planar lambda calculus.
Thus, we cannot state the combinatory completeness property for $\biilp$-algebras in the same way we have seen in previous sections.
Here we only show the special case of the combinatory completeness property for $\biilp$-algebras.

\begin{prop}
Any closed term $M$ in $\plalamt$ is $\beta \eta$-equivalent to some term $\lkakko M \rkakko$ that is constructed from $\comb := \lambda xyz.x(yz)$, $\comi := \lambda x.x$, $\coml := \lambda tu.(\llet{x \otimes y}{u}{txy})$ and $\comp := \lambda xy. x \otimes y$ using the application and the unary operation $\kuro{(\haih)}:M \mapsto \lambda x.xM$.
\end{prop}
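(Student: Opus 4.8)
The plan is to prove the stronger, combinatory-completeness-style statement from which the proposition is the closed case: for every term $M$ of $\plalamt$ whose free variables are $x_1,\dots,x_n$ in planar (left-to-right) order, there is a term $\lkakko M \rkakko$ with no free variables, built only from $\comb := \lambda xyz.x(yz)$, $\comi := \lambda x.x$, $\coml$, $\comp$ using application and the operation $\kuro{(\haih)}$, such that $\lkakko M \rkakko\, x_1 \cdots x_n \eqbe M$; taking $n=0$ gives the claim. I would argue by induction on the structure of $M$ (equivalently, on a typing derivation in the calculus of Example~\ref{examplaten}), translating subterms recursively and, for each $\lambda$-binder, applying a bracket-abstraction operator $\lamst x.(\haih)$ on the rightmost free variable. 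The clauses for variables, applications and $\lambda$-abstraction are exactly those of the combinatory completeness for $\bikuro$-algebras, which uses only $\comb$, $\comi$ and $\kuro{(\haih)}$: $\lamst x.x := \comi$; $\lamst x.(PQ) := \comb\,\kuro{Q}\,(\lamst x.P)$ when $x$ occurs in $P$ (so $Q$ is closed and $\kuro{Q}$ is defined), and $\lamst x.(PQ) := \comb\,P\,(\lamst x.Q)$ when $x$ occurs in $Q$; and $\lambda x.M$ is compiled by first translating the body and then applying $\lamst x.(\haih)$. Since we only deal with closed terms in which every variable occurs exactly once, no $\batui$- or weakening-style clause is needed.

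The two genuinely new cases are the pair former and the let-binding, and both are reduced to the application clauses. For pair construction, $M \otimes N \eqb \comp\,M\,N$ (pure $\beta$, and planarity is preserved because the pair-construction rule puts the contexts of $M$ and $N$ in the same order as $(\comp\,M)\,N$ consumes them), so we translate $\comp\,M\,N$ instead. For pair deconstruction, if $\llet{y \otimes z}{M}{N}$ is typed from $\Gamma \vdash M$ and $\Delta, y, z \vdash N$ with conclusion $\Delta,\Gamma$, then by the defining equation of $\coml$ together with $\beta$ and $\alpha$-conversion we have $\llet{y \otimes z}{M}{N} \eqb \coml\,(\lambda y z.N)\,M$; here $\lambda y z.N$ is a planar term with free variables exactly $\Delta$, and $M$ is planar with free variables $\Gamma$, so $\coml\,(\lambda y z.N)\,M$ has free variables $\Delta,\Gamma$ in the correct order, and we recursively translate $\lambda y z.N$ and $M$. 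Soundness of every clause is then checked against the defining equations $\comb xyz = x(yz)$, $\comi x = x$, $\kuro{a}x = xa$, $\comp yz \eqb y \otimes z$, $\coml x(\comp yz) = xyz$, and the two let/pair conversions of Example~\ref{examplaten}; the $\eta$-rule is genuinely used, e.g. in the identity $\comb\,P\,\comi \eqbe P$ that appears when a body contains a single free occurrence at its right end, matching the remark that $\batui$ is representable only up to $\betaeta$.

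The part I expect to be the main obstacle is the bookkeeping of planarity in the let-case: verifying that the conclusion context $\Delta,\Gamma$ is exactly the order in which $\coml$ feeds its two arguments, that $\lambda y z.N$ really is planar so that the successive abstractions of the rightmost variables $z$ then $y$ are legal, and that the rightmost free variable of the whole term — the one $\lamst x.(\haih)$ strips off — lies in the component predicted by the typing rule ($\Gamma = FV(M)$ when $M$ is open, otherwise in $\Delta = FV(N)\setminus\{y,z\}$), so that each recursive call is on a strictly smaller term. Once these cases are organized, the remaining verifications are the same routine $\betaeta$-computations already carried out for the $\bikuro$- and $\biikuro$-algebra completeness results.
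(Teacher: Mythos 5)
Your proposal is correct and is essentially the paper's own proof: the paper's translation $\lkakko \haih \rkakko$ implements exactly your scheme of replacing $M \otimes N$ by $\comp M N$ and $\llet{y \otimes z}{M}{N}$ by $\coml (\lambda yz.N) M$ and then compiling each binder with the $\comb$/$\comi$/$\kuro{(\haih)}$ bracket abstraction on the rightmost variable, its explicit clauses for $\lkakko \lambda x.M \otimes N \rkakko$ and $\lkakko \lambda x.(\llet{y \otimes z}{M}{N}) \rkakko$ being just the unfoldings of that recipe. The planarity bookkeeping you flag (the context order $\Delta,\Gamma$, closedness of the right-hand factor when the abstracted variable lies in the left one, and the genuine use of $\eta$) is handled the same way there, so there is no gap.
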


\begin{proof}
We inductively define the function $\lkakko \haih \rkakko$.
\begin{itemize}
\item $\lkakko x \rkakko := x$
\item $\lkakko MN \rkakko := \lkakko M \rkakko \lkakko N \rkakko$
\item $\lkakko M \otimes N \rkakko := \comp \lkakko M \rkakko \lkakko N \rkakko$
\item $\lkakko \llet{x \otimes y}{M}{N} \rkakko := \coml \lkakko \lambda xy.N \rkakko \lkakko M \rkakko$

\item $\lkakko \lambda xy.M \rkakko := \lkakko \lambda x. \lkakko \lambda y.M \rkakko \rkakko$

\item $\lkakko \lambda x.x \rkakko := \comi$
\item $\lkakko \lambda x.MN \rkakko := \begin{cases}
\comb \kuro{\lkakko N \rkakko} \lkakko \lambda x.M \rkakko & (x \in FV(M)) \\
\comb \lkakko M \rkakko \lkakko \lambda x.N \rkakko & (x \in FV(N))
\end{cases}$

\item $\lkakko \lambda x.M \otimes N \rkakko := \begin{cases}
\comb \kuro{\lkakko N \rkakko} (\comb \comp \lkakko \lambda x.M \rkakko) & (x \in FV(M)) \\
\comb (\comp \lkakko M \rkakko) \lkakko \lambda x.N \rkakko & (x \in FV(N))
\end{cases}$

\item $\lkakko \lambda x.(\llet{y \otimes z}{M}{N}) \rkakko := \begin{cases}
\comb (\coml \lkakko \lambda yz.N \rkakko) \lkakko \lambda x.M \rkakko & (x \in FV(M)) \\
\comb \kuro{\lkakko M \rkakko} (\comb \coml \lkakko \lambda xyz.N \rkakko) & (x \in FV(N))
\end{cases}$
\end{itemize}
It is easy to see that $M \eqbe \lkakko M \rkakko$ for any closed term $M$.
\end{proof}

Next we give an example of $\biilp$-algebra similar to Example \ref{examtree}.

\begin{exa} \label{examtree2}
Take an ordered group $(G,\cdot,e,\leq)$.
Let $T'$ be a set whose elements are constructed grammatically as follows:
\[ t ::= g \; | \;  t \rimp t' \; | \; t \otimes t' \; \; \; (g \in G). \]
That is, $T'$ is a set of binary trees whose leaves are labeled by elements of $G$, and whose nodes are two colored by $\rimp$ and $\otimes$.
We further define a function $| \haih | :T' \migi G$ by induction: $|g| := g$, $|t_2 \rimp t_1| := |t_2| \cdot |t_1|^{-1}$ and $|t_1 \otimes t_2| := |t_1| \cdot |t_2|$.

Let $|\hkt'|$ be the powerset of $\{ t \in T' \mid e \leq |t| \}$.
Then we can get a $\biilp$-algebra $\hkt'$ by $|\hkt'|$:
\begin{itemize}
\item For $M,N \in |\hkt'|$, $MN := \{ t_2 \mid \exists t_1 \in N, (t_2 \rimp t_1) \in M \}$.
\item $\comb := \{ (t_3 \rimp t_1) \rimp (t_2 \rimp t_1) \rimp (t_3 \rimp t_2) \mid t_1,t_2,t_3 \in T' \}$.
\item $\comi := \{ t_1 \rimp t_1 \mid t_1 \in T' \}$.
\item $\batui := \{ t_1 \rimp (t_2 \rimp t_2) \rimp t_1 \mid t_1,t_2 \in T' \}$.
\item $\coml := \{ t_3 \rimp (t_1 \otimes t_2) \rimp (t_3 \rimp t_2 \rimp t_1) \mid t_1,t_2,t_3 \in T' \}$.
\item $\comp := \{ (t_1 \otimes t_2) \rimp t_2 \rimp t_1 \mid t_1,t_2 \in T' \}$.
\item For $M \in |\hkt'|$, $\kuro{M} := \{ t_2 \rimp (t_2 \rimp t_1) \mid t_1 \in M, t_2 \in T' \}$.
\end{itemize}
\end{exa}

In the above example, we prepare $\otimes$ in the construction of $T'$ to express $\coml$ and $\comp$.
However, in fact, $\hkt$ in Example \ref{examtree} is already a $\biilp$-algebra even without $\otimes$.

\begin{exa} \label{examtree3}
In $\hkt$ of Example \ref{examtree}, we have $\comp$-combinator and $\coml$-combinator as
\begin{itemize}
\item $\comp := \{ t_1 \rimp (e \rimp t_2) \rimp t_2 \rimp t_1 \mid t_1,t_2 \in T \}$;
\item $\coml := \{ t_3 \rimp (t_1 \rimp (e \rimp t_2)) \rimp (t_3 \rimp t_2 \rimp t_1) \mid t_1,t_2,t_3 \in T \}$.
\end{itemize}
\end{exa}

This is a less standard example in that $\hkt$ uses $t_1 \rimp (e \rimp t_2)$ as the role of $t_1 \otimes t_2$.

For another example, as well as we can construct an LCA (and the based $\bci$-algebra) from a ``reflexive object'' (See \cite{ahs} and \cite{Haghverdi}), we can get $\biilp$-algebras by appropriate settings.

\begin{exa} \label{examreflexive}
Let $(\catc,\otimes, I)$ be a monoidal closed category and \\
$\Phi:\catc(- \otimes X, -) \migi \catc(-,- \rimp X)$ be the adjunction.
Suppose an object $V$ that has:
\begin{enumerate}[(i)]
\item an isomorphism $r:(V \rimp V) \migi V$ and $s := r^{-1}$;
\item a {\it retraction} $t: (V \otimes V) \triangleleft V:u$, that is, maps $t:V \otimes V \migi V$ and $u:V \migi V \otimes V$ such that $u \circ t =id_{V \otimes V}$.
\end{enumerate}
Then the set of maps $\catc(I,V)$ is a $\biilp$-algebra.
\begin{itemize}
\item For maps $M,N:I \migi V$, the application is defined as 
\[ I \xmigi{{\rm unitor}} I \otimes I \xmigi{(s \circ M) \otimes N} (V \rimp V) \otimes V \xmigi{ev} V. \]
\item Take a map $f:(V \otimes V) \otimes V \migi V$ as
\[ (V \otimes V) \otimes V \xmigi{{\rm associator}} V \otimes (V \otimes V) \xmigi{s \otimes (ev \circ (s \otimes id))} (V \rimp V) \otimes V \xmigi{ev} V. \]
The $\comb$-combinator is given as $r \circ \Phi (r \circ \Phi (r \circ \Phi(f)) \circ \lambda_V)$, where $\lambda_V:I \otimes V \migi V$ is the unitor.
\item The $\comi$-combinator is $r \circ \Phi(\lambda_V)$.
\item The $\comb$-combinator given above satisfies the axiom of the $\batui$-combinator.
Here we use
$r \circ s =id_V$, and thus we need to assume $r$ is an isomorphism (not merely a retraction).
\item Take a map $g:V \otimes V \migi V$ as
\[ V \otimes V \xmigi{s \otimes u} (V \rimp V) \otimes (V \otimes V) \xmigi{ev \circ ({\rm associator})} V \otimes V \xmigi{ev \circ (s \otimes id)} V. \]
The $\coml$-combinator is $r \circ \Phi (r \circ \Phi(g) \circ \lambda_V)$.
\item The $\comp$-combinator is $r \circ \Phi (r \circ \Phi(t) \circ \lambda_V)$.
\item Given arbitrary $M:I \migi V$, $\kuro{M}$ is $r \circ \Phi(ev \circ (s \otimes M) \circ \rho_V \circ \lambda_V)$. Here $\rho_V :V \migi V \otimes I$ is the unitor.
\end{itemize}
\end{exa}

We will use the above $\biilp$-algebra later in the last of Section \ref{secexpplca}.

Next we show that $\biilp$-algebras induce monoidal closed categories.

\begin{prop} \label{propmonoclo}
When $\hka$ is a $\biilp$-algebra, $\asmca$ is a monoidal closed category.
\end{prop}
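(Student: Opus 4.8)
The plan is to build on Proposition~\ref{propclo}: since a $\biilp$-algebra is in particular a $\biikuro$-algebra, $\catc := \asmca$ is already a closed category, so it comes equipped with the internal hom bifunctor $(-\rimp-)$ and the unit object $I = (\{\ast\}, \erlz_I)$ with $\rlz{\ast}_I = \{\comi\}$. It remains to endow $\catc$ with a (non-symmetric) monoidal structure $(\otimes, I)$ for which every $-\otimes Y$ has a right adjoint; it will be convenient to check directly that $-\rimp Y$ serves as this adjoint, so that the monoidal closed structure is also compatible with the already-established closed structure.

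First I would define the tensor product by $|X \otimes Y| := |X| \times |Y|$ and
\[ \rlz{x \otimes y}_{X \otimes Y} := \{ \comp p q \mid p \in \rlz{x}_X,\ q \in \rlz{y}_Y \}, \]
which is the essential place where the new combinators $\comp$ and $\coml$ enter. On maps $f : X \to X'$, $g : Y \to Y'$ with realizers $r_f, r_g$, the map $f \otimes g : x \otimes y \mapsto f(x) \otimes g(y)$ should be realized by a term that destructs a realizer $\comp p q$ using $\coml$ and re-pairs using $\comp$, e.g.\ $\coml(\lamst pq.\comp(r_f p)(r_g q))$, exploiting the axiom $\coml x(\comp yz) = xyz$. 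Functoriality of $-\otimes-$ is then immediate on underlying functions.

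Because $|X \otimes Y|$ is literally the Cartesian product of the underlying sets (and $|I|$ is a singleton), I would take the associator $\alpha_{X,Y,Z}$ and the unitors $\lambda_X : I \otimes X \to X$, $\rho_X : X \to X \otimes I$ to be the evident bijections; then naturality of $\alpha,\lambda,\rho$ together with the pentagon and triangle identities hold automatically, being identities of composite functions, so the whole monoidal skeleton already commutes in $\cats$. The only real content is that each of these bijections and its inverse is realized, and this is where $\coml$/$\comp$ do the packing and unpacking while $\batui$ absorbs the ``harmless'' copy of $\comi$ that realizes $\ast$: for instance $\lambda_X$ is realized by $\coml\comi$ (indeed $\coml\comi(\comp\comi p) = \comi\comi p = p$), $\lambda_X^{-1}$ by $\comp\comi$, $\rho_X^{-1}$ by $\coml\batui$ (indeed $\coml\batui(\comp p\comi) = \batui p\comi = p$), $\rho_X$ by the planar term $\lamst p.\comp p\comi$, and $\alpha_{X,Y,Z}, \alpha_{X,Y,Z}^{-1}$ by suitably nested uses of $\coml$ and $\comp$ (e.g.\ $\alpha$ by $\coml(\coml(\lamst pqs.\comp p(\comp q s)))$). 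For closedness I would take $\Phi : \catc(X \otimes Y, Z) \to \catc(X, Z \rimp Y)$ with $\Phi(f)(x)(y) := f(x \otimes y)$, a bijection of hom-sets since $|X \otimes Y| = |X| \times |Y|$, with $\Phi(f)$ realized by $\lamst pq.r_f(\comp p q)$ and $\Phi^{-1}(g)$ realized by $\coml r_g$ (as $\coml r_g(\comp pq) = r_g pq$); naturality of $\Phi$ in $X,Y,Z$ is checked on underlying functions, and assembling these data exhibits $\catc$ as monoidal closed.

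I expect the only delicate point to be the bookkeeping: confirming that all the realizers named above genuinely exist, i.e.\ that the combinatory completeness available for $\biilp$-algebras -- terms built from $\comb$, $\comi$, $\coml$, $\comp$ and $\kuro{(-)}$ subject to the planarity constraint that each abstracted variable be rightmost -- really produces the required planar terms. This is most fiddly for the associator, where two layers of $\coml$/$\comp$ must be threaded in the correct left-to-right order, and for $\rho_X^{-1}$, where $\batui$ is precisely what lets us discard the unit realizer in the absence of a genuine weakening combinator. None of the monoidal or closed axioms themselves should pose an obstacle, since the tensor is the Cartesian product on underlying sets and all coherence diagrams therefore commute already in $\cats$; only the existence of realizers has to be verified.
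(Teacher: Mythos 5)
Your proposal is correct and follows essentially the same route as the paper's proof: the tensor is realized via $\comp$ with $\coml$ for unpacking, and your realizers ($\coml\comi$, $\comp\comi$, $\lamst p.\comp p\comi$, $\coml\batui$, the nested $\coml$/$\comp$ term for the associator, and $\lamst pq.r_f(\comp pq)$ for currying) coincide with those in the paper. The only cosmetic difference is that you phrase closedness as the hom-set bijection $\Phi$ while the paper verifies the universal property of the evaluation map $ev = \coml\comi$, which amounts to the same verification.
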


\begin{proof}
Since $\hka$ is also a $\bikuro$-algebra, we can use the combinatory completeness for the planar lambda calculus.
\begin{itemize}
\item For objects $X$ and $Y$, the underlying set of $X \otimes Y$ is $|X| \times |Y|$. Realizers are defined as 
\[ \rlz{x \otimes y}_{X \otimes Y} := \{ \comp p q \mid p \in \rlz{x}_X, q \in \rlz{y}_Y \}. \]
\item For $f: X \migi X'$ and $g:Y \migi Y'$, the map $f \otimes g$ is the function sending $x \otimes y$ to $f(x) \otimes g(y)$.
A realizer for $f \otimes g$ is $\coml (\lamst pq. \comp (r_f p)(r_g q))$.
\item The underlying set of the unit object $I$ is a singleton $\{ \ast \}$. The realizer is $\rlz{\ast}_I := \{ \comi \}$.
\item The left unitor $\lambda_X: I \otimes X \migi X$ sends $\ast \otimes x$ to $x$, whose realizer is $\coml \comi$.
A realizer of $\lambda_X^{-1}$ is $\comp \comi$.
\item The right unitor $\rho_X: X \migi X \otimes I$ sends $x$ to $x \otimes \ast$, whose realizer is $\lamst p. \comp p \comi$.
A realizer of $\rho_X^{-1}$ is $\coml \batui$.
\item The associator $\alpha_{XYZ}:(X \otimes Y) \otimes Z \migi X \otimes (Y \otimes Z)$ sends $(x \otimes y) \otimes z$ to $x \otimes (y \otimes z)$. 
A realizer of $\alpha_{XYZ}$ is $\coml (\coml (\lamst pqr. \comp p (\comp qr)))$.
A realizer of $\alpha_{XYZ}^{-1}$ is $\coml (\lamst pu. \coml (M p) u)$, where $M := \lamst pqr. \comp (\comp pq) r$.
\item For objects $X$ and $Y$, the underlying set of $Y \rimp X$ is $\homrm_{\asmca}(X,Y)$. Realizers are defined as
$\rlz{f}_{Y \rimps X} := \{ r \mid \mbox{$r$ realizes $f$} \}$.
\item For $f: X' \migi X$ and $g:Y \migi Y'$, $g \rimp f$ is the function sending a map $h : X \migi Y$ to $g \circ h \circ f : X' \migi Y'$.
A realizer for $g \rimp f$ is $\lamst uv. r_g (u (r_f v))$.
\item The evaluation map $ev :(Y \rimp X) \otimes X \migi Y$ sends $f \otimes x$ to $f(x)$, which is realized by $\coml \comi$.
\item For any map $f:Z \otimes X \migi Y$, there exists a unique map $g:Z \migi (Y \rimp X)$ which satisfies
$ev \circ (g \otimes id_X) = f$. This $g$ is given as the function sending $z$ to the function $x \mapsto f(z \otimes x)$, which is realized by $\lamst rp. r_f (\comp rp)$. \qedhere
\end{itemize}
\end{proof}

Similar to the case of $\bci$-algebras (Proposition \ref{propsmcc} and \ref{propsmcc2}), we cannot use the same proof of Proposition \ref{propmonoclo} to the case of $\moda$.
We prove that $\moda$ on a $\biilp$-algebra $\hka$ is a monoidal closed category by the same modification used in the proof of Proposition \ref{propsmcc2}.
That is, we take the inclusion functor $G:\moda \hookrightarrow \asmca$ and the left adjoint $F: \asmca \migi \moda$, and define the tensor product $\boxtimes$ in $\moda$ as $X \boxtimes Y := F(GX \otimes GY)$.

\begin{prop} \label{propmodmcc}
When $\hka$ is a $\biilp$-algebra, $\moda$ is a monoidal closed category. \qed
\end{prop}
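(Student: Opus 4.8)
The plan is to mimic the passage from Proposition~\ref{propsmcc} to Proposition~\ref{propsmcc2}, transporting the monoidal closed structure of $\asmca$ (Proposition~\ref{propmonoclo}) along the reflection of $\moda$ in $\asmca$. First I would recall the adjunction $(F \dashv G):\asmca \migi \moda$ used in the proof of Proposition~\ref{propsmcc2}: here $G:\moda \hookrightarrow \asmca$ is the full inclusion and $F$ sends an assembly to its quotient by the transitive closure of the realizer-sharing relation. Since $\hka$ is in particular a $\biikuro$-algebra, $\asmca$ already carries the monoidal closed structure of Proposition~\ref{propmonoclo}, with tensor $\otimes$ whose realizers are formed with $\comp$, unit object $I$ the singleton assembly with $\rlz{\ast}_I=\{\comi\}$, and internal hom functor $(- \rimp -)$. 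One then defines the tensor on $\moda$ by $X \boxtimes Y := F(GX \otimes GY)$ and keeps $I$ as the unit, which is legitimate since $I$, being a singleton assembly, is already a modest set.

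The one genuinely new verification, relative to the symmetric case of Proposition~\ref{propsmcc2} but identical to the argument for exponentials in Proposition~\ref{propccc}, is that the internal hom restricts to $\moda$: if $X$ and $Y$ are modest sets then $GY \rimp GX$ is again a modest set. Indeed, were some $r \in \uhka$ to realize two distinct maps $f,g:X \migi Y$, one could choose $x \in |X|$ with $f(x)\neq g(x)$ and $a \in \rlz{x}_X$, and then $ra$ would belong to $\rlz{f(x)}_Y \cap \rlz{g(x)}_Y$, contradicting modesty of $Y$. Hence the internal hom functor of Proposition~\ref{propmonoclo} lands in $\moda$ with no need to post-compose with $F$, and it serves directly as the internal hom of $\moda$.

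With the reflector $F$ available and this ``closedness'' condition on $G$ in hand, $\moda$ is monoidal closed by the general reflection theorem for monoidal closed categories (\cf~\cite{day72}). Concretely, using that $G$ is full and faithful, that $F \dashv G$, and that $GY \rimp GX$ lies in the image of $G$, one obtains for $X,Y,Z \in \moda$
\[
\homrm_{\moda}(Z \boxtimes X, Y) \cong \homrm_{\asmca}(GZ \otimes GX, GY) \cong \homrm_{\asmca}(GZ, GY \rimp GX) \cong \homrm_{\moda}(Z, Y \rimp X),
\]
naturally in all three arguments. The step requiring the most care is coherence: the associator and unitors of $\boxtimes$ are induced from those of $\otimes$ by applying $F$, and one must check they still obey the pentagon and triangle identities. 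This is precisely what the reflection theorem delivers, the crucial input being that $F$ is strong monoidal for $\boxtimes$; and this in turn holds because a realizer of $X \otimes Y$ depends on the two factors only through the sets $\rlz{x}_X$ and $\rlz{y}_Y$, so replacing a factor by its $F$-quotient does not alter the reflection of the tensor. Thus no manipulation of realizers beyond those exhibited in Proposition~\ref{propmonoclo} is required, and $\moda$ is a monoidal closed category.
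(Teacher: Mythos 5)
Your proposal is correct and follows essentially the same route as the paper: the paper likewise transports the structure of Proposition \ref{propmonoclo} to $\moda$ along the reflection $F \dashv G$ of Proposition \ref{propsmcc2}, defining $X \boxtimes Y := F(GX \otimes GY)$ and keeping the internal hom, with \cite{day72} cited for the general reflection construction. The extra checks you spell out (modesty of the internal hom, and the realizer-level fact that reflecting a factor does not change the reflected tensor, which is what substitutes for the symmetric/biclosed hypothesis of the abstract reflection theorem) are precisely the details the paper leaves implicit.
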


For functors given by applicative morphisms between $\biilp$-algebras, the next properties hold.

\begin{prop} \label{proplax}
Let $\hka_1$ and $\hka_2$ be $\biilp$-algebras and $\gamma :\hka_1 \migi \hka_2$ is an applicative morphism. Then $\gamst:\asmc{\hka_1} \migi \asmc{\hka_2}$ is a lax monoidal functor.
\end{prop}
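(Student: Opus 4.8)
The plan is to exploit the fact that $\gamst$ is the identity on underlying sets and functions, and that the monoidal structures of $\asmc{\hka_1}$ and $\asmc{\hka_2}$ share the same underlying data as the cartesian monoidal category $\cats$ (only the realizers differ). Accordingly I would take the coherence maps $\phi_{X,Y} : \gamst X \otimes \gamst Y \migi \gamst(X \otimes Y)$ and $\phi_I : I \migi \gamst I$ to be the identity functions on underlying sets, so that the whole task reduces to exhibiting realizers in $\hka_2$ witnessing that these identity functions are maps of assemblies.

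For the unit, writing $\comi_1, \comp_1$ for the relevant combinators of $\hka_1$ and $\comi_2, \coml_2, \comp_2$ for those of $\hka_2$, I would fix $\mathbf{i} \in \gamma(\comi_1)$ (nonempty, since $\gamma$ is total): the realizer set of $\ast$ is $\{\comi_2\}$ in $I$ but $\gamma(\comi_1)$ in $\gamst I$, and since $\kuro{\mathbf{i}}\,\comi_2 = \comi_2\,\mathbf{i} = \mathbf{i}$, the element $\kuro{\mathbf{i}}$ realizes $\phi_I$. For the tensor, I would fix $\mathbf{p} \in \gamma(\comp_1)$. A realizer of $x \otimes y$ in $\gamst X \otimes \gamst Y$ has the form $\comp_2\, p\, q$ with $p \in \gamma a$ and $q \in \gamma b$ for some $a \in \rlz{x}_X$, $b \in \rlz{y}_Y$; two applications of the realizer $r_{\gamma}$ of $\gamma$ give $r_{\gamma}(r_{\gamma}\,\mathbf{p}\,p)\,q \in \gamma(\comp_1\, a\, b)$, which lies in the realizer set of $(x,y)$ in $\gamst(X \otimes Y)$. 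By combinatory completeness for $\bikuro$-algebras applied to $\hka_2$ there is $e \in |\hka_2|$ with $e\,p\,q = r_{\gamma}(r_{\gamma}\,\mathbf{p}\,p)\,q$ for all $p,q$, and then $\coml_2\, e$ (which satisfies $\coml_2\, e\,(\comp_2\, p\, q) = e\,p\,q$) realizes $\phi_{X,Y}$; note this realizer does not depend on $X$ or $Y$.

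Finally I would verify the naturality of $\phi_{X,Y}$ in $X$ and $Y$ together with the three lax-monoidal coherence axioms (associativity and the two unit laws). Since every map occurring in those diagrams is either a structural isomorphism of $\otimes$ (whose underlying function is a canonical bijection of $\cats$), a component of $\phi$ (underlying function the identity), or an image under $\gamst$ of such a map (same underlying function), and since a map of assemblies is determined by its underlying function, each diagram commutes simply because the corresponding diagram of functions commutes in $\cats$ with its cartesian structure. The only real obstacle is the realizer of $\phi_{X,Y}$: one must first destructure the $\hka_2$-pair $\comp_2\, p\, q$ with $\coml_2$ and then re-pair the components through the $\gamma$-image of the $\hka_1$-pairing combinator, so that the result lands in $\gamma(\comp_1\, a\, b)$ rather than merely re-pairing $p$ and $q$ inside $\hka_2$ — which is exactly where the $\coml/\comp$ structure of $\biilp$-algebras enters and why $\bikuro$-algebras would not suffice.
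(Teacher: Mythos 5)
Your proposal is correct and follows essentially the same route as the paper: the paper's realizers are exactly yours, namely $\lamst u.\,u(\gamma\comi_1)$ (i.e.\ $\kuro{\mathbf{i}}$ for $\mathbf{i}\in\gamma\comi_1$) for $I_2 \migi \gamst I_1$ and $\coml_2(\lamst pq.\,r_{\gamma}(r_{\gamma}(\gamma\comp_1)p)q)$ for $(\gamst X)\otimes_2(\gamst Y) \migi \gamst(X\otimes_1 Y)$. Your extra verification of naturality and the coherence axioms via the underlying functions in $\cats$ is exactly the routine check the paper leaves implicit.
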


\begin{proof}
A realizer for $I_2 \migi \gamst (I_1)$ is in the set $\lamst u.u (\gamma(\comi_1))$. \\
A realizer for $(\gamst X) \otimes_2 (\gamst Y) \migi \gamst (X \otimes_1 Y)$ is in $\coml_2 (\lamst pq.r_{\gamma} (r_{\gamma} (\gamma \comp_1) p) q)$.
\end{proof}

\begin{prop} \label{propadjbiilp}
For $\biilp$-algebras $\hka_1$ and $\hka_2$ and an adjoint pair \\
$(\delta \dashv \gamma) : \hka_1 \migi \hka_2$, the adjunction $(\delst \dashv \gamst):\asmc{\hka_1} \migi \asmc{\hka_2}$ is monoidal.
\end{prop}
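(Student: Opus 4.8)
The plan is to establish that the left adjoint $\delst : \asmc{\hka_2} \migi \asmc{\hka_1}$ is \emph{strong} monoidal and then to invoke Kelly's doctrinal adjunction: in an adjunction $F \dashv G$ with $G$ lax monoidal, the left adjoint $F$ carries a canonical oplax monoidal structure (the mate of that of $G$), and $F \dashv G$ is a monoidal adjunction precisely when this oplax structure on $F$ is strong. Here $G = \gamst$ is lax monoidal by Proposition~\ref{proplax}, so everything reduces to checking that the mate structure on $\delst$ is strong. First I would collect the relevant data: by Proposition~\ref{propmoradj} the pair $(\delta \dashv \gamma)$ induces an adjunction $(\delst \dashv \gamst)$ whose unit $\eta_X : X \migi \gamst\delst X$ and counit $\epsilon_A : \delst\gamst A \migi A$ are the identity functions on underlying sets, realized respectively by an element $d \in |\hka_2|$ with $d a \in \gamma(\delta a)$ for all $a$ (from $id_{\hka_2} \preceq \gamma\circ\delta$) and by an element $e \in |\hka_1|$ with $e(\delta(\gamma a)) \subseteq \{a\}$ for all $a$ (from $\delta\circ\gamma \preceq id_{\hka_1}$); and by Proposition~\ref{proplax}, applied once to $\gamma$ and once to $\delta$, the functors $\gamst$ and $\delst$ are lax monoidal, with structure maps I will call $\mu^{\gamma}$ and $\mu^{\delta}$, all of which act as the evident identities on underlying sets.

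The heart of the argument is to invert the structure maps $\mu^{\delta}$ of $\delst$. I would do this by exhibiting the mates of the structure maps of $\gamst$ along the adjunction, namely
\[ \nu_{X,Y}\ :=\ \epsilon_{\delst X \otimes_1 \delst Y} \circ \delst\bigl(\mu^{\gamma}_{\delst X,\delst Y} \circ (\eta_X \otimes_2 \eta_Y)\bigr) : \delst(X \otimes_2 Y) \migi \delst X \otimes_1 \delst Y \]
together with $\nu_0 := \epsilon_{I_1} \circ \delst(\mu^{\gamma}_0) : \delst I_2 \migi I_1$. Each factor here is a morphism of assemblies, so each $\nu$ is a morphism of $\asmc{\hka_1}$; a realizer can be assembled, using the combinatory completeness for the $\biilp$-algebra $\hka_1$, out of $d$, $e$, $r_\delta$, $r_\gamma$ and the realizers furnished by Proposition~\ref{proplax}. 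On underlying sets each $\nu$ is an identity map, hence, as a function, a two-sided inverse of the corresponding component of $\mu^{\delta}$ (which is likewise an identity on underlying sets); and since morphisms of $\asmc{\hka_1}$ are determined by their underlying functions, $\nu_{X,Y}$ and $\nu_0$ are genuine inverse isomorphisms to the components of $\mu^{\delta}$. Thus the lax structure maps of $\delst$ are invertible, i.e.\ $\delst$ is strong monoidal.

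With $\delst$ strong monoidal, doctrinal adjunction then yields that $(\delst \dashv \gamst)$, with $\gamst$ and $\delst$ carrying the lax monoidal structures $\mu^{\gamma}$ and $\mu^{\delta}$ of Proposition~\ref{proplax}, is a monoidal adjunction; in particular the unit $\eta$ and the counit $\epsilon$ are monoidal natural transformations, which is what the statement asserts.

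The step I expect to be the main obstacle is the construction of the mate maps $\nu$: one must present them as composites of morphisms in $\asmc{\hka_1}$ and verify that they are realized, which comes down to a somewhat delicate combination of the realizers $d$ and $e$ coming from the adjoint pair with those supplied by Proposition~\ref{proplax}. It is precisely here that the hypothesis that $(\delta,\gamma)$ is an adjoint pair — rather than merely a pair of applicative morphisms — is used: by Proposition~\ref{proplax} an arbitrary applicative morphism induces only a lax monoidal functor, not a strong one (\cf~the comonadic morphisms underlying LCAs in Section~\ref{seclca}), so something genuinely beyond Proposition~\ref{proplax} is required in order to invert $\mu^{\delta}$.
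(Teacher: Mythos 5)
Your argument is correct and follows the same overall route as the paper: both proofs reduce the statement to showing that the left adjoint $\delst$ is strong monoidal, i.e.\ to inverting the lax structure maps $\delst I_2 \migi I_1$ and $\delst(X\otimes_2 Y)\migi \delst X\otimes_1\delst Y$ supplied by Proposition \ref{proplax}. The difference is in how the inverse maps are shown to be realized. The paper does it concretely, exhibiting explicit realizers by combinatory completeness, namely $\lamst x.\,\bfe(r_{\delta}(\delta(\lamst y.\,y(\gamma\comi_1)))x)$ and $\lamst z.\,\bfe(r_{\delta}(\delta(\coml(\lamst uv.\,r_{\gamma}(r_{\gamma}(\gamma\comp)(\bfi u))(\bfi v))))z)$, where $\bfe$ and $\bfi$ witness the two inequalities of the adjoint pair. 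You instead obtain the inverses for free as the mates $\epsilon\circ\delst(\mu^{\gamma}\circ(\eta\otimes\eta))$ and $\epsilon_{I_1}\circ\delst(\mu^{\gamma}_0)$: each factor is already a morphism of assemblies, so the composites are realized with no computation, and since every map involved is the identity on underlying sets they are two-sided inverses of the lax structure maps of $\delst$. You also make explicit the doctrinal-adjunction step (strong monoidality of the left adjoint yields a monoidal adjunction, with monoidal unit and counit), which the paper leaves implicit in the phrase ``we show that the left adjoint is strong monoidal.'' What the paper's version buys is a self-contained, concrete verification in the realizability style; what yours buys is the elimination of the realizer calculation and a cleaner conceptual justification---indeed, unwinding your mate composites essentially reproduces the paper's realizers.
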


\begin{proof}
We show that the left adjoint $\delst$ is strong monoidal.
Since $\delst$ is lax monoidal by the previous proposition, it is sufficient to show that there are realizers for maps $\delst I_2 \migi I_1$ and $\delst(X \otimes_2 Y) \migi \delst X \otimes_1 \delst Y$.
A realizer for the former is 
$\lamst x. \bfe (r_{\delta} (\delta (\lamst y.y (\gamma \comi_1))) x)$.
A realizer for the latter is 
$\lamst z.\bfe (r_{\delta} (\delta (\coml (\lamst uv.r_{\gamma} (r_{\gamma} (\gamma \comp) (\bfi u) ) (\bfi v)))) z)$.
Here $\bfe \in |\hka_1|$ is an element such that $\forall x \in |\hka_1|, \bfe (\delta (\gamma x)) =x$ and $\bfi \in |\hka_2|$ is an element such that $\forall y \in |\hka_2|, \bfi y =\gamma (\delta y)$, that are obtained by the assumption that $\gamma$ and $\delta$ form an adjoint pair.
\end{proof}

\subsection{Bi-$\bdi$-algebras and monoidal bi-closed categories} \label{secbdi}

Let us consider once again why non-symmetric tensor products in categories of assemblies cannot be constructed from $\biikuro$-algebras,
from the viewpoint of the ``polymorphic encoding.''
In the second-order linear logic, a tensor product $X \otimes Y$ can be interpreted as
$\forall \alpha . (X \limp Y \limp \alpha) \limp \alpha$. (This interpretation is seen in \cite{polymorphic}, for instance.)
This formula $(X \limp Y \limp \alpha) \limp \alpha$ corresponds to the type inhabited by $\lambda t.txy$ in the typed linear lambda calculus.
This correspondence connected to that (in a PCA or a $\bci$-algebra,) a realizer of $x\otimes y \in |X \otimes Y|$ is $\lamst t.tp q$ for $p \in \rlz{x}_X$ and $q \in \rlz{y}_Y$.
What matters here is that the interpretation $X \otimes Y \cong \forall \alpha . (X \limp Y \limp \alpha) \limp \alpha$ holds only when the tensor product is symmetric.
Whereas, for the non-symmetric cases, $X \otimes Y$ is expressed as $\forall \alpha. (\alpha \rimp Y \rimp X) \limp \alpha$ or $\forall \alpha. \alpha \rimp (Y \limp X \limp \alpha)$.
Here we need to distinguish two sorts of implications $\rimp$ and $\limp$.
In an applicative structure like a $\biikuro$-algebra, we cannot distinguish them since we only have one sort of application.

Conversely, providing some structure in an applicative structure $\hka$ that allows to distinguish these two implications, we may be able to construct non-symmetric tensor products in $\asmca$.
From this viewpoint, we introduced bi-$\bdi$-algebras in \cite{tomita2}.

In this subsection, we recall bi-$\bdi$-algebras from \cite{tomita2}.
First we recall a variant of the lambda calculus, which is an example of an applicative structure with two sorts of applications.

\begin{defi}
{\it Bi-planar lambda terms} are constructed by the following rules:
\begin{center}
\AxiomC{}
\RightLabel{\scriptsize (identity)}
\UnaryInfC{$x \vdash x$}
\DisplayProof \; \; 
\AxiomC{$\Gamma, x \vdash M$}
\RightLabel{\scriptsize (right abstraction)}
\UnaryInfC{$\Gamma \vdash \rlamk{x}{M}$}
\DisplayProof \; \; 
\AxiomC{$x, \Gamma \vdash M$}
\RightLabel{\scriptsize (left abstraction)}
\UnaryInfC{$\Gamma \vdash \llamk{x}{M}$}
\DisplayProof
\end{center}
\begin{center}
\AxiomC{$\Gamma \vdash M$}
\AxiomC{$\Delta \vdash N$}
\RightLabel{\scriptsize (right application)}
\BinaryInfC{$\Gamma, \Delta \vdash M \rappk N$}
\DisplayProof \; \; 
\AxiomC{$\Delta \vdash N$}
\AxiomC{$\Gamma \vdash M$}
\RightLabel{\scriptsize (left application)}
\BinaryInfC{$\Delta, \Gamma \vdash N \lappk M$}
\DisplayProof
\end{center}
Note that here is none of weakening, contraction nor exchange rules.

For the sake of clarity, we will classify right and left by red and blue color.
That is, we write each of them as $M \rapp N$, $\rlaml{x}{M}$, $N \lapp M$ and $\llaml{x}{M}$.

We define a relation $\migi_{\beta}$ on bi-planar lambda terms as the congruence of the following relations:
\begin{itemize}
\item (right $\beta$-reduction) $\rlaml{x}{M} \rapp N \migi_{\beta} M[N/x]$
\item (left $\beta$-reduction) $N \lapp \llaml{x}{M} \migi_{\beta} M[N/x]$
\end{itemize}

The {\it bi-planar lambda calculus} consists of bi-planar lambda terms and the reflexive, symmetric and transitive closure of $\migi_{\beta}$ as the equational relation $\eqb$.
\end{defi}

Basic properties about the $\beta$-reduction $\migi_{\beta}$, such as the confluence and the strongly normalizing property, can be shown in the same way as the proof for the linear lambda calculus.

\begin{rem}
The bi-planar lambda calculus is not essentially a new concept, since it often appears as the Curry-Howard corresponding calculus with the Lambek calculus (\cf~\cite{paiva}).
However, note that unlike the calculus corresponding to the Lambek calculus, the bi-planar lambda calculus is based on untyped setting.
The reason why we use a less-standard notation is to shorten the length of terms and to make them easier to read.
\end{rem}

Then we define a class of applicative structures which we call bi-$\bdi$-algebras.

\begin{defi} \label{defbdi}
A total applicative structure $\hka=(\uhka,\rapp)$ is a {\it bi-$\bdi$-algebra} iff there is an additional total binary operation $\lapp$ on $\uhka$ and $\uhka$ contains several special elements:
\begin{enumerate}
\item $\combr \in \uhka$ such that $\forall x,y,z \in \uhka$, $((\combr \rapp x) \rapp y) \rapp z = x \rapp (y \rapp z)$.
\item $\combl \in \uhka$ such that $\forall x,y,z \in \uhka$, $z \lapp (y \lapp (x \lapp \combl)) = (z \lapp y) \lapp x$.
\item $\comdr \in \uhka$ such that $\forall x,y,z \in \uhka$, $x \lapp ((\comdr \rapp y) \rapp z) = (x \lapp y) \rapp z$.
\item $\comdl \in \uhka$ such that $\forall x,y,z \in \uhka$, $(z \lapp (y \lapp \comdl)) \rapp x = z \lapp (y \rapp x)$.
\item $\comir \in \uhka$ such that $\forall x \in \uhka$, $\comir \rapp x = x$.
\item $\comil \in \uhka$ such that $\forall x \in \uhka$, $x \lapp \comil = x$.
\item For each $a \in \uhka$, $\dagr{a} \in \uhka$ such that $\forall x \in \uhka$, $(\dagr{a}) \rapp x = x \lapp a$.
\item For each $a \in \uhka$, $\dagl{a} \in \uhka$ such that $\forall x \in \uhka$, $x \lapp (\dagl{a}) = a \rapp x$.
\end{enumerate}
We call $\rapp$ and $\lapp$ as {\it right application} and {\it left application} respectively.
We often write
$\hka = (\uhka,\rapp,\lapp)$ for a bi-$\bdi$-algebra $\hka=(\uhka,\rapp)$ with the left application $\lapp$.
\end{defi}

In the sequel, we use $\rapp$ as a left-associative operation and often omit unnecessary parentheses, while we do not omit parentheses for $\lapp$.
For instance, $(u \rapp v \rapp w) \lapp ((x \lapp y) \lapp z)$ denotes $((u \rapp v) \rapp w) \lapp ((x \lapp y) \lapp z)$.

The definition of bi-$\bdi$-algebras is intended having a good correspondence with the bi-planar lambda calculus.

\begin{exa} \label{exambipla}
Untyped closed bi-planar lambda terms modulo $\eqb$ form a bi-$\bdi$-algebra, which we call $\plalamb$ in this paper.
We give a few examples of representatives: $\rlaml{x}{\rlaml{y}{\rlaml{z}{x \rapp (y \rapp z)}}}$ represents $\combr$; $\llaml{y}{\llaml{x}{\rlaml{z}{x \lapp (y \rapp z)}}}$ represents $\comdl$; $\llaml{x}{M \rapp x}$ represents $\dagl{M}$.
\end{exa}

\begin{prop}[combinatory completeness for bi-$\bdi$-algebras]
Let $\hka = (\uhka,\rapp,\lapp)$ be a bi-$\bdi$-algebra.
A {\it polynomial} over $\hka$ is defined as a syntactic expression generated by variables, elements of $\uhka$ and the applications $\rapp$ and $\lapp$.
For a polynomial $M$ over $\hka$ and the rightmost variable $x$ of $M$, if $x$ appears exactly once in $M$, there exists a polynomial $M'$ such that the free variables of $M'$ are the free variables of $M$ excluding $x$ and $M' \rapp a = M[a/x]$ for all $a \in \uhka$. We write such $M'$ as $\rlam{x}{M}$.
Also, for a polynomial $N$ over $\hka$ and the leftmost variable $y$ of $N$, if $y$ appears exactly once in $N$, there exists a polynomial $N'$ such that the free variables of $N'$ are the free variables of $N$ excluding $y$ and $a \lapp N' = N[a/y]$ for all $a \in \uhka$. We write such $N'$ as $\llam{y}{N}$.
\end{prop}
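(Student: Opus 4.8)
The plan is to define $\rlam{x}{M}$ and $\llam{y}{N}$ by induction on the structure of the polynomial, constructing the two in parallel; the clauses for $\llam{y}{(\haih)}$ are obtained from those for $\rlam{x}{(\haih)}$ by the left--right symmetry that swaps $\rapp$ with $\lapp$, $\combr$ with $\combl$, $\comdr$ with $\comdl$, $\comir$ with $\comil$ and $\dagr{(\haih)}$ with $\dagl{(\haih)}$, reversing the reading order. I therefore describe only $\rlam{x}{M}$. A polynomial in which $x$ is the rightmost variable and occurs exactly once is $x$ itself, or an application $P \rapp Q$, or an application $R \lapp S$; and planarity forces that if $x$ occurs in the left argument of such an application then the right argument is a \emph{closed} polynomial and $x$ is the rightmost variable of the left argument, while if $x$ occurs in the right argument then $x$ is rightmost there. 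Hence ``$x$ rightmost'' propagates to the relevant subterm and the induction recurses only into polynomials of its own kind.

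The base case is $\rlam{x}{x} := \comir$, giving $\rlam{x}{x} \rapp a = a$. The inductive step splits four ways according to the outer connective and to which argument contains $x$. In the two ``near'' cases the variable already sits in the slot where the argument will be supplied, and only a reassociation is needed: I take $\rlam{x}{P \rapp Q} := \combr \rapp P \rapp (\rlam{x}{Q})$ when $x \in FV(Q)$, using the axiom of $\combr$, and $\rlam{x}{R \lapp S} := R \lapp ((\rlam{x}{S}) \lapp \comdl)$ when $x \in FV(S)$, using the axiom of $\comdl$. In the two ``far'' cases the variable lies under an application that must be jumped over, and this is where the combinators $\comdr, \comdl$ and the operations $\dagr{(\haih)}, \dagl{(\haih)}$ are used. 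For $M = P \rapp Q$ with $x \in FV(P)$ (so $Q$ is closed), writing $f := \rlam{x}{P}$ --- an open polynomial in general, to which $\dagr{(\haih)}$ must \emph{not} be applied --- I precompose $f$ with a \emph{closed} ``append $Q$ on the right'' combinator $h := \dagr{((\comdr \rapp \comil) \rapp Q)}$, which satisfies $h \rapp c = c \lapp ((\comdr \rapp \comil) \rapp Q) = (c \lapp \comil) \rapp Q = c \rapp Q$; then $\rlam{x}{P \rapp Q} := \combr \rapp h \rapp f$ satisfies $\rlam{x}{P \rapp Q} \rapp a = (f \rapp a) \rapp Q = (P \rapp Q)[a/x]$. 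The remaining far case $M = R \lapp S$ with $x \in FV(R)$ (so $S$ is closed) follows the same pattern but more directly, since $c \lapp S = (\dagr{S}) \rapp c$ is already of the required shape, so one puts $\rlam{x}{R \lapp S} := \combr \rapp (\dagr{S}) \rapp (\rlam{x}{R})$. In each clause one then checks, by unwinding the relevant defining equations, that $\rlam{x}{M} \rapp a = M[a/x]$ for all $a \in \uhka$ and that $FV(\rlam{x}{M}) = FV(M) \setminus \{x\}$; these per-clause verifications are routine.

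The real work is in the far cases. Two points must be got right there. First, one is transporting data across an application of the ``wrong'' polarity, which is possible only because $\comdr$, $\comdl$, $\dagr{(\haih)}$ and $\dagl{(\haih)}$ are available --- this is precisely why bi-$\bdi$-algebras carry these extra combinators and operations beyond $\combr$, $\combl$, $\comir$, $\comil$ --- and one must never conflate $\rapp$ with $\lapp$. Second, $\dagr{(\haih)}$ and $\dagl{(\haih)}$ are given only on elements of $\uhka$, not on open polynomials, so every occurrence of a $\dagger$ in the construction must land on a closed expression; this is exactly what the planarity constraint (the companion argument of the application being closed) licenses, and it is the step where the argument genuinely uses that $x$ is the \emph{rightmost} variable of $M$, not merely a variable occurring once.
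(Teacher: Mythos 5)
Your construction coincides with the paper's proof: the same four inductive clauses (with $\combr \rapp h \rapp \rlam{x}{P}$ for $h = \dagr{(\comdr \rapp \comil \rapp Q)}$ in the far $\rapp$-case, $R \lapp (\rlam{x}{S} \lapp \comdl)$ and $\combr \rapp \dagr{S} \rapp \rlam{x}{R}$ for the $\lapp$-cases), the same base case $\comir$, and the left abstraction obtained by the mirror symmetry. Your observation that planarity (rightmostness of $x$) is exactly what makes the companion argument closed, so that $\dagr{(\haih)}$ is only ever applied to an element and never to an open polynomial, is precisely the remark the paper makes; the proposal is correct and takes essentially the same route.
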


\begin{proof}
We define $\rlam{x}{M}$ by induction on the structure of $M$.
\begin{itemize}
\item $\rlam{x}{x} := \comir$.
\item $\rlam{x}{M \rapp N} := \begin{cases}
		\combr \rapp \dagr{(\comdr \rapp \comil \rapp N)} \rapp \rlam{x}{M} & (x \in FV(M)) \\
		\combr \rapp M \rapp \rlam{x}{N} & (x \in FV(N))
		\end{cases}$ \\
Note that in case $x \in FV(M)$, $\comdr \rapp \comil \rapp N$ has no variables since $x$ is the rightmost free variable in $M \rapp N$.
\item $\rlam{x}{N \lapp M} := \begin{cases}
		N \lapp (\rlam{x}{M} \lapp \comdl) & (x \in FV(M)) \\
		\combr \rapp (\dagr{M}) \rapp \rlam{x}{N} & (x \in FV(N))
		\end{cases}$ \\
Note that in case $x \in FV(N)$, $M$ has no variables since $x$ is the rightmost free variable in $N \lapp M$.
\end{itemize}
The case of the left abstraction $\llam{y}{N}$ is given in the same way, with all the left and right constructs reversed.
\end{proof}

Next we give another example of bi-$\bdi$-algebra which is introduced in \cite{tomita2} and similar to Example \ref{examtree2}.

\begin{exa} \label{examtree4}
Take an ordered group $(G,\cdot,e,\leq)$.
Let $T''$ be a set whose elements are constructed grammatically as follows:
\[ t ::= g \; | \;  t \rimp t' \; | \; t \limp t' \; \; \; (g \in G). \]
That is, $T''$ is a set of binary trees whose leaves are labeled by elements of $G$, and whose nodes are two colored by $\rimp$ and $\limp$.
We further define a function $| \haih | :T'' \migi G$ by induction: $|g| := g$, $|t_2 \rimp t_1| := |t_2| \cdot |t_1|^{-1}$ and $|t_1 \limp t_2| := |t_1|^{-1} \cdot |t_2|$.

Let $|\hkt''|$ be the powerset of $\{ t \in T'' \mid e \leq |t| \}$.
Then we can get a bi-$\bdi$-algebra $\hkt''$ by $|\hkt''|$:
\begin{itemize}
\item For $M,N \in |\hkt''|$, $M \rapp N := \{ t_2 \mid \exists t_1 \in N, (t_2 \rimp t_1) \in M \}$.
\item For $M,N \in |\hkt''|$, $N \lapp M := \{ t_2 \mid \exists t_1 \in N, (t_1 \limp t_2) \in M \}$.
\item $\combr := \{ (t_3 \rimp t_1) \rimp (t_2 \rimp t_1) \rimp (t_3 \rimp t_2) \mid t_1,t_2,t_3 \in T'' \}$, dual for $\combl$.
\item $\comdr := \{ ((t_1 \limp t_2) \rimp t_3) \rimp (t_1 \limp (t_2 \rimp t_3)) \mid t_1,t_2,t_3 \in T'' \}$, dual for $\comdl$.
\item $\comir := \{ t_1 \rimp t_1 \mid t_1 \in T'' \}$, dual for $\comil$.
\item For $M \in |\hkt''|$, $\dagr{M} := \{ t_2 \rimp t_1 \mid (t_1 \limp t_2) \in M \}$, dual for $\dagl{M}$.
\end{itemize}
\end{exa}

In the above example, we prepare $\limp$ in the construction of $T''$ to express the left application.
However, in fact, $\hkt$ in Example \ref{examtree} is a bi-$\bdi$-algebra even without preparing $\limp$.

\begin{exa} \label{examtree5}
Let $T$ be the same set in Example \ref{examtree}.
For $t, t' \in T$, we define $t \limp t' \in T$ as $(e \rimp t) \rimp (e \rimp t')$.
Then $\hkt$ is a bi-$\bdi$-algebra, whose components are taken in the same way as Example \ref{examtree4}.
\end{exa}

Next we give some basic properties of bi-$\bdi$-algebras.

\begin{prop} \hfill
\begin{enumerate}
\item Any bi-$\bdi$-algebra is also a $\biilp$-algebra. \label{property1}
\item Any $\bci$-algebra is also a bi-$\bdi$-algebra whose left and right applications coincide. \label{property2}
\item When $\hka=(\uhka,\rapp)$ is a bi-$\bdi$-algebra, the left application is unique up to isomorphism. That is, when both $(\uhka,\rapp,\lapp_1)$ and $(\uhka,\rapp,\lapp_2)$ are bi-$\bdi$-algebras, $\hka_1 = (\uhka,\rapp_1)$ and $\hka_2 = (\uhka,\rapp_2)$ are isomorphic as applicative structures, where $x \rapp_i y := y \lapp_i x$. \label{property3}
\item Let $\hka = (\uhka,\rapp,\lapp)$ be a bi-$\bdi$-algebra and take an applicative structure
$\hka' := (\uhka,\rapp')$ by $x \rapp' y := y \lapp x$.
Then $\hka$ is a $\bci$-algebra iff $\hka'$ is a $\bci$-algebra.
Moreover, in such a case, $\hka$ and $\hka'$ are isomorphic as applicative structures. \label{property4}
\end{enumerate}
\end{prop}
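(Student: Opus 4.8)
The plan is to prove (1) and (2) by direct construction checked against the axioms, to prove (3) --- the substantial item --- by a translation argument between the two ``flipped'' structures, and to deduce (4) from (2), (3) and the left--right mirror symmetry of Definition~\ref{defbdi}.

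For (1), keep $\rapp$ as the application of the prospective $\biilp$-algebra and produce the combinators by combinatory completeness for bi-$\bdi$-algebras. Take $\comb:=\combr$, $\comi:=\comir$, and $\kuro{a}:=\rlam{x}{x\rapp a}$ for each element $a$. The combinator $\batui:=\rlam{p}{\rlam{e}{p\lapp(e\rapp\comil)}}$ works, since $\batui\rapp a\rapp\comir=a\lapp(\comir\rapp\comil)=a\lapp\comil=a$; and for the pairing put $\comp:=\rlam{y}{\rlam{z}{\llam{w}{w\rapp y\rapp z}}}$ and $\coml:=\rlam{x}{\rlam{q}{x\lapp q}}$, so that $\coml\rapp x\rapp(\comp\rapp y\rapp z)=x\lapp\llam{w}{w\rapp y\rapp z}=x\rapp y\rapp z$. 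The left application is used essentially in all three of $\batui$, $\coml$ and $\comp$; this is the combinator-level shadow of the fact that a $\biikuro$-algebra by itself cannot carry a non-symmetric pairing. For (2), start from a $\bci$-algebra $(\uhka,\cdot)$, set $\rapp:=\cdot$ and take the flipped left application $x\lapp y:=y\cdot x$, so that $M\rapp N=N\lapp M$ and the two applications coincide; then each of the eight clauses of Definition~\ref{defbdi} unfolds to an instance of $\comb xyz=x(yz)$, $\comc xyz=xzy$ or $\comi x=x$, taking $\combr=\combl:=\comb$, $\comdr=\comdl:=\comc$, $\comir=\comil:=\comi$ and $\dagr{a}=\dagl{a}:=a$. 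In particular the structure $\hka'$ of item (4) built from this bi-$\bdi$-algebra is literally $\hka$.

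Item (3) is the heart of the statement. The bridge between the two applications is the pair of dagger operations: from $\dagr{a}\rapp x=x\lapp a$ and $x\lapp\dagl{a}=a\rapp x$ one gets $x\lapp a=\dagr{a}\rapp x$, and the composites satisfy $x\lapp\dagl{(\dagr{a})}=x\lapp a$ and $\dagr{(\dagl{a})}\rapp x=a\rapp x$, so $\dagr{(\haih)}$ and $\dagl{(\haih)}$ are mutually inverse translations between ``$\lapp$-arguments'' and ``$\rapp$-arguments'' up to the evident extensional equalities. Given two bi-$\bdi$ structures $(\uhka,\rapp,\lapp_1)$ and $(\uhka,\rapp,\lapp_2)$ on one carrier, I would use the two dagger families, $\dagr{(\haih)}_i$ and $\dagl{(\haih)}_i$, to build maps $\phi:\hka_1\migi\hka_2$ and $\psi:\hka_2\migi\hka_1$ --- e.g.\ $\phi(a)$ obtained by feeding $\dagr{a}_1$ to the $\dagl{(\haih)}$-operation of the second structure, so that $x\lapp_2\phi(a)=x\lapp_1 a$, i.e.\ $\phi(a)\rapp_2 x=a\rapp_1 x$ --- and then show $\phi$ and $\psi$ are applicative morphisms whose composites are equivalent to the identities under $\preceq$; that is precisely what it means for $\hka_1$ and $\hka_2$ to be isomorphic as applicative structures here. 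The hard part is this last step: a single application of $\phi$ does not turn a $\rapp_1$-computation into a $\rapp_2$-computation, because the two applications remain interleaved, so producing honest realizers for $\phi$, $\psi$ and the unit/counit comparisons requires carefully threading $\combr$, the combinators $\comdr,\comdl$ and the dagger operations --- essentially all the work of the proposition lives here.

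For (4), first note that the axioms of Definition~\ref{defbdi} are invariant under the left--right mirror (swap $\rapp$ with $\lapp$ and reverse all contexts): if $(\uhka,\rapp,\lapp)$ is a bi-$\bdi$-algebra then so is $(\uhka,\rapp',\lapp')$ with $x\rapp' y:=y\lapp x$ and $x\lapp' y:=y\rapp x$, taking the mirror's $\combr,\comdr,\comir$ to be the original's $\combl,\comdl,\comil$ and its $\dagr{(\haih)}$-operation to be the original's $\dagl{(\haih)}$-operation, and symmetrically. Hence $\hka'=(\uhka,\rapp')$ always underlies a bi-$\bdi$-algebra. Now if $\hka=(\uhka,\rapp)$ is a $\bci$-algebra, then by (2) it also carries the bi-$\bdi$-structure whose left application is the flip of $\rapp$, and the $\hka'$-construction applied to \emph{that} bi-$\bdi$-structure returns $\hka$; comparing it with the given bi-$\bdi$-structure via the uniqueness (3) gives $\hka'\cong\hka$, so $\hka'$ is a $\bci$-algebra (the class being closed under isomorphism). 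The converse is the same argument applied to the mirror bi-$\bdi$-algebra whose right application is $\rapp'$, and in either case $\hka$ and $\hka'$ are isomorphic as applicative structures.
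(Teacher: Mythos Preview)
Your treatment of (1) and (2) is correct and essentially matches the paper (the specific combinators you choose differ only cosmetically).

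For (3), however, you have overcomplicated matters and left a genuine gap. The isomorphism between $\hka_1$ and $\hka_2$ is simply the \emph{identity function} on $\uhka$; all that is needed is a realizer. The observation you are missing is that both bi-$\bdi$-structures share the same right application $\rapp$, so combinatory completeness is available in each of them relative to this common $\rapp$. The paper first works in $(\uhka,\rapp,\lapp_2)$ to build $L:=\rlam{y}{\rlam{x}{y\lapp_2 x}}$, giving $L\rapp y\rapp x = y\lapp_2 x = x\rapp_2 y$; it then uses the \emph{left} abstraction of $(\uhka,\rapp,\lapp_1)$ to set $r:=\llam{x}{\llam{y}{L\rapp y\rapp x}}$, which satisfies $r\rapp_1 a\rapp_1 b = a\rapp_2 b$ and therefore realizes the identity as an applicative morphism between the two structures; the other direction is symmetric. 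Your dagger-based $\phi$ is not the identity, and as you yourself concede, you have not produced a realizer for it or for the unit/counit comparisons --- nor is there any need to, once one sees the identity works.

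Your argument for (4), deriving it from (2), (3) and the left--right symmetry of Definition~\ref{defbdi}, is correct and conceptually cleaner than the paper's route, which instead constructs a $\comc$-combinator for $\hka'$ by hand (as $\llam{x}{\llam{y}{\llam{z}{\comcr\rapp M\rapp z\rapp y\rapp x}}}$ for a suitable $M$) and then realizes the identity directly with $\comcr\rapp\rlam{y}{\rlam{x}{y\lapp x}}$. Your reduction is a nice structural argument; just note that it rests on (3), so the gap there must be closed first.
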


\begin{proof} \hfill
\begin{enumerate}
\item $\comb$, $\comi$, $\batui$, $\coml$, $\comp$ and $\kuro{a}$ are given as $\combr$, $\comir$, $\rlam{x}{\rlam{y}{x \lapp (y \rapp \comil)}}$, $\rlam{x}{\rlam{y}{x \lapp y}}$, \\
$\rlam{x}{\rlam{y}{\llam{t}{t \rapp x \rapp y}}}$ and $\dagr{\llam{x}{x \rapp a}}$ respectively.

\item For a $\bci$-algebra $(\uhka, \rapp)$, $(\uhka, \rapp, \lapp)$ is a bi-$\bdi$-algebra when we take $y \lapp x := x \rapp y$. Here $\combr = \combl := \comb$, $\comdr = \comdl := \comc$, $\comir = \comil := \comi$ and $\dagr{a} = \dagl{a} := a$.

\item By the combinatory completeness of $\hka_2$, we have $L := \rlam{y}{\rlam{x}{y \lapp_2 x}}$ such that $L \rapp y \rapp x = y \lapp_2 x = x \rapp_2 y$.
By the combinatory completeness of $\hka_1$, we have an element
$r := \llam{x}{\llam{y}{L \rapp y \rapp x}}$, which satisfies $r \rapp_1 x \rapp_1 y = L \rapp y \rapp x = x \rapp_2 y$.
This $r$ realizes the applicative morphism $i_1 : \hka_1 \migi \hka_2$ given as the identity function on $\uhka$.
Similarly we have the inverse applicative morphism $i_2 : \hka_2 \migi \hka_1$ given as the identity function.
$i_1$ and $i_2$ are the isomorphisms between $\hka_1$ and $\hka_2$.

\item Suppose that $\hka$ is a $\bci$-algebra, that is, there is some element $\comcr \in \uhka$ such that
$\comcr \rapp x \rapp y \rapp z = x \rapp z \rapp y$.
Take an element $\comcl := \llam{x}{\llam{y}{\llam{z}{\comcr \rapp M \rapp z \rapp y \rapp x}}}$, where $M := \rlam{y}{\rlam{z}{\rlam{x}{y \lapp (z \lapp x)}}}$.
$\combl$, $\comil$ and $\comcl$ make $\hka'$ a $\bci$-algebra.
Similarly, when we suppose $\hka'$ is a $\bci$-algebra, $\hka$ is also a $\bci$-algebra.

Furthermore, when we suppose $\hka$ (and also $\hka'$) is a $\bci$-algebra, we have an element
$r := \comcr \rapp \rlam{y}{\rlam{x}{y \lapp x}}$, which realizes the applicative morphism $i : \hka' \migi \hka$ given as the identity function.
Similarly we have the inverse applicative morphism $i' : \hka \migi \hka'$ given as the identity function, and thus $\hka \cong \hka'$. \qedhere
\end{enumerate}
\end{proof}

By (\ref{property1}) and (\ref{property2}) of the above proposition, the class of bi-$\bdi$-algebras is the class of applicative structures in between $\biilp$-algebras and $\bci$-algebras. 
We named the ``$\coml$-combinator" of $\biilp$-algebras by the reason that it is represented as $\rlam{x}{\rlam{y}{x \lapp y}}$ in a bi-$\bdi$-algebra, that gives the ``left" application of two arguments.

\begin{rem}
Although $\rlam{x}{\rlam{y}{x \lapp y}}$ always acts as a $\coml$-combinator in a bi-$\bdi$-algebra, it is not the only way to take a $\coml$-combinator.
Indeed, in Example \ref{examtree5}, $\hkt$ has a $\coml$-combinator as
\begin{center}
\begin{eqnarray*}
\rlam{x}{\rlam{y}{x \lapp y}} &=& \combr \rapp \dagr{(\comir \lapp \comdl)} \rapp \comir \\
&=& \{ t_2 \rimp ((e \rimp t_1) \rimp (e \rimp t_2)) \rimp t_1 \mid t_1,t_2 \in T \} \mbox{,}
\end{eqnarray*}
\end{center}
which is different from the $\coml$-combinator taken in Example \ref{examtree3}.
\end{rem}

Since a bi-$\bdi$-algebra $\hka$ is also a $\biilp$-algebra, we know that $\asmca$ (and $\moda$) is a monoidal closed category.
Moreover, we can show that the categories of assemblies on bi-$\bdi$-algebras are not just a monoidal closed categories, but are monoidal bi-closed categories, having richer categorical structures.
A {\it monoidal bi-closed category} is a monoidal category $\catc$ with two sorts of adjunction $\catc(X \otimes Y,Z) \cong \catc(X,Z \rimp Y)$ and $\catc(X \otimes Y,Z) \cong \catc(Y,X \limp Z)$. 

\begin{prop} \label{propbiclo}
When $\hka=(\uhka,\rapp)$ is a bi-$\bdi$-algebra, $\asmca$ is a monoidal bi-closed category.
\end{prop}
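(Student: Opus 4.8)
The plan is to build on Proposition~\ref{propmonoclo}. Since a bi-$\bdi$-algebra $\hka$ is in particular a $\biilp$-algebra by part~(\ref{property1}) of the previous proposition, $\asmca$ already comes equipped with a monoidal closed structure: the tensor $\otimes$ with realizers $\rlz{x\otimes y}_{X\otimes Y}=\{\comp\rapp p\rapp q \mid p\in\rlz{x}_X,\ q\in\rlz{y}_Y\}$, the unit $I$ with $\rlz{\ast}_I=\{\comi\}$, the unitors and the associator, and the internal hom $(-\rimp-)$ with the adjunction $\catc(X\otimes Y,Z)\cong\catc(X,Z\rimp Y)$. So the only thing left to do is to produce a second internal hom functor $(-\limp-):\catc^{op}\times\catc\migi\catc$ together with a natural isomorphism $\catc(X\otimes Y,Z)\cong\catc(Y,X\limp Z)$; this is where the left application $\lapp$ of $\hka$ is used.

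Concretely, I would set $X\limp Y$ to have underlying set $\homrm_\catc(X,Y)$ and realizers $\rlz{f}_{X\limp Y}:=\{\,r\in\uhka \mid \forall x\in|X|,\ \forall a\in\rlz{x}_X,\ a\lapp r\in\rlz{f(x)}_Y\,\}$, that is, realizers taken with respect to the left application instead of the right one. This set is non-empty for every map $f$: if $r_f$ realizes $f$ in the usual sense, then $\dagl{r_f}$ belongs to it, since $a\lapp(\dagl{r_f})=r_f\rapp a\in\rlz{f(x)}_Y$. On morphisms $f:X'\migi X$ and $g:Z\migi Z'$, the map $(f\limp g)$ sends $h$ to $g\circ h\circ f$, and combinatory completeness for bi-$\bdi$-algebras supplies the realizer $\rlam{p}{\llam{a}{r_g\rapp((r_f\rapp a)\lapp p)}}$ (the side conditions on variable positions needed to form $\rlam{}{}$ and $\llam{}{}$ hold by inspection of the body). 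For the adjunction: the evaluation map $ev':X\otimes(X\limp Y)\migi Y$, $x\otimes h\mapsto h(x)$, is realized by $\coml\rapp\coml$, using that $\coml\rapp\coml\rapp(\comp\rapp p\rapp q)=\coml\rapp p\rapp q=p\lapp q$ when $\coml$ is taken to be the element $\rlam{x}{\rlam{y}{x\lapp y}}$ of the proof of part~(\ref{property1}). Given $f:X\otimes Y\migi Z$ realized by $r_f$, its transpose $\hat{f}:Y\migi X\limp Z$, $y\mapsto(x\mapsto f(x\otimes y))$, is realized by $\rlam{q}{\llam{p}{r_f\rapp(\comp\rapp p\rapp q)}}$; conversely, a map $g:Y\migi X\limp Z$ realized by $r_g$ has inverse transpose $\check{g}:X\otimes Y\migi Z$, $x\otimes y\mapsto g(y)(x)$, realized by $\coml\rapp\rlam{p}{\rlam{q}{p\lapp(r_g\rapp q)}}$. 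One then checks that $f\mapsto\hat{f}$ and $g\mapsto\check{g}$ are mutually inverse as functions and natural in $X$, $Y$ and $Z$, which together with the monoidal closed structure already available makes $\asmca$ a monoidal bi-closed category.

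The only genuinely new ingredient beyond Proposition~\ref{propmonoclo} is the observation that the bi-$\bdi$-structure provides exactly the plumbing ($\combr$, $\comdr$, $\comir$, $\dagr{a}$ and their left-handed duals, packaged into the two abstraction operators) needed to shuttle realizers across the two different applications; no new idea is required. Accordingly the main obstacle is bookkeeping rather than mathematics: one must keep the two applications $\rapp$, $\lapp$ and the two abstractions $\rlam{}{}$, $\llam{}{}$ carefully apart and verify, in each realizer above, that the intended variable really is the leftmost (resp. rightmost) one occurring exactly once, so that combinatory completeness for bi-$\bdi$-algebras applies. Once the realizers are in place, the naturality and inverse-bijection checks are routine, since every map involved is the identity on underlying sets or a pointwise-defined function.
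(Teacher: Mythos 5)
Your proposal is correct and in substance matches the paper's own proof: the new ingredient — the left hom $X\limp Y$ with realizers taken with respect to $\lapp$, non-emptiness via $\dagl{(r_f)}$, the realizer for $f\limp g$, and the evaluation/transposition realizers — is essentially identical to what the paper writes down (the paper's tensor realizers $\llam{t}{t\rapp p\rapp q}$ coincide with your $\comp\rapp p\rapp q$ for the derived $\comp$). The only difference is organizational: you obtain the monoidal closed part by citing Proposition~\ref{propmonoclo} through the fact that a bi-$\bdi$-algebra is a $\biilp$-algebra (which is non-circular and is even suggested in the paper just before the proposition), whereas the paper re-derives all of those realizers directly in the bi-planar calculus; both yield the same structure, so this is a harmless economy rather than a different argument.
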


\begin{proof}
Let $\lapp$ be the left application of $\hka$.
\begin{itemize}
\item A realizer for identities is $\comir$.
\item A realizer for the composition of $f:X \migi Y$ and $g:Y \migi Z$ is $\combr \rapp r_g \rapp r_f$.

\item For objects $X$ and $Y$, the underlying set of $X \otimes Y$ is $|X| \times |Y|$. Realizers are defined as 
\[ \rlz{x \otimes y} := \{ \llam{t}{t \rapp p \rapp q} \mid p \in \rlz{x}_X, q \in \rlz{y}_Y \}. \]

\item For $f: X \migi X'$ and $g:Y \migi Y'$, the map $f \otimes g$ is the function sending $x \otimes y$ to $f(x) \otimes g(y)$.
A realizer for $f \otimes g$ is $\rlam{u}{\rlam{p}{\rlam{q}{\llam{t}{t \rapp (r_f \rapp p) 
\rapp (r_g \rapp q)}}} \lapp u}$.

\item The underlying set of the unit object $I$ is a singleton $\{ \ast \}$. The realizer is $\rlz{\ast}_I := \{ \comir \}$.

\item The left unitor $\lambda_X: I \otimes X \migi X$ sends $\ast \otimes x$ to $x$, whose realizer is $\rlam{p}{\comir \lapp p}$.
A realizer of $\lambda_X^{-1}$ is $\rlam{p}{\llam{t}{t \rapp \comir \rapp p}}$.

\item The right unitor $\rho_X: X \migi X \otimes I$ sends $x$ to $x \otimes \ast$, whose realizer is $\rlam{p}{\llam{t}{t \rapp p \rapp \comir}}$.
A realizer of $\rho_X^{-1}$ is $\rlam{u}{\rlam{p}{\rlam{v}{p \lapp (v \rapp \comil)}} \lapp u}$.

\item The associator $\alpha_{XYZ}:(X \otimes Y) \otimes Z \migi X \otimes (Y \otimes Z)$ sends $(x \otimes y) \otimes z$ to $x \otimes (y \otimes z)$. 
$\alpha_{XYZ}$ is realized by $\rlam{u}{\rlam{v}{M \lapp v} \lapp u}$,
where 
$M := \rlam{p}{ \rlam{q}{\rlam{r}{\llam{t}{t \rapp p \rapp \llam{t'}{t' \rapp q \rapp r}}}}}$.
A realizer of $\alpha_{XYZ}^{-1}$ is $\rlam{u}{\rlam{p}{\rlam{v}{\rlam{q}{\rlam{r}{N}} \lapp v}} \lapp u}$, where 
$N := \llam{t}{(t \rapp \llam{t'}{t' \rapp p \rapp q}) \rapp r}$.

\item For objects $X$ and $Y$, the underlying set of $Y \rimp X$ is $\homrm_{\asmca}(X,Y)$. Realizers are
\[ \rlz{f}_{Y \rimps X} := \{ r \mid \mbox{$r$ is a realizer of $f$} \}. \]

\item For $f: X' \migi X$ and $g:Y \migi Y'$, $g \rimp f$ is the function sending a map $h : X \migi Y$ to the map $g \circ h \circ f : X' \migi Y'$.
A realizer for $g \rimp f$ is $\rlam{u}{\rlam{v}{r_g \rapp (u \rapp (r_f \rapp v))}}$.

\item The evaluation map $ev :(Y \rimp X) \otimes X \migi Y$ sends $f \otimes x$ to $f(x)$, which is realized by $\rlam{u}{\comir \lapp u}$.

\item For any map $f:Z \otimes X \migi Y$, there exists a unique map $g:Z \migi (Y \rimp X)$ which satisfies
$ev \circ (g \otimes id_X) = f$. This $g$ is given as the function sending $z$ to the function $x \mapsto f(z \otimes x)$, which is realized by $\rlam{q}{\rlam{p}{r_f \rapp \llam{t}{t \rapp q \rapp p}}}$.

\item For objects $X$ and $Y$, the underlying set of $X \limp Y$ is is $\homrm_{\asmca}(X,Y)$. Realizers are
\[ \rlz{f}_{X \limp Y} := \{ r \mid \mbox{$a \lapp r \in \rlz{f(x)}_Y$ for any $x \in |X|$ and $a \in \rlz{x}_X$} \}. \]
This set is not empty since $\dagl{(r_f)}$ is in the set for a realizer $r_f$ of $f$.

\item For $f: X' \migi X$ and $g:Y \migi Y'$, $f \limp g$ is the function sending a map $h : X \migi Y$ to the map $g \circ h \circ f : X' \migi Y'$. A realizer for $f \limp g$ is $\rlam{u}{\llam{v}{r_g \rapp ((r_f \rapp v) \lapp u)}}$.

\item The evaluation map $ev' : X \otimes (X \limp Y) \migi Y$ sends $x \otimes f$ to $f(x)$, which is realized by $\rlam{u}{\rlam{p}{\rlam{v}{p \lapp v}} \lapp u}$.

\item For any map $f:X \otimes Z \migi Y$, there exists a unique map $g:Z \migi (X \limp Y)$ which satisfies
$ev' \circ (id_X \otimes g) = f$. This $g$ is given as the function sending $z$ to the function $x \mapsto f(x \otimes z)$, which is realized by $\rlam{q}{\llam{p}{r_f \rapp \llam{t}{t \rapp p \rapp q}}}$. \qedhere
\end{itemize}
\end{proof}

For the category of modest sets, we use the same discussion as Proposition \ref{propsmcc2}.
That is, for the functor $F$ that is left adjoint of the inclusion functor $G:\moda \migi \asmca$, we define tensor products $\boxtimes$ in $\moda$ as $X \boxtimes Y := F(GX \otimes GY)$.

\begin{prop}
When $\hka=(\uhka,\rapp)$ is a bi-$\bdi$-algebra, $\moda$ is a monoidal bi-closed category. \qed
\end{prop}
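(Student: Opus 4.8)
The plan is to follow exactly the strategy used for Proposition~\ref{propsmcc2} and Proposition~\ref{propmodmcc}: reduce the modest-set case to the assembly case via the reflection $F \dashv G$, where $G \colon \moda \hookrightarrow \asmca$ is the inclusion and $F \colon \asmca \migi \moda$ is its left adjoint, described in the proof of Proposition~\ref{propsmcc2}. Since a bi-$\bdi$-algebra is in particular a $\bikuro$-algebra, $\hka$ satisfies conditions \ref{cond1} and \ref{cond2}, so $\moda$ is a category. I would then define the tensor product on $\moda$ by $X \boxtimes Y := F(GX \otimes GY)$, and take as unit object the one-element assembly $I$ from the proof of Proposition~\ref{propbiclo}, which is trivially modest since its underlying set is a singleton.

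Next I would observe that the two internal hom objects built in the proof of Proposition~\ref{propbiclo} are \emph{already} modest sets whenever their codomain is modest, so no reflection is needed for them. Indeed, for modest $Y$, if some $r$ realized two distinct maps $f, g \colon X \migi Y$ as elements of $Y \rimp X$, then choosing $x$ with $f(x) \neq g(x)$ and $a \in \rlz{x}_X$ would give $r \rapp a \in \rlz{f(x)}_Y \cap \rlz{g(x)}_Y$, contradicting modesty of $Y$; the same argument with $a \lapp r$ in place of $r \rapp a$ handles $X \limp Y$. Hence $\moda$ is closed under the internal homs of $\asmca$, and we may define $Z \rimp Y$ and $X \limp Z$ in $\moda$ with the very same underlying sets and realizers as in $\asmca$, so that $G(Z \rimp Y) = GZ \rimp GY$ and $G(X \limp Z) = GX \limp GZ$ as assemblies.

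Then I would establish the two closed adjunctions in $\moda$. Using that $G$ is fully faithful and that $\asmca$ is monoidal bi-closed (Proposition~\ref{propbiclo}), for $X, Y, Z \in \moda$ one computes, naturally,
\[
\moda(X \boxtimes Y, Z) \;\cong\; \asmca(GX \otimes GY, GZ) \;\cong\; \asmca(GX, GZ \rimp GY) \;=\; \moda(X, Z \rimp Y),
\]
where the first isomorphism is the reflection $F \dashv G$, the second is bi-closedness of $\asmca$, and the last equality uses modesty of $Z \rimp Y$ together with fullness of $G$. The mirror computation using $GX \limp GZ$ gives $\moda(X \boxtimes Y, Z) \cong \moda(Y, X \limp Z)$. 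For the remaining monoidal data (associator, unitors, pentagon, triangle) I would transport the components of Proposition~\ref{propbiclo} along $F$, exactly as in Proposition~\ref{propsmcc2}: because $\moda$ is a reflective full subcategory of the monoidal category $\asmca$ whose reflector sends $GX \otimes GY$ to $X \boxtimes Y$, the monoidal structure of $\asmca$ descends to $\moda$ by the standard transfer along a reflection (cf.~\cite{day72}), and the coherence axioms hold in $\moda$ because they hold in $\asmca$.

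The main obstacle I anticipate is not any single computation but the bookkeeping in this transfer step: one must check that the candidate associator and unitors from Proposition~\ref{propbiclo} really are maps of $\moda$ between the \emph{reflected} tensor objects — for instance that $F\bigl(G(F(GX \otimes GY)) \otimes GZ\bigr)$ is coherently isomorphic to $F\bigl(GX \otimes G(F(GY \otimes GZ))\bigr)$ — and that the two closed adjunctions above are compatible with this monoidal structure. As in the $\bci$ case, invoking the general theory of monoidal structures on reflexive full subcategories (\cite{day72}), rather than re-deriving everything by hand, is the cleanest way to discharge this point.
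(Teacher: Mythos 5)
Your proposal is correct and follows essentially the same route as the paper: the paper likewise defines $X \boxtimes Y := F(GX \otimes GY)$ via the reflection $F \dashv G$ and transfers the structure of Proposition~\ref{propbiclo} exactly as in Proposition~\ref{propsmcc2}, citing \cite{day72} for the general reflection argument. Your extra observations (modesty of both internal homs and the explicit adjunction chain) are just the details the paper leaves implicit.
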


In Proposition \ref{propbiclo}, $\asmca$ is the category of assemblies on the applicative structure $(\uhka, \rapp)$.
Even if we employ the left application $\lapp$ to construct the category of assemblies, we can obtain a category with the same structures as $\asmca$, as the next proposition says.

\begin{prop}
Let $\hka=(\uhka,\rapp,\lapp)$ be a bi-$\bdi$-algebra.
When we take an applicative structure $\hka'=(\uhka,\rapp')$ by $x \rapp' y := y \lapp x$, $\asmca$ and $\asmcad$ are isomorphic as categories.
Moreover, $\asmca$ is monoidally isomorphic to $\asmcad$ with the reversed tensor products.
That is, there is an isomorphism $R:\asmca \migi \asmcad$ such that $R(I) \cong I'$, $R^{-1}(I') \cong I$,
$R(X \otimes Y) \cong RY \otimes' RX$ and $R^{-1}(X' \otimes' Y') \cong R^{-1}Y' \otimes R^{-1}X'$ hold.
\end{prop}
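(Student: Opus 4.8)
The plan is to take $R\colon\asmca\migi\asmcad$ to be the \emph{identity functor} and to argue that the two categories literally coincide on objects and morphisms, while their monoidal structures are mirror images of one another. First I would record that $\hka'=(\uhka,\rapp')$ is again a bi-$\bdi$-algebra, with left application $\lapp'$ given by $x\lapp' y:=y\rapp x$; this is immediate from the self-duality of Definition~\ref{defbdi}, the combinators of $\hka'$ being those of $\hka$ with the left/right decorations interchanged (so the $\combr$ of $\hka'$ is the $\combl$ of $\hka$, etc.). In particular $\comil$ realizes identities and $r_2\lapp(r_1\lapp\combl)$ realizes composites, so $\hka'$ satisfies conditions~\ref{cond1} and~\ref{cond2} of Definition~\ref{defasm} and $\asmcad$ is well defined; its monoidal structure $(\otimes',I')$ is the one produced by Proposition~\ref{propbiclo} applied to $\hka'$.

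The crux for $\asmca\cong\asmcad$ is that the operations $\dagl{(\haih)}$ and $\dagr{(\haih)}$ translate realizers between $\rapp$ and $\rapp'$. If $r$ realizes a function $f\colon|X|\migi|Y|$ with respect to $\rapp$, so that $r\rapp a\in\rlz{f(x)}_Y$ for every $a\in\rlz{x}_X$, then $\dagl{r}$ satisfies $a\lapp\dagl{r}=r\rapp a\in\rlz{f(x)}_Y$, i.e.\ $\dagl{r}$ realizes $f$ with respect to $\rapp'$; conversely $\dagr{s}\rapp a=a\lapp s$ turns a $\rapp'$-realizer $s$ of $f$ into a $\rapp$-realizer. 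Hence a function is a map in $\asmca$ iff it is a map in $\asmcad$, so the two categories have the same objects and the same hom-sets, and since identities and composition are computed on underlying functions in both, the identity functor is an isomorphism of categories; I take this as $R$.

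For the monoidal statement one unwinds $\otimes'$ using the description of $\hka'$ above: the left-abstraction of $\hka'$ is the right-abstraction $\rlam{y}{N}$ of $\hka$, and $t\rapp' p\rapp' q=q\lapp(p\lapp t)$, so the realizers of $(y,x)$ in $RY\otimes' RX$ are $\{\,\rlam{t}{p\lapp(q\lapp t)}\mid q\in\rlz{y}_Y,\ p\in\rlz{x}_X\,\}$, whereas those of $(x,y)$ in $X\otimes Y$ are $\{\,\llam{t}{t\rapp p\rapp q}\mid p\in\rlz{x}_X,\ q\in\rlz{y}_Y\,\}$. The coordinate-swap bijection $(x,y)\mapsto(y,x)$ is then an isomorphism $R(X\otimes Y)\migi RY\otimes' RX$ natural in $X$ and $Y$: writing $c:=\rlam{p}{\rlam{q}{\rlam{t}{p\lapp(q\lapp t)}}}$, the element $\rlam{a}{c\lapp a}$ sends $\llam{t}{t\rapp p\rapp q}$ to $c\rapp p\rapp q=\rlam{t}{p\lapp(q\lapp t)}$, and a mirror-image term realizes the inverse. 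Similarly $I'=(\{\ast\},\{\comil\})$ and $I\cong I'$ via $\ast\mapsto\ast$, realized by $\rlam{z}{z\rapp\comil}$ one way and $\rlam{z}{\comir\lapp z}$ the other. It then remains to check that $R$ with these structure isomorphisms is strong monoidal; but since $R$ is the identity on the underlying category and every map in sight (associators, unitors, coordinate swaps) is the unique realizable bijection of the relevant underlying sets, the monoidal-functor coherence diagrams reduce to equalities of functions and hold automatically. (If one also wants the bi-closed structure transported, the same bookkeeping shows $\rimp$ and $\limp$ of $\hka$ exchange roles in $\hka'$.)

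The argument is essentially bookkeeping rather than conceptual, and the only delicate point is this unwinding: faithfully rewriting the primed operations $\rapp',\lapp'$ and the primed abstractions of $\hka'$ in terms of the operations of $\hka$, and recognising the resulting encoding of $\otimes'$ as the mirror of the encoding of $\otimes$. Producing the few realizers above is routine combinatory completeness in the bi-planar lambda calculus, but it is where all the computation lives; I would set things up so that the self-duality of Definition~\ref{defbdi} is used once, rather than re-deriving each mirrored combinator.
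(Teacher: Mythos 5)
Your proposal is correct and follows essentially the same route as the paper's proof: take $R$ (and $R^{-1}$) to be identity-on-objects-and-maps, transfer realizers between $\rapp$ and $\rapp'$ via $\dagl{(\haih)}$ and $\dagr{(\haih)}$, view $\hka'$ as the mirrored bi-$\bdi$-algebra, unwind the primed tensor realizers as $\rlam{t}{q \lapp (p \lapp t)}$, and exhibit realizers for the unit and coordinate-swap isomorphisms. The only cosmetic difference is that you supply $\rapp$-realizers where the paper writes the corresponding left-abstraction ($\rapp'$-) realizers, which is harmless once the hom-sets are identified.
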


\begin{proof}
For a map $f:X \migi Y$ in $\asmca$, the map is also a map in $\asmcad$ since the realizer exists as $\dagl{r_f}$.
Therefore, we can take a functor $R:\asmca \migi \asmcad$ which sends objects to the same objects and maps to the same maps.
Similarly we can get $R^{-1}$ which sends objects to the same objects and maps to the same maps.

$\hka'$ is a bi-$\bdi$-algebra by taking the left application $x \lapp' y := y \rapp x$.
We define the monoidal structure $(\otimes',I')$ on $\asmcad$ in the same way as Proposition \ref{propbiclo}. 
Here the realizers for tensor products are $\rlz{x \otimes' y}_{X \otimes' Y} = \{ \rlam{t}{q \lapp (p \lapp t)} \mid p \in \rlz{x}_X, q \in \rlz{y}_Y \}$.
A realizer for $R(I) \migi I'$ is $\llam{u}{u \rapp \comil}$ and a realizer for the inverse is $\llam{u}{\comir \lapp u}$.
A realizer for $R(X \otimes Y) \migi RY \otimes' RX$ is $\llam{u}{\rlam{p}{\rlam{q}{\rlam{t}{p \lapp (q \lapp t)}}} \lapp u}$ and a realizer for the inverse is $\llam{u}{u \rapp \llam{q}{\llam{p}{\llam{t}{t \rapp p \rapp q}}}}$.

Similar for the realizers related to $R^{-1}$.
\end{proof}

We can define ``partial bi-$\bdi$-algebras'' naturally.
Similar to partial $\bci$-algebras discussed in Remark \ref{rembci}, for a partial bi-$\bdi$-algebra $\hka$:
\begin{itemize}
\item $\asmca$ is not generally a monoidal bi-closed category;
\item adding an extra element $\bot$, $\hka$ naturally extends to a total bi-$\bdi$-algebra $\hka_{\bot}$;
\item $\asmca$ is the full subcategory of $\asmc{\hka_{\bot}}$.
\end{itemize}
Here the $\bot$ does not need to be two (for $\rapp$ and for $\lapp$), just one.

\section{Separation of classes of applicative strctures} \label{secsepa}

As we have already mentioned, the classes of applicative structures in this paper form a hierarchy summarized in the following table (Table \ref{tab:taiou}).
However, we have not yet shown the strictness of the hierarchy.
To show the strictness of the each inclusion, it is sufficient to provide an applicative structure separating the classes, that is, an applicative structure belonging to one side of the class but not belonging to the other.
In this section we give several such applicative structures, as summarized in Table \ref{tab:taiou}.

\begin{center}
\begin{table}[h]
\caption{Separation}
\centering
\begin{tabular}{|c|c|l|l|c|}
	\hline
	Applicative structure & Definition & It is a\dots  & It is not a\dots  & Proposition \\ \hline \hline
	$\plalamc$ & Example \ref{examplac} & $\bikuro$-algebra & $\biikuro$-algebra & \ref{propsepa3}  \\ \hline
	$\plalamcd$ & Example \ref{examplacd} & $\biikuro$-algebra & $\biilp$-algebra & \ref{propsepa4}  \\ \hline
	$\comlam$ & Definition \ref{examcom} & $\biikuro$-algebra & bi-$\bdi$-algebra & \ref{propsepa6}  \\ \hline
	$\plalam$ & Example \ref{exampla} & $\biikuro$-algebra & bi-$\bdi$-algebra & \ref{propsepa2}  \\ \hline
	$\fbiilp$ & Example \ref{examfbiilp} & $\biilp$-algebra & bi-$\bdi$-algebra & \ref{propsepa5} \\ \hline
	$\plalamb$ & Example \ref{exambipla} & bi-$\bdi$-algebra & $\bci$-algebra & \ref{propsepa1}  \\ \hline \hline
	\multicolumn{5}{|c|}{Inclusions} \\ \hline
	\multicolumn{5}{|c|}{$\bci$-algebras $\subsetneq$ bi-$\bdi$-algebras $\subsetneq$ $\biilp$-algebras $\subsetneq$ $\biikuro$-algebras $\subsetneq$ $\bikuro$-algebras} \\ \hline
\end{tabular}
\label{tab:taiou}
\end{table}
\end{center}

\subsection{Proofs of separations} \label{secsepa1}

First we show that the planar lambda calculus with a constant separates $\bikuro$-algebras and $\biikuro$-algebras.

\begin{exa} \label{examplac}
Suppose a constant symbol $c$ and add the following constant rule to the construction rules of planar lambda terms (See Example \ref{examlam} and \ref{exampla}).
\begin{center}
	\AxiomC{}
	\RightLabel{\scriptsize (constant)}
	\UnaryInfC{$\vdash c$}
	\DisplayProof
\end{center}
We assume no additional reduction rules about the constant.
That is, for instance, $c (\lambda x.x) c$ has no redex.
Closed planar terms (which may contain $c$) modulo $\eqb$ form a $\bikuro$-algebra, which we call $\plalamc$.
\end{exa}

Even adding the constant $c$, the planar lambda calculus still has the properties of confluence and strongly normalizing.

\begin{prop} \label{propsepa3}
$\plalamc$ is a $\bikuro$-algebra but not a $\biikuro$-algebra.
Hence, \\
$\biikuro$-algebras $\subsetneq$ $\bikuro$-algebras.
\end{prop}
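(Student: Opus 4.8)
The plan is to show that $\plalamc$ satisfies the defining clauses of a $\bikuro$-algebra (which is essentially immediate from Example~\ref{exampla}, since adding a constant does not disturb the representatives of $\comb$, $\comi$ and $\kuro{a}$), and then to argue that no closed planar term can serve as a $\batui$-combinator. The positive part is routine: the proofs that $\lambda xyz.x(yz)$ represents $\comb$, that $\lambda x.x$ represents $\comi$, and that $\lambda x.xM$ represents $\kuro{a}$ for any representative $M$ go through verbatim, because the constant $c$ is inert and the calculus remains confluent and strongly normalizing. So the whole content is in the negative part.

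For the negative part, suppose for contradiction that there is a closed planar term $W$ with $W \rapp M \rapp \comi =_\beta M$ for every closed planar term $M$ (here $\comi = \lambda x.x$), where I abuse notation and write application directly. The strategy is to extract information by making clever choices of $M$. First, take $M := c$. Then $W\,c\,(\lambda x.x) =_\beta c$. Since $c$ is a constant with no reduction rules and the calculus is strongly normalizing and confluent, look at the normal form of $W$: it is a closed planar normal form, hence of the shape $\lambda u.\lambda v.N$ (it must absorb at least two arguments to have a chance of returning the first), and analyze how $N$, in normal form, can reduce to $c$ when $u:=c$ and $v:=\lambda x.x$. Because $c$ is a free-standing constant that can only appear in head position applied to arguments or as an inert subterm, and because planarity forces the variable occurrences of $N$ to appear in a fixed left-to-right order with each used exactly once, the term $N$ must be, up to $\beta\eta$, built around $u$ in head position; but then $W\,c\,(\lambda x.x)$ reduces to a term with $c$ applied to something (coming from the leftover material involving $v$), contradicting that the normal form is exactly $c$ unless that leftover material is vacuous — which would force $v$ (i.e.\ the second argument $\comi$) to be discarded, impossible in a planar (hence linear) term. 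The cleanest way to make this rigorous is: by linearity $W$ applied to two arguments uses each exactly once, so $W\,M\,\comi$ contains exactly one occurrence of (the subterm contributed by) $\comi$; for the result to be $=_\beta M$ with $M := c$ a constant, that occurrence of $\comi$ must be eliminated by a $\beta$-reduction, which requires $\comi$ to land in an argument position of an abstraction coming from $M$ — but $M = c$ provides no abstraction. Hence $W$ cannot exist.

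A slightly slicker route, which I would probably prefer to present, is to use two different test terms and a counting/shape invariant: since $\plalamc$ \emph{is} a $\bikuro$-algebra but (by Example~\ref{exampla2}) the constant-free $\plalam$ \emph{is} a $\biikuro$-algebra, the obstruction must be exactly the presence of $c$. Choose $M := c$; a $\batui$-combinator would give a closed planar term $W$ with $\mathsf{nf}(W\,c\,\comi) = c$. Now track the constant: in any reduction sequence from $W\,c\,\comi$, the single occurrence of $c$ coming from the argument is never created or destroyed (no reduction rule mentions $c$), so the normal form contains exactly one $c$, consistent so far. The real leverage is that $\comi$ must also disappear: the normal form $c$ contains no bound variables and no abstractions, yet $W\,c\,\comi$ started with the subterm $\comi = \lambda x.x$; for a $\lambda$-abstraction to vanish under $\beta$-reduction in a linear term it must be consumed as the operator of a redex, i.e.\ some reduction $(\lambda x.\,\cdots)\,\comi$ must fire with $\comi$ in the \emph{argument} slot, requiring an abstraction to its left that ultimately traces back to $M=c$ — again impossible. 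Therefore no such $W$ exists, $\plalamc$ is not a $\biikuro$-algebra, and combined with Example~\ref{exampla} and the general inclusion $\biikuro$-algebras $\subseteq$ $\bikuro$-algebras, we get $\biikuro$-algebras $\subsetneq$ $\bikuro$-algebras.

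\textbf{The main obstacle} I anticipate is making the ``the abstraction $\comi$ must be consumed in argument position, which forces an abstraction tracing back to the constant $M=c$'' step fully precise without an unwieldy induction on planar normal forms. The honest way to nail it down is a small structural lemma about closed planar normal forms: every closed planar normal form is either a constant $c$, or of the form $\lambda x.\,R$ where the body $R$ is a planar normal form with $x$ as its rightmost free variable; iterating, $W$ in normal form is $\lambda u_1\cdots \lambda u_n.\,H\,A_1\cdots A_k$ with $H$ a head variable or $c$, the arguments $A_i$ planar normal forms, and the variable multiset partitioned among the $A_i$ in left-to-right order. Plugging $u := c$, $v := \comi$ (for $n=2$) and demanding the $\beta$-normal form be the atomic term $c$ pins down $H = u_1$, $k$ small, and ultimately shows the second argument $v$ cannot be placed anywhere that lets it be erased — contradiction. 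I would carry out exactly this case analysis, keeping it short by exploiting that the target $c$ is atomic so almost every case dies immediately.
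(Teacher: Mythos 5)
Your positive half and your choice of test term ($M := c$) match the paper's strategy, but the core of your negative argument has a genuine gap. Both of your routes rest on the claim that the occurrence of $\comi$ cannot disappear unless it lands in argument position of an abstraction ``coming from $M = c$.'' That misidentifies how an abstraction vanishes in a linear calculus: arguments are never erased (there is no weakening), and an abstraction disappears exactly when it is the \emph{operator} of a fired redex. Nothing prevents $\comi$ from being applied to material supplied by $W$ itself and vanishing that way: for instance $W := \lambda m.\lambda i.\,m\,(i\,(\lambda z.z))$ is a closed planar term and $W\,c\,\comi \eqb c\,(\lambda z.z)$, so the subterm $\comi$ is gone even though $c$ supplied no abstraction. (Your ``slicker'' sentence is also internally inconsistent: ``consumed as the operator of a redex, i.e.\ $(\lambda x.\cdots)\,\comi$ must fire with $\comi$ in the argument slot'' asserts both positions at once.) Nor can counting rescue the claim: if $W$ has $k$ bound variables, then $W\,c\,\comi$ has $k+1$ lambdas, $k+1$ applications and $k+2$ leaves, and a linear reduction to the single leaf $c$ is arithmetically consistent. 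So the contradiction cannot be ``$\comi$ cannot be erased''; it must be that whatever dissolves the second argument leaves residual material around the inert head $c$, so the normal form is never the bare constant. That is exactly what the paper proves: it normalizes $\batuih$, in the main case writes the body as $C[v\,N_1\cdots N_n]$ with \emph{all} the arguments $v$ receives, replaces $v\,N_1\cdots N_n$ by the normal form of $I\,N_1\cdots N_n$, and observes that the resulting term is normal, still contains $c$ inside a nontrivial context, and hence contradicts confluence. Your final paragraph gestures at this case analysis but does not carry it out, and it is still aimed at the wrong invariant (``$v$ cannot be erased'').

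A secondary point: you fix the identity to be $\lambda x.x$. The definition of a $\biikuro$-algebra only requires \emph{some} element $\comi$ with $\comi x = x$ and some $\batui$ with $\batui x \comi = x$ relative to that choice, so a complete refutation must rule out every candidate identity. In $\plalamc$ this is harmless (from $I\,c \eqb c$ and the shape of closed $c$-free planar normal forms one gets $I \eqb \lambda x.x$), but you neither prove nor mention it; the paper instead lets $I$ be arbitrary and uses only that, by counting occurrences of $c$ through $\beta$-equality, both $I$ and $\batuih$ must be $c$-free.
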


\begin{proof}
Assume that $\plalamc$ is a $\biikuro$-algebra.
That is, assume there exist terms $I$ and $\batuih$ in $\plalamc$ such that $I M \eqb M$ and $\batuih M I \eqb M$ for any term $M$ in $\plalamc$.
We take $I$ and $\batuih$ as $\beta$-normal terms w.l.o.g.
If $M \eqb N$ in $\plalamc$, the number of appearance of $c$ is equal between $M$ and $N$.
Thus, since $\batuih c I \eqb c$, $I$ and $\batuih$ cannot contain $c$.
\begin{itemize}
\item When $\batuih c$ is $\beta$-normal, $\batuih c I$ is also $\beta$-normal and obviously not equal to $c$.
This contradicts to the confluence of the planar lambda calculus (with constant $c$).
\item When $\batuih = \lambda u.J$ for some $J$ and $u$, $\batuih c I \eqb (J[c/u]) I$.
\begin{itemize}
\item When $J = \lambda v.J'$ for some $J'$ and $v$, $\batuih c I \eqb J'[c/u][I/v]$.
Suppose $v$ receives just $n$ arguments $N_1,\dots ,N_n$ $(n \geq 0)$ in $J'$.
$J' = C[v N_1 \dots  N_n]$ for some context $C[-]$ which contains $u$ to the left of the hole $[-]$.
For the $\beta$-normal form $N$ of $I N_1 \dots  N_n$, $\batuih c I \eqb (C[N])[c/u]$.
$(C[N])[c/u]$ is $\beta$-normal and obviously not equal to $c$.
This contradicts to the confluence.
\item Otherwise, $J[c/u] I$ is $\beta$-normal and not equal to $c$.
This contradicts to the confluence. \qedhere
\end{itemize}
\end{itemize}
\end{proof}

Next we show that the planar lambda calculus additionally employing the $\eta$-equality separates $\biikuro$-algebras and $\biilp$-algebras.

\begin{exa} \label{examplacd}
Suppose three constant symbols $c_1$, $c_2$ and $c_3$ and add the following constant rules ($i=1,2,3$) to the construction rules of planar lambda terms.
\begin{center}
	\AxiomC{}
	\RightLabel{\scriptsize (constant)}
	\UnaryInfC{$\vdash c_i$}
	\DisplayProof
\end{center}
We assume no additional reduction rules about the constants.
Closed planar terms (that may contain constants) modulo $\eqbe$ form a $\biikuro$-algebra, which we call $\plalamcd$.
Note that the equivalence relation of $\plalamcd$ is the $\beta \eta$-equality, while that of $\plalamc$ (Example \ref{examplac}) is the $\beta$-equality.
We have $\lambda xyz.x(yz)$ as a representation of $\batui$ in $\plalamcd$.
Indeed, for any term $M$, $(\lambda xyz.x(yz)) M (\lambda w.w) \eqb \lambda z.Mz =_{\eta} M$.
\end{exa}

\begin{prop} \label{propsepa4}
$\plalamcd$ is a $\biikuro$-algebra but not a $\biilp$-algebra.
Hence, \\
$\biilp$-algebras $\subsetneq$ $\biikuro$-algebras.
\end{prop}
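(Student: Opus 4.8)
The first assertion, that $\plalamcd$ is a $\biikuro$-algebra, is exactly Example \ref{examplacd} (with $\lambda xyz.x(yz)$ serving as $\batui$), so the real content is that $\plalamcd$ is \emph{not} a $\biilp$-algebra. The plan is to assume, for contradiction, that there are closed planar terms $\coml$ and $\comp$ with $\coml\, x\, (\comp\, y\, z) \eqbe x y z$ for all closed planar terms $x,y,z$, and to derive a contradiction by probing with the three constants. Instantiating $x := c_1$, $y := c_2$, $z := c_3$ gives $\coml\, c_1\, (\comp\, c_2\, c_3) \eqbe c_1 c_2 c_3$. Since no reduction rule touches a constant and $\beta$-reduction in the planar (hence linear) calculus neither duplicates nor erases, $\eqbe$ preserves, for every constant, its number of occurrences; comparing the two sides therefore forces $\coml$ and $\comp$ to be constant-free. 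Replacing $\coml,\comp$ by their $\beta\eta$-normal forms — which exist and are unique, the planar $\lambda$-calculus with the non-reducing constants being strongly normalizing and $\beta\eta$-confluent — I may assume $\coml,\comp$ are pure, closed and $\beta\eta$-normal, and that $\coml\, c_1\, (\comp\, c_2\, c_3)$ $\beta\eta$-reduces to $c_1 c_2 c_3$.

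Next I would use the shape of $\coml$: as a closed pure $\beta$-normal term it is $\coml = \lambda x_1.\cdots\lambda x_n.\, h\, M_1 \cdots M_k$ with $h$ among $x_1,\dots,x_n$ and each $x_i$ occurring exactly once. Planarity — abstraction only at the rightmost free variable — forces the occurrences of $x_1,\dots,x_n$ inside $h\,M_1\cdots M_k$ to appear from left to right in that order; since the leftmost variable occurrence there is the head $h$, we get $h = x_1$, and $n\ge 1$ because $\coml$ is closed. Now apply $\coml$ to $c_1$ (substituted for $x_1$) and to $\comp\,c_2\,c_3$ (substituted for $x_2$ when $n\ge 2$; when $n=1$ one applies $\coml\,c_1$ to $\comp\,c_2\,c_3$, the latter then playing the role of a final argument). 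Reducing, $\coml\, c_1\, (\comp\, c_2\, c_3)$ reaches $\lambda x_3.\cdots\lambda x_n.\, c_1\, \tilde M_1 \cdots \tilde M_k$, where $c_1$ does not occur in any $\tilde M_i$ (it replaced the head), and — because $x_2$ occurs exactly once — both $c_2$ and $c_3$ occur in exactly one $\tilde M_j$ and in no other $\tilde M_i$.

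Because $c_1$ is a constant with no reduction rule, no $\beta$-redex can straddle it, so the remaining reduction happens inside the $\tilde M_i$ separately and the $\beta$-normal form is $\lambda x_3.\cdots\lambda x_n.\, c_1\, N_1 \cdots N_k$ with $N_i$ the $\beta$-normal form of $\tilde M_i$; in particular $N_j$ still contains both $c_2$ and $c_3$. For this to be $\beta\eta$-equal to $c_1 c_2 c_3$ — a non-abstraction whose rigid head $c_1$ is applied to exactly two arguments — every leading $\lambda x_i$ ($3\le i\le n$) must be removed by $\eta$, which forces the corresponding trailing $N_i$ to be the bare variable $x_i$, and after this stripping exactly two arguments must remain, $\beta\eta$-equal to $c_2$ and to $c_3$ respectively. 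But $N_j$ contains constants, so it is not a bare variable and hence is one of the two surviving arguments; as it contains both $c_2$ and $c_3$ while each of the terms $c_2$, $c_3$ contains only one of them, the unique $\beta\eta$-normal form $c_1c_2c_3$ cannot be reached. This contradiction shows no such $\coml,\comp$ exist, so $\plalamcd$ is not a $\biilp$-algebra, whence $\biilp$-algebras $\subsetneq$ $\biikuro$-algebras.

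I expect the middle steps to be the main obstacle: using planarity to pin the head of $\coml$'s normal form to its first argument $x_1$, and then verifying the invariant that the single subterm $\comp\,c_2\,c_3$ substituted for the "pairing" argument keeps $c_2$ and $c_3$ together — inside one argument position of the rigid head $c_1$ — throughout normalization, together with the $\eta$-bookkeeping for the trailing abstractions. With $\beta$-equality alone the $\eta$ part would disappear and the argument would be shorter; by contrast the tensor-equipped calculus $\plalamt$ of Example \ref{examplaten} is a $\biilp$-algebra precisely because its primitive $\comp$ and $\coml$ can split the two components apart, which no planar $\lambda$-term can do.
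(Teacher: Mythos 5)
Your proof is correct and follows essentially the same route as the paper's: instantiate with the constants, use preservation of constant-occurrence counts under $\beta\eta$ to make $\coml$ and $\comp$ constant-free, exploit the planar normal-form shape $\lambda x_1\dots x_n.\,x_1 M_1\cdots M_k$ so that $c_1$ becomes a rigid head, and conclude that $c_2$ and $c_3$ stay bundled inside a single argument position, contradicting $c_1c_2c_3$. The only difference is that you spell out the $\eta$-stripping and occurrence-counting details that the paper compresses into the remark that ``$c_1$ cannot receive $c_2$ and $c_3$ as separated arguments,'' which is a faithful elaboration rather than a different argument.
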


\begin{proof}
Assume that there are some terms $L$ and $P$ in $\plalamcd$ satisfying that for any terms $M_1$, $M_2$ and $M_3$, $L M_1 (P M_2 M_3) \eqbe M_1 M_2 M_3$.
Taking $M_1 = M_2 = M_3 := \lambda x.x$, we see that $L$ and $P$ cannot contain constants.

Taking $M_i := c_i$, we have $L c_1 (P c_2 c_3) \eqbe c_1 c_2 c_3$.
Since $L$ is a closed planar term with no constants, the $\beta \eta$-normal form of $L$ is the form $\lambda xy_1 \dots  y_m.x N_1 \dots N_n$ ($m,n \geq 0$).
Therefore, $L c_1 (P c_2 c_3) \eqbe (\lambda y_1\dots y_m.c_1 N_1 \dots N_n)(P c_2 c_3)$.
However, this term cannot be $\beta \eta$-equal to $c_1 c_2 c_3$ since $c_1$ cannot receive $c_2$ and $c_3$ as separated arguments no matter how the form of $P$ is.
\end{proof}

Next we show that the freely constructed $\biilp$-algebra separates $\biilp$-algebras and bi-$\bdi$-algebras.

\begin{exa} \label{examfbiilp}
We take $\fbiilp$ as the freely constructed $\biilp$-algebra with two constants $c_1$ and $c_2$.
That is, elements of $\fbiilp$ are constructed from $\comb$, $\comi$, $\batui$, $\coml$, $\comp$, $c_1$ and $c_2$ using the application and the unary operation $\kuro{(\haih)}$.
The equality in $\fbiilp$ is obtained by the axioms of $\biilp$-algebras and we do not assume any axioms on the constants.
\end{exa}

\begin{prop} \label{propsepa5}
$\fbiilp$ is a $\biilp$-algebra but not a bi-$\bdi$-algebra.
Hence, \\
bi-$\bdi$-algebras $\subsetneq$ $\biilp$-algebras.
\end{prop}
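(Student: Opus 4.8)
That $\fbiilp$ is a $\biilp$-algebra is immediate from its construction: the carrier is built from $\comb,\comi,\batui,\coml,\comp,c_1,c_2$ using application and $\kuro{(\haih)}$, and the equality is generated by exactly the equations of Definition~\ref{defbiilp}, so the required elements and the unary operation are present and satisfy the axioms by fiat. The content is the second assertion, which I would prove by contradiction. Suppose the applicative structure underlying $\fbiilp$ admits a left application $\lapp$ together with all the distinguished elements of Definition~\ref{defbdi} (its right application being, by definition of the class, the given application $\cdot$ of $\fbiilp$). The first observation is that $\lapp$ is redundant data: by the axiom for $\dagr{(\haih)}$ we have $x\lapp a=\dagr{a}\cdot x$ for all $x,a$, so $\lapp$ is completely determined by the unary function $D\colon a\mapsto\dagr{a}$ on $|\fbiilp|$.

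The second step is to pin $D$ down using $\comdl$. Unwinding $(z\lapp(y\lapp\comdl))\rapp x=z\lapp(y\rapp x)$ in terms of $D$ and $\cdot$ gives $D(D(\comdl)\cdot y)\cdot z\cdot x=D(y\cdot x)\cdot z$ for all $x,y,z$; specialising $y:=\comi$ and using $\comi\cdot x=x$ in $\fbiilp$ yields $Q\cdot z\cdot x=D(x)\cdot z$ for the single fixed element $Q:=D(D(\comdl)\cdot\comi)$, hence $x\lapp a=\dagr{a}\cdot x=Q\cdot x\cdot a$ for all $x,a$. In other words, a hypothetical bi-$\bdi$-structure on $\fbiilp$ amounts to nothing more than one element $Q\in|\fbiilp|$ for which the operation $x\lapp a:=Q\,x\,a$ validates the remaining axioms: translating those, we would need $Q$ and elements $\combl,\comdr,\comil\in|\fbiilp|$ (and, for $\dagl{(\haih)}$, for each $a$ an element $E(a)$) such that, in $\fbiilp$,
\[ Q\,x\,\comil=x,\quad Q\,x\,(\comdr\, y\, z)=Q\,x\,y\,z,\quad Q\,z\,(Q\,y\,\comdl)\,x=Q\,z\,(y\,x),\quad Q\,z\,(Q\,y\,(Q\,x\,\combl))=Q\,(Q\,z\,y)\,x,\quad Q\,x\,(E(a))=a\,x \]
hold for all $x,y,z$.

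The last step is to refute these equations via the rewriting theory of $\fbiilp$. One first checks that the five rules $\comb xyz\to x(yz)$, $\comi x\to x$, $\batui x\comi\to x$, $\coml x(\comp yz)\to xyz$ and $(\kuro x)y\to yx$ are confluent and terminating (the argument is the same as for the planar $\lambda$-calculus with tensor products), so every element of $\fbiilp$ has a unique normal form. The key structural fact is that the only destructor available is $\coml$, which can only open a genuine $\comp$-pair; so a stuck subterm of the form $\comdr\,y\,z$ can be passed around or applied further but never unpacked into its components $y$ and $z$. Using this I would run a case analysis on the head symbol of the normal form of $Q$ (and, where needed, of $\comdr$): the equation $Q\,x\,(\comdr\, y\, z)=Q\,x\,y\,z$ forces $Q$, on its second argument, to convert a $\comdr$-packaged pair into two consecutive right-applications, while $Q\,x\,\comil=x$ forces $Q$ to behave like $\batui$ relative to the fixed constant $\comil$; substituting the generators $c_1,c_2$ for $y,z$ (and $\comi$'s for spectator arguments) then yields two provably distinct normal forms on the two sides of one of the displayed equations, for every possible shape of $Q$. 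This normal-form case analysis — ruling out each head constructor of $Q$ (and of the auxiliary $\comdr,\comdl,\comil$) against the $\biilp$ rewriting theory — is the step I expect to be the main obstacle; everything preceding it is bookkeeping.
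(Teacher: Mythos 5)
Your reduction of the problem to a single element $Q$ with $Q\,x\,a = x \lapp a$ is sound, and it is essentially the paper's opening move: the paper simply invokes combinatory completeness for bi-$\bdi$-algebras to obtain $M := \rlam{x}{\rlam{y}{x \lapp y}}$ with $M \rapp x \rapp y = x \lapp y$, whereas you reconstruct such an element by hand from $\dagr{(\haih)}$ and the $\comdl$ axiom; your translation of the remaining axioms into equations on $Q$, $\combl$, $\comdr$, $\comil$, $E(a)$ is also correct, and the confluence/termination claim for the five oriented rules is plausible (an orthogonal, size-decreasing system). The genuine gap is precisely the step you flag as the ``main obstacle'': no contradiction is actually derived. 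The route you sketch --- a case analysis on the head symbol of the normal form of $Q$, played against the $\comdr$/$\comil$ equations --- is not a finite or self-contained argument as stated: normal forms of $Q$ form an infinite family ($\comb\,M\,N$, $\coml\,M$, $\kuro{M}$, \dots{} for arbitrary normal subterms), and the head symbol alone does not control how $Q\,x\,y$ reduces without recursing into those subterms. To close such an analysis you would need an induction guided by some invariant of the equational theory, and no such invariant is identified in your proposal.

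The invariant is exactly what the paper supplies, and it is the heart of its proof. None of the defining equations of $\fbiilp$ ($\comb xyz = x(yz)$, $\comi x = x$, $\batui x \comi = x$, $\coml x(\comp yz) = xyz$, $\kuro{x}y = yx$) deletes or duplicates an occurrence of $c_1$ or $c_2$, and none of them exchanges the left-to-right order of constant occurrences once the argument of $\kuro{(\haih)}$ is counted as lying to the right (its reading as $\lambda x.\,xM$ in the planar calculus). Counting constants in $M \rapp \comi \rapp \dagl{\comi} = \comi$ shows that $M$ (your $Q$) and $\dagl{\comi}$ are constant-free; then the instance $M \rapp c_1 \rapp \dagl{c_2} = c_2 \rapp c_1$ --- which is your equation $Q\,x\,(E(a)) = a\,x$ at $x := c_1$, $a := c_2$ --- is impossible, because the two sides carry $c_1$ and $c_2$ in opposite order and no constant-free $M$, for any choice of $\dagl{c_2}$, can effect that exchange. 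Your plan instead targets the $\comdr$/$\comil$ equations, where a contradiction is less direct, and in any case without this order-preservation (planarity) invariant, or something equivalent to it, the normal-form case analysis you propose cannot be completed; so the decisive part of the proof is missing.
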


\begin{proof}
Assume that $\fbiilp$ is a bi-$\bdi$-algebra and write the right and left applications as $\rapp$ and $\lapp$. Here this $\rapp$ is the same application as that of $\fbiilp$ as a $\biilp$-algebra, that is, $MN$ and $M \rapp N$ denote the same element.

By the combinatory completeness, there is an element $M := \rlam{x}{\rlam{y}{x \lapp y}}$ in $\fbiilp$.
Since $M \rapp \comi \rapp \dagl{\comi} = \comi$ holds, this $M$ cannot contain $c_1$ nor $c_2$.
For this $M$, $M \rapp c_1 \rapp \dagl{c_2} = c_2 \rapp c_1$.
As we can see from the axioms of $\comb$, $\comi$, $\batui$, $\coml$, $\comp$ and $\kuro{( \mathchar`- )}$, 
it is impossible for $M$ in any form to exchange the order of two arguments $c_1$ and $\dagl{c_2}$ in $M \rapp c_1 \rapp \dagl{c_2}$.
Then it is also impossible for $\dagl{c_2}$ in any form to reduce $M \rapp c_1 \rapp \dagl{c_2}$ to $c_2 \rapp c_1$.
\end{proof}

Finally we show the bi-planar lambda calculus $\plalamb$ (Example \ref{exambipla}) separates bi-$\bdi$-algebras and $\bci$-algebras.

\begin{prop} \label{propsepa1}
$\plalamb$ is a bi-$\bdi$-algebra but not a $\bci$-algebra.
Hence, \\
$\bci$-algebras $\subsetneq$ bi-$\bdi$-algebras.
\end{prop}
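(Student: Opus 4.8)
$\plalamb$ is a bi-$\bdi$-algebra by Example~\ref{exambipla}, so it remains to show it is not a $\bci$-algebra, and I would argue by contradiction. Suppose $\plalamb$ were a $\bci$-algebra: then, viewing $\plalamb$ as an applicative structure under its right application $\rapp$, there would be a closed bi-planar term $\comcr$ with $\comcr \rapp a \rapp b \rapp c \eqb a \rapp c \rapp b$ for all closed bi-planar terms $a,b,c$. The point is that such a combinator would have to perform a genuine exchange of arguments, which the planarity discipline of the bi-planar calculus forbids.

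The structural fact I would isolate is a \emph{free-variable-order invariant}: whenever $x_1,\dots,x_n \vdash M$ is derivable in the bi-planar calculus, each $x_i$ occurs exactly once in $M$ and the occurrences, read left to right, come in the order $x_1,\dots,x_n$; moreover $\migi_{\beta}$ (and $\eta$-conversion) preserve this order, so it is an invariant of $\eqbe$ on open bi-planar terms. Both statements follow by routine induction — on derivations for the first, on single reduction steps for the second — using that a right (resp.\ left) $\beta$-redex substitutes its argument exactly at the rightmost (resp.\ leftmost) free position vacated by the abstraction. I would also record the accompanying substitution lemma: plugging a closed term for a variable preserves derivability and deletes that variable from the context.

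Granting these, the contradiction runs as follows. The term $\comcr \rapp x \rapp y \rapp z$ is derivable with context $x,y,z$ (since $\comcr$ is closed), while $x \rapp z \rapp y$ is derivable with context $x,z,y$. On the other hand, the hypothesis $\comcr \rapp a\rapp b\rapp c \eqb a\rapp c\rapp b$ for all closed $a,b,c$ says exactly that the open terms $\comcr\rapp x\rapp y\rapp z$ and $x\rapp z\rapp y$ agree under every closed substitution for $x,y,z$; a separation argument then upgrades this to $\comcr \rapp x\rapp y\rapp z \eqbe x\rapp z\rapp y$. But the left-hand side has its variable occurrences in the order $x,y,z$ and the right-hand side in the order $x,z,y$, and since $x,y,z$ are distinct these orders differ — contradicting the invariant. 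Hence no such $\comcr$ exists, so $\plalamb$ is not a $\bci$-algebra and $\bci$-algebras $\subsetneq$ bi-$\bdi$-algebras.

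The step I expect to be the main obstacle is the separation, i.e.\ promoting ``agrees on all closed instances'' to the genuine equation $\comcr \rapp x\rapp y\rapp z \eqbe x\rapp z\rapp y$. I would carry it out concretely, using that the bi-planar calculus is confluent and strongly normalizing: let $N$ be the $\beta$-normal form of $\comcr \rapp x\rapp y\rapp z$ (so $x,y,z \vdash N$), substitute the closed pairing combinator $\comp' := \rlaml{p}{\rlaml{q}{\llaml{t}{(t \rapp p) \rapp q}}}$ for $x$ together with two ``maximally different'' closed probes such as $\comir$ and $\comil$ for $y$ and $z$, and run a short case analysis on the number of right-arguments that the image of $x$ receives inside $N$; the free-variable-order invariant eliminates every configuration except the one forcing the desired equation. (One could instead argue directly on the $\beta$-normal form of $\comcr$ and the interaction of the two applications, but the bookkeeping is comparable.)
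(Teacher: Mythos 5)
Your bi-$\bdi$-algebra half is fine (Example \ref{exambipla}), and your free-variable-order invariant is a correct observation about the bi-planar calculus --- it is essentially the planarity fact the paper also exploits. The genuine gap is the step you yourself flag: passing from ``$\comcr \rapp a \rapp b \rapp c \eqb a \rapp c \rapp b$ for all \emph{closed} $a,b,c$'' to the \emph{open} equation $\comcr \rapp x \rapp y \rapp z \eqb x \rapp z \rapp y$. This is an $\omega$-rule-type inference, not valid in general for lambda calculi, and it is exactly where all the work of the proposition lies; your sketch does not close it. In particular, once you substitute closed probes for $x,y,z$ there are no free variables left, so the order invariant says nothing about the resulting term; and your chosen probes $\comir$, $\comil$ and the pairing term are abstractions (or contain them at the top), so plugging them into a $\beta$-normal form can create new redexes, which means you cannot read off the normal form of the instantiated term by inspection. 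The ``short case analysis'' that is supposed to eliminate all remaining configurations is precisely the missing content.

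For comparison, the paper avoids the open equation altogether: it instantiates the first argument by $\rlaml{x}{x}$, takes the $\beta$-normal form $C'$ of $C \rapp \rlaml{x}{x}$ (so $C' \rapp M \rapp N \eqb N \rapp M$ for all closed $M,N$), and then chooses the closed probes $M := \llaml{x}{x} \rapp \llaml{y}{y}$ and $N := \llaml{x}{x} \rapp \llaml{y}{y} \rapp \llaml{z}{z}$, which are \emph{inert}: substituting them for a free variable of a $\beta$-normal term yields a $\beta$-normal term. A case analysis on whether $C'$ is a right abstraction (and whether its body is again one) then exhibits, in each case, a $\beta$-normal form of $C' \rapp M \rapp N$ in which $M$ stays to the left of $N$ --- this is where your order-preservation fact actually enters, applied to the rightmost-abstracted variable of the body of $C'$ --- and such a term cannot equal $N \rapp M$, contradicting confluence. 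If you replace your ``separation argument'' by this kind of instantiation with inert probes and a case analysis on the normal form of $\comcr$ applied to the identity, your argument goes through; as written, it does not.
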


\begin{proof}
Assume that there is some closed bi-planar lambda term $C$ in $\plalamb$ such that for any closed bi-planar term $M$, $N$ and $L$, $C \rapp M \rapp N \rapp L \eqb M \rapp L \rapp N$.
Let $C'$ be the $\beta$-normal form of $C \rapp \rlaml{x}{x}$.
$C' \rapp M \rapp N \eqb N \rapp M$ holds for any $M$ and $N$.
Take $M := \llaml{x}{x} \rapp \llaml{y}{y}$ and $N := \llaml{x}{x} \rapp \llaml{y}{y} \rapp \llaml{z}{z}$.
Note that for any $\beta$-normal term $P$ and a free variable $w$ of $P$, $P[M/w]$ and $P[N/w]$ are $\beta$-normal.
\begin{itemize}
\item When $C' \rapp M$ is $\beta$-normal, both $C' \rapp M \rapp N$ and $N \rapp M$ are $\beta$-normal.
However, obviously $C' \rapp M \rapp N \neqb N \rapp M$ and it contradicts to the confluence of the bi-planar lambda calculus.
\item When $C' = \rlaml{u}{C''}$ for some $C''$ and $u$, $C' \rapp M \rapp N \eqb C''[M/u] \rapp N$.
\begin{itemize}
\item When $C'' = \rlaml{v}{C'''}$ for some $C'''$ and $v$, $C' \rapp M \rapp N \eqb C'''[M/u][N/v]$.
Since $v$ is the rightmost free variable of $C'''$, $N$ is to the right of $M$ in $C'''[M/u][N/v]$.
Hence $C'''[M/u][N/v] \neqb N \rapp M$ and it contradicts to the confluence.
\item Otherwise, $C''[M/u] \rapp N$ is $\beta$-normal.
$C''[M/u] \rapp N \neqb N \rapp M$ and it contradicts to the confluence. \qedhere
\end{itemize}
\end{itemize}
\end{proof}

\subsection{The planar lambda calculus is not a bi-$\bdi$-algebra} \label{secsepa2}

Proofs of separations in the previous subsection are straightforward ones.
However, it is sometimes difficult to show that an applicative structure does not belong to certain class of applicative structures.
In this subsection, as an example, we will show that $\plalam$ of Example \ref{exampla} (the planar lambda calculus with no constant) is not a bi-$\bdi$-algebra.
Compared to propositions when constants exist (Proposition \ref{propsepa3} and \ref{propsepa4}), the proof is more tricky.

\begin{lem} \label{lemplai}
For any term $M$ of $\plalam$, there is a term $N$ of $\plalam$ such that $NM \eqb \lambda x.x$.
\end{lem}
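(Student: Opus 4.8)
The plan is to establish a stronger statement by induction and then specialize it. \emph{Claim:} for every planar term $t$, writing $y_1,\dots,y_l$ for its free variables in left-to-right order of occurrence, there are closed planar terms $B_1,\dots,B_l$ and closed planar terms $C_1,\dots,C_k$ (with $k\ge 0$) such that $t[B_1/y_1,\dots,B_l/y_l]\,C_1\cdots C_k \eqb \lambda x.x$. Since $\beta$-reduction preserves planarity, the set of free variables and the $\beta$-class, it suffices to prove the Claim for $\beta$-normal $t$, by structural induction; and a $\beta$-normal planar term is a variable, an abstraction $\lambda z.t'$, or of the form $h\,s_1\cdots s_p$ with $h$ a (free) variable and each $s_j$ $\beta$-normal, in which case linearity makes the free variables of $t$ the disjoint union of $\{h\}$ and the $\mathrm{FV}(s_j)$.

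First I would treat the three cases. If $t=y$ is a variable, take $B_1:=\lambda x.x$ and no $C_i$. If $t=\lambda z.t'$, the induction hypothesis applied to $t'$ (whose free variables are $y_1,\dots,y_l,z$, with $z$ rightmost by planarity) yields $B_1',\dots,B_l',B_z$ and $C_1',\dots,C_{k'}'$; then $B_1',\dots,B_l'$ together with the \emph{arguments} $B_z,C_1',\dots,C_{k'}'$ work for $t$, since $(\lambda z.\,t'[\vec B'/\vec y])\,B_z\,\vec C' \eqb t'[\vec B'/\vec y,B_z/z]\,\vec C' \eqb \lambda x.x$. If $t=h\,s_1\cdots s_p$, apply the induction hypothesis to each $s_j$ to get closed planar substituents $\vec B^{(j)}$ for $\mathrm{FV}(s_j)$ and closed planar arguments $C^{(j)}_1,\dots,C^{(j)}_{k_j}$ with $s_j[\vec B^{(j)}]\,C^{(j)}_1\cdots C^{(j)}_{k_j}\eqb\lambda x.x$; since $\plalam$ is a $\bikuro$-algebra (Example~\ref{exampla}), combinatory completeness provides a closed planar term
\[ B_h := \lamst w_1\cdots w_p.\ (w_1\,C^{(1)}_1\cdots C^{(1)}_{k_1})(w_2\,C^{(2)}_1\cdots C^{(2)}_{k_2})\cdots(w_p\,C^{(p)}_1\cdots C^{(p)}_{k_p}), \]
the abstraction being legitimate as each $w_j$ occurs exactly once and the $w_j$ occur in left-to-right order. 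Substituting $B_h$ for $h$ and $\vec B^{(j)}$ for $\mathrm{FV}(s_j)$ (with no further arguments), $t$ becomes $\beta$-equal to $B_h\,E_1\cdots E_p$ with $E_j:=s_j[\vec B^{(j)}]$, which by the defining property of $B_h$ and the induction hypothesis is $\beta$-equal to the $p$-fold left-nested application $(\lambda x.x)(\lambda x.x)\cdots(\lambda x.x)$, hence to $\lambda x.x$ (when $p=0$ one has $t=h$ and $B_h:=\lambda x.x$ suffices).

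Given the Claim, the lemma follows by applying it to $M$, which is closed, so $l=0$: there are closed planar terms $C_1,\dots,C_k$ with $M\,C_1\cdots C_k\eqb\lambda x.x$. By combinatory completeness for $\bikuro$-algebras, $N:=\lamst t.\ t\,C_1\cdots C_k$ is a closed planar term, and $N\,M\eqb M\,C_1\cdots C_k\eqb\lambda x.x$.

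I expect the application case to be the crux. The naive attempt — substituting $\lambda x.x$ for the head $h$ — fails, because the leftmost argument $s_1$ is then applied to the remaining $s_j$'s, which need not cooperate; the fix is to feed $h$ the tailor-made term $B_h$, which redistributes to each $s_j$ exactly the arguments the induction hypothesis requires and then chains the resulting copies of $\lambda x.x$. The rest is bookkeeping: verifying that $B_h$ and $N$ are genuinely planar, which is precisely what the combinatory-completeness construction for $\bikuro$-algebras delivers, and the preliminary reduction to $\beta$-normal form that licenses the clean three-way case split.
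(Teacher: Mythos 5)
Your proposal is correct. It takes a noticeably different route from the paper, though both hinge on the same gadget: feeding the head variable a tailor-made closed planar term that funnels each argument of the spine into something collapsing to $\lambda x.x$, and then chaining the resulting identities. You strengthen the statement to open terms (closed substituents for the free variables plus closed right-applied arguments) and run a structural induction on $\beta$-normal forms with a three-way case split; the annihilation of each argument $s_j$ is done from the right, via the arguments $C^{(j)}_i$ redistributed by $B_h$. The paper instead keeps the statement about closed terms only: it uses the shape $\lambda x y_1\dots y_m.x P_1\dots P_n$ of a closed planar normal form, substitutes the plain identity $\lambda z.z$ for each $y_i$ to close the arguments, invokes the lemma itself (by induction on the number of bound variables) to get left inverters $R_j$ with $R_j Q_j \eqb \lambda x.x$, and assembles $N' = \lambda w_1\dots w_n.(R_1 w_1)\dots(R_n w_n)$ and $N = \lambda u.\,u N'(\lambda z_1.z_1)\dots(\lambda z_m.z_m)$. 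Your version buys a cleaner induction (purely structural, no counting of bound variables and no need to check that the identity substitution keeps the measure from growing), at the cost of formulating and carrying a stronger claim about open terms; the paper's version stays entirely within $\plalam$ and uses the lemma statement itself as the induction hypothesis. One cosmetic remark: your $B_h$ and $N$ need not be produced via $\lamst$-abstraction in the $\bikuro$-algebra — they can be written directly as planar lambda terms, exactly as the paper does with $N'$ and $N$ — but the combinatory-completeness detour is harmless since the body has its sole variable occurring once and rightmost at each abstraction step.
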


\begin{proof}
Since planar lambda terms always have $\beta$-normal forms uniquely, we can assume $M$ is $\beta$-normal w.l.o.g.
We show this lemma by the induction on the number of bound variables of $M$.
When $BV(M)$ is a singleton, $M$ is $\lambda x.x$ and $N:=\lambda x.x$ satisfies $NM \eqb \lambda x.x$.

Assuming that the lemma holds till the number of bound variables of $M$ is $k$, we will show that the lemma holds for $M$ which contains $k+1$ bound variables.
Since $M$ is planar and $\beta$-normal, $M = \lambda x y_1 \dots  y_m .x P_1 \dots  P_n$ for 
some $\beta$-normal planar terms $P_1, \dots , P_n$. Here $y_1 ,\dots , y_m$ are all the free variables of $P_1 ,\dots , P_n$.
Let $Q_j$ be the term replacing all the $y_i$ in $P_j$ with $\lambda z.z$.
Each $Q_j$ is a closed planar term and has at most $k$ bound variables.
Hence, from the induction hypothesis, there exists some closed planar term $R_j$ such that $R_j Q_j \eqb \lambda x.x$.
Take $N' := \lambda w_1 \dots  w_n.(R_1 w_1)\dots (R_n w_n)$ and $N:= \lambda u.uN'(\lambda z_1.z_1)\dots (\lambda z_m.z_m)$.
Then $N'$ and $N$ are closed planar terms and 
\begin{eqnarray*}
NM &\eqb& MN'(\lambda z_1.z_1)\dots (\lambda z_m.z_m) \\
&=& (\lambda x y_1 \dots  y_m .x P_1 \dots  P_n)N'(\lambda z_1.z_1)\dots (\lambda z_m.z_m) \\
&\eqb& N'Q_1 \dots  Q_n \\
&=& (\lambda w_1 \dots  w_n.(R_1 w_1)\dots (R_n w_n))Q_1 \dots  Q_n \\
&\eqb& (R_1 Q_1)\dots (R_n Q_n) \\
&\eqb& (\lambda x.x)\dots  (\lambda x.x) \\
&\eqb& \lambda x.x.
\end{eqnarray*}
\end{proof}

\begin{lem} \label{lembcijanai}
$\plalam$ is not a $\bci$-algebra.
\end{lem}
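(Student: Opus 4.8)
The plan is to argue by contradiction. Suppose $\plalam$ were a $\bci$-algebra; then it contains a $\comc$-combinator, i.e.\ a closed planar term $C$ with $C M N L \eqb M L N$ for all closed planar $M,N,L$. Since planar terms are confluent and strongly normalizing, I may take $C$ to be $\beta$-normal, and substituting $M := \lambda x.x$ and renormalizing yields a closed $\beta$-normal planar term $C'$ with $C' N L \eqb L N$ for all closed planar $N,L$; putting $L := \lambda x.x$ it also satisfies $C' N (\lambda x.x) \eqb N$ for every closed planar $N$. It therefore suffices to rule out the existence of such a ``planar exchange combinator'' $C'$, and every step below works inside $\plalam$ using only uniqueness of $\beta$-normal forms.

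Next I would exploit the shape of closed $\beta$-normal planar terms: every such $C'$ has the form $\lambda z_1 z_2 \cdots z_p.\, z_1 Q_1 \cdots Q_q$, where planarity forces the head to be the first abstracted variable, each $Q_i$ is again $\beta$-normal planar with free variables among $z_2,\dots,z_p$ occurring once each and in that order, and linearity forces $q\ge 1$ whenever $p\ge 2$. I would then probe with the identity $\lambda v.v$ (and, where helpful, with the ``pseudo-identities'' $\mathbf I_m := \lambda w_1\cdots w_m.\, w_1 w_2 \cdots w_m$): feeding $\lambda v.v$ into $C' N (\lambda x.x) \eqb N$ and comparing the number of leading abstractions of the two $\beta$-normal forms forces $p\le 3$, and (using the $\mathbf I_m$) forces $q=p-2$, so that the only surviving case is $p=3$, $q=1$, i.e.\ $C' = \lambda z_1 z_2 z_3.\, z_1 Q_1$ with $Q_1$ a $\beta$-normal planar term whose only free variables are $z_2$ and $z_3$ (once each, in that order), subject to $Q_1[(\lambda x.x)/z_2] \eqb z_3$.

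A short case analysis on the head of $Q_1$ — using planarity (a free variable can only lie to the left of a bound one) and linearity (nothing can be discarded) — then squeezes $Q_1$ into the form $z_2 (\lambda x.x)\cdots(\lambda x.x)\, z_3$; this is where Lemma~\ref{lemplai} is needed, to show that the closed subterms that planarity forces into $Q_1$ cannot be anything other than iterated identities, since otherwise one cancels them with the left inverses provided by the lemma and contradicts the normal-form constraint just obtained. Finally, for every $C'$ of this remaining shape (in particular $C' = \comb$ when there are no interior identities) I would refute $C' N L \eqb L N$ directly by taking $N := \lambda f.\, fW$ and $L := \lambda g.\, gV$ with $V := \lambda a.a$: a one-line reduction sends the left-hand side to $\lambda z.\, z W$ and the right-hand side to $W$, and choosing $W := \lambda b.b$ exhibits two distinct $\beta$-normal forms, contradicting confluence.

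I expect the main obstacle to be that the planar calculus is not extensional on closed inputs: because every closed $\beta$-normal planar term is an abstraction, $\eta$-expanded variants such as $\lambda xy.\,xy$ and $\lambda x.x$ (or $\comb$ and $\lambda x.x$ as partial permutations) are applicatively indistinguishable on closed planar arguments, so $C'$ cannot be pinned to a single normal form up front. One must instead thread the argument through an unbounded family of admissible shapes, using the leading-abstraction count together with Lemma~\ref{lemplai} to collapse the closed residuals that planarity keeps introducing; making that bookkeeping finite and airtight — rather than the final refutation, which is a routine computation — is the delicate part.
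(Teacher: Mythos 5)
Your overall strategy (bound the shape of a would-be exchange combinator by probing with identities, then refute the surviving shape by a direct computation) is genuinely different from the paper's proof, which never analyses shapes: the paper exploits $TM \eqb TMT \eqb TMTT \eqb \cdots$ to saturate the extra abstractions of $T$'s normal form with copies of $T$ itself, reducing to a term $U = \lambda x.\,x Q_1 \cdots Q_n$ with \emph{closed} $Q_i$, and then uses Lemma~\ref{lemplai} to build $M_0$ with $NM_0 \eqb N$ for all $N$, forcing $M_0 = \lambda x.x$ and refuting with $N = \lambda y.\,y(\lambda z.z)$. Your final probe is sound (for $C' = \lambda z_1z_2z_3.\,z_1Q_1$ one indeed gets $\lambda z.\,z(\lambda b.b)$ versus $\lambda b.b$), and your count giving $p \le 3$ is fine. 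But two steps are genuinely problematic. First, the reduction ``$\mathbf{I}_m$-probes force $q = p-2$, so only $p=3,\ q=1$ survives'' is exactly where the difficulty of the lemma lives, and you give no argument for it: the case $p=1$, i.e.\ $C' = \lambda z.\,zQ_1\cdots Q_q$ with closed $Q_i$, is precisely the shape the paper's whole proof is about, and it is not dismissed by any single normal-form count (one has to compare the $\beta$-normal forms of $\mathbf{I}_m Q_1\cdots Q_q(\lambda x.x)$ with $\mathbf{I}_m$ for varying $m$ and track heads and leading abstractions carefully; similarly for $p=3$ with $q\ge 2$, where a single $m$ does not suffice). As written this is an assertion, not a proof, and it carries most of the weight of the lemma.

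Second, the ``squeeze'' of $Q_1$ into $z_2(\lambda x.x)\cdots(\lambda x.x)z_3$ is false, and the appeal to Lemma~\ref{lemplai} there does not work. The constraint you extract from the probes is $Q_1[(\lambda x.x)/z_2] \eqb z_3$, and this does not force the interior closed subterms to be identities: for example $Q_1 := z_2\,(\lambda u.\,u(\lambda b.b))\,(\lambda v.v)\,z_3$ is a $\beta$-normal planar term with free variables $z_2,z_3$ in order, it satisfies $(\lambda x.x)(\lambda u.\,u(\lambda b.b))(\lambda v.v)z_3 \eqb z_3$, and the corresponding $C'$ also passes all your $\mathbf{I}_m$-probes, yet its interior subterms are not identities. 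Lemma~\ref{lemplai} only supplies, for each closed planar $M$, some $N$ with $NM \eqb \lambda x.x$; it cannot be used to ``cancel'' subterms sitting in argument position inside $Q_1$, and the rigidity you want from it simply fails. The step is repairable --- in fact your final probe already kills \emph{every} $C' = \lambda z_1z_2z_3.\,z_1Q_1$, since surviving it would require $(Q_1[L/z_2])(\lambda b.b) \eqb z_3$, which is impossible for a planar term with a single free occurrence of $z_3$ (its normal form is either an abstraction or has head $z_3$ applied to at least one argument) --- but as stated your middle step is wrong, and together with the unproven shape reduction the proposal does not yet constitute a proof. Note also that once repaired along these lines, Lemma~\ref{lemplai} plays no role in your route, whereas it is essential to the paper's.
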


\begin{proof}
Assume that there is a term $T$ in $\plalam$ such that $TMN \eqb NM$ for any $M$ and $N$ in $\plalam$.
(Note that a total applicative structure containing $\comb$ and $\comi$ is a $\bci$-algebra iff it has $\comt$ such that $\comt x y = y x$.
Indeed, $\comb (\comb (\comt ( \comb \comb \comt))\comb)\comt$
satisfies the axiom of the $\comc$-combinator.)
Take a term $\lambda x y_1 \dots  y_m.x P_1 \dots  P_n$ as the $\beta$-normal form of $T$.
If $n=0$, $T=\lambda x.x$ and it immediately leads contradiction.
Thus $n \geq 1$.

Since $T MN \eqb NM$ for any $M$ and $N$,
\begin{center}
$TM  \eqb TMT \eqb TMTT \eqb TMTTT \eqb \dots $.
\end{center}
Let $Q_j$ $(j=1,\dots ,n)$ be the terms replacing all the $y_i$ in $P_j$ with $T$.
Each $Q_j$ is a closed planar term.
Let $U := \lambda x.x Q_1 \dots  Q_n$.
\begin{eqnarray*}
UM &=& (\lambda x.x Q_1 \dots  Q_n)M \\
&\eqb& M Q_1 \dots  Q_n \\
&=& (M P_1 \dots  P_n)[T/y_1]\dots [T/y_m] \\
&\eqb& (\lambda x y_1 \dots  y_m.x P_1 \dots  P_n) M T \dots  T \\
&=& TMT\dots T \\
&\eqb& TM.
\end{eqnarray*}
Thus $UMN \eqb (TM)N \eqb NM$ holds for any $M$ and $N$.

From Lemma \ref{lemplai}, there exist closed terms $R_j$ $(j=1,\dots ,n)$ such that $R_j Q_j \eqb \lambda z.z$.
Take $M_0 := \lambda w_1 \dots  w_n.(R_1 w_1)\dots (R_n w_n)$.
Then for any closed planar term $N$,
\begin{eqnarray*}
NM_0 &\eqb& UM_0 N \\
&=& (\lambda x.x Q_1 \dots  Q_n)M_0 N \\
&\eqb& M_0 Q_1 \dots  Q_n N \\
&=& (\lambda w_1 \dots  w_n.(R_1 w_1)\dots (R_n w_n))Q_1 \dots  Q_n N \\
&\eqb& (R_1 Q_1) \dots  (R_n Q_n) N \\
&\eqb& (\lambda z.z)\dots (\lambda z.z) N \\
&\eqb& N.
\end{eqnarray*}

Taking $N_0 := \lambda x.x$ in $N_0 M_0 \eqb N_0$, we get $M_0 = \lambda x.x$.
Therefore, $N (\lambda x.x) \eqb N$ holds for any closed planar term $N$.
However, $N:= \lambda y.y(\lambda z.z)$ is the counterexample of this equation and it leads contradiction.
\end{proof}

\begin{prop} \label{propsepa2}
$\plalam$ is a  not a bi-$\bdi$-algebra.
\end{prop}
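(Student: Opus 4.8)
The plan is to leverage the result just proven, Lemma~\ref{lembcijanai}, that $\plalam$ is not a $\bci$-algebra, together with the structural fact that every bi-$\bdi$-algebra contains an internal copy of itself as a $\bci$-algebra as soon as it has an element realizing a suitable exchange. The key observation is the contrapositive of property~(\ref{property4}) in the proposition preceding Proposition~\ref{propbiclo}: if $\hka = (\uhka,\rapp,\lapp)$ is a bi-$\bdi$-algebra and one can show the associated applicative structure $\hka' = (\uhka,\rapp')$ with $x \rapp' y := y \lapp x$ is a $\bci$-algebra, then $\hka$ itself is a $\bci$-algebra. So the strategy is: assume for contradiction that $\plalam$ carries a bi-$\bdi$-algebra structure, and derive that it is a $\bci$-algebra, contradicting Lemma~\ref{lembcijanai}.

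**First I would** suppose $\plalam = (\uhka,\rapp)$ is a bi-$\bdi$-algebra, where $\rapp$ is the application of the planar lambda calculus (i.e.\ $M \rapp N$ is just $MN$). Let $\lapp$ be the left application. The heart of the argument is to show that in this situation $\lapp$ must already be definable from $\rapp$ in a way that forces an exchange combinator. A natural route: consider the element $\dagl{\comi}$, which satisfies $x \lapp \dagl{\comi} = \comi \rapp x = x$ for all $x$, so $\dagl{\comi}$ is a ``left identity''; and more generally $\dagl{M}$ satisfies $x \lapp \dagl{M} = M \rapp x = Mx$. Since $\dagl{M}$ is some closed planar term for each closed planar term $M$, the operation $M \mapsto \dagl{M}$ is a map on $\plalam$. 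The plan is to pin down what $\dagl{M}$ can be: because $x \lapp \dagl{M} = Mx$ depends on $M$ only through the function $x \mapsto Mx$, and by $\eta$ this function determines $M$, one expects $\dagl{(-)}$ to be ``almost'' the identity. Combined with the combinatory completeness for bi-$\bdi$-algebras (which builds $\comc$-like behaviour out of $\combr$, $\comdr$, $\comir$ and the $\dagr{(-)}$, $\dagl{(-)}$ operations once these are sufficiently tame), this should let me manufacture a term $T$ with $TMN \eqb NM$, which Lemma~\ref{lembcijanai} forbids.

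**The main obstacle** I expect is controlling the left application $\lapp$ as an operation on closed \emph{planar} terms without extra constants: a priori $\lapp$ need not be given by substitution into a fixed term, so I cannot immediately treat $x \lapp y$ as an ordinary planar term in the variables $x,y$. The way around this is to use the defining equations of the eight combinators of Definition~\ref{defbdi} to express enough instances of $\lapp$ via $\rapp$. Concretely, from $(\dagr{a}) \rapp x = x \lapp a$ we get that $x \lapp a = (\dagr{a})x$ for any fixed $a$, so $\lapp$ applied with a fixed right-hand argument \emph{is} planar-term application; iterating with $\comdr$, $\comdl$, $\comil$, $\comir$ lets us rewrite any closed polynomial built from $\rapp,\lapp$ and closed planar terms into a closed planar term. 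Then the combinatory-completeness construction for bi-$\bdi$-algebras yields a genuine closed planar term realizing $\rlam{x}{\rlam{y}{\rlam{z}{x\rapp z\rapp y}}}$ (the $\comc$-behaviour, which the parenthetical remark in the proof of Lemma~\ref{lembcijanai} shows suffices to get $\comc$ together with $\comb,\comi$), contradicting Lemma~\ref{lembcijanai}.

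**To summarize the steps in order:** (1) assume $\plalam$ is a bi-$\bdi$-algebra with left application $\lapp$; (2) using the combinator equations, show every closed polynomial in $\rapp,\lapp$ over closed planar terms reduces to a closed planar term, so in particular the combinatory completeness for bi-$\bdi$-algebras produces \emph{actual} closed planar terms, not just formal polynomials; (3) apply combinatory completeness to build a closed planar term behaving as $\comc$, equivalently a term $T$ with $T\rapp M\rapp N \eqb N\rapp M$, equivalently (by the remark in Lemma~\ref{lembcijanai}'s proof) making $\plalam$ a $\bci$-algebra; (4) invoke Lemma~\ref{lembcijanai} for the contradiction. Alternatively, steps (2)--(3) can be replaced by a direct appeal to property~(\ref{property4}): show the existence of $\lapp$ makes $\hka' = (\uhka, \rapp')$ with $x\rapp' y := y\lapp x$ into a $\bci$-algebra (since $\combl,\comil$ are given and a $\comcl$ can be built exactly as in the proof of property~(\ref{property4}) once we know $\hka$ has $\comc$-behaviour), then by property~(\ref{property4}), $\hka = \plalam$ is a $\bci$-algebra --- again contradicting Lemma~\ref{lembcijanai}.
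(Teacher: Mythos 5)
Your overall strategy --- assume $\plalam$ carries a bi-$\bdi$-algebra structure, derive that it is then a $\bci$-algebra, and contradict Lemma~\ref{lembcijanai} --- is exactly the paper's, and your steps (1) and (4) are fine. The gap is in steps (2)--(3). Combinatory completeness for bi-$\bdi$-algebras only permits abstraction of the \emph{rightmost} free variable by $\rlam{x}{-}$ (and of the leftmost by $\llam{x}{-}$); it cannot produce a term for $\rlam{z}{x \rapp z \rapp y}$, since $z$ is not the rightmost free variable of $x \rapp z \rapp y$, and hence it cannot manufacture $\comc$-behaviour. If it could, \emph{every} bi-$\bdi$-algebra would be a $\bci$-algebra, contradicting Proposition~\ref{propsepa1} ($\plalamb$ is a bi-$\bdi$-algebra that is not a $\bci$-algebra). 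Your alternative route through property~(\ref{property4}) is circular for the same reason: to invoke it you must first show that $(\uhka,\rapp')$ is a $\bci$-algebra, which is the very exchange problem you are trying to solve. Finally, the observation that ``$x \lapp \dagl{M} = Mx$ determines $M$ up to $\eta$'' only gives injectivity of $M \mapsto \dagl{M}$ on extensional behaviour; what the argument actually needs is that $\dagl{M}$ is \emph{uniformly computable} from $M$ by a single planar context, and that is precisely the hard part you have left unaddressed.

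The paper closes this gap with two ingredients specific to the pure planar calculus without constants. First, Lemma~\ref{lemplai}: every closed planar term $M$ admits a closed planar left inverse $N$ with $NM \eqb \lambda x.x$. Second, a normal-form analysis of the closed planar term $L$ representing $\rlam{x}{\rlam{y}{x \lapp y}}$: writing its normal form as $\lambda x y_1\dots y_m.\,xP_1\dots P_n$, substituting $\dagl{L}$ for the $y_i$, and applying Lemma~\ref{lemplai} to the resulting closed terms, one obtains a term $M_1$ with $M_1 \lapp N \eqb N$ for all $N$, whence $M_1 = \comil$ and $\comil \lapp N \eqb N$. Only then does the uniform formula $\dagl{N} \eqb N \rapp \comil$ become available, and only then does $T := \rlam{x}{\rlam{y}{x \lapp (y \rapp \comil)}}$ satisfy $T \rapp M \rapp N \eqb M \lapp \dagl{N} \eqb N \rapp M$, which is the exchange behaviour that contradicts Lemma~\ref{lembcijanai}. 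Without this $\plalam$-specific analysis (or some substitute for it), your argument does not go through.
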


\begin{proof}
Assume that $\plalam$ is a bi-$\bdi$-algebra.
That is, taking $\rapp$ as the application canonically obtained by the application of planar lambda terms, assume that there is some binary operation $\lapp$ such that $(|\plalam|,\rapp,\lapp)$ becomes a bi-$\bdi$-algebra.
This $\lapp$ is the binary operation not on planar lambda terms, but on $\beta$-equivalence classes  of planar lambda terms.
However, in the sequel, we denote a lambda term $M$ indistinguishably to the equivalence class containing $M$.
For instance, for planar lambda terms $M_1$ and $M_2$, $M_1 \lapp M_2$ denotes some representation of $\overline{M_1} \lapp \overline{M_2}$, where $\overline{M_i}$ is the $\beta$-equivalence class containing $M_i$.

By the combinatory completeness for bi-$\bdi$-algebras, there is a closed planar term
$L$ representing $\rlam{x}{\rlam{y}{x \lapp y}}$.
Take a term $\lambda x y_1 \dots  y_m.x P_1 \dots  P_n$ as the $\beta$-normal form of $L$.
For a term $T$ representing $\rlam{x}{\rlam{y}{x \lapp (y \rapp \comir)}}$, dividing to the cases of $n=0$ or not, we will show that $T$ makes $\plalam$ a $\bci$-algebra and leads contradiction to Lemma \ref{lembcijanai}.

If $n=0$, $L=\lambda x.x$ and $M \rapp N \eqb (L \rapp M) \rapp N \eqb M \lapp N$ holds for any $M$ and $N$ in $\plalam$.
Given arbitrary term $N_0$ in $\plalam$, take $M := \comir$ and $N:= \dagl{N_0}$ in $M \rapp N \eqb M \lapp N$.
Then we get 
$\dagl{N_0} \eqb
N_0 \rapp \comir$.
For arbitrary $M_0$ and $N_0$ in $\plalam$,
\begin{eqnarray*}
T \rapp M_0 \rapp N_0 &=& (\rlam{x}{\rlam{y}{x \lapp (y \rapp \comir)}}) \rapp M_0 \rapp N_0 \\
&\eqb& M_0 \lapp (N_0 \rapp \comir) \\
&\eqb& M_0 \lapp \dagl{N_0} \\
&\eqb& N_0 \rapp M_0 
\end{eqnarray*}
holds.
Hence, $T$ makes $\plalam$ a $\bci$-algebra and contradicts to Lemma \ref{lembcijanai}.

Next is the case of $n \geq 1$.
Since $L \rapp M \rapp N \eqb M \lapp N$ for any $M$ and $N$,
\[ L \rapp M \eqb L \rapp M \rapp (\dagl{L}) \eqb L \rapp M \rapp (\dagl{L}) \rapp (\dagl{L}) \eqb L \rapp M \rapp (\dagl{L}) \rapp (\dagl{L}) \rapp (\dagl{L}) \eqb\dots. \]
Let $Q_j$ be the term replacing all the $y_i$ in $P_j$ with $\dagl{L}$.
Each $Q_j$ is a closed planar term.
Let \\ $V:= \lambda x.x Q_1 \dots  Q_n$.
\begin{eqnarray*}
VM &=& (\lambda x.x Q_1 \dots  Q_n)M \\
&\eqb& M Q_1 \dots  Q_n \\
&=& (M P_1 \dots  P_n)[\dagl{L}/y_1]\dots [\dagl{L}/y_m] \\
&\eqb& (\lambda x y_1 \dots  y_m.x P_1 \dots  P_n)M(\dagl{L})\dots (\dagl{L}) \\
&=& LM(\dagl{L})\dots (\dagl{L}) \\
&\eqb& LM.
\end{eqnarray*}
Thus $VMN \eqb LMN \eqb M \lapp N$ holds for any $M$ and $N$.

From Lemma \ref{lemplai}, there exists closed term $R_j$ $(j=1,\dots ,n)$ such that $R_j Q_j \eqb \lambda z.z$.
Take $M_1 := \lambda w_1 \dots  w_n.(R_1 w_1)\dots (R_n w_n)$.
Then for any closed planar term $N$,
\begin{eqnarray*}
M_1 \lapp N &\eqb& LM_1 N \\
&=& (\lambda x y_1 \dots  y_m.x P_1 \dots  P_n)M_1 N \\
&\eqb& M_1 Q_1 \dots  Q_n N \\
&=& (\lambda w_1 \dots  w_n.(R_1 w_1)\dots (R_n w_n))Q_1 \dots  Q_n N \\
&\eqb& (R_1 Q_1) \dots  (R_n Q_n) N \\
&\eqb& (\lambda z.z)\dots (\lambda z.z) N \\
&\eqb& N.
\end{eqnarray*}

Taking $N := \comil$ in $M_1 \lapp N \eqb N$, we get $M_1 = \comil$.
Therefore, $\comil \lapp N_1 \eqb N_1$ holds for any closed planar term $N_1$.
Given arbitrary $N_2$ in $\plalam$, with $N_1:= \dagl{N_2}$, we get 
\begin{eqnarray*}
\dagl{N_2} &\eqb& \comil \lapp \dagl{N_2} \\
&\eqb& N_2 \rapp \comil.
\end{eqnarray*}
For arbitrary $M_2$ and $N_2$ in $\plalam$,
\begin{eqnarray*}
T \rapp M_2 \rapp  N_2 &=& \rlam{x}{\rlam{y}{x \lapp (y \rapp \comil)}} \rapp M_2 \rapp N_2 \\
&\eqb& M_2 \lapp (N_2 \rapp \comil) \\
&\eqb& M_2 \lapp \dagl{N_2} \\
&\eqb& N_2 \rapp  M_2
\end{eqnarray*}
holds. Hence, $T$ makes $\plalam$ a $\bci$-algebra and  contradicts to Lemma \ref{lembcijanai}.
\end{proof}

We have already seen in Proposition \ref{propsepa4} that $\plalamcd$ (the planar lambda calculus with constants) is not a $\biilp$-algebra.
However, whether $\plalam$ is a $\biilp$-algebra is still open.

\subsection{The computational lambda calculus} \label{seccom}

Next we consider the computational lambda calculus as an applicative structure that gives rise to non-symmetric structures.
The computational lambda calculus is a variant of the lambda calculus whose evaluation rules are sound for programs with computational effects \cite{comlambda}. The following axiomatization is from \cite{sabry}.

\begin{defi} \label{examcom}
Suppose infinite supply of variables $x,y,z,\dots $.
{\it Values}, {\it terms} and {\it evaluation contexts} are defined as follows:
\begin{itemize}
\item {(values)} \; $V ::= x \; | \; \lambda x.M$
\item {(terms)} \; $M ::= V \; | \; MM'$
\item {(evaluation contexts)} \; $E[\haih] ::= [\haih] \; | \; EM \; | \; VE$
\end{itemize}
(Terms are the same ones of the ordinary lambda calculus in Example \ref{examlam}.)

An equivalence relation $=_c$ on terms is defined as the congruence of the following equations:
\begin{itemize}
\item ($\beta_V$) \; $(\lambda x.M)V =_c M[V/x]$
\item ($\eta_V$) \; $\lambda x.Vx =_c V$
\item ($\beta_{\Omega}$) \; $(\lambda x.E[x])M =_c E[M]$
\end{itemize}
Here $E[M]$ denotes the term obtained by substituting $M$ for $[\haih]$ in $E[\haih]$.

The {\it (untyped) computational lambda calculus} is the lambda calculus formed by terms and $=_c$.
\end{defi}

In \cite{tomita1}, we showed that the computational lambda calculus is a $\biikuro$-algebra but not a $\bci$-algebra.
We can get a $\biikuro$-algebra $\comlam$, whose underlying set is equivalence classes of lambda terms modulo $=_c$. (Note that terms of $\comlam$ are not restricted to closed terms.)
Here $\lambda xyz.x(yz)$, $\lambda x.x$, $\lambda xy.yx$ and $\lambda x.xM$ are representatives of $\comb$, $\comi$, $\batui$ and $\kuro{M}$ respectively.

Although the computational lambda calculus has all terms of the lambda calculus, $\comlam$ is not a PCA nor a $\bci$-algebra.
This is reasonable considering that programs with effects cannot be discarded, duplicated nor exchanged in general, and thus $\comlam$ cannot have the $\comsa$/$\comk$/$\comc$-combinator.
Moreover, we can prove the next proposition.

\begin{prop} \label{propsepa6}
$\comlam$ is not a bi-$\bdi$-algebra.
\end{prop}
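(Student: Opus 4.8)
The plan is to assume that $\comlam$ carries a bi-$\bdi$-algebra structure and to derive a contradiction with the fact, recalled above from \cite{tomita1}, that $\comlam$ is not a $\bci$-algebra. So suppose $(|\comlam|,\rapp,\lapp)$ is a bi-$\bdi$-algebra, where $\rapp$ is the application of the computational lambda calculus (application of lambda terms modulo $=_c$) and $\lapp$ is some binary operation on $=_c$-classes. As in the proof of Proposition \ref{propsepa2}, the aim is to manufacture a $\comc$-combinator; and since $\comb$ and $\comi$ are already present in $\comlam$, it suffices to produce an element $\comt$ with $\comt x y = y x$, which yields a $\comc$-combinator by the remark in the proof of Lemma \ref{lembcijanai}.

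First I would eliminate the abstract operation $\lapp$ in favour of $\rapp$: from the axiom $\dagr{a}\rapp x = x \lapp a$ we get $x \lapp a =_c \dagr{a}\rapp x$ for all $x,a$, so $\lapp$ is completely determined by $\rapp$ together with the unary operation $a \mapsto \dagr{a}$, and every bi-$\bdi$-axiom translates into a family of $=_c$-equations among genuine lambda terms. In particular $\combr$ and $\comir$ act exactly as $\comb$ and $\comi$, the element $\comil$ satisfies $\dagr{\comil} =_c \comi$, and $\dagr{\dagl{a}}\rapp x =_c a\rapp x$ for all $x$. Then, using the combinatory completeness for bi-$\bdi$-algebras, I would build a candidate term $T := \rlam{x}{\rlam{y}{\cdots}}$, assembled from $\combr$, $\combl$, $\comdr$, $\comdl$, $\comir$, $\comil$ and the daggers in the same spirit as the term $\rlam{x}{\rlam{y}{x \lapp (y \rapp \comir)}}$ used in Proposition \ref{propsepa2}, with intended behaviour $T \rapp M \rapp N =_c N \rapp M$ for all terms $M,N$.

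The main obstacle is verifying that $T$ really behaves this way. Unlike the planar case, the elements $\comil,\comir,\combl,\comdl,\comdr$ and the $\dagr{a},\dagl{a}$ are not prescribed lambda terms, so one must first show that the bi-$\bdi$-axioms together with the equational theory $=_c$ force them to be $=_c$-equal to concrete terms --- or at least to satisfy enough concrete equations to close the computation. I expect this to need a normal-form / confluence analysis in the style of Propositions \ref{propsepa2} and \ref{propsepa3}, but carried out over \emph{all} lambda terms modulo $=_c$ rather than over $\beta$-normal planar terms; the equations $\beta_V$, $\eta_V$ and especially $\beta_\Omega$ would be applied repeatedly to normalize applications of these combinators to fresh free variables, exploiting that $\comlam$ contains terms with free variables that act as ``generic'' arguments which cannot be reordered. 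Once the combinators are pinned down, checking $T \rapp M \rapp N =_c N \rapp M$ is a routine computation, and the resulting $\comc$-combinator contradicts the fact that $\comlam$ is not a $\bci$-algebra.
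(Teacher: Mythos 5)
Your overall strategy---reduce to the known fact that $\comlam$ is not a $\bci$-algebra by manufacturing a $\comt$ with $\comt x y =_c y x$---has a genuine gap at exactly the point you flag as ``the main obstacle,'' and the tools you propose to close it do not transfer to $\comlam$. In a general bi-$\bdi$-algebra no $\comc$/$\comt$ is derivable (that is the content of Proposition \ref{propsepa1}), so any such derivation must exploit specific syntactic facts about $\comlam$. In the planar case (Proposition \ref{propsepa2}) the facts used were Lemma \ref{lemplai} (every closed planar term has a left inverse yielding $\lambda x.x$) and the rigid shape $\lambda x y_1 \dots y_m.x P_1 \dots P_n$ of closed planar $\beta$-normal forms; these were what pinned down the hypothetical element $\rlam{x}{\rlam{y}{x \lapp y}}$ and forced $\dagl{N} \eqb N \rapp \comir$ (resp.\ $N\rapp\comil$), after which $T$ works. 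Neither fact has an analogue in $\comlam$: its carrier consists of \emph{all} lambda terms (including open ones) modulo $=_c$, and an effectful term such as $uu$ admits no $N$ with $N(uu) =_c \lambda z.z$, since effects cannot be discarded under $=_c$. So the ``normal-form / confluence analysis in the style of Propositions \ref{propsepa2} and \ref{propsepa3}'' that you hope will pin the unknown combinators down to concrete terms has no starting point, and the subsequent ``routine computation'' never gets off the ground. It is also far from clear that the implication ``bi-$\bdi$ structure on $\comlam$ implies $\bci$ structure on $\comlam$'' holds at all, so even the target of your reduction is unsupported.

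The paper's proof avoids this entirely by attacking the bi-$\bdi$ axioms directly with the CPS translation, which is sound and complete for $=_c$ and makes evaluation order visible. Assuming the structure exists, one has $L \rapp M_1 \rapp \dagl{M_2} =_c M_2 M_1$ for $L$ representing $\rlam{x}{\rlam{y}{x \lapp y}}$; choosing $M_2 := vv$ and then $M_1 := uu$ with $u$ fresh for $L$ and $\dagl{M_2}$, the translation $\cpst{M_2 M_1}$ has $vv$ receiving a continuation containing $uu$, whereas in $\cpst{L M_1 (\dagl{M_2})}$ this is impossible by freshness of $u$, so the two cannot be $\beta\eta$-equal---a contradiction. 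If you want to salvage your approach, you would need a concrete device playing the role the CPS translation plays here (a way to reason about arbitrary $=_c$-classes standing for the unknown combinators); as written, the proposal assumes rather than provides it.
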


To prove this proposition, we use the {\it CPS-translation} \cite{sabrycps}.
The CPS-translation $\cpst{\haih}$ sends terms of the computational lambda terms to terms of the ordinary lambda calculus and is defined inductively as follows.

\begin{itemize}
\item $\cpst{x} := \lambda k.kx$
\item $\cpst{\lambda x.M} := \lambda k.k(\lambda x.\cpst{M})$
\item $\cpst{MN} := \lambda k.\cpst{M} (\lambda f. \cpst{N} (\lambda x.fxk))$
\end{itemize}

For any term $M$ and $N$, $M =_c N$ holds in the computational lambda calculus iff $\cpst{M} \eqbe \cpst{N}$ holds in the ordinary lambda calculus.

\begin{proof}[Proof of Proposition \ref{propsepa6}]
We will lead a contradiction by assuming $\comlam$ is a bi-$\bdi$-algebra.
If $\comlam$ is a bi-$\bdi$-algebra, we have a term $L$ representing $\rlam{x}{\rlam{y}{x \lapp y}}$ and a term $\dagl{M}$ representing $\llam{x}{M \rapp x}$ for each term $M$.
For any terms $M_1$ and $M_2$, $L M_1 (\dagl{M_2}) =_c M_2 M_1$ holds, and thus $\cpst{L M_1 (\dagl{M_2})} \eqbe \cpst{M_2 M_1}$ holds.
Now we take a fresh variables $v$ and let $M_2 := vv$.
Additionally we take a fresh variable (fresh for $L$, $M_2$ and $\dagl{M_2}$) $u$ and let $M_1 := uu$.
Then

\begin{eqnarray*}
\cpst{L M_1 (\dagl{M_2})} &=& \lambda k. (\lambda k'.\cpst{L} (\lambda f'.\cpst{M_1}(\lambda x'.f'x'k'))) (\lambda f.\cpst{\dagl{M_2}}(\lambda x.fxk)) \\
&\eqbe& \lambda k. \cpst{L} (\lambda f'.\cpst{M_1} (\lambda x' .f'x'(\lambda f.\cpst{\dagl{M_2}}(\lambda x.fxk)))) \\
&\eqbe& \lambda k. \cpst{L} (\lambda f'.uu (\lambda x' .f'x'(\lambda f.\cpst{\dagl{M_2}}(\lambda x.fxk)))), \\
\cpst{M_2 M_1} &\eqbe& \lambda k.vv(\lambda f.uu(\lambda x.fxk)).
\end{eqnarray*}
In $\cpst{M_2 M_1}$, $vv$ receives the argument of the form $(\dots  uu \dots )$.
However, since $u$ and $v$ are fresh, no matter what $\cpst{L}$ is, in $\cpst{L M_1 (\dagl{M_2})}$, $vv$ cannot receive arguments containing $uu$.
Hence these terms $\cpst{L M_1 (\dagl{M_2})}$ and $\cpst{M_2 M_1}$ cannot be $\beta \eta$-equal.
It leads a contradiction to the soundness of the CPS-translation.
\end{proof}

Semantically, the untyped ordinary/linear/planar lambda calculus is modeled by a reflexive object of a CCC/SMCC/closed multicategory.
And it is related to the categorical structures of assemblies on each lambda calculus.
On the other hand, the untyped computational lambda calculus is modeled by a reflexive object of a Kleisli category.
Since the categorical structure of a Kleisli category is not monoidal in general but premonoidal (See \cite{power}), it is expected that the category of assemblies on the untyped computational lambda calculus is not a monoidal category.
Thus the computational lambda calculus $\comlam$ is expected not to be a $\biilp$-algebra inducing monoidal closed category, however, we have not proven this conjecture yet.

Here, we give an intuitive explanation for the conjecture.
Assume that $\coml$ and $\comp$ exist in the computational lambda calculus.
Take three non-values $M_1$, $M_2$ and $M_3$.
Suppose these terms are reduced to values: $\coml \migi v_L$; $\comp \migi v_P$; $M_i \migi v_i$.
In $M_1 M_2 M_3$, the evaluation proceeds as follows:
\begin{tabbing}
	\hspace{150pt} \= \ \ \ \ \ \ \= $M_1$ is reduced to $v_1$ \\
	\> $\rightsquigarrow$ \> $M_2$ is reduced to $v_2$ \\
	\> $\rightsquigarrow$ \> $v_1 v_2$ is reduced \\
	\> $\rightsquigarrow$ \> $M_3$ is reduced to $v_3$ \\
	\> $\rightsquigarrow$ \> \dots 
\end{tabbing}
On the other hand, in $\coml M_1 (\comp M_2 M_3)$, the evaluation proceeds as follows:
\begin{tabbing}
	\hspace{150pt} \= \ \ \ \ \ \ \= $\coml$ is reduced to $v_L$ \\
	\> $\rightsquigarrow$ \> $M_1$ is reduced to $v_1$ \\
	\> $\rightsquigarrow$ \> $v_L v_1$ is reduced \\
	\> $\rightsquigarrow$ \> $\comp$ is reduced to $v_P$ \\
	\> $\rightsquigarrow$ \> $M_2$ is reduced to $v_2$ \\
	\> $\rightsquigarrow$ \> $v_P v_2$ is reduced \\
	\> $\rightsquigarrow$ \> $M_3$ is reduced to $v_3$ \\
	\> $\rightsquigarrow$ \> \dots 
\end{tabbing}
These two computations seem not to coincide, since the order of the evaluations of $v_1 v_2$ and $M_3$ is reversed.

\section{Necessary conditions for inducing closed structures} \label{secnec}

We have seen that applicative structures of certain classes induce the corresponding categorical structures, in Proposition \ref{propccc} (CCCs), Proposition \ref{propsmcc} (SMCCs), Proposition \ref{propmul} (closed multicategories), Proposition \ref{propclo} (closed categories), Proposition \ref{propmonoclo} (monoidal closed categories) and Proposition \ref{propbiclo} (monoidal bi-closed categories).
In this section, we show the certain ``inverses'' of these propositions hold.

\begin{prop} \label{propnecccc}
Suppose $\hka$ is a total applicative structure and $\catc := \asmca$ happens to be a CCC.
$\hka$ is an $\sk$-algebra if the following conditions hold.
\begin{enumerate}[(i)]
\item $|Y^X| = \homrm_{\catc} (X,Y)$ and $\rlz{f}_{Y^X} = \{ r \mid \mbox{$r$ realizes $f$} \}$. \label{conhom1}
\item For $f:X' \migi X$ and $g:Y\migi Y'$, $g^f : Y^X \migi Y'^{X'}$ is the function sending $h:X\migi Y$ to $g \circ h \circ f$. \label{conhom2}
\item The forgetful functor from $\catc$ to $\cats$ strictly preserves finite products. \label{confor}
\item The adjunction $\Phi: \homrm_{\catc} (X \times Y, Z) \migi \homrm_{\catc} (X, Z^Y)$ is the function sending a function $f$ to the function $x \mapsto (y \mapsto f(x,y))$. \label{conadj}
\end{enumerate}
\end{prop}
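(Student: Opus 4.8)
The plan is to exhibit the $\comsa$- and $\comk$-combinators of $\hka$ directly as realizers of specific maps in $\catc$, using only the four hypotheses and the structure maps of a CCC. The guiding principle is the same as in the proof of Proposition \ref{propccc}, read backwards: in a CCC every element of the hom-sets that appears in the categorical manipulations producing $\comsa$ and $\comk$ must be realized, and under the stated hypotheses the realizers are forced to behave like $\comsa$ and $\comk$ on the nose.

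First I would fix a convenient reflexive-type object to test against. Take $A := (\uhka, \erlz_A)$ with $\rlz{a}_A := \{ a \}$, and more generally, for the constructions below, use finite products $A \times A$, whose realizers by hypothesis \ref{confor} are exactly $\{ \lamst t.tp q \mid p,q \in \uhka\}$ with the projections realized as in Proposition \ref{propccc} — but note I cannot simply quote those realizers; rather, \ref{confor} says the underlying set of $A\times A$ is $\uhka\times\uhka$ and the projections are the set-theoretic projections, so \emph{whatever} realizes $\pi, \pi'$ together encode a pairing/unpairing on $\hka$. The first real step is therefore: from \ref{confor}, extract an element of $\hka$ realizing the canonical pairing $A,A \to A\times A$ and elements realizing the two projections, and record the equations they satisfy (these are Kleene-style identities, not full $\comsa\comk$ yet).

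Next, to get $\comk$: consider the map $A \to A^A$ which is the transpose under $\Phi$ (hypothesis \ref{conadj}) of the first projection $\pi : A\times A \to A$, i.e. the function $x \mapsto (y \mapsto x)$. By \ref{conhom1} its realizers are precisely the $r$ with $r a \in |A^A|$ realizing the constant-at-$a$ map, i.e. $r a a' = a$ for all $a,a'$; any such $r$ is a $\comk$. I must check such an $r$ exists: it does, because the map $x\mapsto(y\mapsto x)$ is a genuine morphism of $\catc$ (it is $\Phi(\pi)$ and $\pi$ is a morphism, using \ref{conadj} to identify $\Phi(\pi)$ with this set function). To get $\comsa$: consider the evident morphism $(Z^Y)^X \times Y^X \times X \to Z$ — or, staying within products of $A$, the morphism $A^{?}\dots$; concretely, build the map $h : A\times A \times A \to A$ sending $(x,y,z)\mapsto$ "apply $x$ to $z$, apply $y$ to $z$, apply the first result to the second", which is assembled from the application maps $A\times A\to A$ (whose existence one gets from $A$ being, up to the product structure, exponentiable — here one uses \ref{conhom1}, \ref{conhom2} to see the composite $A^A\times A\to A$ is the evaluation, realized by $\comi$). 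Transpose $h$ three times via $\Phi$ using \ref{conadj}; the resulting morphism $A\to A^{A^A}$ (or the appropriate iterated exponential) has, by \ref{conhom1}, a realizer $r$ satisfying $r x y z \simeq xz(yz)$, which is a $\comsa$. The combinatory-completeness-style care is that each intermediate set-function is a bona fide morphism of $\catc$ because $\catc$ is assumed to be a CCC and the hypotheses pin the functors/adjunction to their set-level description; hypothesis \ref{conhom2} is what guarantees the action of the internal-hom functor on morphisms agrees with pre/post-composition, so that transposing does not secretly change the underlying function.

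The main obstacle I anticipate is \emph{partiality versus totality and the bookkeeping of which set-functions are actually morphisms}: to apply $\Phi$ I need the thing I am transposing to be a morphism of $\catc$, which requires producing a realizer for it, and the only realizers available to me a priori are $\comi$ (identities/evaluation), the pairing/projection realizers from \ref{confor}, and composites thereof — so I must be careful to build $h$ (and the $\comk$-witness) purely by composition, products, and transposition starting from data I have already realized, never helping myself to $\comsa$ or $\comk$ in the construction. A secondary subtlety is that $\hka$ is only assumed total (the statement says "total applicative structure"), so the Kleene equalities collapse to genuine equalities, which actually simplifies matters; but I should still phrase the application maps $A\times A\to A$ carefully, since $a\cdot a'$ is defined for all $a,a'$ by totality, making $(x,y)\mapsto xy$ a legitimate morphism realized by $\comi$ composed with the unpairing. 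Once $\comsa$ and $\comk$ are in hand with their defining equations, $\hka$ is by definition an $\sk$-algebra and the proof concludes.
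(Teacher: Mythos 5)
Your proposal is correct and follows essentially the same route as the paper's proof: test against the assembly $A=(\uhka,\erlz_A)$ with $\rlz{a}_A=\{a\}$, obtain $\comk$ as a realizer of $\Phi(\pi)$, and obtain $\comsa$ as a realizer of the transpose of a map encoding $(x,y,z)\mapsto xz(yz)$ built from the application morphism $ev\circ(\phi\times id)$ (with $\phi:A\migi A^A$ realized by $\comi$), the diagonal, and the symmetry of products, with conditions (i)--(iv) pinning down the underlying set-functions so the resulting realizers satisfy the combinator axioms on the nose. The only difference is cosmetic: you assemble the ternary map on $A\times A\times A$ and transpose, whereas the paper transposes a map on $((A^A)^A\times A^A)\times A$ and composes with $\phi$, $\phi^{id}$ and $id^{\phi}$; and note that your worry about realizing the intermediate maps is unnecessary, since every morphism of the CCC $\catc$ is automatically realized --- one only needs to check that $\phi$ itself is a morphism.
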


\begin{proof}
Take an object $A := (\uhka,\erlz_A)$, where $\rlz{a}_A := \{ a \}$.
When we take $\comi$ as a realizer of $id_A$, this $\comi$ satisfies $\forall a \in \uhka$, $\comi \rlz{a}_A \subseteq \rlz{id_A (a)}_A$.
That is, $\forall a \in \uhka$, $\comi a = a$.

Applying $\Phi$ to the first projection $(a,a') \mapsto a: A \times A \migi A$,
we get a map $k:A \migi A^A$, which sends $a$ to $(a' \mapsto a)$.
(Here we use the conditions \ref{conhom1}, \ref{conadj} and \ref{confor} to clarify what the function $k$ actually is.)
When we take $\comk$ as a realizer of $k$, this $\comk$ satisfies $\forall a,a' \in \uhka$, $\comk a a' = a$.

Let $\phi : A \migi A^A$ be the function sending $a$ to the function $x \mapsto a x$.
Here $\phi(a)$ is realized by $a$ and $\phi$ is realized by $\comi$.

Applying $\Phi$ twice to the map from $((A^A)^A \times A^A) \times A$ to $A$ defined as
\[ ((A^A)^A \times A^A) \times A 
\xmigi{id \times {\rm diagonal}} ((A^A)^A \times A^A) \times (A \times A)
 \]
\[ \xmigi{{\rm symmetry}} ((A^A)^A \times A) \times (A^A \times A) \xmigi{ev \times ev} A^A \times A \xmigi{ev} A, \]
we get a map $s:(A^A)^A \migi (A^A)^{(A^A)}$ which sends a function $g:A \migi A^A$ to the function
$(f:A \migi A) \mapsto (a \mapsto g(a) (f(a)))$.
The map 
\[ A \xmigi{\phi} A^A \xmigi{{\phi}^{id}} (A^A)^A \xmigi{s} (A^A)^{(A^A)} \xmigi{id^{\phi}} (A^A)^A \]
is the function
$a \mapsto (a' \mapsto (a'' \mapsto a a'' (a' a'')))$.
(Here we use the conditions \ref{conhom2} to clarify what the functions ${\phi}^{id}$ and ${id^{\phi}}$ actually are.)
Thus, when we take $\comsa$ as a realizer of this map, $\comsa$ satisfies $\comsa x y z = x z (y z)$ for any $x,y,z \in \uhka$.
\end{proof}

To rephrase the proposition, to obtain a CCC by categorical realizability, being an $\sk$-algebra is the necessary condition on the total applicative structure (under several conditions).
We will show the similar propositions for the other classes.
Combining the propositions in this section and the separations in the previous section, we can say that, for instance, the category of assemblies on an applicative structure that is a bi-$\bdi$-algebra  but not a $\bci$-algebra (\eg, $\asmc{\plalamb}$) is indeed non-symmetric monoidal (as long as we try to take the symmetry in the canonical way).

\begin{rem} \label{remnecpar}
When we try to prove the proposition replacing ``total applicative structure'' with ``partial applicative structure'' in Proposition \ref{propnecccc}, we cannot use the same proof.
This is because $\phi : A \migi A^A$ is not always defined.
Indeed, when $a a'$ is not defined in $\hka$, $\phi (a)$ is not defined at $a'$.
It is still unclear whether we can prove the similar proposition as Proposition \ref{propnecccc} when $\hka$ is a partial applicative structure.
\end{rem}

\begin{prop} \label{propnecsmcc}
Suppose $\hka$ is a total applicative structure and $\catc := \asmca$ happens to be an SMCC.
$\hka$ is a $\bci$-algebra if the following conditions hold.
\begin{enumerate}[(i)]
\item $|Y \rimp X| = \homrm_{\catc} (X,Y)$ and $\rlz{f}_{Y \rimps X} = \{ r \mid \mbox{$r$ realizes $f$} \}$.
\item $g \rimp f : (Y \rimp X) \migi (Y' \rimp X')$ is the function sending $h:X \migi Y$ to $g \circ h \circ f$.
\item The forgetful functor from $\catc$ to $\cats$ is a strict symmetric monoidal functor.
\item The adjunction $\Phi: \homrm_{\catc} (X \otimes Y, Z) \migi \homrm_{\catc} (X, Z \rimp Y)$ is the function sending a function $f$ to the function $x \mapsto (y \mapsto f(x,y))$.
\end{enumerate}
\end{prop}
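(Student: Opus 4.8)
The plan is to follow the strategy of the proof of Proposition~\ref{propnecccc}, extracting the combinators $\comb$, $\comc$ and $\comi$ from suitable maps on the ``reflexive'' object $A := (\uhka, \erlz_A)$, where $\rlz{a}_A := \{ a \}$. As there, a realizer $\comi$ of $id_A : A \migi A$ satisfies $\comi a = a$ for all $a \in \uhka$. The crucial auxiliary morphism is $\phi : A \migi (A \rimp A)$ sending $a$ to the function $x \mapsto ax$; this is a well-defined map of assemblies because $a$ itself realizes $x \mapsto ax$, so, using condition~(i) to identify the realizers of $A \rimp A$, we see that $\phi$ is realized by $\comi$. Transposing $\phi$ along the adjunction of condition~(iv) gives a map $app := \Phi^{-1}(\phi) : A \otimes A \migi A$ whose underlying function is $x \otimes y \mapsto xy$.

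Next I would assemble, using condition~(iii) so that $\otimes$, the associator and the symmetry act on underlying sets as the evident operations, the maps
\[ F := app \circ (id_A \otimes app) \circ \alpha_{A,A,A} : (A \otimes A) \otimes A \migi A \]
and
\[ G := app \circ (app \otimes id_A) \circ \alpha^{-1}_{A,A,A} \circ (id_A \otimes \sigma_{A,A}) \circ \alpha_{A,A,A} : (A \otimes A) \otimes A \migi A, \]
which satisfy $F((a \otimes b) \otimes c) = a(bc)$ and $G((a \otimes b) \otimes c) = (ac)b$. Transposing each of $F$ and $G$ twice along $\Phi$ — and again invoking conditions~(i) and~(iv) to read off the underlying sets, the realizers, and the underlying functions of the iterated internal homs $(A \rimp A) \rimp A$ — produces maps $\Phi(\Phi(F))$ and $\Phi(\Phi(G))$ from $A$ to $(A \rimp A) \rimp A$ whose underlying functions are $a \mapsto (b \mapsto (c \mapsto a(bc)))$ and $a \mapsto (b \mapsto (c \mapsto (ac)b))$. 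Letting $\comb$ be a realizer of $\Phi(\Phi(F))$ and $\comc$ a realizer of $\Phi(\Phi(G))$ and unwinding the realizability conditions — using that $\rlz{\cdot}_A$ is always a singleton — yields $\comb x y z = x(yz)$ and $\comc x y z = xzy$ for all $x, y, z \in \uhka$. Hence $\hka$ contains $\comb$, $\comc$ and $\comi$, so it is a $\bci$-algebra. (One could equally well extract just a combinator $\comt$ with $\comt x y = yx$, namely a realizer of $\Phi(app \circ \sigma_{A,A}) : A \migi (A \rimp A)$, and then appeal to the observation recorded in the proof of Lemma~\ref{lembcijanai} that $\comb$, $\comi$ and such a $\comt$ already generate a $\bci$-algebra.)

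Once the set-theoretic bookkeeping is pinned down every step is routine; the point that needs care — exactly as in Proposition~\ref{propnecccc} — is that conditions~(iii) and~(iv) must be used to say precisely what the structural isomorphisms and the transpose $\Phi$ do on underlying elements, whereas the forgetful functor being a strict symmetric monoidal functor does \emph{not} determine the realizer sets of the tensor objects $X \otimes Y$. Consequently all of the maps above must be built out of $\phi$ — shown to be a morphism via condition~(i) — together with the ambient SMCC structure maps, each of which automatically carries a realizer. I expect this last point, rather than any individual computation, to be the main subtlety.
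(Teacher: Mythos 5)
Your proof is correct, and it shares the paper's core strategy: work with the assembly $A = (\uhka, \erlz_A)$, $\rlz{a}_A = \{a\}$, show $\phi : A \migi (A \rimp A)$, $a \mapsto (x \mapsto ax)$, is realized by $\comi$, and read the combinators off as realizers of maps assembled from $\phi$ and the SMCC structure, using conditions (i), (iii), (iv) to pin down underlying sets, underlying functions, and the realizers of internal homs. The decomposition differs in detail, though. The paper stays at the level of internal homs: it applies $\Phi$ twice to $((A\rimp A)\otimes(A\rimp A))\otimes A \migi A$ built from evaluations to get a composition map $l$, forms $(id \rimp \phi)\circ l \circ \phi$ — which is where condition (ii) is actually needed, to identify the underlying function of $id \rimp \phi$ — and takes $\comb$ as its realizer; for exchange it transposes $ev \circ \sigma$ once, composes with $id \rimp \phi$ to get a map realizing $\comt$ with $\comt xy = yx$, and then sets $\comc := \comb(\comb(\comt(\comb\comb\comt))\comb)\comt$. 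You instead transpose $\phi$ down to an application morphism $app = \Phi^{-1}(\phi)$, build the ternary maps $F$ and $G$ with the associator and symmetry, and transpose twice, obtaining realizers for $\comb$ and for $\comc$ directly. Your route buys two small simplifications — condition (ii) is never used, and no combinatory manipulation is needed to pass from $\comt$ to $\comc$ — while the paper's route avoids explicit tensor-level morphisms beyond those in its one double transposition and makes the "only $\comt$ is really needed" observation explicit (your closing parenthetical is in fact exactly the paper's own path). Both are valid under the stated hypotheses, and your emphasis that every map must be produced as a composite of morphisms that automatically carry realizers, since the realizer sets of $X \otimes Y$ are not determined by the hypotheses, is precisely the right point of care.
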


\begin{proof}
Take an object $A := (\uhka,\erlz_A)$, where $\rlz{a}_A := \{ a \}$.
When we take $\comi$ as a realizer of $id_A$, this $\comi$ satisfies $\forall a \in \uhka$, $\comi \rlz{a}_A \subseteq \rlz{id_A (a)}_A$.
That is, $\forall a \in \uhka$, $\comi a = a$.

Let $\phi : A \migi (A \rimp A)$ be the function sending $a$ to the function $x \mapsto a x$.
Here $\phi(a)$ is realized by $a$ and $\phi$ is realized by $\comi$.

Applying $\Phi$ twice to the map
\[ ((A \rimp A) \otimes (A \rimp A)) \otimes A 
\xrightarrow{{\rm associator}} (A \rimp A) \otimes ((A \rimp A) \otimes A)
\xrightarrow{id \otimes ev} (A \rimp A) \otimes A
\xrightarrow{ev} A, \]
we get a map $l: (A \rimp A) \migi ((A \rimp A) \rimp (A \rimp A))$, which sends $g:A \migi A$ to the function \\ $(f:A\migi A) \mapsto g \circ f$.
The map 
\[ A \xmigi{\phi} (A \rimp A) \xmigi{l} ((A \rimp A) \rimp (A \rimp A)) \xmigi{id \rimps \phi} ((A \rimp A) \rimp A) \]
is the function $a \mapsto (a' \mapsto (a'' \mapsto a (a' a'')))$.
Thus, when we take $\comb$ as a realizer of this map, $\comb$ satisfies $\comb x y z = x (y z)$ for any $x,y,z \in \uhka$.

Applying $\Phi$ to the map
\[ A \otimes (A \rimp A) \xmigi{{\rm symmetry}} (A \rimp A) \otimes A \xmigi{ev} A, \]
we get a map $c:A \migi (A \rimp (A \rimp A))$, which sends $a$ to $(f \mapsto f(a))$.
The map 
\[ A \xmigi{c} (A \rimp (A \rimp A)) \xmigi{id \rimps \phi} (A \rimp A) \]
is the function $a \mapsto (a' \mapsto a' a)$.
Thus, when we take $\comt$ as a realizer of this map, $\comt$ satisfies $\comt x y = y x$ for any $x,y \in \uhka$.
Let $\comc := \comb (\comb (\comt ( \comb \comb \comt))\comb)\comt$. Then $\comc xyz=xzy$ holds for any
$x,y,z \in \uhka$.
\end{proof}

\begin{prop} \label{propnecmul}
Suppose $\hka$ is a total applicative structure and $\catc := \asmca$ happens to be a closed multicategory.
$\hka$ is a $\bikuro$-algebra if the following conditions hold.
\begin{enumerate}[(i)]
\item $\catc (;X) = |X|$ and $\ucatc(;X) = X$.
\item $\catc (X;Y) = \homrm_{\catc} (X,Y)$ and $\ucatc(X;Y) =(\homrm_{\catc} (X,Y), \erlz)$. \\
Here 
$\rlz{f}= \{ r \mid \mbox{$r$ realizes $f$} \}$.
\item $\ucatc (X_1,\dots ,X_{n};Y) = \ucatc (X_1;\ucatc (X_2,\dots ,X_n;Y))$ and $\catc (X_1,\dots ,X_n;Y)$ is the underlying set of $\ucatc (X_1,\dots ,X_n;Y)$.

\item For $g:Y_1,\dots ,Y_n \migi Z$ and $f_l:X^l_1,\dots ,X^l_{k_l} \migi Y_l$, $g \circ (f_1,\dots ,f_n)$ is the function sending $x^1_1,\dots ,x^1_{k_1},\dots ,x^n_{k_n}$ to $g(f_1 (x^1_1,\dots ,x^1_{k_1}),\dots ,f_n (x^n_1,\dots ,x^n_{k_n}))$.
When $k_l = 0$ for some $1 \leq l \leq n$, $g \circ (f_1,\dots ,f_n)$ is the function given $y_l \in |Y_l|$ pointed by $f_l$ as the $l$-th argument of $g$.
\item $ev_{X_1,\dots ,X_n;Y}$ sends $f, x_1,\dots ,x_n$ to $f(x_1,\dots ,x_n)$.
\item $\Lambda_{Z_1,\dots ,Z_m;X_1,\dots ,X_n;Y}$ sends a function $(z_1,\dots ,z_m,x_1,\dots ,x_n \mapsto f(z_1,\dots ,z_m,x_1,\dots ,x_n))$ to the function $(z_1,\dots ,z_m \mapsto f(z_1,\dots ,z_m,-,\dots ,-))$.
\end{enumerate}
\end{prop}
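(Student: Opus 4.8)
The plan is to follow the template of Propositions~\ref{propnecccc} and~\ref{propnecsmcc}: fix the object $A := (\uhka, \erlz_A)$ with $\rlz{a}_A := \{a\}$, build a handful of canonical maps out of the closed-multicategory data, and recover $\comi$, $\comb$ and $\kuro{a}$ as their realizers. The recurring principle is that we may not invoke combinatory completeness — that is exactly the conclusion we are after — so each element of $\uhka$ we produce must arise as a realizer of a genuine map of assemblies, whose existence is guaranteed by the hypothesis that $\asmca$ is a closed multicategory. For $\comi$ this is immediate: any realizer of $id_A : A \migi A$ satisfies $\comi a = a$ for all $a$.

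For $\comb$, I would first observe that $\phi : A \migi \ucatc(A;A)$, $a \mapsto (x \mapsto ax)$, is a map of assemblies realized by $\comi$, using condition~(ii) to identify the realizers of $\ucatc(A;A)$ with the realizers of the corresponding map. Composing in the multicategory, $\mathrm{ap} := ev_{A;A}\circ(\phi, id_A) : A, A \migi A$ is the map $(a,b)\mapsto ab$ by conditions~(iv) and~(v), and then $B := \mathrm{ap}\circ(id_A, \mathrm{ap}) : A,A,A \migi A$ is the map $(x,y,z)\mapsto x(yz)$. Currying once along $\Lambda_{A;A,A;A}$ (condition~(vi)) yields a map of assemblies $A \migi \ucatc(A,A;A)$, namely $x \mapsto \big(y\mapsto(z\mapsto x(yz))\big)$; let $\comb$ be one of its realizers. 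Unfolding this, using condition~(iii) to rewrite $\ucatc(A,A;A) = \ucatc(A;\ucatc(A;A))$ and condition~(ii) at each stage, we obtain that $\comb x$ realizes $y\mapsto(z\mapsto x(yz))$, hence $\comb xy$ realizes $z\mapsto x(yz)$, hence $\comb xyz = x(yz)$ for all $x,y,z$.

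For $\kuro{a}$ with $a\in\uhka$ fixed, the key observation is that, by condition~(i), $a$ is itself a nullary map $a:{}\migi A$ in $\asmca$. Using the empty-arity clause of the composition in condition~(iv), $ev_{A;A}\circ(id_{\ucatc(A;A)}, a) : \ucatc(A;A)\migi A$ is the map $f\mapsto f(a)$; precomposing with $\phi$ gives a map of assemblies $A\migi A$ equal to $x\mapsto xa$, and taking $\kuro{a}$ to be one of its realizers yields $\kuro{a}\,x = xa$ for all $x$. Thus $\hka$ contains $\comb$, $\comi$ and $\kuro{a}$ for every $a\in\uhka$, so it is a $\bikuro$-algebra. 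I expect the only delicate parts to be the bookkeeping: checking that $\phi$, $\mathrm{ap}$, $B$ and the nullary composite really are maps of assemblies with exactly the claimed underlying functions, and peeling apart the iterated internal-hom objects via condition~(iii) so that a realizer of the single map $A\migi\ucatc(A,A;A)$ decodes correctly into the three-variable equation for $\comb$. The one ingredient absent from the $\sk$- and $\bci$-cases is the need for $\kuro{a}$ for each $a$ separately, which the composition against the nullary map $a:{}\migi A$ supplies cleanly.
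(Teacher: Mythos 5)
Your proposal is correct and follows essentially the same route as the paper's proof: the same object $A$ with $\rlz{a}_A=\{a\}$, the same map $\phi:A\migi\ucatc(A;A)$ realized by $\comi$, recovery of $\comb$ as a realizer of a curried three-argument map $(x,y,z)\mapsto x(yz)$, and recovery of $\kuro{a}$ from the nullary map $a:{}\migi A$ supplied by condition~(i) together with the empty-arity clause of the composition. The only (inessential) differences are bookkeeping: the paper applies $\Lambda$ twice ($\Lambda_{A;A;\ucatc(A;A)}\circ\Lambda_{A,A;A;A}$) where you curry once and invoke the identification $\ucatc(A,A;A)=\ucatc(A;\ucatc(A;A))$ of condition~(iii), and your $B=\mathrm{ap}\circ(id_A,\mathrm{ap})$ is the same composite the paper builds directly from $\phi$ and $ev$.
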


\begin{proof}
Take an object $A := (\uhka,\erlz_A)$, where $\rlz{a}_A := \{ a \}$.
When we take $\comi$ as a realizer of $id_A$, this $\comi$ satisfies $\forall a \in \uhka$, $\comi \rlz{a}_A \subseteq \rlz{id_A (a)}_A$.
That is, $\forall a \in \uhka$, $\comi a = a$.

Let $\phi : A \migi \ucatc (A;A)$ be the function sending $a$ to the map $x \mapsto a x$.
Here $\phi(a)$ is realized by $a$ and $\phi$ is realized by $\comi$.
Take a map 
\[ b:A,A,A \xmigi{id,\phi,id} A,\ucatc (A;A),A \xmigi{\phi,ev} \ucatc (A;A),A \xmigi{ev} A,  \]
which sends $(x,y,z)$ to $x (y z)$ for any $x,y,z \in \uhka$.
When we take $\comb$ as a realizer of $\Lambda_{A;A;\ucatc(A;A)} (\Lambda_{A,A;A;A} (b))$,
\begin{eqnarray*}
\comb x y z &=& (\Lambda_{A;A;\ucatc(A;A)} (\Lambda_{A,A;A;A} (b)))(x)(y)(z) \\
&=& b(x,y,z) \\
&=& x (y z).
\end{eqnarray*}

Given arbitrary $a \in \uhka$, take a map $f_{a}:A \migi A$ as 
\[ A \xmigi{id,a} A,A  \xmigi{\phi,id} \ucatc (A;A),A \xmigi{ev} A, \]
which sends $x \in \uhka$ to $x a$.
When we take $\kuro{a}$ as a realizer of $f_{a}$, $\kuro{a} x = x a$ for any $x \in \uhka$.
\end{proof}

\begin{prop} \label{propnecclo}
Suppose $\hka$ is a total applicative structure and $\catc := \asmca$ happens to be a closed category.
$\hka$ is a $\biikuro$-algebra if the following conditions hold.
\begin{enumerate}[(i)]
\item $|Y \rimp X| = \homrm_{\catc} (X,Y)$ and $\rlz{f}_{Y \rimps X} = \{ r \mid \mbox{$r$ realizes $f$} \}$. \label{conclo1}
\item $g \rimp f : (Y \rimp X) \migi (Y' \rimp X')$ is the function sending $h:X \migi Y$ to $g \circ h \circ f$.
\item $i_X$ is the function sending a function $(f:\ast \mapsto x)$ to $x$. \label{conclo3}
\item $L_{Y,Z}^X$ is the function sending $g:Y \migi Z$ to the function $(f:X \migi Y) \mapsto g \circ f$.
\end{enumerate}
\end{prop}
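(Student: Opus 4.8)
The plan is to mimic the proof of Proposition~\ref{propnecsmcc} (and Proposition~\ref{propnecmul}), working with the ``reflexive-style'' object $A := (\uhka, \erlz_A)$ with $\rlz{a}_A := \{a\}$ and extracting the required combinators from the structural morphisms of the closed category $\catc = \asmca$, reading off what the maps must be from the explicit descriptions given by the hypotheses. First I would record that $id_A$ is realized by some $\comi$ with $\comi a = a$ for all $a$, so the condition~\ref{cond1} element is in place, and that the internal hom functor evaluated on $A$ gives, via hypothesis~\ref{conclo1} and~(ii), an assembly $\ucatc(A;A)$ whose underlying set is $\homrm_{\catc}(A,A)$ and whose realizers of $f$ are exactly the realizers of $f$. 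As in the SMCC case, I introduce $\phi : A \migi (A \rimp A)$ sending $a$ to $(x \mapsto ax)$; this is realized by $\comi$, and $\phi(a)$ is realized by $a$ itself.

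Next I would obtain $\comb$. In a closed category the transformation $L_{Y,Z}^X$ plays the role of ``post-composition'', and hypothesis~(iv) tells us $L_{A,A}^A : (A \rimp A) \migi ((A \rimp A) \rimp (A \rimp A))$ sends $g$ to $(f \mapsto g \circ f)$. I would take $\comb$ to be a realizer of the composite
\[
A \xmigi{\phi} (A \rimp A) \xmigi{L_{A,A}^A} ((A \rimp A) \rimp (A \rimp A)) \xmigi{id \rimps \phi} ((A \rimp A) \rimp A),
\]
which as a function is $a \mapsto (a' \mapsto (a'' \mapsto a(a' a'')))$, so that $\comb x y z = x(yz)$; here $L_{A,A}^A$ is realized by whatever realizer it carries, $id \rimps \phi$ by $\lamst uv. u (\comi v)$ using~(ii), and the composite is then realized by an explicit combinator built from these via the combinatory completeness already available from $\comb$... — except of course $\comb$ is what we are constructing. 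So more carefully: the existence of a realizer $r$ of the above composite map is guaranteed simply because it is a map in $\catc$; I then \emph{define} $\comb := r$ and observe it satisfies the axiom. The same subtlety already occurs in Proposition~\ref{propnecsmcc}, so I will just follow that template.

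Then I would obtain $\kuro{a}$ for each $a$. Given $a \in \uhka$, consider the map $A \migi A$ defined as the composite of $i_X^{-1}$-type maneuvers; concretely, using the extranatural $j$ and $i$ I can form, for the one-element object $I$ with $\rlz{\ast}_I = \{\comi\}$, a map that ``plugs in $a$''. The cleanest route: take $a \in \rlz{a}_A$, form the map $\hat{a} : I \migi A$ picking out $a$, and then the composite
\[
A \xmigi{\phi} (A \rimp A) \xmigi{\gamma^{-1}\text{-style}} \cdots
\]
Actually the simplest is to evaluate at $a$: using hypothesis~\ref{conclo3}, $i_A : (A \rimp I) \migi A$ is the ``apply to $\ast$'' map, and its inverse $i_A^{-1}$ needs a realizer — this is exactly where the $\batui$-combinator enters, and I would extract $\batui$ precisely as a realizer of $i_A^{-1} : A \migi (A \rimp I)$, checking that the description of $i_A$ forces $\batui a \comi = a$. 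For $\kuro{a}$: take the map $f_a : A \migi A$, $x \mapsto xa$, realized by some element which I name $\kuro{a}$; I can exhibit $f_a$ as a map in $\catc$ by composing $\phi$ with ``evaluation at $a$'', the latter being a map $(A \rimp A) \migi A$ obtainable as $(id \rimp \hat a)$ followed by $i$-type maps, where $\hat a : I \migi A$ is realized by $a$. The key point is only that $f_a$ is genuinely a morphism of $\asmca$, which it is, so a realizer exists.

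The main obstacle I anticipate is the $\batui$-combinator: unlike $\comb$ and $\kuro{a}$, which come directly from $L$ and evaluation, $\batui$ must be coaxed out of the fact that $i_X$ is an \emph{isomorphism} with a prescribed underlying function, and I need to verify that a realizer of $i_A^{-1}$, applied to $a$ and then forced through the evaluation structure encoded in hypothesis~\ref{conclo3} (and the naturality/axioms of the closed category), genuinely satisfies $\batui a \comi = a$ rather than some weaker equation. I would handle this by carefully unwinding $i_A^{-1} : A \migi (A \rimp I)$: its realizer $\batui$ must satisfy, for each $a$ with realizer $a$, that $\batui a$ realizes the element $i_A^{-1}(a) \in |A \rimp I| = \homrm_\catc(A, I)$, which is the unique map $A \migi I$ whose value we can compute because $|I|$ is a singleton; then applying $\batui a$ to the realizer $\comi$ of $\ast \in |I|$ and using that $i_A \circ i_A^{-1} = id_A$ together with the description of $i_A$ in~\ref{conclo3} pins down $\batui a \comi = a$. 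Everything else is routine once the SMCC/multicategory proofs are in hand.
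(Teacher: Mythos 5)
Your overall route is the paper's own: take $A := (\uhka,\erlz_A)$ with $\rlz{a}_A := \{a\}$, extract $\comi$ from $id_A$, extract $\comb$ as a realizer of $(id \rimp \phi) \circ L^A_{A,A} \circ \phi$, extract $\batui$ as a realizer of $i_A^{-1}$, and extract $\kuro{a}$ as a realizer of $i_A \circ (id \rimp g_a) \circ \phi$ where $g_a : I \migi A$ picks out $a$. Your resolution of the bootstrapping worry for $\comb$ (the realizer exists because the composite is a morphism of $\asmca$; no combinatory completeness is needed to name it) is exactly how the paper proceeds, and your $\kuro{a}$ construction computes to the right function $x \mapsto xa$.

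There is, however, one genuine gap at the $\batui$ step, plus two local errors. The gap: you simply declare $\rlz{\ast}_I = \{\comi\}$, but in this direction of the correspondence $I$ is whatever unit object the given closed structure on $\asmca$ happens to have, and you may not prescribe its realizers. A realizer $\batui$ of $i_A^{-1}$ only gives you $\batui\, a\, x = a$ for $x \in \rlz{\ast}_I$; to land the actual axiom $\batui\, a\, \comi = a$ you must argue that $\comi$ may be assumed to realize $\ast$. The paper does this in one line: since $i_I$ gives $I \cong (I \rimp I)$ and $\comi \in \rlz{id_I}_{I \rimps I}$, one may w.l.o.g.\ take $\comi \in \rlz{\ast}_I$. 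Without some such argument your verification of the $\batui$ axiom does not close. The two local errors: first, $|A \rimp I| = \homrm_{\catc}(I,A)$, not $\homrm_{\catc}(A,I)$ (with your reading, $i_A^{-1}(a)$ would be the unique map $A \migi I$ and could not depend on $a$); your subsequent computation silently uses the correct variance, so this is only a misstatement. Second, the map $\hat a : I \migi A$, $\ast \mapsto a$, is \emph{not} realized by $a$ in general (that would require $a\,s = a$ for all $s \in \rlz{\ast}_I$); the paper realizes it by $\batui\, a$, which is why $\batui$ must be obtained before $\kuro{a}$. This does not break your argument for $\kuro{a}$ — you only need $\hat a$ to be \emph{some} morphism, which it is, being the element $i_A^{-1}(a)$ of the assembly $A \rimp I$ — but the specific realizer you name is wrong.
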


\begin{rem} \label{remsingle}
In the condition \ref{conclo3}, we assume that the unit object is a singleton $\{ \ast \}$.
The assumption can be derived from the condition \ref{conclo1}.

Take an object $X := (\{ x_1,x_2 \} , \erlz_X)$ by $\rlz{x_i}_X := \uhka$.
From the condition \ref{conclo1},
$|X \rimp I|$ is $\homrm_{\catc} (I,X)$.
Since $\homrm_{\catc} (I,X) = \homrm_{\cats} (|I|,\{ x_1,x_2 \})$, $|X \rimp I| = \homrm_{\cats} (|I|,\{ x_1,x_2 \})$.
Also since $X \rimp I \cong X$, $|X \rimp I| \cong |X| = \{ x_1,x_2 \}$.
$\homrm_{\cats} (|I|,\{ x_1,x_2 \}) \cong \{ x_1,x_2 \}$ holds iff $|I|$ is the singleton.
\end{rem}

\begin{proof}[Proof of Proposition \ref{propnecclo}]
Take an object $A := (\uhka,\erlz_A)$, where $\rlz{a}_A := \{ a \}$.
When we take $\comi$ as a realizer of $id_A$, this $\comi$ satisfies $\forall a \in \uhka$, $\comi \rlz{a}_A \subseteq \rlz{id_A (a)}_A$.
That is, $\forall a \in \uhka$, $\comi a = a$.

Let $\phi : A \migi (A \rimp A)$ be the function sending $a$ to the function $x \mapsto a x$.
Here $\phi(a)$ is realized by $a$ and $\phi$ is realized by $\comi$.
The map 
\[ A \xmigi{\phi} (A \rimp A) \xmigi{L} ((A \rimp A) \rimp (A \rimp A)) \xmigi{id \rimps \phi} ((A \rimp A) \rimp A) \]
is the function $a \mapsto (a' \mapsto (a'' \mapsto a (a' a'')))$.
Thus, when we take $\comb$ as a realizer of this map, $\comb$ satisfies $\comb x y z = x (y z)$ for any $x,y,z \in \uhka$.

Since $I \cong (I \rimp I)$ and $\comi \in \rlz{id_I}_{I \rimps I}$, we can assume $\comi \in \rlz{\ast}_I$ w.l.o.g.
When we take $\batui$ as a realizer of $i_A^{-1}:A \migi (A \rimp I)$, $\batui$ satisfies $\batui a x = a$ for any $a \in \uhka$ and $x \in \rlz{\ast}_I$, especially, $\batui a \comi =a$ holds.

Given arbitrary $a \in \uhka$, let $g_{a}:I \migi A$ be the function $\ast \mapsto a$.
$g_{a}$ is realized by $\batui a$.
The map 
\[ A \xmigi{\phi} (A \rimp A) \xmigi{id \rimps g_{a}} (A \rimp I) \xmigi{i_A} A \]
is the function $a' \mapsto a' a$.
Thus, when we take $\kuro{a}$ as a realizer of this map, $\kuro{a}$ satisfies $\kuro{a} x = x a$ for any $x \in \uhka$.
\end{proof}

\begin{prop} \label{propnec5}
Suppose $\hka$ is a total applicative structure and $\catc := \asmca$ happens to be a monoidal closed category.
$\hka$ is a $\biilp$-algebra if the following conditions hold.
\begin{enumerate}[(i)]
\item $|Y \rimp X| = \homrm_{\catc} (X,Y)$ and $\rlz{f}_{Y \rimps X} = \{ r \mid \mbox{$r$ realizes $f$} \}$.
\item $g \rimp f : (Y \rimp X) \migi (Y' \rimp X')$ is the function sending $h:X \migi Y$ to $g \circ h \circ f$.
\item The forgetful functor from $\catc$ to $\cats$ is a strict monoidal functor.
\item The adjunction $\Phi: \homrm_{\catc} (X \otimes Y, Z) \migi \homrm_{\catc} (X, Z \rimp Y)$ is the function sending a function $f$ to the function $x \mapsto (y \mapsto f(x,y))$.
\end{enumerate}
\end{prop}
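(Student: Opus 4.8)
The plan is to extract the $\biikuro$-algebra part of the claim ($\comb$, $\comi$, $\batui$ and $\kuro{a}$ for each $a$) from what is essentially the closed-category argument, and then to supply the two genuinely new combinators $\coml$ and $\comp$ using the tensor. For the first part I would observe that a monoidal closed category is in particular a closed category: the unit object is $I$, the internal hom functor is $(-\rimp-)$, the morphism $i_X\colon(X\rimp I)\migi X$ arises from the chain of natural isomorphisms $\catc(-,X\rimp I)\cong\catc(-\otimes I,X)\cong\catc(-,X)$ (the first being $\Phi$, the second the right unitor), and $L^X_{Y,Z}$ is the usual internal-composition map. A short Yoneda computation shows that under condition (i) the induced $i_X$ is the function sending $(f\colon\ast\mapsto x)$ to $x$ and that the induced $L^X_{Y,Z}$ sends $g$ to $f\mapsto g\circ f$; hence the closed category induced by the given monoidal closed structure satisfies all the hypotheses of Proposition \ref{propnecclo} (its conditions (i), (ii) are verbatim conditions (i), (ii) here). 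Thus $\hka$ is a $\biikuro$-algebra, and it only remains to produce $\coml$ and $\comp$ with $\coml x(\comp y z)=xyz$. (Alternatively one can just reproduce the short derivations of $\comb$, $\comi$, $\batui$, $\kuro{a}$ from the proof of Proposition \ref{propnecclo}.)

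For $\coml$ and $\comp$ I would work, as in the other propositions, with the object $A:=(\uhka,\erlz_A)$ where $\rlz{a}_A:=\{a\}$ and with the map $\phi\colon A\migi(A\rimp A)$, $a\mapsto(x\mapsto ax)$, which is realized by $\comi$. By condition (iii) I may identify the underlying set of $A\otimes A$ with $\uhka\times\uhka$ and that of $A\otimes(A\otimes A)$ with $\uhka\times(\uhka\times\uhka)$, the associator and unitors acting on underlying sets as the evident reassociation and componentwise maps. I would then take $\comp$ to be a realizer of $\Phi(id_{A\otimes A})\colon A\migi((A\otimes A)\rimp A)$; since $\Phi$ sends a map $f$ to $x\mapsto(y\mapsto f(x,y))$ and condition (i) pins down the realizers of $(A\otimes A)\rimp A$, such a realizer satisfies $\comp\,a\,a'\in\rlz{(a,a')}_{A\otimes A}$ for all $a,a'\in\uhka$. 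Writing $\mathrm{ap}:=\Phi^{-1}(\phi)\colon A\otimes A\migi A$ for the map $(a,a')\mapsto aa'$, I would form
\[ h\ :=\ \mathrm{ap}\circ(\mathrm{ap}\otimes id_A)\circ\alpha^{-1}_{A,A,A}\colon A\otimes(A\otimes A)\migi A, \]
which, on underlying sets, is the function $(x,(y,z))\mapsto xyz$, and take $\coml$ to be a realizer of $\Phi(h)\colon A\migi(A\rimp(A\otimes A))$. Such a $\coml$ has the property that $\coml x$ realizes the map $(y,z)\mapsto xyz$ with domain $A\otimes A$, so $\coml x w=xyz$ for every realizer $w$ of $(y,z)$ in $A\otimes A$; taking $w:=\comp y z$ gives $\coml x(\comp y z)=xyz$, as required.

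Since only the associator, the tensoring of maps, and the adjunction $\Phi,\Phi^{-1}$ are used, and never the symmetry, the construction stays entirely inside the non-symmetric setting. I expect the main obstacle to be conceptual rather than computational, namely identifying the right maps to curry: $\comp$ must come from currying the identity on $A\otimes A$ (so that an abstract realizer of the identity becomes a concrete ``pairing'' element), and $\coml$ from currying the iterated-application map $A\otimes(A\otimes A)\migi A$ against its pair argument, so that decomposing a pair $\comp y z$ is exactly an instance of ``$\coml x$ realizes a map whose domain is $A\otimes A$''. The bookkeeping for $\comb$, $\comi$, $\batui$ and $\kuro{a}$ is routine; the one point to be careful about, exactly as in Proposition \ref{propnecclo}, is the standing assumption $\comi\in\rlz{\ast}_I$, which is justified by replacing $I$ with the isomorphic unit object $I\rimp I$, the unique point of which has $\comi$ among its realizers.
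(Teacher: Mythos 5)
Your proposal is correct and follows essentially the same route as the paper: reduce the $\comb$, $\comi$, $\batui$, $\kuro{a}$ part to Proposition \ref{propnecclo} by extracting the closed-category data ($i$ from the right unitor via $\Phi$, $L$ from internal composition), then take $\comp$ as a realizer of $\Phi(id_{A\otimes A})$ and $\coml$ as a realizer of $\Phi$ applied to the iterated-application map $A\otimes(A\otimes A)\migi A$, concluding $\coml x(\comp yz)=xyz$ exactly as the paper does. The only cosmetic difference is that you package the paper's composite $ev\circ(\phi\otimes id)$ as $\mathrm{ap}=\Phi^{-1}(\phi)$.
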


\begin{proof}
Applying $\Phi$ twice to the map
\[ ((Y \rimp X) \otimes (X \rimp Z)) \otimes Z \xrightarrow{{\rm associator}}
(Y \rimp X) \otimes ((X \rimp Z) \otimes Z) \xrightarrow{id \otimes ev}
(Y \rimp X) \otimes X \xrightarrow{ev} Y, \]
we get a map $L^X_{Y,Z}: (Y \rimp X) \migi ((Y \rimp Z) \rimp (X \rimp Z))$.
This $L$ is the natural transformation $L$ of the closed category $\catc$.
Applying $\Phi$ to the unitor $\rho_X :X \otimes I \migi X$, we get a map $i^{-1}_X : X \migi (X \rimp I)$.
The inverse map is the natural isomorphism $i$ of the closed category $\catc$.
We can easily check that $\hka$ and $\catc$ satisfies all the conditions of Proposition \ref{propnecclo} for these $L$ and $i$.
Hence, $\hka$ is a $\biikuro$-algebra.

Take an object $A := (\uhka,\erlz_A)$, where $\rlz{a}_A := \{ a \}$.
Let $\phi: A \migi (A \rimp A)$ be the function sending $a$ to the function $x \mapsto ax$.
Here $\phi (a)$ is realized by $a$ and $\phi$ is realized by $\comi$.

Let $l: A \migi (A \rimp (A \otimes A))$ be the map obtained by applying $\Phi$ to
\[ A \otimes (A \otimes A) \xrightarrow{{\rm associator}} (A \otimes A) \otimes A \xrightarrow{(\phi \otimes id) \otimes id} ((A \rimp A) \otimes A) \otimes A \]
\[ \xrightarrow{ev \otimes id} A \otimes A \xrightarrow{\phi \otimes id} (A \rimp A) \otimes A \xrightarrow{ev} A, \]
and let $\coml$ be a realizer of $l$.
$l$ is the function sending $x$ to the function $(y,z) \mapsto xyz$.
Also let $\comp$ be a realizer of $p := \Phi(id_{A \otimes A}) : A \migi ((A \otimes A) \rimp A)$.
$p$ is the function sending $y$ to the function $z \mapsto (y,z)$.
Then for any $x,y,z \in \uhka$, $\coml x (\comp y z) \in \rlz{l(x)(p(y)(z))}_A$ and thus
\begin{eqnarray*}
\coml x (\comp y z) &=& l(x)(p(y)(z)) \\
&=& l(x)(y,z) \\
&=& xyz.
\end{eqnarray*}
\end{proof}

The proof of the next proposition, for monoidal bi-closed categories and bi-$\bdi$-algebras, is a little more complicated than the proofs of previous propositions.
When we obtain a monoidal bi-closed category $\asmca$ by a bi-$\bdi$-algebra $\hka$,
we take realizers of elements of the object $X \limp Y$ in $\asmca$ as
\[ \rlz{f}_{X \limp Y} := \{ r \in |\hka| \mid \mbox{$a \lapp r \in \rlz{f(x)}_Y$} \} \]
(See the proof of Proposition \ref{propbiclo}).
However, in the next proposition we do not assume anything about the left application of $\hka$, and thus we also cannot assume anything about realizers for $X \limp Y$.
This makes the proof of existence for $\combl$ and $\comdl$ cumbersome.

\begin{prop} \label{propnecbiclo}
Suppose $\hka = (\uhka, \rapp)$ is a total applicative structure and $\catc := \asmca$ happens to be a monoidal bi-closed category.
$\hka$ is a bi-$\bdi$-algebra if the following conditions hold.
\begin{enumerate}[(i)]
\item $|Y \rimp X| = \homrm_{\catc} (X,Y)$ and $\rlz{f}_{Y \rimps X} = \{ r \mid \mbox{$r$ realizes $f$} \}$.
\item $g \rimp f : (Y \rimp X) \migi (Y' \rimp X')$ is the function sending $h:X \migi Y$ to $g \circ h \circ f$.
\item The forgetful functor from $\catc$ to $\cats$ is a strict monoidal functor.
\item The adjunction $\Phi: \homrm_{\catc} (X \otimes Y, Z) \migi \homrm_{\catc} (X, Z \rimp Y)$ is the function sending a function $f$ to the function $x \mapsto (y \mapsto f(x,y))$.
\item $|X \limp Y| = \homrm_{\catc} (X,Y)$.
\item $f \limp g : (X \limp Y) \migi (X' \limp Y')$ is the function sending $h:X \migi Y$ to $g \circ h \circ f$.
\item The adjunction $\Phi': \homrm_{\catc} (X \otimes Y, Z) \migi \homrm_{\catc} (Y, X \limp Z)$ is the function sending a function $f$ to the function $y \mapsto (x \mapsto f(x,y))$.
\end{enumerate}
\end{prop}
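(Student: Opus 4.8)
The plan is to bootstrap from Proposition \ref{propnec5}. Conditions (i)--(iv) of the present statement are exactly the hypotheses of Proposition \ref{propnec5}, and a monoidal bi-closed category is in particular monoidal (right-)closed via $\rimp$, so $\hka=(\uhka,\rapp)$ is already a $\biilp$-algebra: we have $\comb$, $\comi$, $\batui$, $\coml$, $\comp$ and the operation $\kuro{(\haih)}$, and for the reflexive-style object $A:=(\uhka,\erlz_A)$ with $\rlz{a}_A:=\{a\}$ the map $\phi:A\migi(A\rimp A)$, $a\mapsto(x\mapsto ax)$, is realized by $\comi$ while $\comp$ realizes pairings of $A\otimes A$. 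Set $\combr:=\comb$ and $\comir:=\comi$. It then remains to produce a second (left) application $\lapp$ on $\uhka$ together with $\combl$, $\comdr$, $\comdl$, $\comil$ and, for each $a$, the elements $\dagr{a}$ and $\dagl{a}$.

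The second step is to extract $\lapp$ from the left-closed structure. The same category $\asmca$ is also left-closed via $\limp$; equivalently, $(\asmca,\otimes^{\mathrm{rev}})$ is right-closed with internal hom $Y\limp Z$, via $\Phi'$. The plan is to re-run the arguments of Propositions \ref{propnec5} and \ref{propnecclo} on this reversed structure: using $\Phi'$ and the left-evaluation $ev':A\otimes(A\limp A)\migi A$, $(x,f)\mapsto f(x)$, together with a realizer of the pairing map $\Phi\bigl(id_{A\otimes(A\limp A)}\bigr):A\migi\bigl((A\otimes(A\limp A))\rimp(A\limp A)\bigr)$, one defines a total binary operation $\lapp$ on $\uhka$ for which, reading $A$ as a reflexive object of the left-closed structure, $\combl$, $\comil$ and $\dagl{a}$ arise exactly as $\combr$, $\comir$ and $\dagr{a}$ do for $\rapp$ --- with the caveat that the underlying-set side is mirrored ($|X\otimes^{\mathrm{rev}}Y|=|Y|\times|X|$ rather than $|X|\times|Y|$), so the naturality squares and the forgetful-functor identifications must be traversed in the opposite direction. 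This is where the bookkeeping starts, since --- unlike in Proposition \ref{propnec5} --- no assumption whatsoever is placed on the realizers of the objects $X\limp Y$ (this is the point flagged in the text); one therefore has to choose realizers of the relevant elements of $A\limp A$, of $(A\limp A)\limp(A\limp A)$ and of their tensors, and carry those choices along.

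The combinators $\comdr$, $\comdl$, $\dagr{a}$ and $\dagl{a}$ linking $\rapp$ and $\lapp$ should then come from the fact that the associators $\alpha$ and unitors $\lambda,\rho$ of the single monoidal category $\asmca$ are honest maps, hence realized, and mediate between the $\rimp$- and $\limp$-closed structures: transposing through $\Phi$ and $\Phi'$ the composites built from $\alpha$, $ev$ and $ev'$ produces elements satisfying the $\comdr$- and $\comdl$-equations, and $\dagr{a}$ is even nameable as $\lamst x.\,ev'(x\otimes(\text{the canonical realizer of }a\text{ in }A\limp A))$-style term, while $\dagl{a}$ and $\comil$ can be taken to be any element of the appropriate (nonempty) realizer set in $A\limp A$. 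I expect the real obstacle to be $\combl$ and $\comdl$, as the paper itself signals: their defining equations $z\lapp(y\lapp(x\lapp\combl))=(z\lapp y)\lapp x$ and $(z\lapp(y\lapp\comdl))\rapp x=z\lapp(y\rapp x)$ nest left applications, so the intermediate values such as $x\lapp\combl$ must themselves be realizers that behave correctly under a further left application; with no control over the realizers of the $\limp$-objects, establishing this forces the introduction of auxiliary assemblies (the iterated left internal-homs and their tensors, together with the composition transformation $L'$ of the left-closed structure) and the propagation of realizability through several layers, and the delicate part is to check that the realizers so produced genuinely collapse the iterated left applications to the required equalities. Once that is done, $(\uhka,\rapp,\lapp)$ is a bi-$\bdi$-algebra.
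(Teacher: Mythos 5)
You set the proof up the same way the paper does: conditions (i)--(iv) are exactly those of Proposition \ref{propnec5}, so $\hka$ is a $\biilp$-algebra and one may take $\combr := \comb$, $\comir := \comi$; and the left application is manufactured from the $\limp$-structure by transposing the evaluation $ev' : A \otimes (A \limp A) \migi A$ through the right adjunction (your ``realizer of the pairing map plus a realizer of $ev'$'' is, up to composition, the paper's $\coml_1$, a realizer of $\Phi(ev')$, with $x \lapp y := \coml_1 \rapp x \rapp y$). After that, $\dagl{a}$ can indeed be any realizer of the map $x \mapsto a \rapp x$ in $A \limp A$, and $\comil := \dagl{(\comir)}$. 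One slip even in this easy part: your recipe for $\dagr{a}$ (evaluate $ev'$ at $x$ tensored with a realizer of $a$'s image in $A \limp A$) yields an element $m$ with $m \rapp x = a \rapp x$, which is the behaviour of $\dagl{a}$ under $\lapp$, not the required $\dagr{a} \rapp x = x \lapp a$; the correct choice is simply $\dagr{a} := \lamst x.\, \coml_1 \rapp x \rapp a$, immediate from the definition of $\lapp$.

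The genuine gap is $\combl$ and $\comdl$ (and, to a lesser extent, $\comdr$). You claim these ``arise exactly as'' the right-handed combinators by re-running Propositions \ref{propnec5} and \ref{propnecclo} on the reversed monoidal structure, and then defer as ``the delicate part'' the check that nested left applications collapse to the required equalities --- but that check is the entire content of this proposition, and no mirror argument can supply it, because the realizability of $\asmca$ is one-sided: $\combr$ is literally the $\comb$-combinator, available for free, whereas $x \lapp \combl$ is only computed through $\coml_1$, so one must arrange that each intermediate value $x \lapp \combl$ and $y \lapp (x \lapp \combl)$ is again a realizer of a suitable element of an iterated $\limp$-hom. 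The paper achieves this by a staged construction: it first realizes $\phi' : A \migi (A \limp A)$, $a \mapsto (a' \mapsto a' \lapp a)$, uses $\Phi'$ repeatedly to obtain a map $I \migi (A \limp (A \limp (A \limp A)))$ whose value is $x \mapsto (y \mapsto (z \mapsto (z \lapp y) \lapp x))$ and fixes a realizer $M_1$ of it, introduces realizers $\coml_2, \coml_3$ of the $\Phi$-transposes of the evaluations at the types $A \limp (A \limp A)$ and $A \limp (A \limp (A \limp A))$, builds from these an intermediate realizer $M_2$, and finally takes $\combl$ to be a realizer in $A \limp A$ of the map $x \mapsto \coml_2 \rapp x \rapp M_2$, so that the three successive left applications unfold step by step to $(z \lapp y) \lapp x$; $\comdl$ needs an entirely analogous three-stage construction, and even $\comdr$ requires $\phi'$ and a transposition of an associator composite, not merely the fact that the associator is realized. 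None of this machinery is present in, or implied by, your outline, so as it stands the argument is incomplete at precisely the step that makes the proposition nontrivial.
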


\begin{proof}
The conditions of this proposition includes all the conditions of Proposition \ref{propnec5}.
Hence, $\hka$ is a $\biilp$-algebra and have the combinatory completeness for the planar lambda calculus.
We take $\combr$ as the $\comb$-combinator and $\comir$ as the $\comi$-combinator of $\hka$.

Take an object $A := (\uhka,\erlz_A)$, where $\rlz{a}_A := \{ a \}$.
Applying $\Phi$ to the evaluation map
$ev_1:A \otimes (A \limp A) \migi A$, we get a map $l:A \migi (A \rimp (A \limp A))$, which sends $a$ to $(f \mapsto f(a))$.
Let $\coml_1$ be a realizer of $l$ and $x \lapp y := \coml_1 \rapp x \rapp y$.
We will show that $(\uhka,\rapp,\lapp)$ is a bi-$\bdi$-algebra.

Let $\phi : A \migi (A \rimp A)$ be the function sending $a$ to the function $(x \mapsto a \rapp x)$.
Here $\phi(a)$ is realized by $a$ and $\phi$ is realized by $\comir$.

Given arbitrary $a \in \uhka$, let $\dagr{a} := \lamst x.\coml_1 \rapp x \rapp a$.
For any $x \in \uhka$, 
\begin{eqnarray*}
\dagr{a} \rapp x &=& \coml_1 \rapp x \rapp a \\
&=& x \lapp a.
\end{eqnarray*}

Given arbitrary $a \in \uhka$,
take $\dagl{a}$ as an element of $\rlz{\phi(a)}_{A \limp A}$.
Then for any $x \in \uhka$, 
\begin{eqnarray*}
x \lapp \dagl{a} &=& \coml_1 \rapp x \rapp \dagl{a} \\
&=& l(x)(\phi(a)) \\
&=& \phi (a) (x) \\
&=& a \rapp x.
\end{eqnarray*}
Furthermore, we can take $\comil$ as $\dagl{(\comir)}$.

Next we obtain $\combl$.
Applying $\Phi'$ to 
\[ A \otimes A \xmigi{\phi(\coml_1) \otimes id} A \otimes A 
\xmigi{\phi \otimes id} (A \rimp A) \otimes A 
\xmigi{ev} A, \]
we get a map $\phi':A \migi (A \limp A)$, which sends $a$ to $(a' \mapsto a' \lapp a)$.
Applying $\Phi'$ three times to 
\[ A \otimes ((A \limp A) \otimes (A \limp A)) 
\xmigi{{\rm associator}} (A \otimes (A \limp A)) \otimes (A \limp A) 
\xmigi{ev \otimes id} A \otimes (A \limp A)
\xmigi{ev} A, \]
we get a map $p: I \migi (A \limp A) \limp ((A \limp A) \limp (A \limp A))$.
Define a map $b_1$ as 
\[ I \xmigi{p} (A \limp A) \limp ((A \limp A) \limp (A \limp A)) \xmigi{\phi' \limp (\phi' \limp id)} A \limp (A \limp (A \limp A)), \]
which sends $\ast$ to $x \mapsto (y \mapsto (z \mapsto (z \lapp y) \lapp x))$.
Take $M_1 \in \rlz{b_1 (\ast)}_{A \limp (A \limp (A \limp A))}$.

Let $\coml_2$ be a realizer of $\Phi (ev_2)$, where $ev_2 :A \otimes (A \limp (A \limp A)) \migi (A \limp A)$ is the evaluation map.
$\coml_2$ realizes a map $q:A \migi (A \rimp A)$ that sends $a$ to $\phi (\coml_2 \rapp a)$.
Let $\coml_3$ be a realizer of $\Phi (ev_3)$, where $ev_3: A \otimes (A \limp (A \limp (A\limp A))) \migi (A \limp (A \limp A))$ is the evaluation map.
Take $r:A \migi A$ as a map sending $x$ to $\coml_3 \rapp x \rapp M_1$, whose realizer is $\lamst x.\coml_3 \rapp x \rapp M_1$.
Applying $\Phi'$ to 
\[ A \otimes A \xmigi{q \otimes r} (A \rimp A) \otimes A \xmigi{ev} A, \]
we get a map $b_2 : I \migi (A \limp (A \limp A))$, which sends $\ast$ to $(x \mapsto (y \mapsto \coml_2 \rapp y \rapp (\coml_3 \rapp x \rapp M_1)))$.
Take $M_2 \in \rlz{b_2 (\ast)}_{A \limp (A \limp A)}$.

Let $b_3:A \migi A$ be a map sending $x$ to $\coml_2 \rapp x \rapp M_2$, whose realizer is $\lamst x.\coml_2 \rapp x \rapp M_2$.
When we take $\combl \in \rlz{b_3}_{A \limp A}$, for any $x \in \uhka$,
\begin{eqnarray*}
x \lapp \combl &=& \coml_1 \rapp x \rapp \combl \\
&=& b_3 (x) \\
&=& \coml_2 \rapp x \rapp M_2.
\end{eqnarray*}
For any $y \in \uhka$,
\begin{eqnarray*}
y \lapp (x \lapp \combl) &=& y \lapp (\coml_2 \rapp x \rapp M_2) \\
&=& \coml_1 \rapp y \rapp (\coml_2 \rapp x \rapp M_2) \\
&=& b_2 (\ast) (x) (y) \\
&=& \coml_2 \rapp y \rapp (\coml_3 \rapp x \rapp M_1).
\end{eqnarray*}
For any $z \in \uhka$,
\begin{eqnarray*}
z \lapp (y \lapp (x \lapp \combl)) &=& z \lapp (\coml_2 \rapp y \rapp (\coml_3 \rapp x \rapp M_1)) \\
&=& \coml_1 \rapp z \rapp (\coml_2 \rapp y \rapp (\coml_3 \rapp x \rapp M_1)) \\
&=& b_1(\ast)(x)(y)(z) \\
&=& (z \lapp y) \lapp x.
\end{eqnarray*}

Next we obtain $\comdr$.
Applying $\Phi'$ and $\Phi$ to 
\begin{center}
$A \otimes ((A \limp (A \rimp A)) \otimes A) \xmigi{{\rm associator}} (A \otimes (A \limp (A \rimp A))) \otimes A \xmigi{ev \otimes id} (A \rimp A) \otimes A \xmigi{ev} A$,
\end{center}
we get a map $d:(A \limp (A \rimp A)) \migi ((A \limp A) \rimp A)$, which sends a map $(a \mapsto (a' \mapsto f(a,a')))$ to the map $(a' \mapsto (a \mapsto f(a,a')))$.

When we take $\comdr$ as a realizer of
\[ A \xmigi{\phi'} 
(A \limp A) \xmigi{id \limp \phi}
(A \limp (A \rimp A)) \xmigi{d} ((A \limp A) \rimp A), \]
\begin{eqnarray*}
x \lapp (\comdr \rapp y \rapp z) 
&=& d(\phi \circ (\phi'(y)))(z)(x) \\
&=& (\phi \circ (\phi'(y)))(x)(z) \\
&=& (x \lapp y) \rapp z
\end{eqnarray*}
for any $x,y,z \in \uhka$.

Finally we obtain $\comdl$.
Applying $\Phi$ and $\Phi'$ to 
\[ (A \otimes ((A \limp A) \rimp A)) \otimes A \xmigi{{\rm associator}} A \otimes (((A \limp A) \rimp A) \otimes A) \xmigi{id \otimes ev} A \otimes (A \limp A) \xmigi{ev} A, \]
we get a map $d_1:((A \limp A) \rimp A) \migi (A \limp (A \rimp A))$, sending a map $(a' \mapsto (a \mapsto f(a',a)))$ to the map $(a \mapsto (a' \mapsto f(a',a)))$.
Take $N_1 \in \rlz{d_1 \circ (\phi' \rimp id) \circ \phi}_{A \limp (A \limp (A \rimps A))}$.

Let $\coml_4$ be a realizer of $\Phi (ev_4)$, where $ev_4: A \otimes (A \limp (A\rimp A)) \migi (A \rimp A)$ is the evaluation map.
$\coml_4$ realizes a map $s:A \migi (A \rimp A)$ sending $x$ to $\phi (\coml_4 \rapp x)$.
Let $\coml_5$ be a realizer of a map obtained by applying $\Phi$ to 
\[ ev_5: A \otimes (A \limp (A \limp (A\rimp A))) \migi (A \limp (A \rimp A)) \]
and $t:A \migi A$ be a map sending $a$ to $\coml_5 \rapp a \rapp N_1$, whose realizer is $\lamst x.\coml_5 \rapp x \rapp N_1$.
Applying $\Phi'$ to
\[ A \otimes A \xmigi{s \otimes t} (A \rimp A) \otimes A \xmigi{ev} A, \]
we get a map $d_2:A \migi (A \limp A)$ sending $y$ to
$(x \mapsto  (\coml_4 \rapp x \rapp (\coml_5 \rapp y \rapp N_1)))$.
Take a realizer $N_2 \in \rlz{d_2}_{A \limp (A \limp A)}$.

Let $d_3:A \migi A$ be a map sending $x$ to $\coml_2 \rapp x \rapp N_2$, whose realizer is $\lamst x. \coml_2 \rapp x \rapp N_2$.
When we take $\comdl \in \rlz{d_3}_{A \limp A}$, for any $y \in \uhka$, 
\begin{eqnarray*}
y \lapp \comdl &=& \coml_1 \rapp y \rapp \comdl \\
&=& d_3 (y) \\
&=& \coml_2 \rapp y \rapp N_2.
\end{eqnarray*}
For any $x \in \uhka$, 
\begin{eqnarray*}
x \lapp (y \lapp \comdl) &=& x \lapp (\coml_2 \rapp y \rapp N_2) \\
&=& \coml_1 \rapp x \rapp (\coml_2 \rapp y \rapp N_2) \\
&=& d_2 (y)(x) \\
&=& \coml_4 \rapp x \rapp (\coml_5 \rapp y \rapp N_1).
\end{eqnarray*}
For any $z \in \uhka$, 
\begin{eqnarray*}
(x \lapp (y \lapp \comdl)) \rapp z &=& \coml_4 \rapp x \rapp (\coml_5 \rapp y \rapp N_1) \rapp z \\
&=& (d_1 \circ (\phi' \rimp id) \circ \phi) (y)(x)(z) \\
&=& ((\phi' \circ (\phi(y)))(z)(x) \\
&=& x \lapp (y \rapp z).
\end{eqnarray*}
\end{proof}

In this section we showed propositions for the necessary conditions to obtain certain structures on categories of assemblies.
Next, consider whether the similar propositions hold for the cases of categories of modest sets.
The next propositions can be proven in the same way as Proposition \ref{propnecccc}, \ref{propnecmul} and \ref{propnecclo}.

\begin{prop} \label{propnecccc2}
Suppose $\hka$ is a total applicative structure and $\catc := \moda$ happens to be a CCC.
$\hka$ is an $\sk$-algebra if the following conditions hold.
\begin{enumerate}[(i)]
\item $|Y^X| = \homrm_{\catc} (X,Y)$ and $\rlz{f}_{Y^X} = \{ r \mid \mbox{$r$ realizes $f$} \}$.
\item For $f:X' \migi X$ and $g:Y\migi Y'$, $g^f : Y^X \migi Y'^{X'}$ is the function sending $h:X\migi Y$ to $g \circ h \circ f$.
\item The forgetful functor from $\catc$ to $\cats$ strictly preserves finite products. \label{confor}
\item The adjunction $\Phi: \homrm_{\catc} (X \times Y, Z) \migi \homrm_{\catc} (X, Z^Y)$ is the function sending a function $f$ to the function $x \mapsto (y \mapsto f(x,y))$. \qed
\end{enumerate}
\end{prop}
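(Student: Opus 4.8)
The plan is to observe that the proof of Proposition~\ref{propnecccc} goes through essentially unchanged, since its key auxiliary object is already a modest set and $\moda$ is closed under the constructions it uses. First I would recall the object $A := (\uhka, \erlz_A)$ with $\rlz{a}_A := \{ a \}$ from the proof of Proposition~\ref{propnecccc}; distinct $a, a' \in \uhka$ have disjoint (indeed singleton) realizer sets, so $A$ is an object of $\catc := \moda$. Because $\moda$ is assumed to be a CCC, the objects $A \times A$, $A^A$, $(A^A)^A$, $(A^A)^{(A^A)}$ and the maps between them that appear in that proof --- the projections, the evaluation maps, the transposes under $\Phi$, and the images of $\phi$ and identity maps under the exponential bifunctor --- all live inside $\moda$. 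Moreover the four hypotheses of the present proposition are exactly those of Proposition~\ref{propnecccc} with $\catc$ read as $\moda$, so the same concrete descriptions of hom-sets, realizers, the forgetful functor, and the adjunction $\Phi$ remain valid.

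Granting this, I would then repeat verbatim the three extraction steps of Proposition~\ref{propnecccc}: reading $\comi$ off a realizer of $id_A$ yields $\comi a = a$ for all $a$; reading $\comk$ off a realizer of $\Phi$ applied to the first projection $A \times A \migi A$ yields $\comk a a' = a$ for all $a, a'$; and reading $\comsa$ off a realizer of the composite obtained by applying $\Phi$ twice to the map $((A^A)^A \times A^A) \times A \migi A$ described there yields $\comsa x y z = x z (y z)$ for all $x, y, z$. Since $\hka$ is total and contains $\comsa$ and $\comk$, it is an $\sk$-algebra.

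The hard part --- or rather the only point that could a priori behave differently from the assembly case --- is precisely the closure claim in the first paragraph: that $A$ and everything built from it by finite products and exponentials stays inside $\moda$. But $A$ is modest and $\moda$ is a full sub-CCC of $\asmca$, so this is immediate, and no genuinely new computation is needed beyond what the proof of Proposition~\ref{propnecccc} already provides. The same remark applies to Propositions~\ref{propnecmul} and \ref{propnecclo}, whose auxiliary object is the same $A$ and whose constructions likewise stay within the relevant full subcategory of modest sets.
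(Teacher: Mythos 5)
Your proof is correct and is exactly the paper's argument: the paper establishes this proposition by noting it ``can be proven in the same way as Proposition~\ref{propnecccc}'', and the key observation you supply --- that $A$ with singleton realizer sets is a modest set, so the object $A$ and everything built from it by products and exponentials lives in $\moda$, after which the extraction of $\comi$, $\comk$ and $\comsa$ proceeds verbatim --- is precisely the justification needed. One small caveat on your closing aside: for the closed-category analogue (Proposition~\ref{propnecclo2}) the paper must \emph{add} the hypothesis that $|I|$ is a singleton, because the assembly used in Remark~\ref{remsingle} to derive this from the other conditions is not a modest set; this does not affect the CCC case you are proving, nor the closed-multicategory case.
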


\begin{prop} \label{propnecmul2}
Suppose $\hka$ is a total applicative structure and $\catc := \moda$ happens to be a closed multicategory.
$\hka$ is a $\bikuro$-algebra if the following conditions hold.
\begin{enumerate}[(i)]
\item $\catc (;X) = |X|$ and $\ucatc(;X) = X$.
\item $\catc (X;Y) = \homrm_{\catc} (X,Y)$ and $\ucatc(X;Y) =(\homrm_{\catc} (X,Y), \erlz)$, where \\
$\rlz{f}= \{ r \mid \mbox{$r$ realizes $f$} \}$.
\item $\ucatc (X_1,\dots ,X_{n};Y) = \ucatc (X_{1};\ucatc (X_2,\dots ,X_n;Y))$ and $\catc (X_1,\dots ,X_n;Y)$ is the underlying set of $\ucatc (X_1,\dots ,X_n;Y)$.

\item For $g:Y_1,\dots ,Y_n \migi Z$ and $f_l:X^l_1,\dots ,X^l_{k_l} \migi Y_l$, $g \circ (f_1,\dots ,f_n)$ is the function sending $x^1_1,\dots ,x^1_{k_1},\dots ,x^n_{k_n}$ to $g(f_1 (x^1_1,\dots ,x^1_{k_1}),\dots ,f_n (x^n_1,\dots ,x^n_{k_n}))$.
When $k_l = 0$ for some $1 \leq l \leq n$, $g \circ (f_1,\dots ,f_n)$ is the function given $y_l \in |Y_l|$ pointed by $f_l$ as the $l$-th argument of $g$.
\item $ev_{X_1,\dots ,X_n;Y}$ sends $f, x_1,\dots ,x_n$ to $f(x_1,\dots ,x_n)$.
\item $\Lambda_{Z_1,\dots ,Z_m;X_1,\dots ,X_n;Y}$ sends a function $(z_1,\dots ,z_m,x_1,\dots ,x_n \mapsto f(z_1,\dots ,z_m,x_1,\dots ,x_n))$ to the function $(z_1,\dots ,z_m \mapsto f(z_1,\dots ,z_m,-,\dots ,-))$. \qed
\end{enumerate}
\end{prop}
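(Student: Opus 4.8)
The plan is to run the proof of Proposition \ref{propnecmul} essentially verbatim; the only extra point to check is that every object appearing there is a modest set, so that the entire argument takes place inside $\moda$ rather than $\asmca$.

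First I would take the object $A := (\uhka, \erlz_A)$ with $\rlz{a}_A := \{ a \}$, and note that $A$ is a modest set, since distinct elements of $\uhka$ have disjoint (indeed singleton) sets of realizers. Since $\catc := \moda$ is a category, $id_A$ has a realizer, which we name $\comi$; unwinding $\comi \rlz{a}_A \subseteq \rlz{id_A(a)}_A$ gives $\comi a = a$ for all $a \in \uhka$. Then, exactly as in Proposition \ref{propnecmul}, I would introduce $\phi : A \migi \ucatc(A;A)$ sending $a$ to the map $x \mapsto a x$, which is realized by $\comi$. Here $\ucatc(A;A)$, and more generally every internal hom object occurring in the argument (such as $\ucatc(A;\ucatc(A;A))$), is an object of the closed multicategory $\moda$ and hence modest; by hypotheses~(ii) and~(iii) its realizer sets are the full sets of realizers $\{ r \mid \mbox{$r$ realizes $f$} \}$, exactly as in Proposition \ref{propnecmul}. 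The modesty here can also be seen directly: a single $r$ realizing two distinct maps $A \migi A$ would realize two maps disagreeing at some point, producing a common realizer for two distinct points of the modest set $A$.

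With this in place, the remaining steps copy those of Proposition \ref{propnecmul} without change. I would build the multimap $b : A,A,A \migi A$ sending $(x,y,z)$ to $x(yz)$ out of $id$, $\phi$ and the evaluation maps, take $\comb$ to be a realizer of $\Lambda(\Lambda(b)) : A \migi \ucatc(A;\ucatc(A;A))$, and check $\comb x y z = x(yz)$. For each $a \in \uhka$ I would build the map $f_a : A \migi A$ sending $x$ to $x a$ out of $id$, $\phi$, the nullary map $a \in \catc(;A)$ and the evaluation map, and take $\kuro{a}$ to be a realizer of $f_a$, so that $\kuro{a} x = x a$ for all $x$. Since $\hka$ then contains $\comb$, $\comi$ and $\kuro{a}$ for every $a$, it is a $\bikuro$-algebra.

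I do not expect a real obstacle: all the computational content is supplied by Proposition \ref{propnecmul}, and the only new ingredient is that $\moda$, as a full subcategory of $\asmca$, is closed under the constructions used — which reduces to the elementary fact that a realizer shared by two maps with modest codomain forces the maps to coincide. In contrast to the construction of monoidal structure on $\moda$, where tensor products need the reflection-functor correction of the proof of Proposition \ref{propsmcc2}, here only the object $A$ and internal hom objects occur, all automatically modest, so no such correction is required; the one thing to be careful about is simply to keep noting that every intermediate object stays in $\moda$.
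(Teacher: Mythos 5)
Your proposal is correct and follows exactly the route the paper intends: the text preceding the statement says this proposition "can be proven in the same way as Proposition \ref{propnecmul}", and your write-up is precisely that argument, with the (correct) additional observation that the only object constructed by hand, $A$ with singleton realizer sets, is modest, while the internal hom objects are objects of $\moda$ by assumption. You also rightly note why no extra hypothesis is needed here, in contrast with the closed-category case of Proposition \ref{propnecclo2}.
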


\begin{prop} \label{propnecclo2}
Suppose $\hka$ is a total applicative structure and $\catc := \moda$ happens to be a closed category.
$\hka$ is a $\biikuro$-algebra if the following conditions hold.
\begin{enumerate}[(i)]
\item $|Y \rimp X| = \homrm_{\catc} (X,Y)$ and $\rlz{f}_{Y \rimps X} = \{ r \mid \mbox{$r$ realizes $f$} \}$.
\item $g \rimp f : (Y \rimp X) \migi (Y' \rimp X')$ is the function sending $h:X \migi Y$ to $g \circ h \circ f$.
\item The underlying set of the unit object $I$ is the singleton $\{ \ast \}$. \label{concloextra}
\item $i_X$ is the function sending a function $(f:\ast \mapsto x)$ to $x$.
\item $L_{Y,Z}^X$ is the function sending $g:Y \migi Z$ to the function $(f:X \migi Y) \mapsto g \circ f$. 
\end{enumerate} \qed
\end{prop}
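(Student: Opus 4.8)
The plan is to run the proof of Proposition \ref{propnecclo} essentially verbatim, taking $\catc := \moda$ in place of $\asmca$. The only extra thing to check is that every object used in that argument already lives in $\moda$. The object $A := (\uhka, \erlz_A)$ with $\rlz{a}_A := \{ a \}$ is a modest set, since distinct $a, a'$ have disjoint (singleton) realizers; and the internal hom objects $A \rimp A$, $(A \rimp A) \rimp (A \rimp A)$, $(A \rimp A) \rimp A$, $A \rimp I$ are automatically modest, because $\moda$ is a closed category and hence closed under its internal hom functor. So all the maps and realizers constructed in the proof of Proposition \ref{propnecclo} make sense in $\moda$ as well.

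Concretely, I would first note that $\comi$ realizes $id_A$, so $\comi a = a$ for all $a \in \uhka$, and that $\phi : A \migi (A \rimp A)$ sending $a$ to $(x \mapsto a x)$ is realized by $\comi$. Using conditions (i), (ii) and (v), the composite
\[ A \xmigi{\phi} (A \rimp A) \xmigi{L} ((A \rimp A) \rimp (A \rimp A)) \xmigi{id \rimps \phi} ((A \rimp A) \rimp A) \]
is the function $a \mapsto (a' \mapsto (a'' \mapsto a (a' a'')))$, and any realizer of it serves as a $\comb$-combinator. Since $I \cong (I \rimp I)$, $\comi$ realizes $id_I$, and by condition (iii) $|I| = \{ \ast \}$, we may assume $\comi \in \rlz{\ast}_I$; then a realizer $\batui$ of $i_A^{-1} : A \migi (A \rimp I)$ satisfies $\batui a \comi = a$. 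Finally, for each $a \in \uhka$ the map $g_a : I \migi A$, $\ast \mapsto a$, is realized by $\batui a$, and a realizer of
\[ A \xmigi{\phi} (A \rimp A) \xmigi{id \rimps g_a} (A \rimp I) \xmigi{i_A} A \]
(which by condition (iv) is the function $a' \mapsto a' a$) serves as $\kuro{a}$. Together with $\comb$, $\comi$ and $\batui$ this exhibits $\hka$ as a $\biikuro$-algebra.

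The one genuine difference from the $\asmca$ case is the reason condition (iii) appears as a hypothesis rather than as a derived fact: the argument of Remark \ref{remsingle} that $|I|$ must be a singleton uses the two-element assembly $X = (\{ x_1, x_2 \}, \erlz_X)$ with $\rlz{x_i}_X := \uhka$, which is not a modest set, so that derivation is unavailable in $\moda$ and the singleton condition has to be assumed. I do not expect any serious obstacle beyond this bookkeeping; the main point to get right is simply confirming that $A$ and the internal homs involved are modest, which is immediate from closedness of $\moda$.
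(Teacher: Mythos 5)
Your proposal is correct and matches the paper's own treatment: the paper proves this proposition by running the argument of Proposition \ref{propnecclo} unchanged (noting that $A$ with singleton realizer sets is modest and the internal homs are objects of $\moda$ by hypothesis), and it likewise explains that condition (iii) must be assumed rather than derived because the two-element assembly of Remark \ref{remsingle} is not a modest set.
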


Here note that Proposition \ref{propnecclo2} has one more condition, that the underlying set of the unit object is a singleton, than Proposition \ref{propnecclo}.
This is because the assembly $X$ we used in Remark \ref{remsingle} is not a modest set.

On the other hand, for the cases of SMCCs, monoidal closed categories and monoidal bi-closed categories, we cannot state propositions for modest sets similar to Proposition \ref{propnecsmcc},  \ref{propnec5} and \ref{propnecbiclo}.
Since we define tensor products in categories of modest sets in the different way from those of categories of assemblies (as seen in the proof of Proposition \ref{propsmcc2}), the condition ``the forgetful functor from $\catc$ to $\cats$ is strict monoidal" is not appropriate for the case of modest sets.

For the case of SMCCs, we can avoid this problem by presenting a more generalized proposition, that is for symmetric closed categories, instead of SMCCs.
A {\it symmetric closed category} is a closed category with a natural isomorphism 
\[ S_{X,Y,Z} : (Z \rimp Y) \rimp X \cong (Z \rimp X) \rimp Y \] 
satisfying appropriate axioms (\cf~\cite{laplaza2}).

\begin{prop} \label{propnecsmcc2}
Suppose $\hka$ is a total applicative structure and $\catc := \asmca$ (or $\moda$) happens to be a symmetric closed category.
$\hka$ is a $\bci$-algebra if the following conditions hold.
\begin{enumerate}[(i)]
\item $|Y \rimp X| = \homrm_{\catc} (X,Y)$ and $\rlz{f}_{Y \rimps X} = \{ r \mid \mbox{$r$ realizes $f$} \}$.
\item $g \rimp f : (Y \rimp X) \migi (Y' \rimp X')$ is the function sending $h:X \migi Y$ to $g \circ h \circ f$.
\item $L_{Y,Z}^X$ is the function sending $g:Y \migi Z$ to the function $(f:X \migi Y) \mapsto g \circ f$.
\item $S_{X,Y,Z}$ is the function sending $f: x \mapsto (y \mapsto f(x)(y))$ to $S(f) : y \mapsto (x \mapsto f(y)(x))$.
\end{enumerate}
\end{prop}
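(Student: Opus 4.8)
The plan is to construct the combinators $\comi$, $\comb$ and $\comt$ (an element with $\comt xy = yx$) directly, and then take $\comc := \comb(\comb(\comt(\comb\comb\comt))\comb)\comt$, which satisfies $\comc xyz = xzy$ by the combinator identity already used in the proof of Proposition \ref{propnecsmcc} (cf.\ also Lemma \ref{lembcijanai}); this makes $\hka$ a $\bci$-algebra. Note that, unlike in Proposition \ref{propnecclo}, there is no need to pass through a $\biikuro$-algebra, and no need to mention the unit object or $i_X$: conditions (i)--(iv) alone will do. As in the earlier proofs of this section, take the object $A := (\uhka, \erlz_A)$ with $\rlz{a}_A := \{a\}$, which is a modest set, so the argument below applies verbatim to both $\asmca$ and $\moda$. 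Taking $\comi$ to be a realizer of $id_A$ gives $\comi a = a$ for all $a \in \uhka$. Let $\phi : A \migi (A \rimp A)$ be the function $a \mapsto (x \mapsto ax)$; since $\phi(a)$ is realized by $a$ itself, $\phi$ is a morphism realized by $\comi$, and by condition (i) this means $\comi \in \rlz{\phi}_{(A \rimps A) \rimps A}$.

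For $\comb$ I would reuse the construction from the proof of Proposition \ref{propnecclo} unchanged: that part uses only the hom-functor conditions (i), (ii) and condition (iii) on $L$, all present here. Concretely, a realizer of the composite
\[ A \xmigi{\phi} (A \rimp A) \xmigi{L_{A,A}^A} ((A \rimp A) \rimp (A \rimp A)) \xmigi{id \rimps \phi} ((A \rimp A) \rimp A), \]
whose underlying function is $a \mapsto (a' \mapsto (a'' \mapsto a(a'a'')))$, is a $\comb$-combinator: $\comb xyz = x(yz)$ for all $x,y,z \in \uhka$. (Such a realizer exists because $\catc$ is given to be a category, so the composite is a morphism of $\catc$.)

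For $\comt$ I would use the symmetry $S$. By condition (i), $|(A \rimp A) \rimp A| = \homrm_{\catc}(A, A \rimp A)$, and by condition (iv) with $X = Y = Z = A$ the underlying function of $S_{A,A,A}$ sends the element $\phi$ — viewed as the two-argument map $(a,x) \mapsto ax$ — to the element with its two arguments swapped, i.e.\ to the map $A \migi (A \rimp A)$ sending $p$ to $(q \mapsto qp)$. Since $S_{A,A,A}$ is a morphism of $\catc$, it has a realizer $r_S$, and applying $r_S$ to $\comi \in \rlz{\phi}_{(A \rimps A) \rimps A}$ yields $r_S \comi \in \rlz{S_{A,A,A}(\phi)}_{(A \rimps A) \rimps A}$. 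Invoking condition (i) once more, the realizers of $S_{A,A,A}(\phi)$ as an element of this internal-hom assembly coincide with its realizers as a map $A \migi (A \rimp A)$; unwinding the latter, $\comt := r_S \comi$ satisfies $\comt p q = qp$ for all $p,q \in \uhka$. Putting $\comc := \comb(\comb(\comt(\comb\comb\comt))\comb)\comt$ then gives the $\comc$-combinator, so $\hka$ is a $\bci$-algebra; the case of $\moda$ is identical since $A$, $A \rimp A$ and $(A \rimp A) \rimp A$ are all modest sets.

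The one genuinely delicate step is the $\comt$-construction: reading off from condition (iv) that the symmetry carries the "application" element $\phi$ to the "reversed application" element, and then using condition (i) to identify "realizer of that element of the assembly $(A \rimp A) \rimp A$" with "realizer of the corresponding morphism $A \migi (A \rimp A)$", so that a realizer of $S_{A,A,A}$, precomposed with $\comi$, is forced to behave as a $\comt$-combinator. Everything else is the same routine translation of categorical composites into combinator equations already carried out for the other propositions in this section.
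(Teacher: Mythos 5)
Your proof is correct, and it follows exactly the route the paper intends: the paper actually states Proposition \ref{propnecsmcc2} without printing a proof, but your argument reuses the $\comi$ and $\comb$ constructions from Proposition \ref{propnecclo} verbatim and replaces the monoidal symmetry used in Proposition \ref{propnecsmcc} by the component $S_{A,A,A}$ applied to the ``application'' element $\phi$, which is the only and the natural way to extract $\comt$ in the purely closed setting. The key step — identifying realizers of $S_{A,A,A}(\phi)$ as an element of $(A \rimp A)\rimp A$ with realizers of the corresponding map $A \migi (A \rimp A)$ via condition (i), so that $\comt := r_S\,\comi$ satisfies $\comt p q = qp$ — is handled correctly, and the final identity $\comc := \comb(\comb(\comt(\comb\comb\comt))\comb)\comt$ is exactly the one the paper records in Lemma \ref{lembcijanai} and Proposition \ref{propnecsmcc}.
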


This proposition also shows that we cannot obtain $\asmca$ (or $\moda$) that is a symmetric closed category but not an SMCC, in the canonical way.

For the cases of monoidal closed categories and monoidal bi-closed categories, it is still not clear that there are any appropriate conditions to state propositions for modest sets similar to Proposition \ref{propnec5} and \ref{propnecbiclo}.

\section{Planar linear combinatory algebras} \label{secplca}

In Section \ref{seclca}, we recalled LCAs and rLCAs, that relate $\bci$-algebras and PCAs, and that induce categorical models of linear exponential modalities.
In this section, we apply the similar construction to $\biilp$-algebras.
We reformulate rLCAs for $\biilp$-algebras and PCAs, and call them {\it exp-rPLCAs}.
From an exp-rPLCA, we get a categorical model of $!$-modality on the non-symmetric multiplicative intuitionistic linear logic (MILL).
Also we reformulate rLCAs for $\biilp$-algebras and $\bci$-algebras, and call them {\it exch-rPLCAs}.
From an exch-rPLCA, we obtain a model for an exchange modality relating the non-symmetric MILL and the symmetric MILL.

In \cite{tomita2}, we already introduced the same construction called ``rPLCAs," based on bi-$\bdi$-algebras.
What defined as rPLCAs in this section are generalizations of those in \cite{tomita2}, based on $\biilp$-algebras.

\subsection{Exponential planar linear combinatory algebras} \label{secexpplca}

Linear exponential comonads on non-symmetric monoidal categories are investigated in \cite{hasegawa1}, which model $!$-modalities on non-symmetric MILL.

\begin{defi} \label{defmeiji}
A {\it linear exponential comonad} on a monoidal category $\catc$ consists of the following data.
\begin{itemize}
\item A monoidal comonad $(!, \delta, \epsilon, m, m_I)$. Here $!$ is an endofunctor on $\catc$,
$\delta_X : !X \migi !!X$ and $\epsilon : !X \migi X$ are monidal natural transformations for the comultiplication and the counit. A natural transformation $m_{X,Y} : !X \otimes !Y \migi !(X \otimes Y)$ and a map $m_I : I \migi !I$ make $!$ be a monoidal functor.
\item Monoidal natural transformations $e_X : !X \migi I$ and $d_X : !X \migi !X \otimes !X$.
\item A monidal natural transformation $\sigma_{X,Y} : !X \otimes !Y \migi !Y \otimes !X$ defined as
\[ !X \otimes !Y \xmigi{\delta_X \otimes \delta_Y} !!X \otimes !!Y \xmigi{m_{!X,!Y}} !(!X \otimes !Y) \xmigi{d_{!X \otimes !Y}} !(!X \otimes !Y) \otimes !(!X \otimes !Y) \]
\[ \xmigi{!(e_X \otimes id) \otimes !(id \otimes e_Y)} !(I \otimes !Y) \otimes !(!X \otimes I) \xmigi{!({\rm unitor}) \otimes !({\rm unitor})} !!Y \otimes !!X \xmigi{\epsilon_{!Y} \otimes \epsilon_{!X}} !Y \otimes !X. \]
\end{itemize}
Here these components need satisfy the following conditions.
\begin{enumerate}[(i)]
\item The following diagram commutes: \\
\xymatrix@C=30pt{
!X \otimes !X \otimes !Y \otimes !Y \otimes !Z \otimes!Z \ar[r]^{id \otimes \sigma \otimes id} \ar[d]_{id \otimes \sigma \otimes id} &
!X \otimes !Y \otimes !X \otimes !Y \otimes !Z \otimes !Z \ar[d]^{m \otimes m\otimes id} \\
!X \otimes !X \otimes !Y \otimes !Z \otimes !Y \otimes !Z \ar[d]_{id \otimes m \otimes m} &
!(X \otimes Y) \otimes !(X \otimes Y) \otimes !Z \otimes !Z \ar[d]^{id \otimes \sigma \otimes id} 
\\
!X \otimes !X \otimes !(Y \otimes Z) \otimes !(Y \otimes Z) \ar[d]_{id \otimes \sigma \otimes id} &
!(X \otimes Y) \otimes !Z \otimes !(X \otimes Y) \otimes !Z \ar[d]^{m \otimes m} \\
!X \otimes !(Y \otimes Z) \otimes !X \otimes !(Y \otimes Z) \ar[r]_{m \otimes m} &
!(X \otimes Y \otimes Z) \otimes !(X \otimes Y \otimes Z)
}

\item $m_{!Y ,!X} \circ \sigma_{!X , !Y} = !\sigma_{X,Y} \circ m_{!X ,!Y}$.
\item $\sigma_{X,Y}^{-1} = \sigma_{Y,X}$.
\item The following diagram commutes: \\
\xymatrix@C=50pt{
!X \otimes !Y \otimes !Z \ar[r]^{\delta_X \otimes \delta_Y \otimes id} \ar[d]_{id \otimes \sigma_{Y,Z}} &
!!X \otimes !!Y \otimes !Z \ar[r]^{m_{!X,!Y} \otimes id} &
!(!X \otimes !Y) \otimes !Z \ar[d]^{\sigma_{!X \otimes !Y, Z}} \\
!X \otimes !Z \otimes !Y \ar[rrd]_{\sigma_{X,Z} \otimes id} & &
!Z \otimes !(!X \otimes !Y) \ar[d]^{id \otimes \epsilon_{!X \otimes !Y}} \\
& & !Z \otimes !X \otimes !Y
}

\item The following diagram commutes: \\
\xymatrix@C=50pt{
!X \otimes !Y \ar[r]^{d_X \otimes d_Y} \ar[d]_{m_{X,Y}} &
!X \otimes !X \otimes !Y \otimes !Y \ar[r]^{id \otimes \sigma \otimes id} &
!X \otimes !Y \otimes !X \otimes !Y \ar[d]^{m \otimes m} \\
!(X \otimes Y) \ar[rr]_{d_{X \otimes Y}} & &
!(X \otimes Y) \otimes !(X \otimes Y)
}
\item The following diagram commutes: \\
\xymatrix@C=50pt{
I \ar[rd]^{m_I \otimes m_I} \ar[d]_{m_I} & \\
!I \ar[r]_{d_I} & !I \otimes !I
}
\item $(!X, e_X,d_X)$ is a comonoid in $\catc$.
\item $e_X$ and $d_X$ are coalgebra morphisms.
\item $\delta_X$ is a comonoid morphism.
\end{enumerate}
\end{defi}

Then we will introduce the categorical realizability to inducing linear exponential comonads on non-symmetric monoidal categories.
The results are reformulations of a part of contents in \cite{hoshino} and \cite{tomita2} to the case of $\biilp$-algebras.

\begin{defi}
An {\it exponential relational planar linear combinatory algebra (exp-rPLCA)} consists of a $\biilp$-algebra $\hka$ and a comonadic applicative morphism $(\bfban,\bfe,\bfd)$ on $\hka$ which satisfies the followings.
\begin{itemize}
\item There is $\bfk \in |\hka|$ such that $\bfk x (\bfban y) \subseteq \{ x \}$ for any $x , y \in |\hka|$.
\item There is $\bfw \in |\hka|$ such that $\bfw x (\bfban y) \subseteq x (\bfban y) (\bfban y)$ for any $x , y \in |\hka|$.
\end{itemize}
\end{defi}

While the above definition employs the different style from rLCAs of Definition \ref{defrlca}, we can also define exp-rPLCAs in the same style.

\begin{prop} \label{propstyle}
For a $\biilp$-algebra $\hka$ and a comonadic applicative morphism $(\bfban,\bfe,\bfd)$ on $\hka$, the followings are equivalent.
\begin{enumerate}
\item $(\hka,\bfban)$ is an exp-rPLCA.
\item Take two total relations $[\bfban,\bfban]:\hka \migi \hka$ and $k_i :\hka \migi \hka$ as $[\bfban,\bfban](x) := \{ \comp a a' \mid a,a' \in \bfban x \}$ and $k_i (x) := \{ \comi \}$.
Then they are applicative morphisms and $\bfban \preceq [\bfban,\bfban]$ and $\bfban \preceq k_i$ hold.
\end{enumerate}
\end{prop}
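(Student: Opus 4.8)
The plan is to prove the two implications separately; the direction $(2)\Rightarrow(1)$ is the shorter one. Assuming $(2)$, the inequality $\bfban\preceq k_i$ supplies an element $s\in\uhka$ with $s\,a=\comi$ for every $a\in\bfban x$ and every $x$, and $\bfban\preceq[\bfban,\bfban]$ supplies $t\in\uhka$ with $t\,a\in\comp(\bfban x)(\bfban x)$ for every such $a$. Using combinatory completeness for the planar lambda calculus (available since $\hka$ is a $\biilp$-algebra, hence has $\comb$, $\comi$, $\batui$, $\coml$, $\comp$), I would set $\bfk:=\lamst xy.\batui\,x\,(s\,y)$ and $\bfw:=\lamst xy.\coml\,x\,(t\,y)$. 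Then $\bfk\,x\,(\bfban y)\subseteq\{\batui\,x\,\comi\}=\{x\}$ and $\bfw\,x\,(\bfban y)\subseteq\coml\,x\,\bigl(\comp(\bfban y)(\bfban y)\bigr)=x\,(\bfban y)(\bfban y)$, which are exactly the two clauses defining an exp-rPLCA. Note that this direction never uses that $k_i$ and $[\bfban,\bfban]$ are applicative morphisms, only that they realize the stated inequalities, exactly as in the corresponding argument for rLCAs (cf.\ Definition \ref{defrlca}).

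For $(1)\Rightarrow(2)$, I first dispatch the easy parts. The relation $k_i$ is total and is an applicative morphism with realizer $\batui$, since $\batui\,\comi\,\comi=\comi$. The inequality $\bfban\preceq k_i$ holds with realizer $\bfk\,\comi$, because $\bfk\,\comi\,a=\comi\in k_i(x)$ for $a\in\bfban x$; and $\bfban\preceq[\bfban,\bfban]$ holds with realizer $\bfw\,\comp$, because for $a\in\bfban x$ we have $\bfw\,\comp\,a\in\comp(\bfban x)(\bfban x)=[\bfban,\bfban](x)$. Finally, $[\bfban,\bfban]$ is total because $\bfban$ is.

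The one genuinely delicate point is that $[\bfban,\bfban]$ is an applicative morphism. Given $\comp a_1 a_2\in[\bfban,\bfban](x)$ and $\comp b_1 b_2\in[\bfban,\bfban](x')$, a realizer must return some $\comp d_1 d_2$ with $d_1,d_2\in\bfban(xx')$; the natural target is $d_i:=r_{\bfban}\,a_i\,b_i$, but assembling $\comp(r_{\bfban}a_1b_1)(r_{\bfban}a_2b_2)$ out of the pairs $\comp a_1 a_2$ and $\comp b_1 b_2$ by decomposing with $\coml$ forces a permutation of the elements $a_2$ and $b_1$, which is not available planarly. The way around this is to exploit that $a_2$ and $b_1$ are \emph{boxed}: from $\bfd$, $\bfe$, $\bfk$ and $\bfw$ one constructs an element $\chi\in\uhka$ with $\chi\,u\,a\,b\in u\,b\,a$ for all $a\in\bfban z$, $b\in\bfban z'$ --- this is precisely the combinatory-algebra analogue of the map $\sigma_{X,Y}$ built in Definition \ref{defmeiji} from $\delta$, $m$, $d$, $e$ and $\epsilon$. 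With $\chi$ in hand, a realizer for $[\bfban,\bfban]$ is obtained by combining $\coml$, $\comp$, $r_{\bfban}$ and $\chi$ in the evident way.

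I expect the construction of $\chi$ (exchange of boxed elements from the comonadic data together with $\bfk$ and $\bfw$) to be the main obstacle; checking its defining equation unwinds into a routine but somewhat lengthy computation mirroring the commuting diagrams of Definition \ref{defmeiji}. Everything else reduces to short applications of combinatory completeness.
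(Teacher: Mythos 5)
Your direction $(2)\Rightarrow(1)$ and the easy parts of $(1)\Rightarrow(2)$ (totality, the realizer for $k_i$, and the realizers $\bfk\comi$ and $\bfw\comp$ for the two inequalities) coincide with the paper's proof, up to the harmless substitution of $\batui$ for $\comi$ as the realizer of $k_i$. Where you diverge is on the one point you correctly identify as delicate: the realizer witnessing that $[\bfban,\bfban]$ is an applicative morphism. You propose to manufacture a boxed-exchange combinator $\chi$ with $\chi\,u\,a\,b\in u\,b\,a$ for $a\in\bfban z$, $b\in\bfban z'$, mimicking the symmetry $\sigma$ of Definition \ref{defmeiji}, and then pair componentwise. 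This route is workable --- such a $\chi$ can indeed be assembled from $\bfd$, an element of $\bfban\comp$, $r_{\bfban}$, $\bfw$, $\bfe$, $\coml$ and $\bfk$ (box both arguments, pair them inside the box, duplicate the boxed pair with $\bfw$, and project out the second component of one copy and the first of the other, which is possible because both components remain boxed) --- but you have left this, which is the entire substance of the step, as an unexecuted sketch. The paper avoids the need for any exchange altogether: given $p=\comp a_1a_2$ and $q=\comp b_1b_2$, it \emph{discards} the second components (legitimately, since they are boxed: $\coml\,\bfk\,p=\bfk\,a_1\,a_2=a_1$ and likewise for $q$), applies $r_{\bfban}$ to land in $\bfban(xx')$, and then \emph{re-duplicates} with $\bfw\,\comp$ to produce an element of $[\bfban,\bfban](xx')$; concretely the realizer is $\lamst pq.\,\bfw\,\comp\,(r_{\bfban}\,(\coml\,\bfk\,p)(\coml\,\bfk\,q))$. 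So the componentwise pairing you aim for is never formed, and no permutation of arguments is required. If you wish to keep your route, you must actually construct $\chi$ and verify its equation, and also check that the subsequent planar assembly of $\comp(r_{\bfban}a_1b_1)(r_{\bfban}a_2b_2)$ from the two $\coml$-decompositions really needs only the single adjacent swap $a_2\leftrightarrow b_1$; the paper's discard-and-duplicate trick is the shorter path and is worth adopting.
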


\begin{proof} \hfill \\
(1)$\Rightarrow$(2):
Realizers of $[\bfban,\bfban]$ and $k_i$ exist as $\lamst pq.\bfw \comp (r_{\bfban} (\coml \bfk p)(\coml \bfk q))$ and $\comi$.
Realizers for $\bfban \preceq [\bfban,\bfban]$ and $\bfban \preceq k_i$ are $\bfw \comp$ and $\bfk \comi$. \\
(2)$\Rightarrow$(1):
Take a realizer $r_1$ of $\bfban \preceq [\bfban,\bfban]$ and a realizer $r_2$ of $\bfban \preceq k_i$.
Then $\bfk$ and $\bfw$ exist as $\lamst xy.\batui x (r_2 y)$ and $\lamst xy. \coml x (r_1 y)$.
\end{proof}

From an exp-rPLCA, we get a linear exponential comonad.

\begin{prop} \label{propexp-rPLCA}
For an exp-rPLCA $(\hka, \bfban)$, $\banst$ is a linear exponential comonad on $\asmca$.
\end{prop}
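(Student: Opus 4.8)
The plan is to build $\banst$ as a linear exponential comonad out of structure that is already available, and then to reduce every coherence condition of Definition \ref{defmeiji} to a triviality in $\cats$ via the forgetful functor. First I would record what comes for free. By Proposition \ref{propmonoclo}, $\asmca$ is a monoidal closed category, and by the explicit description of $\otimes$ and $I$ given there, the forgetful functor $U \colon \asmca \migi \cats$ (sending an assembly to its underlying set, a map to its underlying function, $X \otimes Y$ to $|X| \times |Y|$ and $I$ to a singleton) is faithful and strict monoidal. Since $\bfban \colon \hka \migi \hka$ is a comonadic applicative morphism between $\biilp$-algebras, Proposition \ref{propmoradj} gives that $\banst$ is a comonad on $\asmca$, with counit $\epsilon_X$ and comultiplication $\delta_X$ both carried by the identity function on $|X|$ (realized by $\bfe$ and $\bfd$), and Proposition \ref{proplax} gives that $\banst$ is a lax monoidal functor whose structure maps $m_{X,Y}$ and $m_I$ are again the evident set-theoretic maps, with realizers built from $r_{\bfban}$, $\coml$, $\comp$, $\comi$. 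Because $U$ is faithful and strict monoidal and carries $\banst$, $\delta$, $\epsilon$, $m$, $m_I$ to identities, the assertions that $\delta$ and $\epsilon$ are monoidal natural transformations and that $(\banst,\delta,\epsilon,m,m_I)$ is a monoidal comonad are then immediate; one only inspects underlying functions.

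Next I would introduce the new data $e_X$, $d_X$, $\sigma_{X,Y}$. Take $e_X \colon \banst X \migi I$ to be the unique function $|X| \migi \{ \ast \}$ and $d_X \colon \banst X \migi \banst X \otimes \banst X$ to be the diagonal $x \mapsto x \otimes x$. This is the one place where the exp-rPLCA hypothesis is genuinely used: by Proposition \ref{propstyle} we have applicative morphisms $k_i$ and $[\bfban,\bfban]$ with $\bfban \preceq k_i$ and $\bfban \preceq [\bfban,\bfban]$, and a realizer of the former realizes $e_X$ while a realizer of the latter realizes $d_X$ (concretely, $\bfk \comi$ and $\bfw \comp$). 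Naturality of $e$ and $d$ holds at the level of underlying functions. I would then take $\sigma_{X,Y}$ to be the prescribed composite and compute $U\sigma_{X,Y}$: the $\delta$, $m$, $\epsilon$ and unitor factors act as identities, $d$ as a diagonal, and the $!(e_X \otimes \mathrm{id})$ and $!(\mathrm{id} \otimes e_Y)$ factors as the two projections, so $U\sigma_{X,Y}$ is the set-theoretic swap $|X| \times |Y| \migi |Y| \times |X|$.

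Finally I would verify that $e$, $d$ and $\sigma$ are monoidal natural transformations and that conditions (i)--(ix) hold. Each of these is an equation between maps of $\asmca$; applying the faithful strict monoidal functor $U$ turns it into the corresponding equation in $\cats$ for the identity comonad on the Cartesian monoidal category $(\cats, \times, \{\ast\})$, with $e_X$ the unique map to the terminal object, $d_X$ the diagonal, $\sigma$ the swap, and $\banst, \delta, \epsilon, m, m_I$ all identities. Those equations hold because every Cartesian monoidal category carries its identity endofunctor as a linear exponential comonad in exactly this fashion; equivalently each of them is checked directly and instantly in $\cats$. By faithfulness of $U$ they then hold in $\asmca$, so $\banst$ is a linear exponential comonad on $\asmca$.

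The only content that uses the defining axioms of an exp-rPLCA is the construction of the realizers for $e_X$ and $d_X$ through Proposition \ref{propstyle}; the comonad and lax monoidal structure are supplied by earlier results, and the forgetful functor does the rest. What remains is not a real obstacle but a bookkeeping exercise: one must patiently identify the $U$-images of all the structure maps with the standard Cartesian ones so that the large diagrams, notably (i), (iv) and (v), collapse to known identities.
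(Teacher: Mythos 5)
Your proof is correct and follows essentially the same route as the paper's: the monoidal comonad structure is inherited from Propositions \ref{propmoradj} and \ref{proplax}, $e_X$ and $d_X$ are the terminal map and the diagonal realized via $\bfk \comi$ and $\bfw \comp$ (the paper writes the latter realizer as $\bfw (\lamst pq.\comp pq)$), and the coherence axioms are verified at the level of underlying functions. Your explicit appeal to the faithful strict monoidal forgetful functor to $\cats$, under which everything collapses to the identity comonad on a Cartesian monoidal category, is simply a careful spelling-out of the paper's ``it is easy to see'' step.
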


\begin{proof} \hfill
\begin{itemize}
\item It is easy to see that the comultiplication $\delta$ and the counit $\epsilon$ are monoidal natural transformations.
From Proposition \ref{proplax}, the comonad $\banst$ is a lax monoidal functor and thus we have $m_{X,Y} : \banst X \otimes \banst Y \migi \banst (X \otimes Y)$ and $m_I : I \migi \banst I$.
Therefore, we have $\banst$ as a monoidal comonad.
\item $e_X : \banst X \migi I$ is the function sending $x$ to $\ast$. A realizer for $e_X$ is $\bfk \comi$.
\item $d_X : \banst X \migi \banst X \otimes \banst X$ is the function sending $x$ to $x \otimes x$.
A realizer for $d_X$ is $\bfw (\lamst pq. \comp pq)$.
\item It is easy to see that the $(\banst, e_X, d_X)$ satisfies conditions for linear exponential comonads. \qedhere
\end{itemize}
\end{proof}

Next we try to obtain linear-non-linear models for the non-symmetric MILL, that is, monoidal adjunctions between (non-symmetric) monoidal closed categories and CCCs.
Although now we get a linear exponential comonad $\banst$ on $\asmca$, at this point it has not concluded that we obtain a linear-non-linear model, since we have not shown that the co-Kleisli adjunction between $\asmca$ and $\asmcaban$ is a monoidal adjunction.
To show this, we use the next proposition shown in \cite{hasegawa1}.

\begin{prop}
Let $\catc$ be a monoidal closed category and $!$ be a linear exponential comonad on $\catc$. When $\catc$ has finite products, the co-Kleisli category $\catc_!$ is a CCC and the co-Kleisli adjunction is monoidal.
\end{prop}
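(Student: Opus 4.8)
The plan is to transport the required structure along the defining hom-set identity $\catc_!(X,Y)=\catc({!}X,Y)$ of the co-Kleisli category, leaning on the monoidal closed structure of $\catc$ and on the Seely-style isomorphisms ${!}(X\times Y)\cong{!}X\otimes{!}Y$ and ${!}1\cong I$. Recall that $\catc_!$ has the objects of $\catc$, identities $\epsilon_X$, composition $g\circ_! f:=g\circ{!}f\circ\delta_X$, and that the co-Kleisli adjunction consists of the right adjoint $G\colon\catc\migi\catc_!$, which is the identity on objects and sends $f$ to $f\circ\epsilon$, and the left adjoint $F\colon\catc_!\migi\catc$ with $FX={!}X$ and $F(g\colon{!}X\migi Y)={!}g\circ\delta_X$, so that the induced comonad on $\catc$ is $FG={!}$. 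First I would observe that the finite products of $\catc$ are already finite products in $\catc_!$: $\catc_!(X,1)=\catc({!}X,1)$ is a singleton, and the universal property of $X\times Y$ in $\catc$ gives a bijection $\catc({!}Z,X\times Y)\cong\catc({!}Z,X)\times\catc({!}Z,Y)$ natural in ${!}Z$, hence natural in $Z$ for co-Kleisli precomposition (which acts on $\catc_!(Z,-)$ by precomposing with ${!}h\circ\delta$); the projections and the diagonal are $G\pi$ and $G\Delta$. This step uses only that $\catc$ has finite products.

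The central lemma is the \emph{Seely isomorphism}: the canonical natural transformation
\[ n_{X,Y}\colon {!}(X\times Y)\xmigi{d_{X\times Y}}{!}(X\times Y)\otimes{!}(X\times Y)\xmigi{{!}\pi_X\otimes{!}\pi_Y}{!}X\otimes{!}Y \]
is invertible, the inverse being assembled from $m$, $\delta$, $\epsilon$, $e$ and the projections of $\times$, and likewise ${!}1\cong I$ via $e_1$ and $m_I$. Proving this is Seely's theorem adapted to the non-symmetric setting, and it is exactly where the linear-exponential-comonad axioms get consumed: that each $({!}X,e_X,d_X)$ is a comonoid, that $e$ and $d$ are coalgebra morphisms, that $\delta$ is a comonoid morphism, and the compatibilities (i)--(vi) of Definition~\ref{defmeiji} relating $\sigma$ to $m$ and $d$. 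I expect this to be the main obstacle; everything else is formal bookkeeping.

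Granting the Seely isomorphisms, I would exhibit the exponential in $\catc_!$ as $B^A := (B\rimp{!}A)$, the internal hom of the monoidal closed category $\catc$ applied to ${!}A$. The exponential adjunction is then the composite
\begin{align*}
\catc_!(Z\times A,\,B) &= \catc\bigl({!}(Z\times A),\,B\bigr) \\
&\cong \catc\bigl({!}Z\otimes{!}A,\,B\bigr) \\
&\cong \catc\bigl({!}Z,\ B\rimp{!}A\bigr) \\
&= \catc_!\bigl(Z,\ B^A\bigr),
\end{align*}
using $n_{Z,A}$ for the first isomorphism and the closed adjunction of $\catc$ for the second. Each step is natural in ${!}Z$, hence natural in $Z$ with respect to co-Kleisli morphisms, so the composite is a natural bijection and $\catc_!$ is Cartesian closed.

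Finally, for the monoidality of the co-Kleisli adjunction I would promote the left adjoint $F$ to a strong monoidal functor $(\catc_!,\times,1)\migi(\catc,\otimes,I)$: on objects $F(X\times Y)={!}(X\times Y)\xmigi{n_{X,Y}}{!}X\otimes{!}Y=FX\otimes FY$ and $F1={!}1\cong I$, and naturality together with the associativity and unit coherence of these comparison maps follow from the naturality of $n$ and the same comonad and comonoid identities used for the Seely isomorphism. By the doctrinal adjunction principle, a strong monoidal left adjoint endows its right adjoint $G$ with a canonical lax monoidal structure (the mate of $F$'s structure) for which the unit and counit are monoidal natural transformations; hence $F\dashv G$ is a monoidal adjunction. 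This finishes the argument, the one genuinely calculational point being the Seely isomorphism of the second paragraph.
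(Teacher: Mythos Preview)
The paper does not actually prove this proposition: it is quoted from \cite{hasegawa1} and used as a black box to deduce Proposition~\ref{propexp-rPLCA2}. So there is no ``paper's own proof'' to compare against beyond the citation.

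Your outline is the standard argument and is correct. The decomposition into (a) products in $\catc_!$ inherited from $\catc$, (b) the Seely isomorphisms ${!}(X\times Y)\cong{!}X\otimes{!}Y$ and ${!}1\cong I$, (c) exponentials $B^A:=B\rimp{!}A$, and (d) strong monoidality of $F$ giving a monoidal adjunction by doctrinal adjunction, is exactly how the result is established in the non-symmetric setting. You are right that the only substantive step is (b): in the absence of a global symmetry one needs the induced local symmetry $\sigma_{X,Y}:{!}X\otimes{!}Y\migi{!}Y\otimes{!}X$ of Definition~\ref{defmeiji} and its coherence axioms (i)--(vi) to verify that the candidate inverse of $n_{X,Y}$ really is a two-sided inverse and that the resulting strong monoidal structure on $F$ is coherent. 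This is precisely the content of \cite{hasegawa1}, and your identification of it as the crux is on target.
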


\begin{prop} \label{propexp-rPLCA2}
For an exp-rPLCA $(\hka, \bfban)$, $\asmca$ has Cartesian products, and thus the co-Kleisli adjunction between $\asmca$ and a CCC $\asmcaban$ is monoidal.
\end{prop}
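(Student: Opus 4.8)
The plan is to reduce everything to one new fact: that $\asmca$ has finite (Cartesian) products. Indeed, by Proposition~\ref{propmonoclo} $\asmca$ is a monoidal closed category, and by Proposition~\ref{propexp-rPLCA} $\banst$ is a linear exponential comonad on it; so, by the proposition of Hasegawa quoted immediately above, once $\asmca$ is known to have finite products its co-Kleisli category with respect to $\banst$ is a CCC and the co-Kleisli adjunction is monoidal. It then only remains to recall, as for rLCAs in Section~\ref{seclca}, that this co-Kleisli category is (equivalent to) $\asmcaban = \asmc{\hka_{\bfban}}$, where $\hka_{\bfban} = (\uhka, @)$ with $x @ y := x(\bfban y)$ is a PCA --- which gives the statement verbatim.

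For the finite products, the terminal object is immediate: the singleton assembly whose unique element has $\uhka$ as its realizer set is terminal, the constant map into it being realized by $\comi$ (this already holds for any total applicative structure satisfying the conditions of Definition~\ref{defasm}). The binary product $X \times Y$ should have $|X| \times |Y|$ as underlying set, but its realizer sets cannot be taken as for the tensor product: each of the two projections must be able to \emph{discard} one of the two components of a realizer, and for $f : Z \migi X$, $g : Z \migi Y$ the pairing $\langle f, g \rangle : z \mapsto (f(z), g(z))$ must be able to \emph{reuse} a realizer of $z$. Since a $\biilp$-algebra offers neither weakening nor contraction on bare elements, both must be routed through $\bfban$. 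Following the construction of additive products for rLCAs in \cite{hoshino} (transported from the symmetric to the present planar setting), I would take the realizers of $(x,y)$ to be built from $\comp$ applied to the $\bfban$-images of realizers of $x$ and of $y$; realize the two projections using $\bfe$ to unwrap, the element witnessing $\bfban \preceq k_i$ from Proposition~\ref{propstyle} together with $\batui$ to discard, and $\coml$ to decompose the pair; and realize $\langle f, g \rangle$ using $\bfd$ and the element witnessing $\bfban \preceq [\bfban,\bfban]$ from Proposition~\ref{propstyle}. Checking that these data satisfy the universal property of the product is then a routine computation with the equations of Table~\ref{tab:combinator}.

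The step I expect to be the main obstacle is the construction of a realizer for the pairing $\langle f,g \rangle$: one has to reconcile a \emph{bare} realizer of an element of $Z$ with the ($\bfban$-marked) realizers demanded by $X \times Y$, in such a way that both $f(z)$ and $g(z)$ can be recovered from it, and here the relational (rather than functional) nature of $\bfban$ must be handled with care. This is precisely the point at which the extra exp-rPLCA structure --- $\bfw$ and $\bfd$, equivalently $\bfban \preceq [\bfban,\bfban]$, together with $\bfk$ / $\bfban \preceq k_i$ --- is genuinely used, and where the construction departs from the plain $\biilp$-algebra case. Once the finite products are in place, the two asserted consequences follow at once by combining Proposition~\ref{propexp-rPLCA} with Hasegawa's proposition, and then reading off the co-Kleisli category as the CCC $\asmcaban$.
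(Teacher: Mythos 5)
Your overall strategy — reduce everything to showing $\asmca$ has finite products and then invoke Hasegawa's proposition together with Propositions \ref{propmonoclo} and \ref{propexp-rPLCA} — is exactly the paper's, and your terminal object is the paper's as well. But there is a genuine gap in the construction of the binary product, and it sits precisely at the spot you yourself flag as "the main obstacle" without resolving it. If $\rlz{(x,y)}$ consists of elements $\comp\, u\, v$ with $u \in \bfban p$, $v \in \bfban q$ for $p \in \rlz{x}_X$, $q \in \rlz{y}_Y$, then the pairing $\langle f,g\rangle : Z \migi X \times Y$ is not realizable: a realizer would have to send a \emph{bare} $c \in \rlz{z}_Z$ to some element of $\bfban(p)$ with $p \in \rlz{f(z)}_X$, i.e.\ it would have to "promote" $c$ under $\bfban$. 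None of the available data does this: $r_{\bfban}$, $\bfe$, $\bfd$, $\bfk$, $\bfw$ and the witnesses of $\bfban \preceq [\bfban,\bfban]$ and $\bfban \preceq k_i$ all consume elements that are \emph{already} in the image of $\bfban$; there is no map $a \mapsto (\text{element of } \bfban a)$ in general (in the term model of Example \ref{examrplca} one cannot form $!N$ with $N$ depending on an abstracted variable). So the universal property fails for your candidate product.

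The paper's fix is to change the realizers of the product, not to find a cleverer pairing realizer. It sets
\[ \rlz{(x,y)} := \{\, \comp\,(\comp\, u\, v)\, a \mid \exists p, \exists q,\ u \in \bfban p,\ v \in \bfban q,\ p a \in \rlz{x}_X,\ q a \in \rlz{y}_Y \,\}, \]
i.e.\ a pair of $\bfban$-marked \emph{programs} together with a single shared \emph{bare} argument $a$. Then $\langle f,g\rangle$ is realized by any element of $\comp\,(\comp\,(\bfban r_f)(\bfban r_g))$ — the bare realizer of $z$ is just stored once as the shared argument, and no promotion is ever needed — while the projections still go through ($\coml$ to decompose, $\bfk$ to discard the unwanted banged component, $\bfe$ to unwrap, then apply to the shared argument), and nonemptiness of $\rlz{(x,y)}$ is witnessed by $\comp(\comp(\bfban(\bfk m))(\bfban(\bfk m')))(\bfban \comi)$. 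This shared-argument encoding is the one idea your proposal is missing. A smaller point: $\asmcaban$ in the statement denotes the co-Kleisli category $\asmca_{\banst}$, not $\asmc{\hka_{\bfban}}$; the passage to a PCA $\hka_{\bfban}$ is only available in the functional (exp-PLCA) case, so your last identification should be dropped.
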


\begin{proof} \hfill
\begin{itemize}
\item The terminal object is $(\{ \ast \}, \erlz)$, where $\rlz{\ast} := |\hka|$.
\item The underlying set of $X \times Y$ is $|X| \times |Y|$. Realizers are defined as
\[ \rlz{(x,y)} := \{ \ \comp (\comp uv) a \ \mid \ \mbox{$\exists p, \exists q$, $u \in {\bfban}p$, $v \in {\bfban}q$, $p a \in \rlz{x}_X$ and $q a \in \rlz{y}_Y$} \ \}. \]
The set of realizers is not empty since for $m \in \rlz{x}_X$ and $m' \in \rlz{y}_Y$,
\[ \comp (\comp (\bfban (\bfk m)) (\bfban (\bfk m'))) (\bfban \comi) \in \rlz{(x,y)}. \]
\item For maps $f:X \migi X'$ and $g :Y \migi Y'$ in $\asmca$, $f \times g$ is the function sending $(x,y)$ to $(f(x),g(y))$.
A realizer of $f \times g$ is 
$\lamst uv. \comp (\comp (r_{\bfban} M u) (r_{\bfban} N v))$, where $M \in \bfban (\comb r_f)$ and
$N \in \bfban (\comb r_g)$.
\item A realizer for the projection $\pi :X \times Y \migi X$ is 
$\coml (\coml (\lamst uv. \bfe (\bfk uv)))$.
A realizer for the projection $\pi' :X \times Y \migi Y$ is
$\coml (\coml (\lamst uv. \bfe (\bfk \comi uv)))$.
\item For any object $Z$ and any maps $f:Z \migi X$ and $g:Z \migi Y$, there exists a unique map
$h:Z \migi X \times Y$ such that $\pi \circ h = f$ and $\pi' \circ h =g$.
$h$ is the function sending $z$ to $(f(z),g(z))$, whose realizer is in
$\comp (\comp (\bfban r_f) (\bfban r_g))$. \qedhere
\end{itemize}
\end{proof}

For an exp-rPLCA $(\hka,\bfban)$, we can restrict $\banst :\asmca \migi \asmca$ to the comonad on $\moda$, as we saw in Remark \ref{remrest}.
By the same proof as the above, we also can get a linear-non-linear model using $\moda$.

\begin{prop}
For an exp-rPLCA $(\hka, \bfban)$, $\banst$ is a linear exponential comonad on $\moda$.
Moreover, the co-Kleisli adjunction between the monoidal closed category $\moda$ and the CCC $\moda_{\banst}$ is monoidal. \qed
\end{prop}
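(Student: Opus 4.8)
The plan is to repeat, for $\moda$, the arguments of Proposition~\ref{propexp-rPLCA} and Proposition~\ref{propexp-rPLCA2}, and then to invoke the result of \cite{hasegawa1} quoted just above. First I would recall from Remark~\ref{remrest} that, since $(\bfban,\bfe,\bfd)$ is comonadic, $\banst$ restricts to an endofunctor on $\moda$ which is again a comonad: if $a\in\bfban(\rlz{x}_X)\cap\bfban(\rlz{x'}_X)$ then $\bfe a\in\rlz{x}_X\cap\rlz{x'}_X$, so modesty of $X$ forces $x=x'$ and $\banst X$ is modest. By Proposition~\ref{propmodmcc}, $\moda$ is a monoidal closed category, with tensor $X\boxtimes Y := F(GX\otimes GY)$ for the reflection $(F\dashv G):\asmca\migi\moda$; recall moreover that this construction makes $F$ strong monoidal and $(F\dashv G)$ a monoidal adjunction (\cf~\cite{day72}).

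Next I would check the two hypotheses needed to apply the cited proposition of \cite{hasegawa1}: that $\banst$ is a linear exponential comonad on $\moda$, and that $\moda$ has finite products. For the comonad, the underlying functions of $\epsilon$, $\delta$, $m_{X,Y}$, $m_I$, $e_X$, $d_X$ and $\sigma_{X,Y}$ are the same as in the proof of Proposition~\ref{propexp-rPLCA} (for instance $e_X$ is realized by $\bfk\comi$ and $d_X$ by $\bfw(\lamst pq.\comp pq)$); the only genuine difference is that all the tensors must now be read as $\boxtimes=F(G-\otimes G-)$ rather than the naive one on $\asmca$. I would handle this by transporting the monoidal-comonad and comonoid data along the monoidal adjunction $(F\dashv G)$: since $F$ is strong monoidal, $G$ is lax monoidal, and $\banst$ on $\moda$ is the restriction of $\banst$ on $\asmca$, the structure maps and the coherence conditions (i)--(ix) of Definition~\ref{defmeiji} descend from the ones already verified on $\asmca$, using that $G$ is faithful and reflects the relevant equalities of underlying functions. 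Alternatively one verifies realizers directly, composing the realizers of Proposition~\ref{propexp-rPLCA} with realizers of the canonical comparison maps between $F(GX\otimes GY)$ and $FGX\boxtimes FGY$.

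For finite products on $\moda$ I would take the terminal object and the binary product exactly as in Proposition~\ref{propexp-rPLCA2} and check that they stay modest. The terminal object $(\{\ast\},\uhka)$ is trivially modest. For $X\times Y$, suppose $\comp(\comp uv)a$ lies in $\rlz{(x,y)}\cap\rlz{(x',y')}$, witnessed by $u\in\bfban p$, $v\in\bfban q$ with $pa\in\rlz{x}_X$ and $qa\in\rlz{y}_Y$; applying the realizer $\coml(\coml(\lamst uv.\bfe(\bfk uv)))$ of the projection $\pi$ reduces to $\bfe u\cdot a$, and $\bfe u=p$ since $\bfe(\bfban p)\subseteq\{p\}$, so this element lies in $\rlz{x}_X$, whence modesty of $X$ gives $x=x'$, and symmetrically $y=y'$; thus $X\times Y$ is modest, with the same projections and pairings as before. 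With $\moda$ monoidal closed, equipped with the linear exponential comonad $\banst$ and with finite products, the proposition of \cite{hasegawa1} then yields at once that the co-Kleisli category $\moda_{\banst}$ (which, by Remark~\ref{remrest}, is the co-Kleisli category of the restriction of $\banst$ to $\moda$) is a CCC and that the co-Kleisli adjunction between $\moda$ and $\moda_{\banst}$ is monoidal.

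The step I expect to be the main obstacle is the second one: every piece of the linear exponential comonad was originally set up with respect to the naive tensor on $\asmca$, and one has to re-express $m$, $e$, $d$, $\sigma$ and recheck the coherences (i)--(ix) of Definition~\ref{defmeiji} against the reflected tensor $\boxtimes$. Leaning on strong monoidality of $F$ and faithfulness of $G$ should keep this essentially routine, but it is the one place where $\moda$ genuinely differs from $\asmca$, so it is where care is required.
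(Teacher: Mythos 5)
Your proposal is correct and follows essentially the same route as the paper, which simply asserts the result "by the same proof as the above" after noting (Remark \ref{remrest}) that $\banst$ restricts to $\moda$ and (Proposition \ref{propmodmcc}) that $\moda$ is monoidal closed via $\boxtimes = F(G{-}\otimes G{-})$. You in fact supply the two details the paper leaves implicit — transporting the linear exponential comonad data along the reflection to the tensor $\boxtimes$, and checking that the finite products of Proposition \ref{propexp-rPLCA2} remain modest — and both checks go through as you describe.
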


We have seen the co-Kleisli adjunctions obtained by an exp-rPLCA $(\hka, \bfban)$ are linear-non-linear models by showing that $\asmca$ and $\moda$ have Cartesian products.
We can further show that these categories have better structures as the next proposition says.

\begin{prop}
For an exp-rPLCA $(\hka, \bfban)$, $\asmca$ and $\moda$ are finitely complete and finitely cocomplete.
\end{prop}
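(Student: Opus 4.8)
The plan is to dispose of the three colimits and limits that exist in \emph{any} category of assemblies, to reuse the finite products already constructed in Proposition~\ref{propexp-rPLCA2}, to build binary coproducts using the $\bfban$-modality, and finally to transport everything to $\moda$ via the reflection $F \dashv G$ of Proposition~\ref{propsmcc2}.

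First I would note that $\asmca$ has equalizers, coequalizers and an initial object using only $\comi$ and composition, which are available since $\asmca$ is a category. The equalizer of $f,g : X \migi Y$ is $(E, \erlz)$ with $E := \{x \in |X| \mid f(x) = g(x)\}$ and $\|x\| := \|x\|_X$, the inclusion being realized by $\comi$; the coequalizer is the quotient assembly $(|Y|/{\approx}, \erlz)$, where $\approx$ is the equivalence relation generated by $\{(f(x),g(x)) \mid x \in |X|\}$ and $\|\bar y\| := \bigcup_{y' \approx y}\|y'\|_Y$, the quotient map again being realized by $\comi$; and the initial object is the empty assembly. In each case the universal property is immediate, because realizer sets are inherited or taken as unions, so any realizer of a competing (co)cone also realizes the induced map. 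Together with the terminal object and binary products of Proposition~\ref{propexp-rPLCA2}, this already gives finite completeness of $\asmca$ and reduces finite cocompleteness to the existence of binary coproducts.

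For the coproduct $X + Y$ I would take $|X| \uplus |Y|$ as underlying set and encode each summand as a continuation awaiting two $\bfban$-ed handlers. Concretely, let $\|inl(x)\|_{X+Y}$ consist of those $R \in \uhka$ for which there is $a \in \|x\|_X$ with $R u v = p a$ whenever $u \in \bfban p$ and $v \in \bfban q$, and dually $\|inr(y)\|_{X+Y}$ using a witness $b \in \|y\|_Y$ with $R u v = q b$. These sets are non-empty: $\lamst u v.\bfk((\bfe u)a)v$ lies in the first (using $\bfe(\bfban p) = p$ and $\bfk z (\bfban q) = z$) and symmetrically $\lamst u v.\bfk((\bfe v)b)u$ lies in the second. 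The injections are then realizable, $inl$ by $\lamst a u v.\bfk((\bfe u)a)v$ and $inr$ analogously, and the copairing $[h_1,h_2] : X + Y \migi Z$ of maps realized by $r_1$ and $r_2$ is realized by $\lamst R.\, R\, N_1\, N_2$ for any fixed $N_1 \in \bfban r_1$ and $N_2 \in \bfban r_2$: on a realizer of $inl(x)$ this unfolds to $r_1 a \in \|h_1(x)\|_Z$, and on a realizer of $inr(y)$ to $r_2 b \in \|h_2(y)\|_Z$. Here the exponential modality is used precisely so that $\bfk$ may discard the unused handler; a plain $\biilp$-algebra has no such discarding, which is consistent with the hierarchy of Section~\ref{secsepa}.

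Finally, for $\moda$ I would invoke that it is a reflective subcategory of $\asmca$ via the reflection $F \dashv G$ of Proposition~\ref{propsmcc2}. Reflective subcategories are closed under limits (and one checks directly that products and equalizers of modest sets are again modest, using the projection and inclusion realizers), so $\moda$ is finitely complete; and finite colimits in $\moda$ are computed by applying $F$ to the corresponding finite colimits of $\asmca$, so $\moda$ is finitely cocomplete. I expect the main obstacle to be the coproduct step: choosing the realizer sets for $inl$ and $inr$ so that injectivity of the coprojections on realizers, realizability of the copairing, \emph{and} the universal property all hold simultaneously, since this is the single point where the ``planar'' realizers are insufficient and the $\bfban$-modality must be brought in essentially.
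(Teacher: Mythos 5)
Your reductions are the right ones and most of the pieces match the paper: equalizers with inherited realizers, coequalizers as quotients with union realizers, the empty initial object, the products of Proposition~\ref{propexp-rPLCA2}, and the passage to $\moda$ via the reflection are all exactly what the paper does. The gap is in the one place you yourself flagged: the binary coproduct. With your realizer sets, the coprojection $inl \colon X \migi X+Y$ is not realizable. A realizer $r$ of $inl$ would have to satisfy $(r\,a)\,u\,v = (\bfe u)\,a$ for every $a$ (take $X$ to be the assembly $A$ with $\rlz{a}_A = \{a\}$, so the witness must be $a$ itself). Your candidate $\lamst a\,u\,v.\,\bfk((\bfe u)a)v$ is not a legal planar abstraction: in the body the variables occur in the order $u,a,v$, so after abstracting $v$ the rightmost free variable is $a$, and $\lamst u$ cannot be applied. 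This is not a notational slip but the real obstruction: producing $(\bfe u)\,a$ from arguments received in the order $a$ then $u$ requires exchanging the \emph{non-banged} payload $a$ past the banged handler $u$. The exp-rPLCA axioms only give weakening and contraction of banged arguments ($\bfk$, $\bfw$) and, via the induced $\sigma$, an exchange of two \emph{banged} arguments; there is no $X \otimes {!Y} \migi {!Y} \otimes X$, so no such combinator exists in general. (Your non-emptiness witnesses are salvageable because there $a$ and $b$ are constants, so $\kuro{a}$ is available; the second one as written, $\lamst uv.\bfk((\bfe v)b)u$, is also not planar, but $\lamst uv.(\bfk\,\comi\,u)(\kuro{b}(\bfe v))$ works. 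The injection realizer is where no repair exists, since $a$ is a bound variable and $\kuro{(\haih)}$ is not internally definable without making $\hka$ a $\bci$-algebra.)

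The paper's construction dodges this by keeping the payload on the right throughout: it sets $\rlz{(0,x)} := \{\comp\, m\, p \mid p \in \rlz{x}_X\}$ with the \emph{closed} tag $m := \lamst uv.\batui(\bfe u)(\bfk\,\comi\,v)$ (and dually $n := \lamst uv.\bfk\,\comi\,u\,(\bfe v)$), so that $inl$ is realized simply by $\comp\, m$, which is trivially planar. The case dispatch then happens in the copairing, realized by $\coml(\lamst uv.\,u\,(\bfban r_f)(\bfban r_g)\,v)$: the two banged handlers are fed to the tag \emph{before} the payload $p$ arrives as the final argument, so no exchange between a raw realizer and a banged one is ever needed. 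If you replace your coproduct realizers with this tag-and-pair encoding, the rest of your argument goes through as written.
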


\begin{proof}
First we show the proposition for $\asmca$.

\begin{itemize}
\item The terminal object and binary products are those in the proof of Proposition \ref{propexp-rPLCA2}.
\item Given maps $f,g :X \migi Y$, let $Z$ be an assembly defined as $|Z| := \{ x \in |X| \mid  f(x) = g(x) \}$ and $\rlz{x}_Z := \rlz{x}_X$.
Take a map $e :Z \migi X$ as the inclusion function, realized by $\comi$.
Then it is easy to see that this $e$ is the equalizer of $f$ and $g$.
\item The initial object is the empty set.
\item Given maps $f,g :X \migi Y$, take a set $|W| := Y/\sim$, where $\sim$ is the smallest equivalence relation satisfying $\forall x \in |X|, f(x) \sim g(x)$.
Take an assembly $W = (|W|,\erlz_W)$ by $\rlz{w}_W := \bigcup_{y \in w}{\rlz{y}_Y}$.
Take a map $e':Y \migi W$ by the projection, realized by $\comi$.
Then it is easy to see that this $e'$ is the coequalizer of $f$ and $g$.

\item The underlying set of $X+Y$ is $\{ (0,x) \mid x \in |X| \} \cup \{ (1,y) \mid y \in |Y| \}$.
Realizers are defined as
\begin{center}
$\rlz{(0,x)} := \{ \comp mp \mid p \in \rlz{x}_X \}$
and
$\rlz{(1,y)} := \{ \comp nq \mid q \in \rlz{y}_Y \}$,
\end{center}
where
$m := \lamst uv.\batui (\bfe u)(\bfk \comi v)$ and $n := \lamst uv.\bfk \comi u (\bfe v)$.

The coprojections $in_X:X \migi X+Y$ and
$in_Y :Y \migi X+Y$ are given as $x \mapsto (0,x)$ and $y \mapsto (1,y)$, and realized by $\comp m$ and $\comp n$ respectively.
Given maps $f:X \migi Z$ and $g:Y \migi Z$ realized by $r_f$ and $r_g$, we have a unique map $h:X+Y \migi Z$ such that $h \circ in_X = f$ and $h \circ in_Y = g$.
$h$ is the function sending $(0,x)$ to $f(x)$ and $(1,y)$ to $g(y)$, which is realized by $\coml (\lamst uv.u(\bfban r_f) (\bfban r_g) v)$.
\end{itemize}

Therefore, $\asmca$ is finitely complete and finitely cocomplete.

Since $\moda$ is the reflexive full subcategory of $\asmca$, $\moda$ is also finitely complete and finitely cocomplete.
\end{proof}

As an adjoint pair between a $\bci$-algebra and a PCA gives rise to an rLCA, an adjoint pair between a $\biilp$-algebra and a PCA gives rise to an exp-rPLCA and a monidal adjunction.

\begin{prop}
Let $(\delta \dashv \gamma): \hka \migi \hkb$ be an adjoint pair for a $\biilp$-algebra $\hka$ and a PCA $\hkb$.
\begin{enumerate}
\item $(\hka, \delta \circ \gamma)$ forms an exp-rPLCA.
\item $(\delst \dashv \gamst): \asmca \migi \asmcb$ is a monoidal adjunction between the monoidal  category $\asmca$ and the Cartesian monoidal category $\asmcb$.
\end{enumerate}
\end{prop}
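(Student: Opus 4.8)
The plan is to handle the two parts separately, reducing in each case to machinery already developed in the paper. For part~(1), set $\bfban := \delta \circ \gamma$; by Proposition~\ref{propmoradj}(\ref{propmoradj2}) this is a comonadic applicative morphism on $\hka$, so it comes with a counit realizer $\bfe \in \uhka$ satisfying $\bfe\,(\bfban a) \subseteq \{a\}$ and a comultiplication realizer $\bfd \in \uhka$ satisfying $\bfd\,(\bfban a) \subseteq \bfban(\bfban a)$ for all $a \in \uhka$. By Proposition~\ref{propstyle} it then suffices to produce realizers for $\bfban \preceq [\bfban, \bfban]$ and $\bfban \preceq k_i$, after which the required elements $\bfk$ and $\bfw$ follow by the recipe in the proof of that proposition (this is the step that uses the $\biilp$-combinators $\batui$, $\coml$, $\comp$). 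Throughout I fix realizers $r_\gamma \in \uhkb$ of $\gamma$ and $r_\delta \in \uhka$ of $\delta$ and use that $\hkb$, being a PCA, has full combinatory completeness.

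For $\bfban \preceq k_i$: using the discarding combinator of $\hkb$ one builds $Q \in \uhkb$ with $Q\,c \in \gamma(\comi)$ for every $c \in \uhkb$; picking $u_Q \in \delta Q$, every $w \in \bfban x = \delta(\gamma x)$ satisfies $r_\delta\,u_Q\,w \in \delta\bigl(\gamma(\comi)\bigr) = \bfban\comi$, hence $\bfe\,(r_\delta\,u_Q\,w) = \comi$, so the planar term $\lamst w.\,\bfe\,(r_\delta\,u_Q\,w)$ realizes $\bfban \preceq k_i$. For $\bfban \preceq [\bfban,\bfban]$ one needs a duplication trick. Given $w \in \bfban x$, comonadicity gives $\bfd\,w \in \bfban(\bfban x) = \delta\bigl(\gamma(\bfban x)\bigr)$, so, chasing the unions, $\bfd\,w \in \delta(c)$ for some $c \in \gamma(s_0)$ with $s_0 \in \bfban x$. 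Inside the PCA $\hkb$ one may contract: from a $\gamma$-realizer of the combinator $\comp$ one builds $P \in \uhkb$ with $P\,c \in \gamma(\comp\,s_0\,s_0)$ whenever $c \in \gamma(s_0)$. Picking $u_P \in \delta P$, we get $r_\delta\,u_P\,(\bfd\,w) \in \delta\bigl(\gamma(\comp\,s_0\,s_0)\bigr) = \bfban(\comp\,s_0\,s_0)$, hence $\bfe\,(r_\delta\,u_P\,(\bfd\,w)) = \comp\,s_0\,s_0$, which lies in $[\bfban,\bfban](x)$ because $s_0 \in \bfban x$. Thus $\lamst w.\,\bfe\,(r_\delta\,u_P\,(\bfd\,w))$ realizes $\bfban \preceq [\bfban,\bfban]$, and Proposition~\ref{propstyle} yields that $(\hka, \bfban)$ is an exp-rPLCA.

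For part~(2): $\asmca$ is a (non-symmetric) monoidal closed category by Proposition~\ref{propmonoclo}, $\asmcb$ is a CCC and hence Cartesian monoidal by Proposition~\ref{propccc}, and $(\delst \dashv \gamst):\asmca \migi \asmcb$ is an adjunction by Proposition~\ref{propmoradj}. It remains to see that the left adjoint $\delst$ is strong monoidal. Since a total PCA is in particular a $\biilp$-algebra, this is exactly Proposition~\ref{propadjbiilp}; for a partial PCA one first passes to its totalisation as in Remark~\ref{rembci}, or inspects the proof of Proposition~\ref{propadjbiilp}, whose use of the target involves only the combinators $\comi$, $\coml$, $\comp$ and enough combinatory completeness, all of which a PCA has. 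Finally, a PCA also has discarding and duplicating combinators, so the $\biilp$-monoidal structure it induces on $\asmcb$ carries projections and a diagonal and therefore coincides, up to monoidal isomorphism, with the Cartesian monoidal structure; hence the adjunction is monoidal in the stated sense.

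The step I expect to be the main obstacle is the verification of $\bfban \preceq [\bfban,\bfban]$: one must check carefully that $\bfd$ carries a realizer $w \in \bfban x$ into $\delta(\gamma(\bfban x))$ sitting over a genuine element $s_0$ of $\bfban x$, rather than merely over $x$; that the contraction producing $\comp\,s_0\,s_0$ is performed on the $\hkb$-side, where it is legitimate; and that transporting back along $\delta$ and collapsing with $\bfe$ lands precisely on a pair $\comp\,s_0\,s_0$ with both components in $\bfban x$. The partiality bookkeeping needed in part~(2) is routine by comparison.
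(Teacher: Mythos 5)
Your proposal is correct. For part (1) you arrive at essentially the paper's proof by a slightly longer path: the paper constructs $\bfk$ and $\bfw$ directly as $\lamst xy.\batui x(\bfe(r_\delta(\delta M)y))$ and $\lamst xy.\coml x(\bfe(r_\delta(\delta N)(\bfd y)))$ with $M \in \lamst z.(\gamma\comi)$ and $N \in \lamst z.r_\gamma(r_\gamma(\gamma\comp)z)z$, whereas you first verify $\bfban \preceq k_i$ and $\bfban \preceq [\bfban,\bfban]$ and then invoke Proposition \ref{propstyle}; your $Q$ and $P$ are exactly the paper's $M$ and $N$, and composing with the recipe in Proposition \ref{propstyle} reproduces the same combinators (your detour even lands on the right discarding combinator $\batui$ for $\bfk$, which is what the axiom $\batui a\comi=a$ is for). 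The genuinely different step is part (2): the paper writes out explicit realizers for $I \migi \delst 1$, $\delst 1 \migi I$, $(\delst X)\otimes(\delst Y) \migi \delst(X\times Y)$ and its inverse directly against the Cartesian structure of Proposition \ref{propccc}, while you factor through Proposition \ref{propadjbiilp} by viewing the PCA as a $\biilp$-algebra and then identify the induced tensor on $\asmcb$ with the Cartesian one. That route is valid but carries two obligations the paper's direct computation avoids: the partiality caveat (Proposition \ref{propadjbiilp} is stated for total structures, so you must either totalise or re-inspect its realizers, as you note), and the comparison of monoidal structures on $\asmcb$ — your appeal to "projections and a diagonal" is the right reason (a monoidal category with terminal unit and natural commutative comonoid structure is Cartesian), but to match the paper's statement literally one should exhibit the natural isomorphism $\{\comp pq\} \leftrightarrow \{\lamst t.tpq\}$ and check it is monoidal, which is easy but is real content. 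The trade-off is modularity versus explicitness; both proofs are sound.
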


\begin{proof} \hfill
\begin{enumerate}
\item From Proposition \ref{propmoradj} (\ref{propmoradj2}), $\delta \circ \gamma$ is a comonadic applicative morphism.
Let $\bfe$ and $\bfd$ be elements for the counit and the comultiplication.
Then we can take $\bfk \in \uhka$ as an element of $\lamst xy.\coml x (\bfe (r_{\delta} (\delta M) y))$, where $M \in \lamst z.(\gamma \comi)$.
Also we can take $\bfw \in \uhka$ as an element of $\lamst xy. \coml x (\bfe (r_{\delta} (\delta N) (\bfd y)))$, where $N \in \lamst z.r_{\gamma} (r_{\gamma} (\gamma \comp) z) z$.

\item We show that the left adjoint $\delst$ is strong monoidal.
Let $\bfe \in |\hka|$ and $\bfi \in |\hkb|$ be elements such that $\forall a \in |\hka|, \bfe (\delta (\gamma a)) =a$ and $\forall b \in |\hkb|, \bfi b \in \gamma (\delta b)$.
The map $I \migi \delst 1$ is realized by $\kuro{(\delta \comk)}$.
The inverse $\delst 1 \migi I$ is realized by $\lamst a. \bfe (r_{\delta} (\delta (\lamst b.(\gamma \comi))) a)$.
The natural transformation $(\delst X) \otimes (\delst Y) \migi \delst(X \times Y)$ is realized by 
$\coml (\lamst aa'.r_{\delta} (r_{\delta} (\delta (\lamst bb't.tbb')) a) a')$.
The inverse map $\delst(X \times Y) \migi (\delst X) \otimes (\delst Y)$ is realized by 
$\lamst u.\bfe (r_{\delta} (\delta M) u)$, where 
$M \in \coml (\lamst bb'.r_{\gamma} (r_{\gamma} (\gamma \comp) (\bfi b) ) (\bfi b'))$. \qedhere
\end{enumerate}
\end{proof}

Next we consider the functional case of exp-rPLCAs, like LCAs are the functional case of rLCAs.

\begin{defi}
An {\it exponential planar linear combinatory algebra} (exp-PLCA) is an exp-rPLCA $(\hka, \bfban)$ that $\bfban$ is functional.
\end{defi}

Not only are exp-PLCAs special cases of exp-rPLCAs, but also can induce adjoint pairs between $\biilp$-algebras and PCAs.

\begin{prop} \label{propfunplca}
Let $(\hka, \bfban)$ be an exp-PLCA.
\begin{enumerate}
\item We have a PCA $\hka_{\bfban} = (\uhka, @)$ with $x @ y := x (\bfban y)$.
\item Let $\gamma : \hka \migi \hka_{\bfban}$ be the identity function and $\delta: \hka_{\bfban} \migi \hka$ be the function $x \mapsto \bfban x$. Then $\gamma$ and $\delta$ are applicative morphisms and $\delta \dashv \gamma$. \label{propfunplca1}
\end{enumerate}
\end{prop}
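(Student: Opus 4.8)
The statement to prove (Proposition \ref{propfunplca}) is the functional analogue of the earlier result relating LCAs to PCAs: given an exp-PLCA $(\hka,\bfban)$, we must show that $\hka_{\bfban} = (\uhka, @)$ with $x @ y := x(\bfban y)$ is a PCA, and that the identity function $\gamma$ and the map $\delta : x \mapsto \bfban x$ are applicative morphisms forming an adjoint pair $\delta \dashv \gamma$. The plan is to follow the template of the proof sketched for LCAs in Section \ref{seclca}, adapting it to the $\biilp$-setting — the only real change being that where the LCA argument uses full $\bci$-combinators freely, here we have only the planar combinators $\comb,\comi,\batui,\coml,\comp$ plus the $\kuro{(\haih)}$ operation, together with $\bfk$, $\bfw$, and the comonadic data $\bfe,\bfd$ of $\bfban$.

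For part (1), I would exhibit an $\comsa$-combinator and a $\comk$-combinator for the application $@$. The $\comk$ for $@$ is easy: since $\bfk$ satisfies $\bfk x (\bfban y) = x$, the element $\bfk$ itself satisfies $\bfk @ x @ y = \bfk(\bfban x)(\bfban y)$, so I would take $\comk_{@} := \bfk' $ for a suitable $\bfk'$ built from $\bfk$ and $\bfban$-data so that $\bfk'(\bfban x)(\bfban y) = x$; concretely, using $\bfe$ to strip one $\bfban$ and $\bfk$ to discard the second argument. For $\comsa_{@}$ one needs $\comsa @ x @ y @ z = (x @ z) @ (y @ z)$, i.e. an element $s$ with $s(\bfban x)(\bfban y)(\bfban z) = x(\bfban z)(\bfban(y(\bfban z)))$. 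Here is where $\bfw$ enters — it provides the duplication $\bfd z \mapsto \bfban z \otimes \bfban z$ needed to feed $\bfban z$ to both $x$ and $y$ — together with the comultiplication element $\bfd$ (to produce $\bfban(\bfban z)$, hence $\bfban(y(\bfban z))$ via the realizer $r_{\bfban}$ of $\bfban$) and the planar combinators $\comb,\coml,\comp$ to route the pieces into position. I would write $s$ as an explicit term $\lamst \dots$ in these generators and verify the defining equation by a short computation using the axioms $\bfw x(\bfban y) = x(\bfban y)(\bfban y)$, $\bfd(\bfban a) = \bfban(\bfban a)$ (mod the relation $\subseteq$, which is equality in the functional case), $\bfe(\bfban a) = a$, and the $\biilp$-axioms.

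For part (2), the identity relation $\gamma$ is an applicative morphism $\hka \migi \hka_{\bfban}$ because a realizer must satisfy $r_{\gamma} a a' \in \gamma(a @ a') = \{a(\bfban a')\}$ whenever $a @ a'\downarrow$; taking $r_{\gamma} := \lamst a a'. a(\bfban a')$ (built using $\comb$ and $\bfban$'s realizer) works. That $\delta : x \mapsto \bfban x$ is an applicative morphism requires $r_{\delta}(\bfban a)(\bfban a') \in \delta(a @ a') = \{\bfban(a(\bfban a'))\}$, which follows by combining $r_{\bfban}$ (to get $\bfban(a(\bfban a')) $ from $\bfban a$ and $\bfban(\bfban a')$) with $\bfd$ (to get $\bfban(\bfban a')$ from $\bfban a'$). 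Finally the adjunction $\delta \dashv \gamma$ means $\gamma \circ \delta \preceq id_{\hka_{\bfban}}$ and $id_{\hka} \preceq \delta \circ \gamma$ in the appropriate directions: $\gamma\circ\delta$ sends $a$ to $\bfban a$, so $\bfe$ realizes $\gamma\circ\delta \preceq id_{\hka}$; and $\delta\circ\gamma$ also sends $a$ to $\bfban a$, so $id_{\hka_{\bfban}} \preceq \delta\circ\gamma$ is realized using $\bfd$ (note the $@$-application in the target). I would just exhibit these four realizers and check the defining inequalities.

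The main obstacle is purely bookkeeping: constructing $\comsa_{@}$ in the planar fragment. Because we cannot freely permute arguments, the order in which $\bfban z$ must be delivered to $x$ (first) and to $y$ (then re-boxed via $\bfd$ and $r_{\bfban}$, then handed to $x$ again) has to be threaded through $\comb$, $\coml$, $\comp$ and $\kuro{(\haih)}$ without any exchange combinator, and one must double-check that $\bfw$'s duplication hands back the two copies in a usable order. I expect this to require one carefully written explicit term and a half-page verification, exactly parallel to (but slightly fussier than) the corresponding LCA computation; everything else is routine.
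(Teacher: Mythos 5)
Your overall architecture is the same as the paper's: exhibit a $\comk$ and an $\comsa$ for $@$ explicitly, then give the four realizers for part (2). Your $\comk_@$ (strip one $\bfban$ with $\bfe$ after discarding the second argument with $\bfk$, i.e.\ $\lamst xy.\bfe(\bfk xy)$) is exactly the paper's. But the construction of $\comsa_@$ is not ``purely bookkeeping'', and the plan you describe for it would fail. You need $s(\bfban x)(\bfban y)(\bfban z) = x(\bfban z)(\bfban(y(\bfban z)))$; reading the right-hand side left to right, the arguments are consumed in the order $x,z,y,z$, whereas they arrive in the order $x,y,z$, so $\bfban y$ must end up strictly \emph{between} the two occurrences of $\bfban z$. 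Applying $\bfw$ to duplicate $\bfban z$ only yields two adjacent copies, both to the right of wherever $\bfban y$ already sits, and no combination of $\comb,\coml,\comp,\kuro{(\haih)}$ can move $\bfban y$ past the first copy (recall $\kuro{N}$ is only available for closed $N$, so it gives no exchange of bound arguments). The missing idea --- which is how the paper does it, and is precisely how the exchange $\sigma_{X,Y}: {!X}\otimes{!Y}\migi{!Y}\otimes{!X}$ is derived in Definition \ref{defmeiji} --- is to first form the \emph{boxed pair} $\bfban(\comp(\bfban y)(\bfban z))$ (via $\bfd$ and $r_{\bfban}(\bfban\comp)$), duplicate \emph{that} with $\bfw$, and then project differently from the two copies: from the first copy extract $\bfban z$ by discarding the $\bfban y$ component with $\bfk$ (e.g.\ $\coml(\bfk\comi)(\comp(\bfban y)(\bfban z)) = \bfk\comi(\bfban y)(\bfban z) = \bfban z$), and from the second compute $r_{\bfban}(\bfban y)(\bfd(\bfban z)) = \bfban(y(\bfban z))$. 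Note in particular that $\bfk$ is indispensable inside $\comsa_@$, whereas your sketch never invokes it there.

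Part (2) is essentially right but several conditions are stated with the directions reversed. For $\gamma:\hka\migi\hka_{\bfban}$ the realizer condition is $r_{\gamma} @ a @ a' \in \gamma(a\cdot a')$, i.e.\ $r_{\gamma}(\bfban a)(\bfban a')=aa'$ (target application on the left, source application on the right), not $r_{\gamma}aa'=a(\bfban a')$ as you wrote; a correct realizer is $\lamst xy.(\bfe x)(\bfe y)$. Likewise $(\delta\dashv\gamma)$ means $\delta\circ\gamma\preceq id_{\hka}$ (realized by $\bfe$, since $\bfe(\bfban a)=a$) and $id_{\hka_{\bfban}}\preceq\gamma\circ\delta$ (realized by $\comi$: one needs $r(\bfban a)=\bfban a$, so $\bfd$ does not work, as $\bfd(\bfban a)=\bfban(\bfban a)$). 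Your realizer and computation for $\delta$ are correct.
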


\begin{proof} \hfill
\begin{enumerate}
\item We have the $\comk$-combinator in $\hka_{\bfban}$ as $\lamst xy.\bfe (\bfk xy)$.
We have the $\comsa$-combinator as $\lamst xyz. \bfw (M x)(r_{\bfban} (r_{\bfban} (\bfban \comp)(\bfd y))(\bfd z))$, where 
$M:=\lamst xyz. \bfe x (\coml (\bfk \comi) (\bfe y)) (\coml (\lamst uv. r_{\bfban} u (\bfd v)) (\bfe z))$.
\item Realizers of $\gamma$ and $\delta$ are $\lamst xy.x(\bfe y)$ and $\lamst xy.r_{\bfban} x(\bfd y)$. A realizer for $\delta \circ \gamma \preceq id_{\hka}$ is $\bfe$ and for $id_{\hka_{\bfban}} \preceq \gamma \circ \delta$ is $\comi$. \qedhere
\end{enumerate}
\end{proof}

Next we give an (functional) adjoint pair between a $\biilp$-algebra and a PCA.
This example is a reformulation of the linear lambda calculus with $!$ (\cf~\cite{simpson}) to a planar variant.

\begin{exa} \label{examrplca}
Suppose infinite supply of variables $x,y,z,\dots $.
Terms are defined grammatically as follows.
\begin{center}
$M ::= x \; | \; MM' \; | \; \lambda x.M \; | \; M \otimes M' \; | \; \llet{x \otimes x'}{M}{M'} \; | \; !M \; | \; \lambda !x.M$
\end{center}
Here $x$ of $\lambda x.M$ is the rightmost free variable of $M$, appears exactly once in $M$ and is not in any scope of $!$.
Also we assume that for $\llet{x \otimes x'}{M}{M'}$, $x'$ and $x$ are the rightmost and the next rightmost free variables of $N$, appear exactly once in $N$ and are not in any scope of $!$.
Take an equational relation on terms as the congruence of the following equational axioms.
\begin{itemize}
\item $(\lambda x.M)N = M[N/x]$.
\item $M = \lambda x.Mx$.
\item $(\lambda !x.M)(!N) = M[N/x]$.
\item $\llet{x \otimes x'}{M \otimes M'}{N} = N[M/x][M'/x']$.
\item $M= \llet{x \otimes y}{M}{x \otimes y}$.
\end{itemize}
Let $\Lambda$ be the set of equivalence classes of closed terms.

Then we get a $\biilp$-algebra $\hka$, whose underlying set is $\Lambda$ and the application is that of lambda terms.
Also we get a PCA $\hkb = (\Lambda,@)$, where $M@N := M (!N)$.
Here the $\comk$-combinator and the $\comsa$-combinator of $\hkb$ exist as $\lambda !x.\lambda !y.x$ and $\lambda !x.\lambda !y.\lambda !z. x(!z) (!(y(!z)))$.

Take an applicative morphism $\gamma:\hka \migi \hkb$ as the identity function whose realizer is $\lambda !x. \lambda !y.xy$.
Take $\delta :\hkb \migi \hka$ as a function $M \mapsto !M$ whose realizer is $\lambda !x.\lambda !y. !(x(!y))$. Then we have an adjoint pair $\delta \dashv \gamma$.
\end{exa}

As well as we can construct an LCA from a ``reflexive object'' in a ``weak linear category'' (See \cite{ahs} and \cite{Haghverdi} ), we can get exp-PLCAs by appropriate settings.

\begin{defi}  \label{defwplc}
A {\it weak planar linear category (WPLC)} consists of:
\begin{enumerate}
\item a monoidal closed category $(\catc,\otimes,I)$ (not symmetric in general);
\item a monoidal functor $(!,m,m_I)$ on $\catc$;
\item a monoidal pointwise natural transformation $! \migi id_{\catc}$;
\item a monoidal pointwise natural transformation $! \migi !!$;
\item a monoidal pointwise natural transformation $! \migi ! \otimes !$;
\item a monoidal pointwise natural transformation $! \migi K_I$, where $K_I$ is the constant $I$ functor.
\end{enumerate}
Here a {\it pointwise natural transformation} $\gamma:F \migi G$ is a family of maps $\gamma_C :F(C) \migi G(C)$ 
($C \in Ob(\catc)$) satisfying that $G(f) \circ \gamma_I = \gamma_C \circ F(f)$ for any $f:I \migi C$.
\end{defi}

To be a WPLC, we need not all of the conditions for linear exponential comonads (Definition \ref{defmeiji}).
For instance, a WPLC does not require that $!$ is a comonad, and does not require the (ordinary) naturality of each transformation.

\begin{defi}
Let $(\catc,!)$ be a WPLC.
We say $V$ is a {\it reflexive object} when there are:
\begin{enumerate}[(i)]
\item a retraction $p: !V \triangleleft V :q$;
\item an isomorphism $r:(V \rimp V) \migi V$ and $s := r^{-1}$;
\item a retraction $t: (V \otimes V) \triangleleft V:u$.
\end{enumerate}
\end{defi}

As we saw in Example \ref{examreflexive}, for a reflexive object $V$ of a WPLC $\catc$, $\uhka := \catc(I,V)$ forms a $\biilp$-algebra $\hka$.
Furthermore, by giving $\bfban$ as an endofunction sending $M:I \migi V$ to
$p \circ (!M) \circ m_I$, $(\hka,\bfban)$ becomes an exp-PLCA.
The proof is the same as for WLCs and LCAs in \cite{Haghverdi}.

\subsection{Exchange planar linear combinatory algebras} \label{secexchplca}

Exchange modalities on the Lambek calculus and their categorical models are introduced in \cite{paiva}.
While the word ``Lambek calculus" may indicate various logics, type systems or grammars (\cf~\cite{lambek2,lambek3}), here we call the Lambek calculus as a variant of non-symmetric MILL with left and right implications.
The Lambek calculus is modeled by monoidal bi-closed categories.
While the order of arguments cannot be exchanged in the Lambek calculus, the Lambek calculus can be extended to a sequent calculus that allows swapping arguments with modalities.
This sequent calculus is called the {\it commutative/non-commutative (CNC) logic}, that is composed of two (commutative and non-commutative) logics, and the {\it exchange modality} connects these two parts.
Categorical models of the CNC logic are given as monoidal adjunctions between monoidal bi-closed categories and SMCCs, that are called {\it Lambek adjoint models}.
In this subsection, we introduce the similar construction to the previous subsection, inducing Lambek adjoint models.

\begin{defi}
An {\it exchange relational planar linear combinatory algebra} (exch-rPLCA) consists of a $\biilp$-algebra $\hka$ and a comonadic applicative morphism $(\xi,\bfe,\bfd)$ on $\hka$ with $\bfc \in \uhka$ satisfying $\bfc x (\xi y) (\xi z) \subseteq x (\xi z) (\xi y)$ for any $x,y,z \in \uhka$.
When $\xi$ is functional, we call $(\hka,\xi)$ an {\it exchange planar linear combinatory algebra} (exch-PLCA).
\end{defi}

\begin{prop}
For an exch-rPLCA $(\hka, \xi)$, the co-Kleisli category $\asmcaxi$ is an SMCC and the co-Kleisli adjunction between $\asmca$ and $\asmcaxi$ is monoidal.
\end{prop}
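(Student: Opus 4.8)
The plan is to mimic the structure of the proof of Proposition~\ref{propexp-rPLCA} and Proposition~\ref{propexp-rPLCA2}, but now using the commutativity data provided by $\bfc$ to get a \emph{symmetric} monoidal structure on the co-Kleisli category, rather than Cartesian products. First I would recall from Proposition~\ref{propmonoclo} that $\asmca$ is a monoidal closed category, and that a comonadic applicative morphism $\xi$ on $\hka$ gives, by Proposition~\ref{propmoradj}, a comonad $\xist$ on $\asmca$; moreover by Proposition~\ref{proplax} $\xist$ is a lax monoidal functor, so we have the structure maps $m_{X,Y}:\xist X\otimes\xist Y\migi\xist(X\otimes Y)$ and $m_I:I\migi\xist I$, making $\xist$ a monoidal comonad. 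This already equips the co-Kleisli category $\asmcaxi$ with a monoidal structure in the standard way (the co-Kleisli category of a monoidal comonad on a monoidal category is monoidal, with the co-Kleisli adjunction monoidal); the new content is that $\bfc$ forces this monoidal structure to be \emph{symmetric}.

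The key step is therefore to produce the symmetry. Following the recipe for $\sigma_{X,Y}$ in Definition~\ref{defmeiji} one could in principle build $\sigma$ from $e$, $d$ and $m$, but here it is cleaner to read the symmetry off directly: in $\asmcaxi$, a map $X\migi Y$ is a map $\xist X\migi Y$ in $\asmca$, i.e.\ a realizer $r$ such that $r(\xi(\rlz{x}_X))\subseteq\rlz{f(x)}_Y$. The candidate symmetry $\sigma_{X,Y}:\xist X\otimes\xist Y\migi\xist Y\otimes\xist X$ (equivalently, the co-Kleisli map $X\otimes Y\migi Y\otimes X$) is the function $x\otimes y\mapsto y\otimes x$, and I would exhibit a realizer for it using $\bfc$ together with the $\coml,\comp$ combinators of the $\biilp$-algebra: roughly, decompose the incoming realizer of $\xi(x)\otimes\xi(y)$ by $\coml$, apply $\bfc$ to swap the two $\xi$-ed components, and repackage with $\comp$. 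Concretely a realizer in the spirit of $\lamst u.\coml(\lamst pq.\bfc\,(\lamst p'q'.r_{\xi}(r_{\xi}(\xi\comp)\,q')\,p')\,p\,q)\,u$ should work, and one checks it lands in $\rlz{y\otimes x}_{\xist Y\otimes\xist X}$ using the defining inclusion $\bfc\,x\,(\xi y)(\xi z)\subseteq x\,(\xi z)(\xi y)$. Then I would verify that this $\sigma$ is natural and satisfies the hexagon and involutivity axioms of a symmetric monoidal category; since all the maps involved are set-theoretic functions that literally swap tensor components, these axioms hold on underlying functions, and realizers for both sides exist, so the identities hold in $\asmcaxi$.

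Finally I would assemble the pieces: $\asmcaxi$ is monoidal (monoidal comonad), closed (the co-Kleisli category of a monoidal comonad on a monoidal closed category inherits closedness, or one checks directly that $Y\rimp X$ with the co-Kleisli hom works), and symmetric by the previous paragraph, hence an SMCC; and the co-Kleisli adjunction $\asmca\migi\asmcaxi$ is monoidal because the comonad is monoidal. I expect the main obstacle to be the bookkeeping in the construction of the realizer for $\sigma$ and the verification that it is genuinely a realizer — one has to be careful that the $\coml/\comp$ decomposition interacts correctly with the applicative morphism realizer $r_{\xi}$, and that the swap really is mediated by $\bfc$ at the level of the $\xi$-ed arguments rather than the raw ones. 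The axiom checks for the symmetry, while numerous, are routine once naturality and the existence of realizers are in hand, since they reduce to equalities of underlying functions; and I would remark, as the paper does elsewhere, that the analogous statement for $\moda$ follows by restricting along the reflection $F\dashv G$ and defining $\boxtimes$ as in Proposition~\ref{propsmcc2}.
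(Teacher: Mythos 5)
Your identification of the role of $\bfc$ --- it realizes the transposition of two $\xi$-ed arguments and hence the symmetry of the co-Kleisli tensor --- is exactly the paper's key point, and your realizer sketch is close to the paper's $\lamst u.\coml (\bfc \comp)(\bfe u)$ (yours omits the outer $\bfe$ needed to strip the extra $\xi$ that a co-Kleisli map receives on its input, but that is the kind of bookkeeping you flagged). The genuine gap is in your first step. The claim that ``the co-Kleisli category of a monoidal comonad on a monoidal category is monoidal, with the co-Kleisli adjunction monoidal'' is not a standard fact and is false in general: Proposition \ref{proplax} only makes $\xist$ \emph{lax} monoidal, with $m:\xist X\otimes\xist Y\migi\xist(X\otimes Y)$, and with the tensor kept the same on objects, functoriality of $\otimes$ on co-Kleisli maps $f:\xist X\migi X'$, $g:\xist Y\migi Y'$ would require a map $\xist(X\otimes Y)\migi\xist X\otimes\xist Y$ in the wrong direction --- there is no way to ``$\xi$'' a raw realizer. (It is \emph{oplax} monoidal comonads whose co-Kleisli categories are canonically monoidal; for lax ones it is the Eilenberg--Moore category that inherits the tensor, and cofree coalgebras are not closed under it.)

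Accordingly, the paper builds the monoidal structure by hand: it defines $\rlz{x\ootimes y}:=\{\comp pq\mid p\in\xi\rlz{x}_X,\ q\in\xi\rlz{y}_Y\}$, i.e.\ $X\ootimes Y$ is $\xist X\otimes\xist Y$ as an object, precisely so that functoriality can be realized using the comultiplication $\bfd$ and counit $\bfe$; it then exhibits explicit realizers for the unit $J$, the unitors, the associators and their inverses, and checks the required equations at the level of underlying functions plus existence of realizers (a check that is realizability-specific --- the corresponding equations between the candidate structure maps do not hold for an arbitrary lax monoidal comonad on an abstract monoidal category). Note also that the paper's realizer for $\rho_X^{-1}$ already uses $\bfc$, so the exchange combinator enters before the symmetry is even discussed, and the closed structure is likewise not inherited: the internal hom in $\asmcaxi$ is $\homrm_{\asmca}(\xist X,Y)$ with its own realizers for evaluation and currying, all verified directly, as is the strong monoidality of the co-Kleisli functor. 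So your outline is right about the one ingredient you concentrate on, but the bulk of the paper's proof is exactly the part you treat as formal, and as stated that part does not go through by appeal to a general theorem.
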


\begin{proof} \hfill
\begin{itemize}
\item We define tensor products in $\asmcaxi$ as $X \ootimes Y := (|X| \times |Y|,\erlz)$, where
\[ \rlz{x \ootimes y} := \{ \comp pq \mid \mbox{$p \in \xi \rlz{x}_X$ and $q \in \xi \rlz{y}_Y$} \}. \]

\item For maps $f:X \migi X'$ and $g:Y \migi Y'$ in $\asmcaxi$, $f \ootimes g$ is the function sending $x \ootimes y$ to $f(x) \ootimes g(y)$.
A realizer of $f \ootimes g$ is $\lamst z. \coml M (\bfe z)$, where
$M \in \lamst pq. \comp (r_{\xi} (\xi r_f) (\bfd p)) (r_{\xi} (\xi r_g) (\bfd q))$.
\item We define the unit object $J$ of $\asmcaxi$ as $(\{ \ast \}, \erlz_J)$, where $\rlz{\ast}_J := \{ \comi \}$.
\item A realizer for the left unitor $\lambda_X:J \ootimes X \migi X$ is $\lamst u. \coml (\lamst p. \bfe p \bfe) (\bfe u)$.
A realizer for the inverse $\lambda_X^{-1}$ is in $\comp (\xi \comi)$.
\item A realizer for the right unitor $\rho_X:X \migi X \ootimes J$ is in $\lamst p.\comp p (\xi \comi)$.
A realizer for the inverse $\rho_X^{-1}$ is $\lamst u. \coml (\bfc \bfe) (\bfe u)$.
\item A realizer for the associator $\alpha_{XYZ} :(X \ootimes Y) \ootimes Z \migi X \ootimes (Y \ootimes Z)$ is $\lamst u. \coml (\lamst v. \coml M (\bfe v)) (\bfe u)$, where $M \in \lamst pqr. \comp p (r_{\xi} (r_{\xi} (\xi \comp) (\bfd q) (\bfd r)))$.
A realizer for $\alpha_X^{-1}$ is $\lamst u. \coml (\lamst vw.\coml (M' v) (\bfe w)) (\bfe u)$, where $M' \in \lamst pq. \comp (r_{\xi} (r_{\xi} (\xi \comp) (\bfd p) (\bfd q)))$.
\item The symmetry $\sigma_{XY}:X \ootimes Y \migi Y \ootimes X$ is the function sending $x \ootimes y$ to $y \ootimes x$.
A realizer for $\sigma_{XY}$ and $\sigma_{XY}^{-1}$ is $\lamst u.\coml (\bfc \comp) (\bfe u)$.
\item For objects $X$ and $Y$, the exponential in $\asmcaxi$ is $Y \rimp X = (\homrm_{\asmca}(\xist X, Y), \erlz)$, where $\rlz{f} := \{ r \in \uhka \mid \mbox{$r$ realizes $f$} \}$.
\item For maps $f:X' \migi X$ and $g:Y \migi Y'$ in $\asmcaxi$, $g \rimp f$ is the function sending a map $h:X \migi Y$ in $\asmcaxi$ to $g \circ (\xist h) \circ d_X \circ (\xist f) \circ d_{X'}$, where $d_X : \xist X \migi \xist \xist X$ is the comultiplication of $\xist$.
A realizer for $g \rimp f$ is $\lamst uv. r_g (r_{\xi} u (\bfd (r_{\xi} (\xi r_f) (\bfd v))))$.
\item The evaluation map $ev_{XY} : (Y \rimp X) \ootimes X \migi Y$ is the function sending $f \ootimes x$ to $f(x)$, that is realized by $\lamst u.\coml \bfe (\bfe u)$.
\item For any map $f:Z \ootimes X \migi Y$ in $\asmcaxi$, there exists a unique map $g:Z \migi Y \rimp X$ in $\asmcaxi$, which sends $z$ to $x \mapsto f(z \ootimes x)$.
$g$ is realized by $\lamst uv.r_f (r_{\xi} (r_{\xi} (\xi \comp) (\bfd u)) (\bfd v))$.
\item Finally we show that the co-Kleisli functor $\xist :\asmcaxi \migi \asmca$ is strong monoidal. 
We can take natural isomorphisms $\xist J \migi I$ and $\xist (X \ootimes Y) \migi \xist X \otimes \xist Y$ in $\asmca$ as the identity functions.
Realizers for $\xist J \migi I$ and $\xist (X \ootimes Y) \migi \xist X \otimes \xist Y$ are $\bfe$.
A realizer for $\xist J \migi I$ is in $\lamst u.u(\xi \comi)$.
A realizer for $\xist X \otimes \xist Y \migi \xist (X \ootimes Y)$ is in $\lamst uv. r_{\xi} (r_{\xi} (\xi \comp) (\bfd u)) (\bfd v)$. \qedhere
\end{itemize}
\end{proof}

The next proposition for categories of modest sets also can be shown in the same way as the above proposition.
Here since $X \ootimes Y$ in the above proof is not generally a modest set, we take the tensor product $\widehat{\boxtimes}$ in $\moda_{\xist}$ by the same way as Proposition \ref{propmodmcc}.
That is, we take $X \widehat{\boxtimes} Y = (|Z|,\erlz_Z)$ by $|Z| := (|X| \times |Y|)/\approx$, where $\approx$ and $\erlz_Z$ are defined as the same ones in the proof of Proposition \ref{propsmcc2}.

\begin{prop}
For an exch-rPLCA $(\hka, \xi)$, the co-Kleisli category $\moda_{\xist}$ is an SMCC and the co-Kleisli adjunction between is monoidal. \qed
\end{prop}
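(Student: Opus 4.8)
The plan is to reduce the modest-set statement to the assembly-level statement just proven, exactly as was done for the $\biilp$-algebra cases (Proposition~\ref{propsmcc2}, \ref{propsmcc}, \ref{propmodmcc}). First I would recall that $\moda$ sits inside $\asmca$ as a reflexive full subcategory: write $G:\moda \hookrightarrow \asmca$ for the inclusion and $F:\asmca \migi \moda$ for its left adjoint, the same functor used in the proof of Proposition~\ref{propsmcc2}, sending $X$ to the modest set on $|X|/\approx$. Since $(\hka,\xi)$ is an exch-rPLCA, $\hka$ is in particular a $\biilp$-algebra, so by Proposition~\ref{propmodmcc} the category $\moda$ is a monoidal closed category with tensor $\boxtimes$ given by $X \boxtimes Y := F(GX \otimes GY)$. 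The co-Kleisli category $\moda_{\xist}$ then has the same objects as $\moda$, and I would equip it with tensor $\widehat{\boxtimes}$ defined, following the hint preceding the statement, by $X \widehat{\boxtimes} Y := F(GX \ootimes GY)$, i.e. by taking the quotient $(|X|\times|Y|)/\approx$ with realizers $\rlz{z}_Z := \bigcup_{(x,y)\in z} \rlz{x\ootimes y}$ as in Proposition~\ref{propsmcc2}.

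Next I would transport each piece of structure produced in the preceding assembly-level proof through $F$ and $G$. The unit object $J$ is already a modest set (its underlying set is a singleton), so it serves as the unit of $\moda_{\xist}$ unchanged. For the unitors, associator, symmetry, evaluation map, and the currying bijection, one checks that each map constructed in the assembly proof descends to the quotient: concretely, given a realizer $r$ of a structural map $f$ of $\asmca_{\xist}$, the same $r$ realizes the induced map between the $F$-images, because $F$ sends a map of $\asmca$ to the canonical map realized by the same realizers (Remark after Proposition~\ref{propsmcc2}). Since $F$ is strong monoidal from $(\asmca,\otimes)$ to $(\moda,\boxtimes)$ and likewise from $(\asmca_{\xist},\ootimes)$ to $(\moda_{\xist},\widehat{\boxtimes})$ — this is the general fact about monoidal structure on reflexive subcategories, \cf~\cite{day72} — the coherence axioms of an SMCC for $\widehat{\boxtimes}$ follow from those for $\ootimes$. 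For the internal hom, I would note that for modest sets $X,Y$ the assembly $Y \rimp X = (\homrm(\xist X,Y),\erlz)$ of the assembly-level proof is already a modest set (if $r$ realizes distinct $f,g$ then picking $x$ with $f(x)\neq g(x)$ and $a$ realizing $x$ gives $ra$ realizing both $f(x)$ and $g(x)$, contradicting modesty of $Y$), so the exponential needs no modification, and the adjunction $\moda_{\xist}(Z \widehat{\boxtimes} X, Y) \cong \moda_{\xist}(Z, Y\rimp X)$ is obtained from the assembly one by the universal property of $F$.

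Finally, for the monoidality of the co-Kleisli adjunction between $\moda$ and $\moda_{\xist}$, I would check that the co-Kleisli functor $\xist : \moda_{\xist} \migi \moda$ is strong monoidal, with the structure isomorphisms $\xist J \migi I$ and $\xist(X \widehat{\boxtimes} Y) \migi \xist X \boxtimes \xist Y$ being (the $F$-images of) the identity functions, realized by the same elements $\bfe$, $\lamst u.u(\xi\comi)$, etc., exhibited in the assembly-level proof; the remaining half of the monoidal adjunction comes formally from the fact that the left adjoint into $\moda_{\xist}$ is then oplax and, being a section-like co-Kleisli inclusion, strong. The only genuinely delicate point — the main obstacle — is verifying that $X \ootimes Y$ really need not be a modest set (so that the passage through $F$ is essential, not cosmetic) and that, despite this, all the structural maps and the symmetry $\sigma$ in particular remain well-defined after quotienting; this is handled exactly by the argument of Proposition~\ref{propsmcc2}, where one observes that $\approx$ on $|X|\times|Y|$ is compatible with all the realizer-level operations because they are realized by fixed elements of $\hka$ independent of the chosen representatives. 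Everything else is a routine re-indexing of the assembly proof, so I would simply write ``the same proof as the previous proposition, replacing $\ootimes$ by $\widehat{\boxtimes} := F(G(-)\ootimes G(-))$'' and append the observation about $Y\rimp X$ being automatically modest.
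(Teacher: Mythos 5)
Your proposal is correct and follows essentially the same route as the paper: the paper's own argument is precisely to rerun the assembly-level proof with the tensor product replaced by $X \widehat{\boxtimes} Y := F(GX \ootimes GY)$ via the reflection $F \dashv G$, exactly as in Propositions~\ref{propsmcc2} and~\ref{propmodmcc}. Your additional observations (that $J$ and $Y \rimp X$ are already modest, and that the structural realizers descend to the quotient) are the right supporting details and are consistent with what the paper leaves implicit.
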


\begin{cor}
Suppose $\hka$ is a bi-$\bdi$-algebra and $(\hka, \xi)$ is an exch-rPLCA.
Then we have a Lambek adjoint model as the co-Kleisli adjunction between the monoidal bi-closed category $\asmca$ and the SMCC $\asmcaxi$ (or between $\moda$ and $\moda_{\xist}$). \qed
\end{cor}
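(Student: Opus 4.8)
The plan is to deduce the corollary by gluing together three results already in the paper, with one small compatibility check. First, since $\hka$ is a bi-$\bdi$-algebra it is in particular a $\biilp$-algebra (by the proposition on basic properties of bi-$\bdi$-algebras), and I would fix the $\comp$- and $\coml$-combinators to be the canonical ones exhibited there, namely $\comp = \rlam{x}{\rlam{y}{\llam{t}{t \rapp x \rapp y}}}$ and $\coml = \rlam{x}{\rlam{y}{x \lapp y}}$. Second, by Proposition \ref{propbiclo}, $\asmca$ is a monoidal bi-closed category, whose tensor product $X \otimes Y$ has underlying set $|X| \times |Y|$ and realizers $\{ \llam{t}{t \rapp p \rapp q} \mid p \in \rlz{x}_X,\, q \in \rlz{y}_Y \}$; with the canonical $\comp$ above this set is precisely $\{ \comp p q \mid p \in \rlz{x}_X,\, q \in \rlz{y}_Y \}$, i.e.\ it is exactly the monoidal structure of Proposition \ref{propmonoclo} that the exch-rPLCA machinery operates on. Hence the bi-closed monoidal structure on $\asmca$ and the monoidal closed structure seen by the exch-rPLCA construction literally agree.

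Third, $(\hka,\xi)$ is an exch-rPLCA by hypothesis, so the preceding proposition applies verbatim: the co-Kleisli category $\asmcaxi$ is an SMCC and the co-Kleisli adjunction between $\asmca$ and $\asmcaxi$ is monoidal. Combining this with the previous paragraph, the co-Kleisli adjunction is a monoidal adjunction whose non-symmetric side $\asmca$ is monoidal bi-closed and whose symmetric side $\asmcaxi$ is an SMCC; by the definition recalled in Section \ref{secexchplca} this is exactly a Lambek adjoint model. For the modest-sets variant I would argue identically, replacing Proposition \ref{propbiclo} by its modest-sets counterpart (so that $\moda$ is monoidal bi-closed) and the preceding proposition by the $\moda_{\xist}$-version (with $\widehat{\boxtimes}$ as the tensor on $\moda_{\xist}$), and using that $\moda$ is a reflexive full subcategory of $\asmca$ so that the relevant tensor products still match up to the reflection.

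The only step that requires any care---and the one I would flag as the main obstacle, although it is bookkeeping rather than anything deep---is reconciling the two descriptions of the monoidal structure on $\asmca$: the one obtained by treating $\hka$ as a bi-$\bdi$-algebra (Proposition \ref{propbiclo}) and the one obtained by treating it as a $\biilp$-algebra (Proposition \ref{propmonoclo}), which is what feeds the exch-rPLCA statement. As sketched above this disappears once one fixes the canonical $\comp$-combinator; if one prefers to allow an arbitrary $\comp$-combinator, one should instead observe that the internal hom functor $(-\rimp-)$ is the same in both presentations and that in a monoidal closed category the tensor is determined up to unique monoidal isomorphism as the left adjoint of the internal hom, so the two monoidal structures are monoidally isomorphic and the conclusion is unchanged. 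Everything else is a direct appeal to the cited propositions, with no further computation needed.
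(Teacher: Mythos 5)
Your proof is correct and takes essentially the same route the paper intends: the corollary carries no proof precisely because it is the direct combination of Proposition \ref{propbiclo} (bi-$\bdi$-algebras give monoidal bi-closed categories of assemblies), the preceding exch-rPLCA proposition (the co-Kleisli adjunction with the SMCC $\asmcaxi$ is monoidal), and the definition of a Lambek adjoint model. Your extra compatibility check that the tensor of Proposition \ref{propbiclo} agrees with the $\comp$-based tensor used in the exch-rPLCA construction (which holds on the nose for the canonical $\comp$-combinator of a bi-$\bdi$-algebra) is worthwhile bookkeeping that the paper leaves implicit, but it does not alter the argument.
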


Similar to exp-rPLCAs, adjoint pairs between $\biilp$-algebras and $\bci$-algebras correspond to exch-rPLCAs. 

\begin{prop}
Let $(\delta \dashv \gamma):\hka \migi \hkb$ be an adjoint pair for a $\biilp$-algebra $\hka$ and a $\bci$-algebra $\hkb$.
\begin{enumerate}
\item $(\hka,\delta \circ \gamma)$ forms an exch-rPLCA.
\item $(\delst \dashv \gamst):\asmca \migi \asmcb$ is a monoidal adjunction between the monoidal category $\asmca$ and an SMCC $\asmcb$. If $\hka$ is a bi-$\bdi$-algebra, the adjunction is a Lambek adjoint model.
\end{enumerate}
\end{prop}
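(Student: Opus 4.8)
The plan is to treat the two parts in turn, reusing what has already been established. For part~(1), I would first invoke Proposition~\ref{propmoradj}(\ref{propmoradj2}) to see that $\xi := \delta\circ\gamma$ is a comonadic applicative morphism on $\hka$; write $\bfe,\bfd\in\uhka$ for elements witnessing its counit and comultiplication, so that $\bfe(\xi a)\subseteq\{a\}$ and $\bfd(\xi a)\subseteq\xi(\xi a)$ for all $a$. What then remains is to produce $\bfc\in\uhka$ with $\bfc\,x\,(\xi y)\,(\xi z)\subseteq x\,(\xi z)\,(\xi y)$; this is the analogue, for the $\comc$-combinator, of the constructions of $\bfk$ and $\bfw$ from the $\comk$- and $\comsa$-combinators in the earlier proposition relating adjoint pairs of $\biilp$-algebras and PCAs to exp-rPLCAs.

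To build $\bfc$, I would transport an exchanging term through the adjunction. Since $\hkb$ is a $\bci$-algebra, its combinatory completeness provides $N\in\uhkb$ with $N\,c\,c' = r_\gamma\,(r_\gamma\,(\gamma\comp)\,c')\,c$ for all $c,c'\in\uhkb$, where $\gamma\comp$ denotes a fixed realizer of $\comp$ in $\hkb$; note that the very existence of $N$ uses the $\comc$-combinator of $\hkb$, since $N$ must accept its arguments in one order and recombine them in the other. Then I would take $\bfc$ to be an element of $\lamst xpq.\ \coml x\,\big(\bfe\,(r_\delta\,(r_\delta\,(\delta N)\,(\bfd p))\,(\bfd q))\big)$, which is a legitimate planar term over $\hka$ --- hence exists by the combinatory completeness for $\biilp$-algebras --- because $x$, $p$, $q$ occur in the body in that left-to-right order.

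The correctness of $\bfc$ is then a calculation: given $a\in\xi y$ and $b\in\xi z$, one has $\bfd a\in\xi(\xi y)$, so $\bfd a\in\delta c$ with $c\in\gamma a_0$ for some $a_0\in\xi y$, and similarly $\bfd b\in\delta c'$ with $c'\in\gamma b_0$ for some $b_0\in\xi z$; the defining inclusions of $r_\delta$ and $r_\gamma$ give $r_\delta\,(r_\delta\,(\delta N)\,(\bfd a))\,(\bfd b)\in\delta(N\,c\,c')\subseteq\delta(\gamma(\comp\,b_0\,a_0))=\xi(\comp\,b_0\,a_0)$; applying $\bfe$ collapses this to $\comp\,b_0\,a_0$, and finally $\coml\,x\,(\comp\,b_0\,a_0)=x\,b_0\,a_0\in x\,(\xi z)\,(\xi y)$. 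The hard part will be nothing deep but rather this relational bookkeeping --- $\xi$ is only a total relation, so one must track all the existential choices --- together with the observation that the order-swap is forced to take place on the $\hkb$-side (inside $N$, using the $\comc$-combinator), since $\hka$ itself, being only a $\biilp$-algebra, is non-symmetric; this is exactly where the hypothesis on $\hkb$ enters.

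Part~(2) is short and does not need part~(1). Because $\hka$ is a $\biilp$-algebra, $\asmca$ is a monoidal closed category by Proposition~\ref{propmonoclo}, and because $\hkb$ is a $\bci$-algebra, $\asmcb$ is an SMCC by Proposition~\ref{propsmcc}. Since every $\bci$-algebra is a $\biilp$-algebra, Proposition~\ref{propadjbiilp} applies to the adjoint pair $(\delta\dashv\gamma):\hka\migi\hkb$ and yields that $(\delst\dashv\gamst):\asmca\migi\asmcb$ is a monoidal adjunction (its left adjoint $\delst$ being strong monoidal). Finally, if $\hka$ is a bi-$\bdi$-algebra then $\asmca$ is a monoidal bi-closed category by Proposition~\ref{propbiclo}, so the monoidal adjunction just obtained is by definition a Lambek adjoint model between the monoidal bi-closed category $\asmca$ and the SMCC $\asmcb$.
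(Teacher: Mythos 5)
Your proposal is correct and follows essentially the same route as the paper: part (1) invokes Proposition \ref{propmoradj}(\ref{propmoradj2}) for comonadicity and then constructs $\bfc$ as (an element of) $\lamst xyz.\coml\, x\,(\bfe\,(r_{\delta}(r_{\delta}(\delta N)(\bfd y))(\bfd z)))$ with $N \in \lamst yz.r_{\gamma}(r_{\gamma}(\gamma\comp)z)y$, which is exactly the paper's realizer with the intermediate $M$ inlined, and part (2) is reduced to Proposition \ref{propadjbiilp} just as in the paper. Your explicit relational verification of $\bfc$ is a detail the paper omits, and it checks out.
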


\begin{proof} \hfill
\begin{enumerate}
\item From Proposition \ref{propmoradj} (\ref{propmoradj2}), $\delta \circ \gamma$ is a comonadic applicative morphism.
We can take
$\bfc$ in $\hka$ as $\lamst xyz.\coml x (\bfe (M (\bfd y)(\bfd z)))$, where $M \in \lamst y.r_{\delta} (r_{\delta} (\delta N) y)$ and
$N \in \lamst yz.r_{\gamma}(r_{\gamma} (\gamma \comp) z)y$.
\item It follows from Proposition \ref{propadjbiilp}. \qedhere
\end{enumerate}
\end{proof}

Similar to exp-PLCAs, exch-PLCAs induce adjoint pairs between $\biilp$-algebras and $\bci$-algebras.

\begin{prop}
Let $(\hka, \xi)$ be an exch-PLCA.
\begin{enumerate}
\item We have a $\bci$-algebra $\hka_{\xi} = (\uhka,@)$ with $x @ y := x(\xi y)$.
\item Let $\gamma:\hka \migi \hka_{\xi}$ be the identity function and $\delta:\hka_{\xi} \migi \hka$ be the function $x \mapsto \xi x$. Then $\gamma$ and $\delta$ are applicative morphisms and $\delta \dashv \gamma$.
\end{enumerate}
\end{prop}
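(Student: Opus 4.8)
The plan is to mimic the proof of Proposition~\ref{propfunplca} (the exp-PLCA/PCA case), with the exchange element $\bfc$ of the exch-PLCA playing the role that $\bfk$ and $\bfw$ play there; the outcome is that $\hka_{\xi}$ comes out a $\bci$-algebra instead of a PCA. Throughout I would use that $\hka$, being a $\biilp$-algebra, enjoys the combinatory completeness for the planar lambda calculus, that $\xi$ is functional so that $\bfe(\xi a)=a$ and $\bfd(\xi a)=\xi(\xi a)$ for all $a$, that its realizer $r_{\xi}$ satisfies $r_{\xi}(\xi a)(\xi a')=\xi(aa')$, and that $\bfc\,x\,(\xi y)\,(\xi z)=x(\xi z)(\xi y)$.

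For part~(1) I would exhibit the three combinators of $\hka_{\xi}=(\uhka,@)$, where $x@y=x(\xi y)$. Take $\comi:=\bfe$, so that $\bfe@a=\bfe(\xi a)=a$. Take $\comb:=\lamst uvw.(\bfe u)\bigl(r_{\xi}v(\bfd w)\bigr)$; then $\comb@x@y@z=(\bfe(\xi x))\bigl(r_{\xi}(\xi y)(\bfd(\xi z))\bigr)=x\bigl(\xi(y(\xi z))\bigr)=x@(y@z)$. Take $\comc:=\lamst uvw.\bfc\,(\bfe u)\,v\,w$; then $\comc@x@y@z=\bfc\,x\,(\xi y)\,(\xi z)=x(\xi z)(\xi y)=x@z@y$. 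In each of the three abstractions the bound variables $u,v,w$ occur exactly once and in left-to-right order, so they are planar and combinatory completeness supplies the elements; the displayed identities then give the $\bci$-axioms for $@$, so $\hka_{\xi}$ is a $\bci$-algebra. (Since a $\bci$-algebra is in particular one of the applicative structures admissible in Definition~\ref{defmor}, $\gamma$ and $\delta$ below make sense.)

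For part~(2) I would check that $\gamma$ (the identity relation $\hka\migi\hka_{\xi}$) and $\delta:x\mapsto\xi x$ ($\hka_{\xi}\migi\hka$) are applicative morphisms, and then verify the two preorder inequalities. A realizer of $\gamma$ is $\lamst uv.(\bfe u)(\bfe v)$, since this element applied via $@$ to $\gamma a$ and $\gamma a'$ yields $(\bfe(\xi a))(\bfe(\xi a'))=aa'=\gamma(aa')$. A realizer of $\delta$ is $r_{\delta}:=\lamst uv.r_{\xi}u(\bfd v)$, since $r_{\delta}(\delta a)(\delta a')=r_{\xi}(\xi a)(\bfd(\xi a'))=\xi(a(\xi a'))=\delta(a@a')$. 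For the adjunction, $\delta\circ\gamma$ sends $a$ to $\xi a$ and $\bfe(\xi a)=a$, so $\delta\circ\gamma\preceq id_{\hka}$; and $\gamma\circ\delta$ sends $a$ to $\xi a$ while $\comi@a=\comi(\xi a)=\xi a$, so $id_{\hka_{\xi}}\preceq\gamma\circ\delta$. Hence $(\delta\dashv\gamma):\hka\migi\hka_{\xi}$.

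The whole argument is routine, and I do not anticipate a genuine obstacle: it is the exch-analogue of Proposition~\ref{propfunplca}. The only things that need care are (i) tracking the $\xi$-layers — each use of $@$ inserts one $\xi$, stripped by $\bfe$ and, where a nested application must itself be wrapped, re-supplied by $\bfd$ together with $r_{\xi}$ — and (ii) checking that each $\lamst$-abstraction used is genuinely planar (linear, with the free variables occurring in the abstracted order) so that the $\biilp$-combinatory completeness applies; both are immediate by inspection.
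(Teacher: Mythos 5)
Your proposal is correct and follows essentially the same route as the paper: part (1) is proved by exhibiting the $\bci$-combinators of $\hka_{\xi}$ via planar combinatory completeness (the paper only writes out $\comc$, essentially the $\eta$-reduced form $\lamst x.\bfc(\bfe x)$ of your term, leaving $\comb$ and $\comi$ implicit), and part (2) is verified with the same realizers as in the exp-PLCA case (Proposition \ref{propfunplca}), which is exactly what the paper does by reference.
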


\begin{proof} \hfill
\begin{enumerate}
\item We have the $\comc$-combinator in $\hka_{\xi}$ as $\lamst x.\bfc (\bfe x)$.
\item Same as the proof of Proposition \ref{propfunplca} (\ref{propfunplca1}). \qedhere
\end{enumerate}
\end{proof}

For an example of exch-PLCA, we have the similar calculus to Example \ref{examrplca}.

\begin{exa}
Suppose infinite supply of variables $x,y,z,\dots $.
Terms are defined grammatically as follows.
\begin{center}
$M ::= x \; | \; MM' \; | \; \lambda x.M \; | \; M \otimes M' \; | \; \llet{x \otimes x'}{M}{M'} \; | \; \xi M \; | \; \lambda^{\xi} x.M$
\end{center}
Here $x$ of $\lambda x.M$ is the rightmost free variable of $M$, appears exactly once in $M$ and is not in any scope of $\xi$.
$x$ of $\lambda^{\xi} x.M$ need to appear exactly once in $M$.
Also we assume that for $\llet{x \otimes x'}{M}{M'}$, $x'$ and $x$ are the rightmost and the next rightmost free variables of $N$, appear exactly once in $N$ and are not in any scope of $\xi$.

The rest is the same as Example \ref{examrplca}.
\end{exa}

Finally we give an example of exch-PLCA based on $\hkt$ of Example \ref{examtree}.
This example is similar to the one introduced in \cite{tomita2}.

\begin{exa} \label{examtree6}
Let $T$ and $|\haih |$ be the same set and function defined in Example \ref{examtree}.
First we give a $\bci$-algebra $\hkt_e$ from $T$.
Take $|\hkt_e|$ as the powerset of $\{ t \in T \mid |t| =e \}$,
and a binary operation $\circledcirc$ on $|\hkt_e|$ as $M\circledcirc N := \{ t_2 \mid \exists t_1 \in N ,(t_2 \rimp t_1) \in M \}$.
Then $\hkt_e = (|\hkt_e|,\circledcirc)$ is a $\bci$-algebra, where
\begin{itemize}
\item $\comb = \{ (t_3 \rimp t_1) \rimp (t_2 \rimp t_1) \rimp (t_3 \rimp t_2) \mid t_1,t_2,t_3 \in T \}$;
\item $\comc = \{ (t_3 \rimp t_2 \rimp t_1) \rimp (t_3 \rimp t_1 \rimp t_2) \mid |t_1| = |t_2| = |t_3| =e \}$; 
\item $\comi = \{ t_1 \rimp t_1 \mid t_1 \in T \}$.
\end{itemize}

Take $\gamma: |\hkt| \migi |\hkt_e|$ as the function sending $M$ to $\{ t \rimp t \mid t \in M\}$ and $\delta : |\hkt_e| \migi |\hkt|$ as the inclusion function.
Then these function forms an (functional) adjoint pair $(\delta \dashv \gamma): \hkt \migi \hkt_e$.
Here corresponding realizers are
\begin{itemize}
\item $\{ ((t_2 \rimp t_2) \rimp (t_1 \rimp t_1)) \rimp ((t_2 \rimp t_1) \rimp (t_2 \rimp t_1)) \mid t_1,t_2 \in T \}$ realizing $\gamma$;
\item $\{ t_1 \rimp t_1 \mid t_1 \in T \}$ realizing $\delta$;
\item $\{ (t \rimp t) \rimp t \mid |t| = e \}$ realizing $id \preceq \gamma \circ \delta$;
\item $\{ t \rimp (t \rimp t) \mid |t| \geq e \}$ realizing $\delta \circ \gamma \preceq id$.
\end{itemize}
\end{exa}

\begin{rem}
The above construction also can be applied to obtain exch-PLCAs on $\hkt'$ of Example \ref{examtree2} and on $\hkt''$ of Example \ref{examtree4}.
While we gave exch-PLCAs by $T$, the same construction cannot be applied to obtain exp-PLCAs.
If we try to get some PCA of subsets of $T$, employing $M\circledcirc N := \{ t_2 \mid \exists t_1 \in N ,(t_2 \rimp t_1) \in M \}$ as the binary operation, $\comk \circledcirc M \circledcirc N = M$ hardly hold since the left hand side often lost information of $M$ when $N$ is nearly empty.
\end{rem}

\begin{rem}
As we saw in Proposition \ref{propstyle}, exp-rPLCAs can be defined in the style using not the combinators $\bfk$ and $\bfw$, but the applicative morphisms $[\bfban,\bfban]$ and $k_i$.
It is still unclear whether we can define exch-rPLCAs by the latter style, not using the combinator $\bfc$.
If we can characterize exch-rPLCAs by the latter style, we might construct exch-PLCAs using reflexive objects by the same way as exp-PLCAs and WPLCs (Definition \ref{defwplc}).
\end{rem}

\section{Related work} \label{secrel}

This paper is an extended  version of the earlier papers by the author \cite{tomita1,tomita2}.
As a result of \cite{tomita1} not introduced in this paper, we have ``$\comb \comi \kuro{\comi} (\haih)^{\circ}$-algebras'' as a class of applicative structures.
$\comb \comi \kuro{\comi} (\haih)^{\circ}$-algebras are more general than $\bikuro$-algebras, and give rise to  skew closed categories of assemblies (or modest sets).
{\it Skew closed categories}, introduced in \cite{street}, are categories with similar closed structures to closed categories, though some conditions needed in closed categories are not assumed.
(For instance, the natural transformation $i_X : (X \rimp I) \migi X$ in a skew closed category is not necessarily invertible.)
Although skew closed categories and closed multicategories are generalizations of closed categories in different directions, from Proposition \ref{propnecmul}, we can say that we cannot (canonically) obtain $\asmca$ (or $\moda$) that is a closed multicategory but not a skew closed category.
Details of these results are given in Appendix \ref{appskew}.

{\it Skew monoidal categories} introduced in \cite{skewmono}
are categories with the same components as monoidal categories but natural transformations (left and right unitors and associators) do not need to be invertible.
The relationship between skew monoidal categories and skew closed categories is similar to that between monoidal categories and closed categories.
Recalling the proof of Proposition \ref{propmonoclo}, we find that we use $\batui$ only to realize $\rho_X^{-1}:X \otimes I \migi X$.
The invertibility of $\rho_X$ is not assumed in skew monoidal categories.
Thus, when we have $\hka$ as a ``$\comb \comi \coml \comp \kuro{(\haih)}$-algebra,'' we can show that $\asmca$ is a skew monoidal category.

In \cite{hasegawa3}, the ``extensionality'' of combinatory algebras is investigated.
The extensionality defined in that paper is a more generalized condition than the standard one, seen in \eg, \cite{barendregt}.
By the extensionality in \cite{hasegawa3}, we can deal with polynomials and combinatory completeness for combinatory algebras that cannot be stated in the same way as Definition \ref{defpoly} and Proposition \ref{propcompca}, such as the braided case.
In our study, we do not need the discussions of the extensionality to state the combinatory completeness appearing in this paper, however, assuming the extensionality on an applicative structure $\hka$ may cause some structures on $\asmca$ and $\moda$.
For instance, for an ``extensional'' $\bikuro$-algebra $\hka$, since the $\comb$-combinator always satisfies the axiom of $\batui$, $\asmca$ and $\moda$ become closed categories.
There are many other possible way to define classes of applicative structures than using the existence of certain combinators, and the extensionality is such one way.

The definition of bi-$\bdi$-algebras may look like ``dual combinators'' introduced in \cite{dunn}.
Similar to bi-$\bdi$-algebras, in bianry operations of dual combinators, elements can act to elements from both left and right sides.
However, a dual combinatory logic has only one sort of application, whereas a bi-$\bdi$-algebra has two sorts of applications.
Also the reductions of dual combinatory logic do not satisfy the confluence, while the confluence of the bi-planar lambda calculus holds.

In this paper, we referred several logics and their categorical models without recalling detailed definitions.
See \cite{linear} about the linear logic.
And for the MILL and the categorical models that we deal with in this paper, see \cite{seely}.
Also, for the Lambek calculus, the word ``the Lambek calculus'' has various means as logics, and we use this word to mean a variant of non-symmetric MILL with left and right implications in this paper.
Our treatment of the Lambek calculus and its categorical semantics are from \cite{paiva}.
The basics about the Lambek calculus is in \cite{lambek2,lambek,lambek3}.

In \cite{zeilberger}, the relationships between the planar lambda calculus and planar graphs are investigated.
In that paper, the bijection between rooted trivalent planar graphs and closed planar lambda terms is given, and it is shown that such graphs can be generated by combining a few kinds of ``imploid moves.''
The theory corresponds to the combinatory completeness of $\bikuro$-algebras and the planar lambda calculus.
Similarly, we can give the bijection between rooted trivalent planar graphs and closed bi-planar terms, but here the rooted trivalent planar graphs need to have two colored (``left'' and ``right'') vertexes.

\section{Conclusion} \label{seccon}

In section \ref{secback} and \ref{secnonsym}, we introduced several classes of applicative structures and showed that they induce closed structures on categories of assemblies and categories of modest sets, as in Table \ref{table2}. (The results for $\biilp$-algebras are newly presented in this paper.)
In section \ref{secsepa}, we showed that these classes are different ones by giving several examples.
In section \ref{secnec}, we presented propositions that categorical structures of $\asmca$ induce structures of $\hka$, under some conditions.
(The propositions for $\biilp$-algebras and bi-$\bdi$-algebras are newly shown in this paper.)
By combining the results of the above, for instance we can say that we have $\asmca$ with a truly non-symmetric bi-closed structures, by using $\hka$ that is a bi-$\bdi$-algebra but not a $\bci$-algebra.
In section \ref{secplca}, we introduced exp-rPLCAs and exch-rPLCAs that give rise to categorical models for the linear exponential modality and the exchange modality on the non-symmetric MILL.
As an adjoint pair between a $\bci$-algebra and a PCA induces an rLCA, an adjoint pair between $\biilp$-algebras and a PCA/$\bci$-algebra induces an exp-rPLCA/exch-rPLCA.

\begin{table}[h]
\caption{Summary of the classes of applicative structures}
\centering
\begin{tabular}{|c|c|c|c|}
	\hline
	\ Applicative structure $\hka$ \ & Definition & \ Structure of $\asmca$ and $\moda$ \ & Proposition \\ \hline \hline
	PCA/$\comsa \comk$-algebra & \ref{defpca} & Cartesian closed category & \ref{propccc} \\ \hline
	$\bci$-algebra & \ref{defbci} & symmetric monoidal closed category & \ref{propsmcc} \\ \hline
	bi-$\bdi$-algebra & \ref{defbdi} & monoidal bi-closed category & \ref{propbiclo} \\ \hline
	$\biilp$-algebra	& \ref{defbiilp} & monoidal closed category & \ref{propmonoclo} \\ \hline
	$\biikuro$-algebra & \ref{defbii} & closed category & \ref{propclo} \\ \hline
	$\bikuro$-algebra & \ref{defbikuro} & closed multicategory & \ref{propmul} \\ \hline \hline
	
	\multicolumn{4}{|c|}{Inclusions} \\ \hline
	\multicolumn{4}{|c|}{$\comsa \comk$-algebras $\subsetneq$ $\bci$-algebras $\subsetneq$ bi-$\bdi$-algebras} \\ 
	\multicolumn{4}{|c|}{$\subsetneq$ $\biilp$-algebras $\subsetneq$ $\biikuro$-algebras $\subsetneq$ $\bikuro$-algebras} \\ \hline
\end{tabular}
\label{table2}
\end{table}

Finally we give three issues for future work.
First, there are several unsolved problems we mentioned in this paper.
Those that we consider important are:
\begin{itemize}
\item to show that the computational lambda calculus is not a $\biilp$-algebra (refer Section \ref{seccom});
\item to clarify conditions needed to show that $\hka$ is a PCA when $\asmca$ (or $\moda$) is a CCC (refer Remark \ref{remnecpar});
\item to clarify conditions needed to show that $\hka$ is a $\biilp$-algebra/bi-$\bdi$-algebra when $\moda$ is a monoidal closed category/monoidal bi-closed category (refer the end of Section \ref{secnec}).
\end{itemize}

Second, most examples given in this paper are the standard ones like the term models.
We would like to find more interesting examples of applicative structures and adjoint pairs, that should be useful for investigating non-commutative logics and their models in a systematic way.

Third, for various categorical structures not given in this paper, we want to clarify what we need to construct them via categorical realizability.
For instance, we have said (in section \ref{secnec}) that we cannot give $\asmca$ (nor $\moda$) that is a symmetric closed category but not an SMCC, in canonical ways.
Also we cannot give $\asmca$ (nor $\moda$) that is a closed multicategory but not a skew closed category.
As an example not yet mentioned, we cannot make $\asmca$ a braided monoidal category but not an SMCC.
Although there is a class of applicative structure, $\comb \comc^{\pm} \comi$-algebras, nicely corresponding the structure of braided monoidal categories and the braided lambda calculus (investigated in \cite{braid}), the construction of $\asmca$ cannot reflect the difference between two sorts of braids (realized by $\comc^{+}$ and $\comc^{-}$) and turns braids into the symmetry.
To give the categorical structures listed above, we need to change the construction of $\asmca$ (and $\moda$), rather than trying to give conditions on applicative structures.
For instance, to make $\asmca$ a braided monoidal category (not an SMCC), we may need to change that the construction of $\asmca$ is based on $\cats$, that is not only braided but also symmetric.

\section*{Acknowledgment}
  \noindent I would like to thank Masahito Hasegawa for a lot of helpful advice, discussions and comments.
This work was supported by JST SPRING, Grant Number JPMJSP2110.

\bibliographystyle{alphaurl}
\bibliography{lmcsbib}

\appendix
\section{$\bisiro$-algebras, $\bikuro$-algebras and skew closed categories} \label{appskew}

Though classes of applicative structures appearing in this paper are subclasses of $\bikuro$-algebras, it does not conclude that realizability constructions for closed structures all require $\bikuro$-algebras.
Indeed, in \cite{tomita1}, we introduced $\bisiro$-algebras, which is a more general class than $\bikuro$-algebras and gives rise to skew closed categories.

First we recall the definition of skew closed categories from \cite{street}.

\begin{defi}
A {\it (left) skew closed category} $\catc$ consists of the following data:
\begin{enumerate}
\item a locally small category $\catc$;
\item a functor $(- \rimp -):\catc^{op} \times \catc \migi \catc$, called the internal hom functor;
\item an object $I$, called the unit object;
\item an natural transformation $i_X : (X \rimp I) \migi X$;
\item an extranatural transformation $j_X : I \migi (X \rimp X)$;
\item a transformation $L_{Y,Z}^X : (Z \rimp Y) \migi ((Z \rimp X) \rimp (Y \rimp X))$ natural in $Y$ and $Z$ and extranatural in $X$,
\end{enumerate}
such that the following axioms hold:
\begin{enumerate}[(i)]
\item $\forall X,Y \in \catc$, $L_{Y,Y}^X \circ j_Y = j_{(Y \rimps X)}$;
\item $\forall X,Y \in \catc$, $i_{(Y \rimps X)} \circ (id_{(Y \rimps X)} \rimp j_X) \circ L_{X,Y}^X = id_{(Y \rimps X)}$;
\item $\forall X,Y,Z,W \in \catc$, the following diagram commutes:
\begin{center}
\xymatrix@C=-25pt@R=30pt{
	& (W \rimp Z) \ar[dl]_{L_{Z,W}^X} \ar[dr]^{L_{Z,W}^Y} & \\
	(W \rimp X) \rimp (Z \rimp X) \ar[d]^-{L_{(Z \rimpss X),(W \rimpss X)}^{(Y \rimpss X)}} & & 
	((W \rimp Y) \rimp (Z \rimp Y)) \ar[dd]^-{L_{Y,W}^X \rimps id} \\
	((W \rimp X) \rimp (Y \rimp X)) \rimp ((Z \rimp X) \rimp (Y \rimp X)) \ar[drr]_-{id \rimps L_{Y,Z}^X} & & \\
	& & ((W \rimp X) \rimp (Y \rimp X)) \rimp (Z \rimp Y)
	}
\end{center}
\item $\forall X,Y \in \catc$, $(i_Y \rimp id_{(X \rimps I)}) \circ L_{X,Y}^{I} = id_{Y} \rimp i_X$;
\item $i_I \circ j_I = id_I$.
\end{enumerate}
A skew closed category is called {\it left normal} when the function $\gamma : \catc(X,Y)\migi \catc(I,Y \rimp X)$ sending $f:X \migi Y$ to
$(f \rimp id_X) \circ j_X$ is invertible for any $X,Y \in \catc$.
\end{defi}

There is a categorical structure called {\it skew monoidal categories} introduced in \cite{skewmono}, which have the same components as monoidal categories but the invertibility of unitors and associators are not assumed.
Skew closed categories are the categorical structures determined from skew monoidal categories, like closed categories are determined from monoidal categories.
Obviously, closed categories are also left normal skew closed categories.

We investigated categorical realizability for skew closed categories in \cite{tomita1} and next we recall some of the results.

\begin{defi}
A total applicative structure $\hka$ is a $\bisiro${\it -algebra} iff it contains $\comb$, $\comi$, $\kuro{\comi}$ and $\siro{a}$ for each $a \in \uhka$ is an element of $\uhka$ such that $\forall x,y \in \uhka, (\siro{a}) xy = x (ay)$.
\end{defi}

Since $(\comb \kuro{a} \comb) xy=x(ay)$, any $\bikuro$-algebra is also a $\bisiro$-algebra.
By the similar way to the proof of Proposition \ref{propsepa5}, we can show the class of $\bisiro$-algebras is different from the class of $\bikuro$-algebras by using a freely constructed $\bisiro$-algebra (with constants).

\begin{prop}
When $\hka$ is a $\bisiro$-algebra, $\asmca$ and $\moda$ are left normal skew closed categories.
\end{prop}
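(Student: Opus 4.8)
The plan is to mimic the proof of Proposition~\ref{propclo} (the $\biikuro$-algebra case), using the combinatory completeness provided by $\comb$, $\comi$, $\kuro{\comi}$ and the family $\siro{(\haih)}$, and to check precisely which closed-category axioms survive when $i_X$ is no longer required to be invertible. First I would set $\catc := \asmca$ and give the internal hom functor exactly as in Proposition~\ref{propmul}/\ref{propclo}: on objects $Y \rimp X$ has underlying set $\homrm_{\catc}(X,Y)$ with $\rlz{f}_{Y \rimps X} = \{ r \mid r\text{ realizes }f\}$, and on maps $g \rimp f$ sends $h$ to $g \circ h \circ f$, realized by $\lamst uv. r_g(u(r_f v))$ — note this realizer lives in a $\bisiro$-algebra since $\lamst uv.r_g(u(r_f v)) = \siro{(\comb\, r_g)}(\siro{\comi}(\kuro{\comi}\,\dots))$-style terms are expressible once one verifies that the required $\lamst$-abstractions fall within the combinatory completeness fragment for $\bisiro$-algebras (the one variable being abstracted is the rightmost and occurs once, with the $\siro{(\haih)}$ operation absorbing the ``$\comb$-after-an-argument'' pattern that $\kuro{(\haih)}$ alone handled before).

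Next I would supply the structural data. The unit object is $I = (\{\ast\}, \erlz_I)$ with $\rlz{\ast}_I := \{\comi\}$, exactly as in Proposition~\ref{propclo}. The transformations are the same underlying functions: $j_X : \ast \mapsto id_X$, realized by $\comi$; $L_{Y,Z}^X : g \mapsto (f \mapsto g \circ f)$, realized by $\comb$; and $i_X : (f : \ast \mapsto x) \mapsto x$, realized by $\kuro{\comi}$ (this is exactly why $\kuro{\comi}$ is retained as a primitive). The crucial difference from Proposition~\ref{propclo} is that I would \emph{not} claim $i_X$ is invertible, since $\batui$ is absent; instead I only verify naturality of $i$. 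For left normality, $\gamma : \catc(X,Y) \migi \catc(I, Y \rimp X)$ sending $f$ to $(f \rimp id_X)\circ j_X$ is still a bijection with inverse $g \mapsto g(\ast)$, and the realizers for this correspondence are unchanged (realizers of $g$ realize $\gamma^{-1}(g)$ and vice versa), so left normality holds. The role of $\siro{(\haih)}$ is precisely to guarantee that the realizer $\lamst uv.r_g(u(r_f v))$ for $g \rimp f$, and hence functoriality of $(- \rimp -)$ and the naturality squares for $j$, $i$, $L$, can all be carried out; I would check this explicitly by rewriting each needed $\lamst$-term into a combinator expression using $\comb$, $\comi$, $\kuro{\comi}$, $\siro{(\haih)}$.

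Then I would verify the five skew closed category axioms (i)--(v). Axioms (i), (ii), (iii) are verified exactly as in the closed category case — they only involve $j$, $L$, $i$ composed in ways that do not use invertibility of $i$, so the same function-level computations and the same realizers ($\comi$, $\comb$, $\kuro{\comi}$) work verbatim. Axiom (iv) in the skew setting, $(i_Y \rimp id_{(X \rimps I)}) \circ L_{X,Y}^I = id_Y \rimp i_X$, and axiom (v), $i_I \circ j_I = id_I$, must be checked at the level of underlying functions and realizers: (v) is immediate since $i_I(j_I(\ast)) = i_I(id_I) = \ast$ with realizer $\kuro{\comi}\,\comi = \comi$ (up to the equational theory), and (iv) unwinds to a routine function-chase. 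Finally, for $\moda$ I would observe, as in Proposition~\ref{propclo}, that the internal hom $Y \rimp X$ of two modest sets is again modest (the realizer-disjointness argument for exponentials in Proposition~\ref{propccc} carries over), so the full subcategory $\moda$ inherits the same structure and the same proof applies.

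The main obstacle I anticipate is the bookkeeping of step one: confirming that every realizer used — especially $\lamst uv.r_g(u(r_f v))$ for the action of $(- \rimp -)$ on maps, and the realizers witnessing the various naturality squares — genuinely lies in the combinatory fragment generated by $\comb$, $\comi$, $\kuro{\comi}$ and $\siro{(\haih)}$, rather than secretly requiring $\kuro{a}$ for a non-trivial $a$ or requiring $\batui$. The design of $\bisiro$-algebras is exactly tuned so that $\siro{a}$ supplies ``$x \mapsto x\,(a\,y)$-with-a-pending-second-argument'' in place of the stronger $\kuro{a}$, so the expected resolution is a short lemma giving a restricted combinatory completeness statement for $\bisiro$-algebras (abstraction of the rightmost linearly-occurring variable, where bound occurrences sit only in argument position under a single head variable), after which all the realizer constructions become mechanical. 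Everything downstream — the axiom checks and the modest-set case — is then routine and parallels Proposition~\ref{propclo} closely.
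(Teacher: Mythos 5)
Your proposal takes essentially the same route as the paper: the paper's proof simply reuses the construction of Proposition~\ref{propclo} verbatim, drops the realizer $\batui$ for $i_X^{-1}$, and observes that the only realizer needing re-expression is the one for $g \rimp f$. The computation you leave pending is a one-liner rather than a new combinatory-completeness lemma --- take $\comb\, \siro{(r_f)}\, (\comb\, r_g)$, which applied to $u$ and $v$ gives $\siro{(r_f)}(\comb\, r_g\, u)\, v = (\comb\, r_g\, u)(r_f\, v) = r_g(u(r_f\, v))$ --- though note that the tentative expression $\siro{(\comb\, r_g)}(\siro{\comi}(\kuro{\comi}\cdots))$ you sketch in passing is not of the right shape.
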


The proof is almost the same as Proposition \ref{propclo}.
Here for maps $f$ and $g$, we give a realizer of $(g \rimp f)$ as $\comb \siro{(r_f)} (\comb r_g)$.

It is still not clear whether $\hka$ need to be a $\bisiro$-algebra to make $\asmca$ (or $\moda$) a skew closed category, like propositions in Section \ref{secnec}.
(In the similar setting to Proposition \ref{propnecclo} and Proposition \ref{propnecclo2}, though we can show the existence of $\comb$, $\comi$ and $\siro{(\haih)}$, we cannot show there is $\kuro{\comi}$.)

Since $\bikuro$-algebras are $\bisiro$-algebras, the next holds.

\begin{cor}
When $\hka$ is a $\bikuro$-algebra, $\asmca$ and $\moda$ are skew closed categories.
\end{cor}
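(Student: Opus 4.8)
The plan is to derive the corollary directly from the preceding proposition, which states that every $\bisiro$-algebra induces left normal skew closed categories $\asmca$ and $\moda$. So the only thing that needs checking is the set-theoretic inclusion: every $\bikuro$-algebra is a $\bisiro$-algebra.

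First I would verify this inclusion explicitly. Let $\hka$ be a $\bikuro$-algebra. Then $\uhka$ already contains $\comb$, $\comi$, and the element $\kuro{\comi}$ (taking $a := \comi$ in the definition of $\bikuro$-algebras). It remains to exhibit, for each $a \in \uhka$, an element $\siro{a}$ satisfying $(\siro{a}) x y = x(a y)$ for all $x, y \in \uhka$. The element $\comb \kuro{a} \comb$ works: for all $x, y \in \uhka$ we have $(\comb \kuro{a} \comb) x y = \kuro{a}(\comb x) y = (\comb x a) y = x(a y)$, using the axioms of $\comb$ and of $\kuro{a}$. This is exactly the computation $(\comb \kuro{a} \comb) x y = x(ay)$ recorded just before the proposition.

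Second, I would apply that proposition to the $\bisiro$-algebra $\hka$, obtaining that $\asmca$ and $\moda$ are left normal skew closed categories; since left normality is only an additional property on top of being a skew closed category, $\asmca$ and $\moda$ are in particular skew closed categories, which is the claim. I do not expect any obstacle here, since all the content sits in the inclusion $\bikuro$-algebras $\subseteq$ $\bisiro$-algebras together with the already-established proposition. If a more self-contained proof were desired, one could instead re-run the argument of Proposition \ref{propclo} verbatim, using the same unit object $I$ and the same $j_X$, $i_X$ and $L^X_{Y,Z}$, and realizing $(g \rimp f)$ by $\comb \siro{(r_f)} (\comb r_g)$ with $\siro{(r_f)} := \comb \kuro{(r_f)} \comb$; but this duplication is unnecessary.
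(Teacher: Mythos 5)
Your proposal is correct and matches the paper's argument: the corollary is obtained by observing that every $\bikuro$-algebra is a $\bisiro$-algebra (with $\siro{a} := \comb \kuro{a} \comb$ and $\kuro{\comi}$ already available) and then invoking the preceding proposition on $\bisiro$-algebras. The computation $(\comb \kuro{a} \comb)xy = x(ay)$ is exactly the one the paper records, so nothing further is needed.
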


From Proposition \ref{propnecmul}, we can say that we cannot (canonically) obtain $\asmca$ (nor $\moda$) that is a closed multicategory but not a skew closed category.
Although closed multicategories are a generalized closed categorical structure in a different direction from skew closed categories, skew closed categories are more general than closed multicategories as the categorical structures appearing in categories of assemblies.

Moreover, when constructing applicative structures from reflexive objects, skew closed categories can give even $\bikuro$-algebras, as well as closed multicategories give.

\begin{exa} \label{examapp1}
Suppose a skew closed category $\catc$ and an object $V$ with a retraction
$r : (V \rimp V) \triangleleft V :s$.
Then $\catc(I,V)$ forms a $\bikuro$-algebra.
\begin{itemize}
\item For $M,N:I \migi V$, the application is defined as 
\[ I \xmigi{M} V \xmigi{s} V \rimp V \xmigi{id_V \rimps N} V \rimp I \xmigi{i_V} V. \]
\item The $\comb$-combinator is
\[ I \xmigi{j_{V \rimpss V}} (V \rimp V) \rimp (V \rimp V) \xmigi{L_{V,V}^V \rimps s} ((V \rimp V) \rimp (V \rimp V)) \rimp V \]
\[ \xmigi{(r \rimps s) \rimps id_V} V \rimp V \rimp V \xmigi{r \rimps id_V} V \rimp V \xmigi{r} V. \]
\item The $\comi$-combinator is $r \circ j_V$.
\item Given arbitrary $M:I \migi V$, $\kuro{M}$ is
\[ I \xmigi{j_V} V \rimp V \xmigi{s \rimps id_V} (V \rimp V) \rimp V \xmigi{(id_V \rimps M) \rimps id_V} (V \rimp I) \rimp V \xmigi{i_V \rimps id_V} V \rimp V \xmigi{r} V. \]
\end{itemize}
\end{exa}

\begin{exa} \label{examapp2}
Suppose a closed multicategory $\catc$ and an object $V$ with a retraction \\
$r : \ucatc(V;V) \triangleleft V :s$.
Then $\catc(;V)$ forms a $\bikuro$-algebra.
\begin{itemize}
\item For $M,N \in \ucatc(;V)$, the application is defined as 
\[ \xmigi{M,N} V,V \xmigi{s,id_V} \ucatc(V;V),V \xmigi{ev} V. \]
\item Take a map $f:V,V,V \migi V$ as
\[ V,V,V \xmigi{id_V,s,id_V} V,\ucatc(V;V),V \xmigi{s,ev} \ucatc(V;V),V \xmigi{ev} V. \]
The $\comb$-combinator is given as $r \circ \Lambda_{;V;V} (r \circ \Lambda_{V;V;V}(r \circ \Lambda_{V,V;V;V}(f)))$. Here $\Lambda$ is the function in Definition \ref{defclomul}.
\item The $\comi$-combinator is $r \circ \Lambda_{;V;V}(id_V)$.
\item Given arbitrary $M \in \ucatc(;V)$, $\kuro{M}$ is $r \circ \Lambda_{;V;V}(ev \circ (s,M))$.
\end{itemize}
\end{exa}

When we assume the retraction $r : (V \rimp V) \migi V$ of Example \ref{examapp1} is an isomorphism, the $\comb$-combinator further satisfies the axiom of $\batui$ and $\catc(I,V)$ forms a $\biikuro$-algebra.
Similarly, when we assume the retraction $r : \ucatc(V;V) \migi V$ of Example \ref{examapp2} is an isomorphism, the $\comb$-combinator satisfies the axiom of $\batui$ and $\catc(;V)$ forms a $\biikuro$-algebra.

\end{document}